\newtheorem{thm}{Theorem}[chapter]
\newtheorem{lem}{Lemma}[chapter]
\newtheorem{defini}{Definition}[chapter]
\renewcommand*\env@matrix[1][*\c@MaxMatrixCols c]{%
  \hskip -\arraycolsep
  \let\@ifnextchar\new@ifnextchar
  \array{#1}}
\newcommand{\changefont}[3]{\fontfamily{#1} \fontseries{#2} \fontshape{#3} \selectfont}
\renewcommand{\@pnumwidth}{1.75em}
\renewcommand{\@tocrmarg}{2.75em}
\begin{document}

\begin{flushright}
{\large \changefont{pag}{m}{n}\textbf{Karsten Kreis}}

\vspace{1cm}

{\begin{onehalfspace} \large \changefont{pbk}{b}{sc}{Characterizing And Exploiting \\ Hybrid Entanglement} \end{onehalfspace}}
\end{flushright}

\pagenumbering{roman}
\setcounter{page}{1} 
\thispagestyle{empty} 
\cleardoublepage

\begin{titlepage}

\thispagestyle{empty}

\begin{center}

{\begin{onehalfspace} \Huge \changefont{pbk}{b}{sc}{Characterizing And Exploiting Hybrid Entanglement} \end{onehalfspace}}

\vspace{3.7cm}

{\Huge \changefont{pag}{m}{n}\textbf{D I P L O M A \hspace{0.6cm}T H E S I S}}

\vspace{1cm}

{\large\changefont{pag}{m}{n}{submitted by}}

\vspace{0.5cm}

{\Large \changefont{pag}{m}{n}\textbf{Karsten Kreis}}

\vspace{4cm}

{\begin{onehalfspace} \large \changefont{pag}{m}{n}{carried out at the Max Planck Institute for the Science of Light in the group "Optical Quantum Information Theory",\\supervised by Dr. Peter van Loock} \end{onehalfspace}}

\vspace{0.0cm}
\hspace{-0.0cm} 
\rule{\textwidth}{.2pt}

\begin{figure}[ht]
\begin{center}
\subfloat{\includegraphics[width=0.20\textwidth]{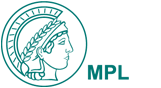}}
\subfloat{\includegraphics[width=0.30\textwidth]{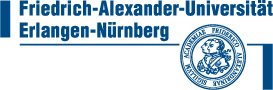}}
\end{center}
\end{figure}

{\large \changefont{pag}{m}{n}{Erlangen, February 2011}}

\end{center}

\end{titlepage}

\thispagestyle{empty} 
\newpage
\begin{center}
This version of the diploma thesis is a slightly modified version of the original thesis handed in at the University of Erlangen-Nuremberg in February 2011. Only minor corrections and modifications have been incorporated. \newline
\newline
Mainz, November 2012.
\end{center}
\vfill

{\large \textbf{Address:}} \vspace{0.4cm} \\
Karsten Kreis \\
Max Planck Institute for Polymer Research  \\
Theory Group \\
PO Box 3148 \\
55021 Mainz, Germany \vspace{1cm}

{\large \textbf{E-mail:}} \vspace{0.4cm} \\
kreis@mpip-mainz.mpg.de \\
kreis.karsten@gmail.com


\thispagestyle{empty} 
\newpage
\null
\vfill
\begin{center}
{\huge \textbf{Abstract}}
\end{center}
\vspace{0.8cm}

Quantum information theory is a very young area of research offering a lot of challenging open questions to be tackled by ambitious upcoming physicists. One such problem is addressed in this thesis. Recently, several protocols have emerged which exploit both continuous variables and discrete variables. On the one hand, outperforming many of the established pure continuous variable or discrete variable schemes, these \textit{hybrid} approaches offer new opportunities. However, on the other hand, they also lead to new, intricate, as yet uninvestigated, phenomena. 

An important ingredient of several of these hybrid protocols is a new kind of entanglement: The hybrid entanglement between continuous variable and discrete variable quantum systems, which is studied in detail in this work. An exhaustive analysis of this kind of entanglement is performed, where the focus is on bipartite entanglement. Nevertheless, also issues regarding multipartite hybrid entanglement are briefly discussed. The quintessence of this thesis is a new classification scheme which distinguishes between effective discrete variable hybrid entanglement and so-called \textit{true} hybrid entanglement. However, along the way, also other questions are addressed, which have emerged during the studies. For example, entanglement witnessing is discussed not only for hybrid entangled states, but also for fully continuous variable two-mode Schr\"odinger cat states. Furthermore, subtleties regarding entanglement witnessing in a certain kind of mixed states are examined. Not only theoretical classification and analysis of hybrid entangled states are discussed, but also their generation is presented and a few applications are demonstrated.   
\vfill

\thispagestyle{empty} 
\cleardoublepage
\tableofcontents

\newpage
\pagenumbering{arabic}
\setcounter{page}{1}

\chapter*{Related Publications}
\addcontentsline{toc}{chapter}{Related Publications}
The results of this diploma thesis, especially the new classification scheme for hybrid entangled quantum states presented in {\textbf{chapter \ref{ch:4}}}, have led to a publication in the scientific journal Physical Review A. All major results of this thesis can be found in this publication (also see arXiv:1111.0478v2 [quant-ph]). \newline
\newline
\textbf{Karsten Kreis} and Peter van Loock. Classifying, quantifying, and witnessing qudit-qumode hybrid entanglement. Phys. Rev. A 85, 032307 (2012).

\chapter{Introduction}
What is quantum information theory (QIT) about?

QIT aims for the combination of information theory with quantum mechanics. Exploiting concepts like the superposition principle yields new striking effects, such as entanglement, unknown from classical information theory. Going from theory to the experimental regime, utilizing these new phenomena, QIT may lead to real-world applications for communication and computation. On the one hand, quantum communication deals with the transmission of quantum states, that is to say the transmission of information, since information is encoded in quantum states in QIT. On the other hand, quantum computation is about the quantum mechanical processing of information. Quantum computers might outperform their classical analogues concerning certain tasks, such as the factorization of large numbers as well as database searches.\footnote[1]{See Shor's algorithm for integer factorization \cite{Sho97}, and Grover's algorithm for searching unsorted databases \cite{Gro96}.} While the construction of quantum computers is quite complicated and not yet realized on large scales, there are already emerging first commercial quantum communication systems, which are used for quantum key distribution (QKD).

It turned out that the impossibility of cloning quantum states as well as the highly intriguing and non-classical phenomenon of entanglement are at the heart of QIT and many applications, it might bring up. Especially entanglement is a very surprising and non-intuitive effect, which was first considered by Einstein, Podolski and Rosen in 1935 in a famous paper, presenting the so-called EPR-paradox \cite{EPR}. Based on these findings, Einstein actually considered quantum mechanics being incomplete and regarded entanglement in correspondence with Max Born famously as "spukhafte Fernwirkung" or "spooky action at a distance" \cite{EinBornLet}. During the same time, also Schr\"odinger considered entangled quantum states, as he pointed out the paradoxal effects of the quantum mechanical superposition principle, when applied on macroscopic objects.\footnote[2]{He considered a cat entangled with a two-level atom, for which the laws of quantum mechanics are valid \cite{SCat}. Due to coupling the atom with a poisoning mechanism, depending on the state of the atom, the cat is dead or alive. Since the atom can be in a coherent superposition of ground and excited state, the cat is necessarily in a superposition state as well. However, a cat which is somehow dead {\textit{and}} alive is contrary to any experience from daily life. As shown later, this problem can be settled by considering decoherence effects immediately destroying any coherence in macroscopic superpositions.} In 1964 Bell considered quantum measurements on an entangled two particle system and could show that entanglement actually leads to non-local quantum correlations which cannot be explained classically \cite{Bell}. This corresponds to today's image of entanglement as a non-local quantum correlation between two or more particles, or, generally speaking, between quantum systems held by different parties. Indeed there are still a lot of interesting and unanswered questions to be tackled, especially concerning entanglement. In this context also note that QIT is actually a very young area of research, being considered as an own branch of physics for only about 20 years.

When searching through publications regarding QIT, one can find that most ideas and schemes are formulated either for quantum systems supported by an infinite-dimensional Hilbert space ({\textit{continuous variables}} or simply {\textit{CV}}), or for systems living in a finite-dimensional space ({\textit{discrete variables}} or simply {\textit{DV}}). Most tasks can be accomplished in both settings, however, either regime has its characterizing advantages and disadvantages. For example DV experiments are nearly always conditional and hence not very efficient, while their fidelities are quite high. On the contrary, CV schemes combine unconditional operation and high efficiencies with lower fidelities. Recently, so-called hybrid approaches emerged trying to combine the benefits from both regimes. However, these hybrid approaches lead to as yet uninvestigated phenomena, such as hybrid entanglement (HE), which is the focus of this diploma thesis. 

The thesis is organized as follows. {\textbf{Chapter \ref{ch:2}}} introduces QIT and provides fundamental concepts, necessary for tackling open questions regarding HE and further discussions. It especially sets out the basics of entanglement theory. An overview over optical hybrid approaches to quantum information is presented in {\textbf{chapter \ref{ch:3}}}, which motivates the following chapter. The main part of the diploma thesis is {\textbf{chapter \ref{ch:4}}} dealing with HE. An introduction and a classification scheme are given in the first section, while the following sections work through different kinds of bipartite HE. In particularly, the distinction between hybrid entangled states supported by finite-dimensional subspaces and so-called {\textit{truly hybrid entangled states}} not being describable in finite-dimensional spaces is pointed out and several exemplary states are investigated. The final section of the chapter provides some analysis regarding multipartite HE. In {\textbf{chapter \ref{ch:5}}} the generation of HE states is briefly discussed. Furthermore some applications of HE are presented. Finally, coming to a conclusion, a brief summary as well as an outlook are given in {\textbf{chapter \ref{ch:6}}}. The {\textbf{appendices}} provide additional material. Appendix A summarizes the abbreviations and notations used in this thesis, and Appendix B consists of proofs of theorems, ancillary calculations, and provides a summary of useful formulas.

\chapter{Basics}\label{ch:2}
In this chapter, the reader is provided with an introduction to QIT and its mathematical concepts, which are necessary to understand the main chapter of the thesis. First, the Hilbert space and the quantum states described therein are introduced from a mathematical point of view. Then it is briefly shown, how to describe DV and CV quantum systems and how to deal with quantum operations. As the focus of this diploma thesis is on entanglement theory, the concept of entanglement is explained and entanglement quantification as well as entanglement detection are discussed.

\section{Quantum States and Hilbert Spaces}
When physicists first investigated quantum phenomena, they had to think about how to actually describe quantum states. There are basically two approaches. On the one hand complex wavefunctions can be used, which are typically functions of position or momentum and give probability amplitudes. Their moduli squared yield probability densities. Therefore, these wavefunctions are easy to interpret and they will be useful if the probability distribution of position, momentum, or some other variable of a particle is of interest. However, wavefunctions are typically quite complicated and if more subtle properties of the quantum system, such as entanglement, are to be examined, it will probably be cumbersome to work with them. Hence, on the other hand, there is the so-called Dirac notation, which employs more abstract vectors in a Hilbert space to describe quantum states.

So, what is a Hilbert space? A Hilbert space over the field of the complex numbers ${\mathcal H}(\mathbb C)$ is a complete vector space on which an inner product is defined. The inner product allows length and angle to be measured and a norm to be defined. Completeness is defined in the following way:
\begin{defini}
A metric space $M$ with a defined norm is called \textup{complete} if and only if every Cauchy sequence in $M$ converges in $M$:
\begin{equation}
M\;\text{complete}\,:\;{(a_n)}_{n\in\mathbb N}\;\text{Cauchy sequence in}\;M\;\Rightarrow\;\lim_{n \to \infty}a_n\in M.
\end{equation}
\end{defini}
\begin{defini}
${(a_n)}_{n\in\mathbb N}$ is called \textup{Cauchy sequence} if and only if 
\begin{equation}\forall\,\epsilon>0\;\exists\,N:\;\|a_m-a_n\|<\epsilon\;\forall\,m,n>N.
\end{equation}
\end{defini}
More intuitively speaking, a space is complete if there are no points "missing" from it. For example $\mathbb Q$ is not complete, since the irrationals are missing, while $\mathbb R$ is complete. 

Hilbert spaces may have any dimension, even infinite. This becomes clear, when the physical quantum states to be described by the Hilbert space vectors are considered. For example, a general spin system may have any finite number of spin states, while a quantum state described by the continuous variables position and momentum lives in an infinite-dimensional Hilbert space. Let the elements of the Hilbert space ${\mathcal H}(\mathbb C)$ be denoted by so-called "ket"-vectors $\ket{\psi}$. Then the according dual space ${\mathcal H'}(\mathbb C)$ contains the "bra"-vector elements $\bra{\psi}\,:\,{\mathcal H}(\mathbb C)\rightarrow\mathbb C$. Therefore, a map into complex numbers between a vector $\bra{\psi}$ and a vector $\ket{\phi}$ can be defined as $\braket{\psi|\phi}\in\mathbb C$.

Consider a continuous basis of a Hilbert space of infinite uncountable dimension $\{\ket{x}:x,y\in\mathbb R,\,\braket{x|y}=\delta(x-y)\}$.\footnote[1]{Continuous variable and discrete variable states, as well as infinite-dimensional and finite-dimensional bases are discussed in more detail in section \ref{states}.} The completeness relation reads:
\begin{equation}
\int\limits_{-\infty}^\infty dx \ket{x}\bra{x}=\mathbb{1}.
\end{equation}
So, writing
\begin{equation}
\ket{\psi}=\int\limits_{-\infty}^\infty dx \braket{x|\psi}\ket{x}=\int\limits_{-\infty}^\infty dx\;\psi(x)\ket{x}
\end{equation}
yields the connection to the previous wavefunction approach, explicitly $\braket{x|\psi}=\psi(x)$. The continuous $\{\ket{x}\}$-basis may for example be the position or the momentum basis and $\psi(x)$ the corresponding wavefunction. If the basis of the Hilbert space is countable or even finite, we have the basis $\{\ket{k}:k,l\in{\mathbb N}_0,\,\braket{k|l}=\delta_{kl}\}$ and the integral becomes an infinite or finite sum, but the argumentation is analogue. The wavefunction, in this case discrete, is just written as $\braket{k|\psi}=\psi_k$.

Again, using the completeness relation, we obtain
\begin{equation}
\braket{\psi|\phi}=\int\limits_{-\infty}^\infty dx\;\psi^\ast(x)\phi(x),
\end{equation}
or in the countable or even finite-dimensional case
\begin{equation}
\braket{\psi|\phi}=\sum_k \;\psi^\ast_k\,\phi_k.
\end{equation}
Hence, every physical Hilbert space is actually the space of square-integrable $L^2$ (uncountable dimension) or square-summable $l^2$ (countable dimension) functions. Fortunately, $l^2$ is its own dual space, and hence, $\{\ket{n}:n\in\mathbb N\}$ being a basis, for any state $\ket{\psi}=\sum_n\,\psi_n\ket{n}$ the dual is given by $\bra{\psi}=\sum_n\,\psi^\ast_n\bra{n}$ (the analogue is valid for $L^2$ and states in Hilbert spaces of uncountable dimension). Therefore, the inner product between a vector $\ket{\psi}$ and a vector $\ket{\phi}$ is given by $\braket{\psi|\phi}$, which is sometimes also called the "overlap" between $\ket{\psi}$ and $\ket{\phi}$. Using this definition of the inner product, the norm is defined by
\begin{equation}
\|\ket{\psi}\|=\sqrt{\braket{\psi|\psi}}.
\end{equation}

In conclusion, every quantum system (in a \textit{pure state}) can be described by its normalized ($\braket{\psi|\psi}=1$) "ket"-vector $\ket{\psi}$ in a Hilbert space ${\mathcal H}_d$. Using a discrete orthonormal basis $\{\ket{k}:k,l\in{\mathbb N}_0,\,\braket{k|l}=\delta_{kl}\}$, it can be written
\begin{equation}
\ket{\psi}=\sum_{k=0}^{d-1} \;\ket{k}\braket{k|\psi}=\sum_{k=0}^{d-1} \;\psi_k\ket{k},
\end{equation}
where $\psi_k\in\mathbb C$ is the discrete wavefunction in the $\{\ket{k}\}$-basis. If a continuous basis is chosen $\{\ket{x}:x,y\in\mathbb R,\,\braket{x|y}=\delta(x-y)\}$, every quantum state can be written as 
\begin{equation}
\ket{\psi}=\int\limits_{-\infty}^\infty dx\ket{x}\braket{x|\psi}=\int\limits_{-\infty}^\infty dx\;\psi(x)\ket{x},
\end{equation}
with the continuous wavefunction $\psi(x)\in\mathbb C$, as previously already shown.

In the following, $\{\ket{k}:k,l\in{\mathbb N}_0,\,\braket{k|l}=\delta_{kl}\}$ will be abbreviated by $\{\ket{k}\}$, and $\{\ket{x}:x,y\in\mathbb R,\,\braket{x|y}=\delta(x-y)\}$ analogously by $\{\ket{x}\}$, for the sake of simplicity.

An expectation value of a Hermitian operator $\hat{A}$ of a pure quantum state is given by
\begin{equation}
\braket{\hat{A}}=\braket{\psi|\hat{A}|\psi}=\sum_m \;a_m\,p_m,
\end{equation}
where the spectral decomposition for the operator $\hat{A}$,
\begin{equation}
\hat{A}=\sum_m \;a_m\ket{m}\bra{m},
\end{equation}
was used \cite{Nielsen}. $a_m$ are real eigenvalues, $\{\ket{m}\}$ is the corresponding eigenbasis, and $p_m=|\braket{m|\psi}|^2$ are the probabilities for obtaining the result $a_m\in\mathbb R$. Note that Hermitian operators are measurable observables. Analogously, for a continuous basis the spectral decomposition is given as
\begin{equation}
\hat{A}=\int\limits_{-\infty}^\infty dx\;a_x\ket{x}\bra{x}
\end{equation}
and the rest follows accordingly.

If quantum mechanics is extended to quantum statistical mechanics, quantum systems may be statistical mixtures of several individual quantum states. Then the whole system is said to be in a \textit{mixed state} and is no longer described by a Hilbert space vector, but by a so-called \textit{density operator} $\hat{\rho}$. A density operator is an operator, exhibiting the following properties: 
\begin{itemize}
\item Linear ($\hat{A}\;\text{linear}\,\Leftrightarrow\,\forall\,\ket{\psi},\ket{\phi}\in\mathcal H\,;\;x,y\in\mathbb C:\;\hat{A}(x\ket{\psi}+y\ket{\phi})=x\hat{A}\ket{\psi}+y\hat{A}\ket{\phi}$)
\item Bounded ($\hat{A}\;\text{bounded}\,\Leftrightarrow\,\exists\,M>0\,\text{such that}\,\;\forall\,\ket{\psi}\in\mathcal H:\;\braket{\psi|\hat{A}^\dagger \hat{A}|\psi}\le M\braket{\psi|\psi}$)
\item Positive semidefinite ($\hat{A}\;\text{positive semidefinite}\,\Leftrightarrow\,\forall\,\ket{\psi}\in\mathcal H:\;\braket{\psi|\hat{A}|\psi}\ge 0$) $\Rightarrow$ also Hermitian ($\hat{A}\;\text{Hermitian}\,\Leftrightarrow\,\hat{A}=\hat{A}^\dagger$)
\item Trace-class ($\hat{A}\;\text{trace-class}\,\Leftrightarrow\,\int\!\!\!\!\!\!\sum_k\braket{k|\hat{A}|k}<\infty$ and independent of basis $\{\ket{k}\}$) with trace $\mathrm{tr}[\hat{\rho}]=1$
\end{itemize} 
Note that the definitions of linearity, boundedness, positive semidefiniteness, and trace-class have been given in the context of the relevant Hilbert spaces with norm $\|\ket{\psi}\|=\sqrt{\braket{\psi|\psi}}$.

Any density operator can be written as a convex combination of rank-1 pure state projectors:
\begin{equation}
\hat{\rho}=\sum_i \;p_i\ket{\psi_i}\bra{\psi_i}\,;\qquad p_i>0\;\forall \;i,\;\sum_i \;p_i=1.
\end{equation}
This decomposition is what actually distinguishes pure states from so-called mixed states. While pure states cannot be written as a convex combination of two or more projectors, mixed states always contain two or more pure state projectors in their convex combination.\footnote[2]{Pure states are therefore extrema in the convex set of states.} Hence, the density operator of a pure state $\ket{\psi}$ is given by $\hat{\rho}_{pure}=\ket{\psi}\bra{\psi}$ and thus:
\begin{itemize}
\item For pure states: $\mathrm{tr}[\hat{\rho}^2]=\mathrm{tr}[\hat{\rho}]=1$.
\item For mixed states: $\mathrm{tr}[\hat{\rho}^2]<\mathrm{tr}[\hat{\rho}]=1$.
\end{itemize}
Furthermore, it is important to mention that there is no unique pure state decomposition for a given density operator. 

Expectation values translate into the density operator formalism simply by
\begin{equation}
\braket{\hat{A}}=\sum_i \;p_i\braket{\psi_i|\hat{A}|\psi_i}=\mathrm{tr}[\hat{\rho}\hat{A}],
\end{equation}
which can be easily seen by inserting $\sum_k \,\ket{k}\bra{k}=\mathbb{1}$ for an arbitrary orthonormal basis $\{\ket{k}\}$ into the term $\braket{\psi_i|\hat{A}|\psi_i}$ and some rearrangements.

\section{Discrete Variable vs. Continuous Variable Quantum States}\label{states}
So far, quantum states are defined as abstract density operators. However, these operators are difficult to handle, since for proper calculations, practical objects such as matrices are required. Therefore, this section discusses how to actually handle quantum states when working with them, and how to experimentally implement them. Two cases have to be distinguished: Either the quantum system is finite- or infinite-dimensional.

\subsection{Discrete Variable Quantum States}
First, we focus on the finite-dimensional case. The finite-dimensional Hilbert space ${\mathcal H}_d$ is spanned by a discrete and finite set of orthonormal basis vectors $\{\ket{n}\,:\,n=0,\ldots,d-1,\,\braket{n|m}=\delta_{nm}\}$. This is the reason why these quantum states are called \textit{discrete variable} (\textit{DV}) quantum states. A pure state of this kind looks like
\begin{equation}
\ket{\psi}=c_0\ket{0}+c_1\ket{1}+\ldots+c_{d-1}\ket{d-1}\;,\qquad\sum_{i=0}^{d-1}|c_i|^2=1.
\end{equation}
For two dimensions this becomes 
\begin{equation}
\ket{\psi}=c_0\ket{0}+c_1\ket{1},
\end{equation}
which is also called a \textit{qubit} in analogy to the classical bit (cbit). This is actually the point, when it becomes clear, why it is interesting to do quantum information and why to do research on quantum computers: In contrast to the cbit, the qubit can be in a superposition of the zero- and the one-state. Therefore, a qubit contains more information than a classical bit and in one computation step more information may be processed. For certain tasks quantum computer algorithms, exploiting the quantum mechanical superposition principle, outperform their classical counterparts. Examples are Peter W. Shor's algorithm for integer factorization \cite{Sho97}, and Lov K. Grover's algorithm for searching unsorted databases \cite{Gro96}. Shor's algorithm factorizes large numbers in polynomial time, while the best classical algorithm only achieves sub-exponential time \cite{Pom}. Its experimental demonstration was accomplished for $N=15$ \cite{Qcom1,Qcom2,Qcom3}. Likewise Grover's algorithm provides a quadratic speedup, requiring $\mathcal O(N^{1/2})$ time for a database with N items, while classically this problem would only be solvable in linear time. However, quantum computation outperforms classical computation only for some special tasks and furthermore a large-scale quantum computer could not be realized yet.

Back to the description of DV quantum states: Motivated by the term for a 2-dimensional quantum state, qubit, a 3-dimensional state is called \textit{qutrit} and a d-dimensional, \textit{qudit}. For such a general qudit in a mixed state $\hat{\rho}$ consider $\rho_{nm}=\braket{n|\hat{\rho}|m}\in\mathbb C$. This defines a proper $d\times d$ matrix containing complex numbers, which is called the \textit{density matrix} of the quantum state $\hat{\rho}$.\footnote[3]{Often the terms "density operator" and "density matrix" are used interchangeably and mostly it is clear from the context whether the actual operator or the matrix is meant. However, they are actually not the same.} These complex-valued density matrices are extremely suited for describing DV quantum states in a manageable way. Several important quantities, such as entropy or, in the multipartite case, entanglement measures, can be calculated from them. 

To illustrate the concepts introduced so far, consider the mixed qubit state
\begin{equation}
\hat{\rho}=\frac{1}{2}\Bigl(\ket{0}\bra{0}+\ket{1}\bra{1}+c\,\mathrm{e}^{i\phi}\ket{1}\bra{0}+c\,\mathrm{e}^{-i\phi}\ket{0}\bra{1}\Bigr)\;,\qquad c<1.
\end{equation}
It possesses the density matrix
\begin{equation}
\rho_{nm}=\frac{1}{2}
\begin{pmatrix} 
1 & c\,\mathrm{e}^{-i\phi} \\
c\,\mathrm{e}^{i\phi} & 1 
\end{pmatrix}.
\end{equation}
A pure state decomposition is given by
\begin{equation}
\hat{\rho}=\frac{1+c}{2}\ket{\psi_+}\bra{\psi_+}\,+\,\frac{1-c}{2}\ket{\psi_-}\bra{\psi_-}\;,\qquad\ket{\psi_\pm}=\frac{1}{\sqrt{2}}\Bigl(\ket{0}\pm \mathrm{e}^{i\phi}\ket{1}\Bigr).
\end{equation}

How can DV quantum states be experimentally realized? More precisely, how can logical qubits or qudits be experimentally encoded? An apparent example is simply to use the ground and excited states of a 2-level atom, $\ket{0}_L=\ket{g}$ and $\ket{1}_L=\ket{e}$, to implement a qubit. Furthermore there are a lot of quantum optical encoding techniques. For example, in a finite-dimensional subspace of the infinite-dimensional Fock space, photons in a single mode can be utilized (\textit{single-rail encoding}). For a qudit this becomes $\{\ket{0}_L=\ket{0}_{Fock},\,\ket{1}_L=\ket{1}_{Fock},\,\ldots\,,\ket{d-1}_L=\ket{d-1}_{Fock}\}$.\footnote[4]{Fock states and coherent states are introduced in more detail in the next subsection, when the quantum harmonic oscillator is discussed.} However, it is more practical to use encodings, where each state holds the same number of photons. In the \textit{multiple-rail encoding} a qudit is represented by a single photon in $d$ different modes: $\{\ket{0}_L=\ket{10\ldots0},\,\ket{1}_L=\ket{010\ldots0},\,\ldots\,,\ket{d-1}_L=\ket{0\ldots01}\}$. Unfortunately, this kind of encoding is not scalable, if several qubits or qudits are used. A compromise is offered by the \textit{dual-rail encoding} which uses only two modes and a constant number of photons distributed between these two modes: $\{\ket{0}_L=\ket{0,d-1},\,\ket{1}_L=\ket{1,d-2},\,\ldots\,,\ket{d-2}_L=\ket{d-2,1},\,\ket{d-1}_L=\ket{d-1,0}\}$. These modes can be realized as spatial modes or as modes of orthogonal polarization. 

Additionally there are optical encodings utilizing coherent states instead of single photons. A logical qubit can be encoded with even and odd \textit{Schr\"odinger-cat states} (also called \textit{coherent-state superpositions}, \textit{CSS}s).
\begin{itemize}
\item Even cat state: $\ket{\psi_+}=\frac{1}{\sqrt{{\mathcal N}_+}}(\ket{\alpha}+\ket{-\alpha})$.
\item Odd cat state: $\ket{\psi_-}=\frac{1}{\sqrt{{\mathcal N}_-}}(\ket{\alpha}-\ket{-\alpha})$.
\end{itemize}
${\mathcal N}_\pm$ are normalization constants. The names \textit{even} and \textit{odd} rely on the fact that the even cat state contains only even photon number states, while the odd cat state contains only odd photon number states. Since $\braket{n|m}=\delta_{nm}$ for Fock states $\ket{n}$ and $\ket{m}$, the overlap between the even and odd cat state is zero: $\braket{\psi_+|\psi_-}=0$. Hence, a qubit can be represented as $\{\ket{0}_L=\ket{\psi_+},\,\ket{1}_L=\ket{\psi_-}\}$ \cite{cats3}.

\subsection{Continuous Variable Quantum States}
This thesis focusses on QIT using quantum optical systems. It is straightforward to show that the quantized electromagnetic field is described by a set of quantum harmonic oscillators. The Hamiltonian for a single mode is (unit mass)
\begin{equation}
\hat{H}=\frac{1}{2}\Bigl(\hat{p}^2+\omega^2\hat{x}^2\Bigr)=\hbar\omega\Bigl(\hat{a}^\dagger\hat{a}+\frac{1}{2}\Bigr).
\end{equation}
Here, $\hat{x}$ and $\hat{p}$, originally the oscillator's position and momentum operators, are now the so-called $\hat{x}$- and $\hat{p}$-\textit{quadrature operators} of the field. $\hat{a}$ and $\hat{a}^\dagger$ are the \textit{annihilation} and \textit{creation} operators of the electromagnetic field, also called \textit{ladder operators} or sometimes \textit{mode operators}, which destroy or create excitations of the field, the \textit{photons}. They are given by
\begin{align}\label{eq:ladders1}
\hat{a} & = \frac{1}{\sqrt{2\hbar\omega}}(\omega\hat{x}+i\hat{p}), \\ \label{eq:ladders2}
\hat{a}^\dagger & = \frac{1}{\sqrt{2\hbar\omega}}(\omega\hat{x}-i\hat{p}).
\end{align}
These operators define the so-called \textit{Fock space} of \textit{Fock states} or \textit{photon number states} $\{\ket{n}:n\in{\mathbb N}_0,\,\braket{n|m}=\delta_{nm}\}$. Fock states are eigenstates of the \textit{number operator} $\hat{n}=\hat{a}^\dagger\hat{a}$.
\begin{align}\label{eq:ladders3}
\hat{a}\ket{n} & = \sqrt{n}\ket{n-1}, \\ \label{eq:ladders4}
\hat{a}^\dagger\ket{n} & = \sqrt{n+1}\ket{n+1}, \\
\hat{n}\ket{n} & = n\ket{n},
\end{align}
where $\ket{n}$ denotes a state which contains $n$ photons, while the energy of a single photon is given by $\hbar\omega$. Therefore, the form of the Hamiltonian becomes comprehensible: $\hat{n}\hbar\omega$ is simply the energy of $n$ photons and additionally there is the vacuum energy $\frac{1}{2}\hbar\omega$.

However, also multimode systems can be considered. Photons in different modes stand for photons of different "fashion", what makes them distinguishable, in contrast to photons in the same mode, which are indistinguishable. These modes might correspond for example to different energies, different polarizations, or to spatial modes. For two modes $j$ and $k$ the above defined operators satisfy the commutation relations
\begin{align}
\lbrack\hat{x}_j,\hat{p}_k\rbrack &= i\hbar\delta_{jk} \label{eq:kom1}, \\
\lbrack\hat{x}_j,\hat{x}_k\rbrack &=  \lbrack\hat{p}_j,\hat{p}_k\rbrack =  0 \label{eq:kom2}, \\
\lbrack\hat{a}_j,\hat{a}_k^\dagger\rbrack &=  \delta_{jk} \label{eq:kom3}, \\
\lbrack\hat{a}_j,\hat{a}_k\rbrack &= \lbrack\hat{a}_j^\dagger,\hat{a}_k^\dagger\rbrack =  0 \label{eq:kom4}.
\end{align}
Equations \eqref{eq:kom1} and \eqref{eq:kom2} are the well-known canonical commutation relations from general quantum mechanics. Equations \eqref{eq:kom3} and \eqref{eq:kom4}, the bosonic commutation relations, simply follow from the definitions of the ladder operators. 

Rewrite equations \eqref{eq:ladders1} and \eqref{eq:ladders2} into
\begin{align}
\hat{x} & = \sqrt{\frac{\hbar}{2\omega}}(\hat{a}+\hat{a}^\dagger), \\
\hat{p} & = -i\sqrt{\frac{\hbar\omega}{2}}(\hat{a}-\hat{a}^\dagger).
\end{align}
Hence, up to normalization factors, the $\hat{x}$- and $\hat{p}$-quadratures are just the real and imaginary parts of the annihilation operator. To get rid of these factors, define new dimensionless quadratures
\begin{align}
\hat{x}' & = \sqrt{\frac{\omega}{\hbar}}\hat{x}= \frac{1}{\sqrt{2}}(\hat{a}+\hat{a}^\dagger),\\
\hat{p}' & = \sqrt{\frac{1}{\hbar\omega}}\hat{p}= \frac{1}{i\sqrt{2}}(\hat{a}-\hat{a}^\dagger),
\end{align}
which obey the commutation relation
\begin{equation}
\lbrack\hat{x}_j,\hat{p}_k\rbrack = i\delta_{jk}.
\end{equation}
Hence, this definition of dimensionless position and momentum operators corresponds to setting $\hbar=1$. Throughout this thesis, the dimensionless quadratures $\hat{x}'$ and $\hat{p}'$ will be used and in the following the prime will be omitted. So, $\hat{x}$ and $\hat{p}$ always stand for a pair of conjugate dimensionless quadratures obeying $\lbrack\hat{x},\hat{p}\rbrack = i$.

The eigenstates corresponding to the $\hat{x}$- and $\hat{p}$-quadratures are the position and momentum eigenstates $\ket{x}$ and $\ket{p}$:
\begin{equation}
\hat{x}\ket{x}_{pos.}=x\ket{x}_{pos.}\;,\qquad\hat{p}\ket{p}_{mom.}=p\ket{p}_{mom.}
\end{equation}
They are orthogonal,
\begin{equation}
\braket{x_1|x_2}=\delta(x_1-x_2)\;,\qquad\braket{p_1|p_2}=\delta(p_1-p_2),
\end{equation} 
and complete,
\begin{equation}
\int\limits_{-\infty}^\infty dx \ket{x}\bra{x}=\mathbb{1}\;,\qquad\int\limits_{-\infty}^\infty dp \ket{p}\bra{p}=\mathbb{1}.
\end{equation} 
Hence, they form bases. However, since they are not normalizable, $\ket{x}$ and $\ket{p}$ are unphysical states. Nevertheless, they are very important and useful to calculate position $\braket{x|\psi}=\psi(x)$ and momentum wavefunctions $\braket{p|\psi}=\psi(p)$, which are indeed well defined. The relation between the position and the momentum bases is given by a Fourier transformation \cite{Leon}.
\begin{align}
\ket{x}_{pos.} & = \frac{1}{\sqrt{2\pi}}\int\limits_{-\infty}^\infty dp\,\mathrm{e}^{-ixp}\ket{p}_{mom.}, \\
\ket{p}_{mom.} & = \frac{1}{\sqrt{2\pi}}\int\limits_{-\infty}^\infty dx\,\mathrm{e}^{+ixp}\ket{x}_{pos.}.
\end{align} 

The Hilbert space of the quantized harmonic oscillator is an infinite-dimensional $\mathcal H_{\infty}$. States living in this Hilbert space are called \textit{qumodes}. They can be either represented in the infinite, but countable Fock basis, or in the continuous x- or p-basis. Note that by qumode only states are denoted which are actually infinite-dimensional. Of course, a single Fock state or a single position eigenstate is also supported by the infinite-dimensional Hilbert space $\mathcal H_{\infty}$, but they only use a finite-dimensional subspace and could hence be characterized as DV qudits.

One of the most prominent examples of qumode states are probably the \textit{coherent states} $\ket{\alpha}$, introduced by Roy Glauber in 1963 \cite{Glauber}.  
\begin{equation}
\ket{\alpha}=\mathrm{e}^{-\frac{|\alpha|^2}{2}}\sum_{n=0}^\infty\,\frac{\alpha^n}{\sqrt{n!}}\ket{n}\;,\qquad\alpha\in{\mathbb C}.
\end{equation} 
Coherent states are eigenstates of the annihilation operator $\hat{a}\ket{\alpha}=\alpha\ket{\alpha}$. They are particularly important because their field dynamics most closely resemble classical sinusoidal waves, such as continuous laser waves. Furthermore, coherent states form an overcomplete basis of the Hilbert space, which is another valuable property, 
\begin{equation}
\frac{1}{\pi}\int_{\mathbb C}d^2\alpha\,\ket{\alpha}\bra{\alpha}=\mathbb{1}\;,\qquad d^2\alpha=d\mathfrak{Im}(\alpha)d\mathfrak{Re}(\alpha).
\end{equation}
Note, however, that the coherent state basis is not orthogonal: 
\begin{equation}
\braket{\beta|\alpha}=\mathrm{e}^{\frac{1}{2}(|\beta|^2+|\alpha|^2-2\beta^\ast\alpha)}.
\end{equation}

As previously shown, DV states can always be expressed by density matrices containing only complex numbers. Unfortunately, these matrices are not practical for describing qumodes. Due to the infinite-dimensionality of the Hilbert space, the density matrix would be of infinite size. Instead, the so-called \textit{phase-space representations} (also denoted as \textit{quasi-probability distributions}) can be employed. They are functions of the real \textit{quadrature variables} $x$ and $p$, from which any expectation values of the quadrature operators $\hat{x}$ and $\hat{p}$ can be calculated. There are different kinds of phase-space representations, which allow the calculation of different types of expectation values \cite{Leon,Scully}. The most famous one is the \textit{Wigner function}, which is suited for calculating expectation values of symmetrically ordered operators \cite{Wigner1}. Furthermore, the \textit{Glauber-Sudarshan P-representation} allows for expectation values of normally ordered operators, while the \textit {Husimi-Q distribution} yields expectation values of antinormally ordered operators \cite{Sudarshan}.\footnote[5]{Also generalized, so-called \textit{s-parameterized}, phase-space representation can be defined. For certain values of $s$ the Wigner-, P- or Q-function can then be retrieved.} Note that from these phase-space representations, the density operator can always be retrieved. Hence, phase-space representations fully describe quantum states and the quantum phase-space picture is equivalent to the density operator approach. Qumodes living in an infinite-dimensional Hilbert space, described by the continuous quadrature variables $x$ and $p$ using phase-space representations, are therefore also called \textit{continuous variable} (\textit{CV}) quantum states.

As an illustration of the phase-space representations the Wigner function is briefly discussed now:
\begin{equation}
W(x,p):=\frac{1}{\pi}\int\limits_{-\infty}^\infty dy\,\braket{x-y|\hat{\rho}|x+y}\mathrm{e}^{2ipy}.
\end{equation}
Its Fourier transform is the \textit{characteristic function}, which is defined as
\begin{equation}
\chi(\xi_1,\xi_2):=\mathrm{tr}[\hat{W}_{\xi}\hat{\rho}],
\end{equation} 
making use of the \textit{Weyl operator}
\begin{equation}
\hat{W}_{\xi}=\mathrm{e}^{i(\xi_1\hat{p}-\xi_2\hat{x})}. 
\end{equation}
The expectation value of a symmetrically ordered operator $\hat{S}_{sym}(\hat{x}^n,\hat{p}^m)$ can be calculated then as
\begin{equation}
\braket{\hat{S}_{sym}(\hat{x}^n,\hat{p}^m)}=\int\limits_{-\infty}^\infty\int\limits_{-\infty}^\infty W(x,p)\,x^np^m\,dx\,dp\,,
\end{equation}
where $\hat{S}_{sym}$ denotes symmetrization.

The most outstanding feature of the Wigner function is that it can become negative in contrast to classical probability functions (which is the reason why these representations are only called "quasi"-probability distributions.). Negativity of the Wigner function is an indication for nonclassicality of the state \cite{Wignernoncl}. Another important property the Wigner function offers is its marginal distributions. Integrating one variable out yields the probability distribution of the remaining one:
\begin{align}
p_x(x) & = \int\limits_{-\infty}^\infty dp\,W(x,p), \\
p_p(p) & = \int\limits_{-\infty}^\infty dx\,W(x,p).
\end{align}
Also note that the Wigner function is normalized,
\begin{equation}
\int\limits_{-\infty}^\infty W(x,p)\,dx\,dp\,=1.
\end{equation}

As an example, figure \ref{fig:WignerCat} shows the Wigner function of the cat state
\begin{equation}\label{eq:WignerCat}
\ket{\psi}=\frac{1}{\sqrt{\mathcal N}}\Bigl(\ket{\alpha \mathrm{e}^{i\phi}}+\ket{\alpha \mathrm{e}^{-i\phi}}\Bigr),
\end{equation}
with $\alpha=6$ and $\phi=\frac{\pi}{6}$. Its negativity is clearly visible, which is due to the high nonclassicality of the state.
\begin{figure}[ht]
\begin{center}
\includegraphics[width=10cm]{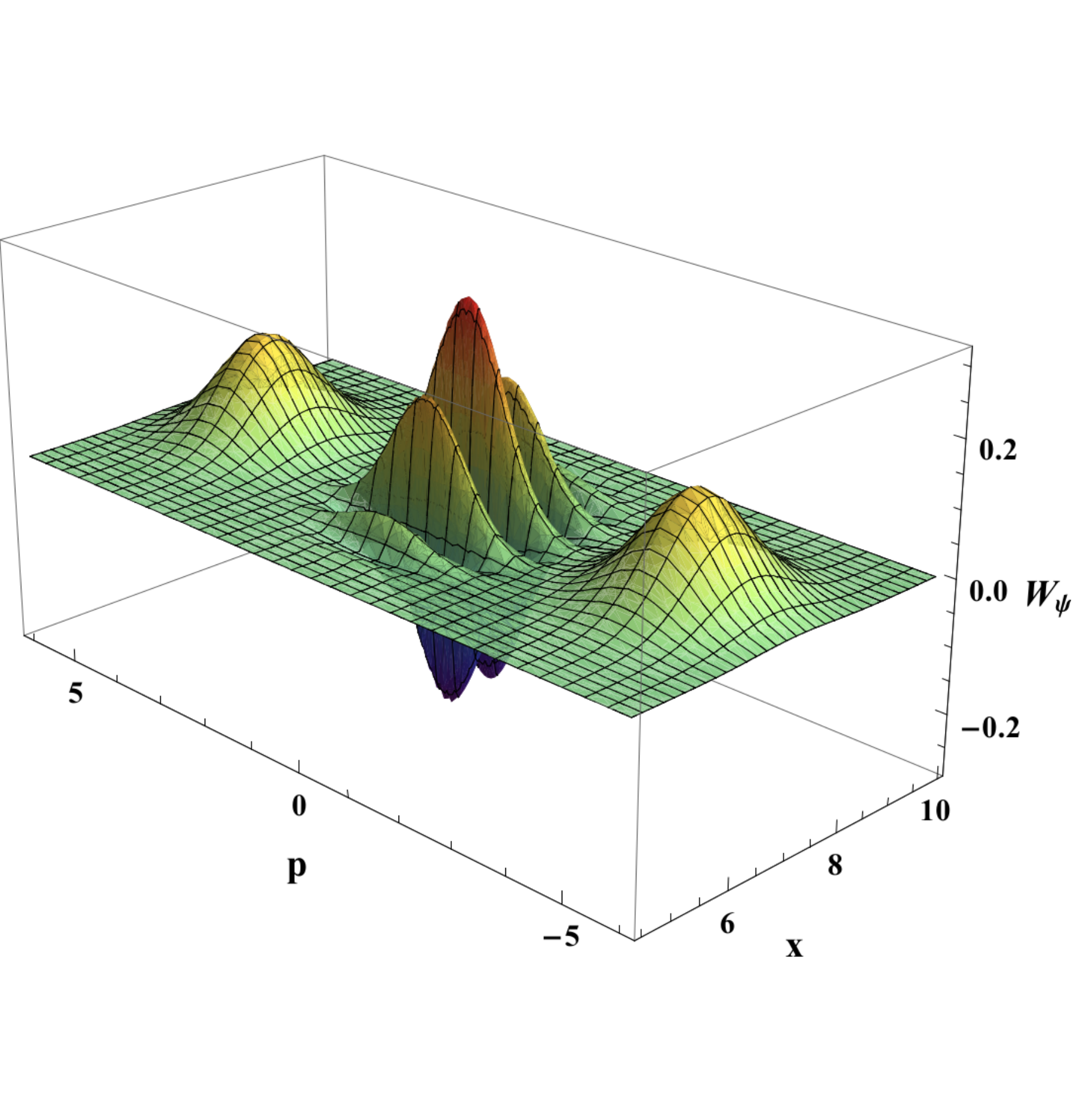}
\caption{Wigner function of the cat state of equation \eqref{eq:WignerCat}.}
\label{fig:WignerCat}
\end{center}
\end{figure}
The definitions of the Wigner function and the characteristic function are easily generalizable to multimode states. For $N$ modes the phase space becomes $2N$-dimensional and the phase space representations become functions of $2N$ variables. 

On the one hand, these quasi-probability distributions offer a neat tool for describing CV quantum states. On the other hand, however, they cannot be used for the calculation of some important quantities such as the entanglement of the state. In contrast, in the DV setting, density matrices are well suited for entanglement quantification. Therefore, now an important subclass of CV states is introduced, the \textit{Gaussian} quantum states, which are more well-behaved.

Gaussian states are defined as those whose characteristic function is a Gaussian distribution function,\footnote[6]{Since the characteristic function and the Wigner function are related by a Fourier transform, Gaussian states also have Gaussian Wigner functions, as Gaussians are Fouriertransformed into Gaussians.} i.e., for N modes,
\begin{equation}
\chi_{Gaussian}(\underline \xi)=\chi_{Gaussian}(0)\mathrm{e}^{-\frac{1}{4}{\underline \xi}^TJ_N^T\gamma J_N\underline \xi+i{\underline \xi}^TJ_N\underline D},
\end{equation}
$\underline\xi\in{\mathbb R}^{2N}$ and $\underline D\in{\mathbb R}^{2N}$ are vectors, $\gamma$ is a real symmetric $2N\times2N$-matrix and the so-called \textit{symplectic matrix} $J_N$ is defined as
\begin{equation}
J_N:=\bigoplus\limits_{i=1}^N J\,,\qquad J:=
\begin{pmatrix} 
0 & 1 \\
-1 & 0 
\end{pmatrix}.
\end{equation}
The vector $\underline D$ contains the first moments of the state, the displacements, while the matrix $\gamma$ contains the covariances and is therefore called \textit{covariance matrix}. Making use of a vector $\hat{\underline R}=(\hat{x}_1,\hat{p}_1,\ldots,\hat{x}_N,\hat{p}_N)=(\hat{R}_1,\ldots,\hat{R}_{2N})$, $\underline D$ and $\gamma$ can be defined by 
\begin{align}
d_i &= \mathrm{tr}[\hat{\rho}\hat{R}_i], \\
\gamma_{ij} &= \mathrm{tr}[\hat{\rho}(\hat{R}_i\hat{R}_j+\hat{R}_j\hat{R}_i)]-2\,\mathrm{tr}[\hat{\rho}\hat{R}_i]\mathrm{tr}[\hat{\rho}\hat{R}_j].
\end{align}
A consequence of the definition of Gaussian states is that they can be solely described by the displacements $\underline D$ and the covariance matrix $\gamma$. However, not every real symmetric $2N\times2N$-matrix corresponds to a valid quantum state, since states also have to obey the Heisenberg uncertainty relation. In the covariance matrix formalism the latter translates into the inequality  
\begin{equation}\label{eq:Heisenberg}
\gamma + iJ_N\geq 0.
\end{equation}
Here the $\geq0$ stands for positive semidefiniteness (in the following, when talking about matrices or operators, $\geq0$ will always mean positive semidefiniteness.) \cite{SimonH}.

In contrast to Gaussian states, general non-Gaussian CV states also require higher moments, i.e. the whole infinite set of moments, for their description. This is the big disadvantage of non-Gaussian states compared to Gaussian ones. A convenient way of describing general CV states using practical ${\mathbb C}$-numbers instead of operators is not known so far.

\section{Quantum Operations}
In this section the formalism behind quantum operations is introduced. Quantum operations are described by linear, \textit{completely positive} (\textit{CP}) maps. 
\begin{defini}
A map $T:S({\mathcal H}_1)\rightarrow S({\mathcal H}_2)$ mapping states $S({\mathcal H}_1)$ of some Hilbert space ${\mathcal H}_1$ onto states $S({\mathcal H}_2)$ of some possibly different Hilbert space ${\mathcal H}_2$ which preserves the positivity of the density operator is called \textup{positive}. Furthermore, if for larger Hilbert spaces also $T\otimes\mathbb{1}_n$ is positive for all $n\in\mathbb N$, $T$ is called \textup{completely positive}. 
\end{defini}
This requirement of complete positivity is clear, since every physical quantum operation should output a valid (and therefore positive) quantum state.

Next it is to be distinguished between completely positive \textit{trace-preserving} (\textit{CPTP}) maps and completely positive \textit{trace-decreasing} maps. Trace-preserving means that for a map $T$, $\mathrm{tr}[T(\hat{\rho})]=\mathrm{tr}[\hat{\rho}]=1$, while trace-preserving denotes $\mathrm{tr}[T(\hat{\rho})]<\mathrm{tr}[\hat{\rho}]=1$ (trace-increasing quantum operations do not exist.). Linear CPTP maps are called quantum channels, while linear CP trace-decreasing maps always involve for measurement operations.

While quantum channels correspond to a deterministic evolution of the state, measurement operations relate to a conditional and hence typically non-deterministic state evolution. How can quantum channels (CPTP maps) actually be described? 
\begin{thm}[Stinespring's dilation theorem \cite{Stinespring}]\label{thm:stine}
Let $\Upsilon:S({\mathcal H}_1)\rightarrow S({\mathcal H}_2)$ be a CPTP map. Then there exists an ancilla Hilbert space ${\mathcal K}$ of dimension $N$ and a joint unitary evolution $\hat{U}$ on ${\mathcal H}\otimes{\mathcal K}$ such that
\begin{equation}\label{eq:Stine}
\Upsilon(\hat{\rho})=\mathrm{tr}_{\mathcal K}[\,\hat{U}(\,\hat{\rho}\;\otimes\;\ket{0}\bra{0}\,)\,\hat{U}^\dagger],
\end{equation}
for all $\hat{\rho}\in S({\mathcal H}_1)$. The ancilla space ${\mathcal K}$ can be chosen such that $\dim[{\mathcal K}]=N\leq \dim^2[{\mathcal H}]$.
\end{thm}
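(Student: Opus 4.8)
The plan is to deduce the theorem from the operator-sum (Kraus) representation of a completely positive trace-preserving map and then to dilate the resulting isometry to a unitary; throughout I take $\mathcal H_1=\mathcal H_2=\mathcal H$ as in the statement. First I would show that any CPTP map $\Upsilon$ can be written as $\Upsilon(\hat\rho)=\sum_{k=1}^{N}\hat E_k\hat\rho\hat E_k^\dagger$ with $\sum_{k=1}^{N}\hat E_k^\dagger\hat E_k=\mathbb{1}$. The most economical route uses the Choi matrix $\hat C_\Upsilon=(\Upsilon\otimes\mathbb{1})(\ket{\Omega}\bra{\Omega})$ built from the unnormalized maximally entangled vector $\ket{\Omega}=\sum_i\ket{i}\otimes\ket{i}$ on $\mathcal H\otimes\mathcal H$: complete positivity of $\Upsilon$ is equivalent to $\hat C_\Upsilon\geq0$, so $\hat C_\Upsilon$ admits a spectral decomposition $\hat C_\Upsilon=\sum_k\ket{v_k}\bra{v_k}$ into unnormalized eigenvectors; reshaping each $\ket{v_k}\in\mathcal H\otimes\mathcal H$ into an operator $\hat E_k$ on $\mathcal H$ through the Choi--Jamiolkowski correspondence produces the Kraus operators, and trace preservation of $\Upsilon$ becomes exactly $\sum_k\hat E_k^\dagger\hat E_k=\mathbb{1}$. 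Since $\hat C_\Upsilon$ lives on a space of dimension $\dim^2[\mathcal H]$, its rank---the number $N$ of nonzero Kraus operators in this decomposition---is at most $\dim^2[\mathcal H]$, which already gives the asserted bound on the ancilla dimension.

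Next I would fix the ancilla space $\mathcal K$ with orthonormal basis $\{\ket{k}:k=1,\dots,N\}$ and define $\hat V:\mathcal H\to\mathcal H\otimes\mathcal K$ by $\hat V\ket{\psi}=\sum_{k}(\hat E_k\ket{\psi})\otimes\ket{k}$. A short computation using $\sum_k\hat E_k^\dagger\hat E_k=\mathbb{1}$ shows $\hat V^\dagger\hat V=\mathbb{1}$, so $\hat V$ is an isometry, and tracing out the ancilla gives $\mathrm{tr}_{\mathcal K}[\hat V\hat\rho\hat V^\dagger]=\sum_k\hat E_k\hat\rho\hat E_k^\dagger=\Upsilon(\hat\rho)$. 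Relabeling the first ancilla basis vector as $\ket{0}$, the isometry $\hat V$ prescribes the action of a would-be unitary $\hat U$ on the subspace $\mathcal H\otimes\ket{0}\subset\mathcal H\otimes\mathcal K$. I would then extend $\hat V$ to all of $\mathcal H\otimes\mathcal K$ by mapping an orthonormal basis of the orthogonal complement $(\mathcal H\otimes\mathcal K)\ominus(\mathcal H\otimes\ket{0})$ onto an orthonormal basis of $(\mathcal H\otimes\mathcal K)\ominus\hat V(\mathcal H)$; both complements have dimension $\dim[\mathcal H]\,(N-1)$, so such a correspondence exists and the resulting $\hat U$ is a well-defined unitary satisfying \eqref{eq:Stine}.

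The main obstacle is the first step: obtaining the Kraus decomposition from scratch, that is, the equivalence between complete positivity of $\Upsilon$ and positivity of the Choi matrix, together with the careful bookkeeping of the Choi--Jamiolkowski reshaping and of the trace-preservation constraint. Everything afterwards---the isometry identity and the extension of an isometry to a unitary on a larger space---is routine finite-dimensional linear algebra. If one is content to invoke the operator-sum representation as a known fact (it is standard, e.g. from the monograph already cited for the spectral theorem), the proof reduces to the second paragraph, and the dimension bound $N\le\dim^2[\mathcal H]$ is inherited directly from the size of a minimal Kraus family.
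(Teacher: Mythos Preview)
Your argument is the standard finite-dimensional proof and is correct: obtain a Kraus decomposition from positivity of the Choi matrix (with the rank bound giving $N\le\dim^2[\mathcal H]$), package the Kraus operators into an isometry $\hat V:\mathcal H\to\mathcal H\otimes\mathcal K$, and extend that isometry to a unitary by matching orthogonal complements of equal dimension. The paper, however, does not supply its own proof of this theorem; it simply states the result and remarks that ``the proofs of the theorems can be found in the denoted references,'' so there is nothing to compare against beyond noting that your Choi $\Rightarrow$ Kraus $\Rightarrow$ Stinespring route is the conventional one and is sound.
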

Furthermore, M. Choi has shown that every CPTP map can be given in its \textit{operator-sum decomposition} or \textit{Kraus decomposition}:
\begin{thm}[Kraus decomposition \cite{Choi1,Kraus,Audretsch}]
Every CPTP map $\Upsilon:S({\mathcal H}_1)\rightarrow S({\mathcal H}_2)$ can be written as
\begin{equation}
\Upsilon(\hat{\rho})=\sum_{i=1}^N \hat{K}_i\hat{\rho}\hat{K}_i^\dagger,
\end{equation}
for all $\hat{\rho}\in S({\mathcal H}_1)$.
The $\hat{K}_i:{\mathcal H}_1\rightarrow {\mathcal H}_2$ are called \textit{Kraus operators} and obey the completeness relation $\sum_{i=1}^N {\hat{K}_i}^\dagger\hat{K}_i = \mathbb{1}$.
\end{thm}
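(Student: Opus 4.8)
The plan is to obtain the Kraus form directly from Stinespring's dilation theorem (Theorem~\ref{thm:stine}), which has already been established, and then to extract the completeness relation from the unitarity of the dilating operator.

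First I would apply Theorem~\ref{thm:stine} to $\Upsilon$, obtaining an ancilla space $\mathcal K$ with $\dim[\mathcal K]=N$, a fixed reference state $\ket{0}\in\mathcal K$, and a unitary $\hat U$ on $\mathcal H_1\otimes\mathcal K$ with
\begin{equation}
\Upsilon(\hat\rho)=\mathrm{tr}_{\mathcal K}\bigl[\hat U(\hat\rho\otimes\ket{0}\bra{0})\hat U^\dagger\bigr]
\end{equation}
for all $\hat\rho\in S(\mathcal H_1)$. Next I would fix an orthonormal basis $\{\ket{e_i}\}_{i=1}^N$ of $\mathcal K$ and write the partial trace explicitly as $\mathrm{tr}_{\mathcal K}[\hat X]=\sum_{i=1}^N(\mathbb{1}\otimes\bra{e_i})\hat X(\mathbb{1}\otimes\ket{e_i})$. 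Defining the operators $\hat K_i:=(\mathbb{1}\otimes\bra{e_i})\,\hat U\,(\mathbb{1}\otimes\ket{0}):\mathcal H_1\to\mathcal H_2$ and substituting, a short rearrangement (pushing $\hat\rho\otimes\ket{0}\bra{0}$ through and grouping) gives $\Upsilon(\hat\rho)=\sum_{i=1}^N\hat K_i\hat\rho\hat K_i^\dagger$, which is the claimed decomposition.

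It then remains to check $\sum_{i=1}^N\hat K_i^\dagger\hat K_i=\mathbb{1}$. Writing out the left-hand side, inserting the resolution of the identity $\sum_{i=1}^N\mathbb{1}\otimes\ket{e_i}\bra{e_i}=\mathbb{1}$ on $\mathcal H_1\otimes\mathcal K$, and using $\hat U^\dagger\hat U=\mathbb{1}$ collapses the sum to $(\mathbb{1}\otimes\bra{0})(\mathbb{1}\otimes\ket{0})=\mathbb{1}$. Alternatively, the same relation follows purely from the trace-preservation hypothesis, since $\mathrm{tr}[\hat\rho]=\mathrm{tr}[\Upsilon(\hat\rho)]=\mathrm{tr}\bigl[(\sum_i\hat K_i^\dagger\hat K_i)\hat\rho\bigr]$ for all $\hat\rho$ forces $\sum_i\hat K_i^\dagger\hat K_i=\mathbb{1}$; it is reassuring that the two routes agree.

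The only genuinely delicate point is that the Hilbert spaces here need not be finite-dimensional (qumodes are explicitly in play), so the partial trace, the operator sum, and the identity resolution must be treated as possibly countably infinite, strongly convergent expansions rather than finite ones, and one must confirm that each $\hat K_i$ is a bounded operator $\mathcal H_1\to\mathcal H_2$; granting Stinespring in the appropriate generality, these are bookkeeping matters rather than conceptual obstacles, and they constitute the hard part. A fully independent route — forming the Choi operator $(\Upsilon\otimes\mathbb{1})(\ket{\Omega}\bra{\Omega})$ with $\ket{\Omega}$ an (unnormalized) maximally entangled vector, using complete positivity to conclude it is positive semidefinite, spectrally decomposing it, and reshaping its eigenvectors into the $\hat K_i$ — would also work and is closer to Choi's original argument, but it relies more heavily on finite dimensionality, so I would keep the Stinespring derivation as the primary tool. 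Either way, one should note in passing that the $\hat K_i$ are not unique: any isometric change of ancilla basis produces another admissible family.
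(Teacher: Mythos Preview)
Your derivation via Stinespring's dilation, defining $\hat K_i=(\mathbb{1}\otimes\bra{e_i})\hat U(\mathbb{1}\otimes\ket{0})$, is correct and is exactly the route the paper indicates: it does not give a full proof but simply notes that the Kraus operators are obtained from the Stinespring unitary as $\hat K_n=\braket{n|\hat U|0}$, deferring the details to the cited references. Your additional remarks on the completeness relation and the Choi alternative go beyond what the paper spells out but are fully consistent with it.
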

The proofs of the theorems can be found in the denoted references. Making use of a basis $\{\ket{n}\}$ of the ancilla Hilbert space ${\mathcal K}$, the Kraus operators can be obtained with the aid of the Stinespring unitary $\hat{U}$ of equation \eqref{eq:Stine} via
\begin{equation}
\hat{K}_n = \braket{n|\hat{U}|0}.
\end{equation}

Now the measurement operations $\mathcal M_m(\hat{\rho})$ are introduced \cite{Audretsch}. Such a measurement operation is defined as a complete ($\sum_m {\hat{M}_m}^\dagger\hat{M}_m = \mathbb{1}$) set of operators $\{\hat{M}_m\}$. Each operator in the set corresponds to a possible measurement outcome $m$, which is obtained with probability $p(m)=\mathrm{tr}[{\hat{M}_m}^\dagger\hat{M}_m\hat{\rho}]<1$. The state after the measurement is 
\begin{equation}
\mathcal M_m(\hat{\rho})=\frac{\hat{M}_m\hat{\rho}{\hat{M}_m}^\dagger}{p(m)}.
\end{equation}
The normalization with $p(m)$ has to be included to compensate the trace-decrease of the actual measurement operation $\hat{M}_m\hat{\rho}{\hat{M}_m}^\dagger$.

To be even more general, each measurement outcome may not just correspond to a single operator $\hat{M}_m$, but to a whole set of Kraus operators $\{\hat{M}_{m,i}\}$. Then $\sum_{m,i} \hat{M}_{m,i}^\dagger\hat{M}_{m,i} = \mathbb{1}$ and the state after a measurement with result $m$ is
\begin{equation}
\mathcal M_m(\hat{\rho})=\frac{\sum_i \hat{M}_{m,i}\hat{\rho}{\hat{M}_{m,i}}^\dagger}{p(m)},
\end{equation}
which is obtained with probability $p(m)=\mathrm{tr}[\sum_{i} \hat{M}_{m,i}^\dagger\hat{M}_{m,i}\hat{\rho}]<1$.

An important class of quantum operations are those which can be implemented by using only linear optical elements. These unitary interactions are described by Hamiltonians of order $\leq2$ in the mode operators and have a linear input-output relation with respect to the mode operators. Actually, there are only 4 distinct such linear optical elements:
\begin{itemize}
\item Displacers: 
\begin{align}
\hat{D}(\alpha) &= \mathrm{e}^{\alpha\hat{a}^\dagger-\alpha^\ast\hat{a}}, \\
\hat{D}^\dagger(\alpha)\hat{a}\hat{D}(\alpha) &= \hat{a}+\alpha.
\end{align}
\item Phase Shifters: 
\begin{align}
\hat{\Gamma}(\phi) & = \mathrm{e}^{i\phi\hat{a}^\dagger\hat{a}}, \\
\hat{\Gamma}^\dagger(\phi)\hat{a}\hat{\Gamma}(\phi) & = \hat{a}\mathrm{e}^{i\phi}.
\end{align}
\item Beam Splitters: 
\begin{align}
\hat{B}_{\phi}(\theta) & = \mathrm{e}^{\theta(\mathrm{e}^{i\phi}{\hat{a}_1}^\dagger\hat{a}_2-\mathrm{e}^{-i\phi}\hat{a}_1{\hat{a}_2}^\dagger)}, \\
\hat{B}^\dagger_{\phi}(\theta)\hat{a}\hat{B}_{\phi}(\theta) & = \cos\,\theta\,\hat{a}_1+\mathrm{e}^{i\phi}\sin\,\theta\,\hat{a}_2.
\end{align}
\item Squeezers: 
\begin{align}
\hat{S}(\theta,r) & = \mathrm{e}^{\frac{1}{2}r(\mathrm{e}^{-i\theta}\hat{a}^2-\mathrm{e}^{i\theta}\hat{a}^{\dagger^2})}, \\
\hat{S}^\dagger(\theta,r)\hat{a}\hat{S}(\theta,r) & = \hat{a}\,\cosh\,r-\hat{a}^\dagger \mathrm{e}^{i\theta}\sinh\,r.
\end{align}
\end{itemize} 

These operations are also called Gaussian operations, since they map Gaussian states onto Gaussian states. Note that for the evaluation of the action of the operations on the mode operators, the Hadamard lemma has been exploited, which reads
\begin{equation}
\mathrm{e}^{\hat{X}} \hat{Y} \mathrm{e}^{-\hat{X}}=\sum_{m=0}^\infty\frac{[\hat{X},\hat{Y}]_m}{m!},
\end{equation}
for bounded operators $\hat{X}$ and $\hat{Y}$, $[\hat{X},\hat{Y}]_m=[\hat{X},[\hat{X},\hat{Y}]_{m-1}]$ and $[\hat{X},\hat{Y}]_0=\hat{Y}$. A proof for the theorem is presented in appendix \ref{Hamamard}.

It is important to know that every pure Gaussian state can be created from the vacuum just with a displacement and a squeezing operation:
\begin{equation}
\ket{\alpha,\xi}\,=\,\hat{D}(\alpha)\,\hat{S}(\xi=r\mathrm{e}^{i\theta})\ket{vac}
\end{equation}

For universal quantum computation it is not sufficient to work only with Gaussian states and operations. A non-Gaussian element is required for universality. In the previous section, it has been already shown that general non-Gaussian CV states are difficult to handle and unfortunately, in contrast to Gaussian operations, non-Gaussian operations are just as well hard to implement efficiently.

\section{Entanglement}
The main part of this thesis deals with entanglement. So what actually is entanglement? 

Several interesting answers to this question are collected in \cite{Bruss}. The most significant ones come from J. Bell, A. Peres and D. Mermin. Bell writes, "entanglement is a correlation that is stronger than any classical correlation". This leads to the fact that entangled states violate the well-known Bell inequalities \cite{Bell}. This violation has the consequence that quantum mechanics cannot be a realistic \textit{and} local theory at once, in line with Mermin, who writes, "entanglement is a correlation that contradicts the theory of elements of reality". So entanglement cannot be explained or simulated classically and hence quantum correlations due to entanglement can be concluded to be stronger than \textit{any} classical ones. \textit{Realism} in the sense of physical theories means that measurements just read off predetermined properties, which are so-called \textit{elements of physical reality} and also exist if they are not measured at all. \textit{Locality} means that for spatially separated particles a measurement on one of the particles cannot instantaneously affect the other particle. Classically such actions propagate at most with the speed of light. Note that in the widely established Copenhagen interpretation of quantum mechanics, both locality and realism are abdicated. However, there exist alternative approaches to quantum mechanics, giving up only one of the two, such as Bohmian mechanics, which is realistic, using hidden variables, but also non-local \cite{Bmech}.

Peres says, "entanglement is a trick that quantum magicians use to produce phenomena that cannot be imitated by classical magicians". An example of such a "trick" is quantum teleportation \cite{Qtel2,Qtel3}, which has its origin in a paper by C. H. Bennett et al. from 1993 \cite{Qtel1}. However, entanglement is not just the main resource for quantum teleportation but for virtually all applications in quantum information, such as long-distance quantum communication and quantum computation.

To come to a conclusion entanglement can be understood as a strong non-classical quantum correlation, which is the basis for most applications in quantum information. Now entanglement is defined and discussed in a mathematical way. The focus will be on bipartite entanglement.\footnote[7]{A very comprehensive and accessible introduction into entangled systems can be found in the book by J. Audretsch \cite{Audretsch}, which was one of the main sources for this chapter.}
\begin{defini}\label{def:ent}
An n-partite quantum state $\hat{\rho}^{1\ldots n}$ on the product Hilbert space ${\mathcal H}^1\otimes\ldots\otimes{\mathcal H}^n$ is called \textup{separable} if and only if it can be written as
\begin{equation}\label{eq:sep}
\hat{\rho}^{1\ldots n}=\sum_i p_i\;\hat{\rho}_i^{1}\otimes\ldots\otimes\hat{\rho}_i^{n}\,,\qquad p_i>0\;\forall \;i,\;\sum_i \;p_i=1.
\end{equation}
Any state which is not separable is called \textup{entangled}, \textup{inseparable} or \textup{quantum correlated}.\footnote[8]{It is worth mentioning that there also exists a classification which goes beyond this two sidedness, introducing the so-called \textit{quantum discord} \cite{discord1}. Quantum discord is a quantum correlation which can be present in separable states but which cannot be simulated classically. It has an operational interpretation in state merging \cite{discord2}. However, this is not relevant for this thesis and the two-sided classification between separability and entanglement is sufficient.}  
\end{defini}
Note that, if the convex combination in equation \eqref{eq:sep} consists of only one term, the state is in a fully uncorrelated \textit{product state} $\hat{\rho}^{1}\otimes\ldots\otimes\hat{\rho}^{n}$. For a convex combination containing several terms there are still classical correlations present \cite{Audretsch}. 

An important class of transformations regarding entanglement are the so-called \textit{LOCC} operations (LOCC stands for \textit{Local Operations and Classical Communication}). Since classical correlations can be created via LOCC, entanglement can also be defined as those correlations which cannot be produced using LOCC \cite{MeasuresIntro}. Such a definition is consistent with the mathematical definition \ref{def:ent}, since a quantum state $\hat{\rho}$ can be generated from a resource of separable states via LOCC if and only if it is separable. It can be also shown that separable states can be created from entangled states using LOCC. More generally, there are states which can be obtained from entangled states via LOCC which are still entangled. It is reasonable to assume that these states are "less entangled". So, what does "less entangled" mean? This leads to the problem of entanglement quantification: How much entangled is a given quantum state?

\subsection{Entanglement Measures}
A bipartite \textit{Entanglement Measure} is a function $E(\hat{\rho})$ which quantifies the entanglement of a given quantum state. What properties should such a function offer \cite{MeasuresIntro,HoroEntIntro}?
\begin{itemize}
\item[1.] For quantum states $\hat{\rho}$, $E(\hat{\rho})$ is a \textit{mapping from density matrices into positive real numbers}
\begin{equation}
\hat{\rho}\rightarrow E(\hat{\rho})\in\mathbb{R}^{+},
\end{equation}
which is \textit{vanishing on separable states}:
\begin{equation}
\hat{\rho}\;\text{separable}\;\Rightarrow\;E(\hat{\rho})=0
\end{equation} 
Actually, $E(\hat{\rho})$ is only required to be minimal for separable states. However, it is reasonable to set this constant to zero.
\item[2.] \textit{Monotonicity under LOCC:} Knowing that entanglement cannot be created via LOCC, it is clear that entanglement cannot increase under LOCC. So, for any LOCC operation $\Lambda$:
\begin{equation}\label{LOCCmon}
E(\Lambda(\hat{\rho}))\leq E(\hat{\rho})
\end{equation}
Unfortunately, the mathematical description of LOCC operations is very intricate \cite{LOCCmath}. Instead so-called \textit{separable operations} can be exploited, which are defined as
\begin{equation}
\Lambda(\hat{\rho})=\sum_i \hat{A}_i\otimes\hat{B}_i(\hat{\rho})\hat{A}_i^\dagger\otimes\hat{B}_i^\dagger\;,\qquad\sum_i \hat{A}_i^\dagger\hat{A}_i\otimes\hat{B}_i^\dagger\hat{B}_i=\mathbb{1}\otimes\mathbb{1}.
\end{equation}
Every LOCC operation can be cast in the form of separable operations. 

Actually, most entanglement measures also satisfy the stronger condition that they do not increase on average under LOCC,
\begin{equation}
\sum_i p_iE(\hat{\varsigma}_i)\leq E(\hat{\rho}),
\end{equation}
where $\{p_i,\hat{\varsigma}_i\}$ denotes the ensemble obtained from the state $\hat{\rho}$ via LOCC. Mostly it is easier to prove this stronger condition than the actually required condition of equation \ref{LOCCmon}.
\end{itemize}
These properties are the essential ones a function has to satisfy in order to quantify entanglement. However, there are some more properties, which can be helpful:
\begin{itemize}
\item[3.] \textit{Normalization:} For maximally entangled states $\ket{\psi_d^{+}}$
\begin{equation}
E(\ket{\psi_d^{+}})=\log d.
\end{equation}
Maximally entangled states only exist in bipartite systems out of two d-{dimensional} (d finite) subsystems in Hilbert spaces of the form ${\mathcal H}_d\otimes{\mathcal H}_d$. They are always pure and can be written as
\begin{equation}
\ket{\psi_d^{+}}=\frac{1}{\sqrt{d}}\Bigl(\ket{0,0}+\ket{1,1}+\ldots+\ket{d-1,d-1}\Bigr).
\end{equation}
The notion of maximal entanglement comes from the fact that any state in ${\mathcal H}_d\otimes{\mathcal H}_d$ can be prepared from such a state with certainty via LOCC.
\item[4.] For pure states, $E(\ket{\psi}\bra{\psi})$ reduces to the \textit{entropy of entanglement} $E_S$, which is defined as
\begin{equation}\label{eq:entropyent}
E_S(\ket{\psi}\bra{\psi}):=S(\mathrm{tr}_A[\ket{\psi}\bra{\psi}])=S(\mathrm{tr}_B[\ket{\psi}\bra{\psi}]),
\end{equation}
where $S$ denotes the \textit{von-Neumann entropy} $S(\hat{\rho})=-\mathrm{tr}[\hat{\rho}\log\hat{\rho}]$ and $\mathrm{tr}_A$ ($\mathrm{tr}_B$) denotes partial tracing over subsystem A (B). 
\end{itemize}
It is important to know that in the literature sometimes a distinction between \textit{entanglement measures} and \textit{entanglement monotones} is made, but the definitions are not always consistent regarding different references. Some authors use the terms interchangeably. As an example here the definitions of \cite{MeasuresIntro} are presented:  
\begin{defini}
A function $E(\hat{\rho})$ is called \textup{entanglement monotone} if and only if it satifies properties 1, 2 (no increase on average under LOCC) and 3.
\end{defini}
\begin{defini}
A function $E(\hat{\rho})$ is called \textup{entanglement measure} if and only if it satifies properties 1, 3 and 4 and does not increase under deterministic LOCC.
\end{defini}
In this thesis, when generally talking about measures and monotones, it is simply used the term "entanglement measures" instead of always explicitly mentioning both.

There is a whole bunch of further properties which entanglement measures may satisfy. For example, there are \textit{convex} entanglement measures and \textit{(fully) additive} ones. Then there are the properties of \textit{(asymptotic) continuity} and \textit{lockability}. Since this is supposed to be a rather brief introduction to entanglement measures, these properties are not explained here. They are not exploited in this thesis anyway. Instead the reader is again referred to \cite{MeasuresIntro,HoroEntIntro}, which provide a very comprehensible introduction to entanglement measures.

Conceptually three different kinds of entanglement measures can be distinguished. First there are the physically motivated \textit{operational entanglement measures}. Then there are the \textit{norm-based} and \textit{distance-based entanglement measures} and finally \textit{axiomatic entanglement measures} can be defined. An operational measure results for example from the answer to the question "Having a supply of $n$ copies of a quantum state $\hat{\rho}$, how many copies $rn$ of a maximally entangled state can be created in the limit $n\rightarrow\infty$?". Then the rate $r$ is defined as the \textit{distillable entanglement} $E_D$. An example for distance-based measures are the \textit{relative entropies of entanglement} $E_R^X$. Based on the quantum relative entropy they compare the correlations in a given quantum state with the correlations of the closest state from a set of states $X$. For $X$, any set of states can be chosen, such as the separable states or the nondistillable states. Finally, an axiomatic measure is just a function which has been constructed in such a way that it obeys the necessary requirements of entanglement measures. However, some axiomatic and norm/distance-based measures were given operational interpretations subsequently.

Since the description of quantum states in the DV and in the CV regime proceeds quite differently, there are also different ways towards entanglement quantification. While for DV states density matrices can be employed, for CV states only covariance matrices can be used. Hence for non-Gaussian CV states general entanglement quantification is yet an unsolved problem. In the following subsections some relevant measures are presented.

\subsection{DV Measures}\label{subsec:DVmeasures}
One of the most important DV entanglement measures is the pure state measure \textit{entropy of entanglement} $E_S$, which has already been defined in equation \eqref{eq:entropyent}. Operationally interpreted $E_S$ specifies the maximal reversible rate $r$ of transforming  $n$ copies of the pure state at hand via LOCC into $rn$ copies of a maximally entangled state and vice versa for $n\rightarrow\infty$. However, for mixed states, these transformations are not reversible anymore. This is the reason, why $E_S$ only works for pure states and is no entanglement measure anymore for mixed states.

When talking about DV pure states, the \textit{Schmidt decomposition} is a very important tool \cite{Audretsch}.
\begin{thm}[Schmidt decomposition]\label{thm:Schmidtdecomp}
Let $\ket{\psi}^{AB}$ be a normalized pure bipartite state in the product Hilbert space ${\mathcal H}^A_{n}\otimes{\mathcal H}^B_{m}$ with $\dim[{\mathcal H}^A_{n}]=n$ and $\dim[{\mathcal H}^B_{m}]=m$. Then there exist orthonormal bases $\{\ket{u_i}^A\}$ and $\{\ket{w_i}^B\}$ such that
\begin{equation}
\ket{\psi}^{AB}=\sum_{i=1}^{r(\psi)} \sqrt{\alpha_i}\ket{u_i}^A\ket{w_i}^B\,,\qquad \alpha_i>0\;\forall \;i,\;\sum_i \alpha_i=1,
\end{equation}
with $r(\psi)\leq min\{n,m\}$. Such a decomposition is called \textup{Schmidt decomposition} of $\ket{\psi}^{AB}$.
\end{thm}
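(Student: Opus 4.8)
The plan is to reduce the statement to the singular value decomposition (SVD) of a rectangular complex matrix. First I would fix arbitrary orthonormal bases $\{\ket{j}^A\}_{j=1}^{n}$ of ${\mathcal H}^A_n$ and $\{\ket{k}^B\}_{k=1}^{m}$ of ${\mathcal H}^B_m$ and expand $\ket{\psi}^{AB}=\sum_{j,k}c_{jk}\ket{j}^A\ket{k}^B$, collecting the coefficients into an $n\times m$ matrix $C=(c_{jk})$. The normalization $\braket{\psi|\psi}=1$ then reads $\sum_{j,k}|c_{jk}|^2=\mathrm{tr}[C^\dagger C]=1$.

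Next I would invoke the SVD: there exist an $n\times n$ unitary $U$, an $m\times m$ unitary $V$, and an $n\times m$ matrix $D$ that is ``diagonal'' with nonnegative real entries $\sqrt{\alpha_1}\geq\sqrt{\alpha_2}\geq\ldots\geq 0$ on its main diagonal and zeros elsewhere, such that $C=UDV$. Writing this componentwise, $c_{jk}=\sum_i U_{ji}\sqrt{\alpha_i}\,V_{ik}$ with $i$ ranging from $1$ to $\min\{n,m\}$. I would then define $\ket{u_i}^A:=\sum_j U_{ji}\ket{j}^A$ and $\ket{w_i}^B:=\sum_k V_{ik}\ket{k}^B$; unitarity of $U$ and $V$ guarantees these are orthonormal systems, which extend to orthonormal bases of ${\mathcal H}^A_n$ and ${\mathcal H}^B_m$. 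Substituting back yields $\ket{\psi}^{AB}=\sum_i\sqrt{\alpha_i}\ket{u_i}^A\ket{w_i}^B$, and discarding the vanishing terms leaves exactly $r(\psi)=\mathrm{rank}(C)$ summands with $\alpha_i>0$. Since the rank of an $n\times m$ matrix cannot exceed either dimension, $r(\psi)\leq\min\{n,m\}$, and $\sum_i\alpha_i=\mathrm{tr}[D^\dagger D]=\mathrm{tr}[C^\dagger C]=1$ settles the constraint on the coefficients.

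As an alternative, more ``physical'' route, I could instead start from the reduced density operator $\hat{\rho}_A=\mathrm{tr}_B[\ket{\psi}\bra{\psi}]$, use its spectral decomposition $\hat{\rho}_A=\sum_i\alpha_i\ket{u_i}\bra{u_i}$ with $\alpha_i>0$, expand $\ket{\psi}=\sum_i\ket{u_i}^A\otimes\ket{\tilde{w}_i}^B$ with $\ket{\tilde{w}_i}^B=(\bra{u_i}\otimes\mathbb{1})\ket{\psi}$, and then show by tracing out $B$ again that $\braket{\tilde{w}_i|\tilde{w}_j}=\alpha_i\delta_{ij}$, so that $\ket{w_i}^B:=\ket{\tilde{w}_i}^B/\sqrt{\alpha_i}$ are orthonormal. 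The bound on $r(\psi)$ then follows from $r(\psi)=\mathrm{rank}(\hat{\rho}_A)=\mathrm{rank}(\hat{\rho}_B)$ together with $\mathrm{rank}(\hat{\rho}_A)\leq n$ and $\mathrm{rank}(\hat{\rho}_B)\leq m$.

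I do not expect a genuine obstacle here, since the content of the theorem is essentially the SVD in disguise. The only points needing a little care are the bookkeeping for the rectangular case (making sure the summation index runs only to $\min\{n,m\}$ and that $D$ has the correct shape), verifying that a partially defined orthonormal system extends to a full basis, and, in the density-matrix version, checking that the vectors $\ket{\tilde{w}_i}^B$ arising from distinct eigenvectors are genuinely orthogonal rather than merely linearly independent.
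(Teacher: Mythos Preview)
Your proposal is correct. Your alternative ``physical'' route via the spectral decomposition of $\hat{\rho}_A$ is exactly the argument the paper gives: it fixes the $A$-basis to be the eigenbasis of $\hat{\rho}_A$, writes $\ket{\psi}=\sum_k\ket{u_k}^A\ket{\tilde{w}_k}^B$ with relative states $\ket{\tilde{w}_k}^B$, and then compares $\mathrm{tr}_B[\ket{\psi}\bra{\psi}]$ with the spectral form of $\hat{\rho}_A$ to conclude $\braket{\tilde{w}_l|\tilde{w}_k}=p_k\delta_{kl}$. Your primary SVD approach is not the proof the paper writes out, but the paper explicitly remarks that the Schmidt decomposition is the SVD in disguise and states the SVD as a companion theorem, so you are in full agreement with its viewpoint; the SVD route has the advantage of packaging the rank bound $r(\psi)\le\min\{n,m\}$ and the orthonormality of both bases into a single linear-algebra fact, while the density-matrix route makes the physical interpretation of the $\alpha_i$ as eigenvalues of the reduced state immediate.
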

The number $r(\psi)$ is called the \textit{Schmidt rank} of $\ket{\psi}^{AB}$ and $\{\alpha_i\}$ are its \textit{Schmidt coefficients}. What can be seen from the form of the Schmidt decomposition is that the reduced density operators $\hat{\rho}^A=\mathrm{tr}_B[\hat{\rho}^{AB}]$ and $\hat{\rho}^B=\mathrm{tr}_A[\hat{\rho}^{AB}]$ both have the same positive eigenvalues $\{\alpha_i\}$. Furthermore, $\{\ket{u_i}^A\}$ and $\{\ket{w_i}^B\}$ with suitably chosen phases are the orthonormal eigenvectors of $\hat{\rho}^A$ and $\hat{\rho}^B$. They are uniquely determined up to a phase. A proof of the Theorem is provided in appendix \ref{Schmidtproof}. Note that the Schmidt decomposition is essentially a restatement of the singular value decomposition in a different context. For the sake of completeness, a singular value decomposition is defined in the following way:
\begin{thm}[Singular value decomposition]
Let $M$ be a $n\times m$ matrix over the field $K$ which is either the field of the real or complex numbers. Then there exists a factorization of the form
\begin{equation}
M=UAV^\dagger,
\end{equation}
where
\begin{itemize}\label{thm:SVD}
\item $U$ is a $n\times n$ unitary matrix over the field $K$,
\item $V$ is a $m\times m$ unitary matrix over the field $K$,
\item and $A$ is a $n\times m$ diagonal matrix with nonnegative real numbers on the diagonal.
\end{itemize}
Such a decomposition is called \textup{singular value decomposition} of $M$. The diagonal entries of $A$ are called the \textup{singular values} of $M$.
\end{thm}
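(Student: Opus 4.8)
The plan is to deduce the singular value decomposition from the spectral theorem for Hermitian matrices, applied to the $m\times m$ matrix $M^\dagger M$. This matrix is Hermitian and positive semidefinite, since $x^\dagger M^\dagger M x=\|Mx\|^2\ge 0$ for every $x\in K^m$. Hence the spectral theorem provides an orthonormal basis $v_1,\dots,v_m$ of $K^m$ consisting of eigenvectors of $M^\dagger M$ with real eigenvalues $\lambda_i\ge 0$. I would relabel these so that $\lambda_1\ge\dots\ge\lambda_r>0=\lambda_{r+1}=\dots=\lambda_m$, and set $\sigma_i:=\sqrt{\lambda_i}$; these will turn out to be the singular values.

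Next I would produce the ``left'' vectors. For $1\le i\le r$ define $u_i:=\sigma_i^{-1}Mv_i\in K^n$. Using $M^\dagger M v_j=\sigma_j^2 v_j$ one checks directly that $u_i^\dagger u_j=\sigma_i^{-1}\sigma_j^{-1}\,v_i^\dagger M^\dagger M v_j=\delta_{ij}$, so $u_1,\dots,u_r$ is an orthonormal system in $K^n$; in particular $r\le n$, and together with $r\le m$ this gives $r\le\min\{n,m\}$. For $i>r$ one has $\|Mv_i\|^2=v_i^\dagger M^\dagger M v_i=\lambda_i=0$, hence $Mv_i=0$. By Gram--Schmidt I would then extend $u_1,\dots,u_r$ to a full orthonormal basis $u_1,\dots,u_n$ of $K^n$.

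Finally I would assemble the factors: let $V$ be the $m\times m$ matrix with columns $v_1,\dots,v_m$, let $U$ be the $n\times n$ matrix with columns $u_1,\dots,u_n$, and let $A$ be the $n\times m$ matrix with $A_{ii}=\sigma_i$ for $1\le i\le r$ and all other entries zero. Then $U$ and $V$ are unitary because their columns form orthonormal bases, $A$ is diagonal with nonnegative real diagonal entries by construction, and $MV=UA$ is verified column by column: the $i$-th column of $MV$ is $Mv_i$, which equals $\sigma_i u_i$ for $i\le r$ and $0$ for $i>r$, and this is exactly the $i$-th column of $UA$. Multiplying on the right by $V^\dagger$ gives $M=UAV^\dagger$, with the $\sigma_i$ as singular values.

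The one non-routine ingredient is the spectral theorem for Hermitian matrices, which I would cite rather than reprove (it is the same tool underlying the spectral decomposition used earlier in the text). The two points that need a little care are purely organizational: checking that the number $r$ of strictly positive $\sigma_i$ does not exceed $\min\{n,m\}$ so that the rectangular ``diagonal'' matrix $A$ is well defined, and extending the partial orthonormal system $u_1,\dots,u_r$ to a basis of all of $K^n$, which is precisely what upgrades $U$ from an isometry on a subspace to a genuinely unitary $n\times n$ matrix.
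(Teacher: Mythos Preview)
Your proof is correct and follows essentially the same strategy as the paper, which defers to the Schmidt decomposition proof in the appendix: apply the spectral theorem to one of the two Gram matrices and read off the other set of singular vectors from it. The only cosmetic difference is which side you start from---you diagonalize $M^\dagger M$ to obtain the $v_i$ and then set $u_i=\sigma_i^{-1}Mv_i$, whereas the paper's Schmidt proof diagonalizes the reduced operator $\hat{\rho}^A$ (i.e.\ $MM^\dagger$) to obtain the $\ket{u_k}$ first and then shows the associated relative states $\ket{\tilde w_k}$ are orthogonal; these are dual versions of the same argument.
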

The singular values correspond to the Schmidt coefficients of the previous theorem \ref{thm:Schmidtdecomp} and the unitary matrices $U$ and $V$ correspond to matrices which transform the state $\ket{\psi}^{AB}$ into its new basis. The proof of theorem \ref{thm:SVD} corresponds to the proof of the Schmidt decomposition, which is presented in appendix \ref{Schmidtproof}.

It can be shown that any pure DV bipartite state is in a separable product state if and only if it has Schmidt rank $r=1$. Furthermore, for states $\ket{\psi}^{AB}$ and $\ket{\phi}^{AB}$ in ${\mathcal H}^A_{n}\otimes{\mathcal H}^B_{m}\,(n\leq m)$ it can be shown that $\ket{\psi}^{AB}$ can be LOCC-transformed into $\ket{\phi}^{AB}$ with unit probability if and only if the Schmidt coefficients $\{\alpha_i^\psi\}$ are majorized by $\{\alpha_i^\phi\}$ when taking them in decreasing order, i.e. $\alpha_1\geq\ldots\geq\alpha_n$ \cite{NielsenTrans}. 
\begin{defini}
$\{\alpha_i^\psi:i=1,\ldots ,n\}$ is said to be \textup{majorized} by $\{\alpha_i^\phi:i=1,\ldots ,n\}$, $\{\alpha_i^\psi\}\prec\{\alpha_i^\phi\}$, if
\begin{equation}
\sum_{i=1}^k \alpha_i^\psi \leq \sum_{i=1}^k \alpha_i^\phi\quad\forall\quad k=1,\ldots,n.
\end{equation}
\end{defini}
If the latter is the case, $\ket{\psi}^{AB}$ can be regarded as more entangled than $\ket{\phi}^{AB}$. This is also the reason why states with equally distributed Schmidt coefficients are maximally entangled. Their set of Schmidt coefficients is majorized by all other sets of Schmidt coefficients. Maximally entangled states in ${\mathcal H}^A_{d_1}\otimes{\mathcal H}^B_{d_2}$ with $d_1\leq d_2$ look like
\begin{equation}
\ket{\psi_{d_1}^{+}}=\frac{1}{\sqrt{d_1}}(\ket{0,0}+\ket{1,1}+\ldots+\ket{d_1-1,d_1-1}).
\end{equation}
However, a consequence of the majorization criterion is that there exist also \textit{incomparable} states, of which neither can be seen as more or less entangled than the other. Other measures have to be employed for such states \cite{MeasuresIntro}.

A very important mixed state entanglement measure is the \textit{entanglement of formation} $E_F$ \cite{MeasuresIntro}. It is defined as
\begin{equation}
E_F(\hat{\rho}):=\inf\{\,\sum_i p_i E_S(\ket{\psi_i}\bra{\psi_i})\;:\;\hat{\rho}=\sum_i p_i\ket{\psi_i}\bra{\psi_i}\,\}.
\end{equation}
It represents the minimal average entanglement over all pure state decompositions of $\hat{\rho}$, where the entropy of entanglement is employed as the pure state measure. Unfortunately, due to the variational problem, the entanglement of formation is extremely difficult to calculate. However, it is a very important measure for two reasons. On the one hand, some open questions regarding $E_F$ have tight connections to major open questions in quantum information. Explicitly full additivity of the entanglement of formation is equivalent to the additivity of the classical communication
capacity of quantum channels \cite{ShoAddi}. However, recently it has been shown by providing a counterexample that these quantities are actually \textit{not} additive \cite{Hastings}.

On the other hand, Wootters could derive a closed analytical solution for the case of bipartite qubit states \cite{Woot1,Woot2}. So, for $\hat{\rho}$ a two-qubit state in ${\mathcal H}^A_2\otimes{\mathcal H}^B_2$,
\begin{equation}
E_F(\hat{\rho})=s\Bigl(\frac{1+\sqrt{1-C^2(\hat{\rho})}}{2}\Bigr),
\end{equation}  
with
\begin{equation}
s(x):=-x\log_2x-(1-x)\log_2(1-x).
\end{equation}  
$C$ is the widely-used \textit{Concurrence}, which is defined as
\begin{equation}
C(\hat{\rho}):=\max\{0,\sqrt{\xi_1}-\sqrt{\xi_2}-\sqrt{\xi_3}-\sqrt{\xi_4}\},
\end{equation}
where $\xi_i$ are the eigenvalues of $\hat{\rho}\tilde{\hat{\rho}}$ in decreasing order. $\tilde{\hat{\rho}}$ is given by
\begin{equation}
\tilde{\hat{\rho}}=(\hat{\sigma}_y\otimes\hat{\sigma}_y)\hat{\rho}^\ast(\hat{\sigma}_y\otimes\hat{\sigma}_y),
\end{equation}
where $\hat{\rho}^\ast$ denotes the elementwise complex conjugate of $\hat{\rho}$ and $\hat{\sigma}_y$ is the Pauli Y matrix. Since the entanglement of formation and the concurrence are monotonically related, many authors prefer to simply quantify entanglement using the concurrence instead of the entanglement of formation itself. For higher dimensions this connection breaks down, as the concurrence is not even properly defined anymore. Note, however, that there actually are approaches for generalizing the concurrence to higher dimensions \cite{Gour}.

Another very important entanglement monotone is the \textit{logarithmic negativity} $E_N$ \cite{Neg}. To define $E_N$, first the notion of \textit{partial transposition} has to be introduced. Consider a bipartite DV quantum state in local orthonormal bases:
\begin{equation}
\ket{\psi}^{AB}=\sum_{i,j,k,l} c_{ijkl} \ket{i}^A\bra{j}\otimes\ket{k}^B\bra{l}.
\end{equation}
Then the partial transposition $\Gamma_B$ with respect to system B is defined as
\begin{equation}
\begin{aligned}
(\hat{\rho}^{AB}){}^{\Gamma_B} &= (\mathbb{1}_A\otimes T_B)\lbrack\hat{\rho}_{AB}\rbrack \\
 &=(\mathbb{1}_A\otimes T_B)(\sum_{i,j,k,l} c_{ijkl} \ket{i}^A\bra{j}\otimes\ket{k}^B\bra{l}) \\
 &= \sum_{i,j,k,l} c_{ijkl} \ket{i}^A\bra{j}\otimes\ket{l}^B\bra{k},
\end{aligned}
\end{equation}
where $T_B$ denotes "normal" transposition of system B. Partial transposition $\Gamma_B$ is a \textit{positive but not completely positive} (\textit{PnCP}) map. So if applied on an entangled quantum state, the output will not necessarily be positive semidefinite. However, if applied on a separable state, the output will actually be positive semidefinite, since the subsystems of separable states fully separate and hence the map acts as two individual maps $\mathbb{1}_A$ and $T_B$ on the state. These individual maps are both positive and hence the output will be positive semidefinite again \cite{Peres}. Therefore, partial transposition yields an inseparability criterion, which is also called the \textit{Peres-Horodecki criterion} \cite{PeresHorodeckiCrit}. 

The concept can be generalized: All PnCP maps distinguish some entangled states from the separable and the other entangled states. If the output of a PnCP operation is not positive semidefinite, it can be concluded that the input was entangled. States which are not positive semidefinite anymore after partial transposition are called \textit{NPT} (\textit{negative partial transpose}) states, while states which are still positive semidefinite are called \textit{PPT} (\textit{positive partial transpose}) states. Note that PPT states may nevertheless contain entanglement - so-called \textit{bound entanglement}. It is just not possible to detect this entanglement via partial transposition. What makes the partial transposition particularly interesting is that it can be shown that "PPTness" coincides with nondistillability \cite{BoundEnt}. Nevertheless, it is not clear yet whether all NPT states are distillable, or whether there exist nondistillable NPT states. 

The \textit{logarithmic negativity} $E_N$ is a measure which attempts to quantify the negativity in the spectrum of the partial transpose. Therefore, it is not able to measure bound entanglement. It is defined as
\begin{equation}
E_N(\hat{\rho}):=\log_2||\hat{\rho}^{\Gamma_B}||=\log_2[1+\sum_i (|\lambda_i|-\lambda_i)],
\end{equation}
where $||X||:=\mathrm{tr}[\sqrt{X^\dagger X}]$ is the \textit{trace norm} and $\lambda_i$ are the eigenvalues of $\hat{\rho}^{\Gamma_B}$. It does not matter whether the partial transposition is performed with respect to system A or B. Sometimes also the related "normal" \textit{negativity} ${\mathcal N}$ is used \cite{Neg}, which is defined as
\begin{equation}
{\mathcal N}(\hat{\rho}):=\frac{||\hat{\rho}^{\Gamma_B}||-1}{2}=\frac{\sum_i (|\lambda_i|-\lambda_i)}{2}.
\end{equation}
While ${\mathcal N}$ is convex, but not additive, $E_N$ is not convex, but fully additive. Since additivity is in the majority of cases more desired than convexity, the logarithmic negativity is the more often used monotone. The major advantage of the negativity quantities is that they can be calculated rather easily. It is just an eigenvalue problem to be solved.

The last monotone, which is briefly presented in this subsection, is the \textit{global robustness of entanglement} $R_g$ \cite{MeasuresIntro,Robust1,Robust2,Robust3}.
\begin{equation}
R_g(\hat{\rho}):=\inf_{\hat{\sigma}} \{ \lambda\,:\,\lambda\geq0,\,\hat{\sigma}\in QS\;\;\text{such that}\;\;(1-\lambda)\hat{\rho}+\lambda\hat{\sigma}\in SEP\},
\end{equation}
where $QS$ denotes the set of all quantum states and $SEP$ the set of the separable quantum states. It quantifies the minimal amount of arbitrary (\textit{global}) noise $\hat{\sigma}$ to be mixed in such that $\hat{\rho}$ becomes separable. However, also other robustness quantities can be defined by considering specific types of noise and drawing $\hat{\sigma}$ for example from the set of separable states or PPT states. 

Robustness monotones can be sometimes calculated, often at least bounded nontrivially. Furthermore they find applications in proofs of theorems and similar argumentations, such as in this thesis in subsection \ref{subsec:Robust}.

\subsection{CV Measures}
Entanglement quantification in the CV regime is a much harder task than in DV. For general states the continuity property of some measures breaks down: It is straightforward to construct pure product states which have arbitrarily highly entangled states in their arbitrarily small neighborhood \cite{MeasuresIntro,ContIssues}. This problem can be solved by applying an energy bound on the states. When only states with bounded mean energy, i.e. $S_M=\{\hat{\rho}\in S:\mathrm{tr}[\hat{\rho}\hat{H}]\leq M\}$, are considered ($S$ denotes the set of all quantum states.), continuity can be recovered. However, there is still no recipe for entanglement quantification in CV even for these energy bounded states. Nevertheless, for specific measures (strongly superadditive ones)\footnote[9]{\textit{Strong superadditivity} of an entanglement measure $E$ means $E(\hat{\rho}^{AB}_{12})\geq E(\hat{\rho}^{AB}_{1})+E(\hat{\rho}^{AB}_{2})$, where $1$ and $2$ refer to two pairs of entangled particles, held by systems $A$ and $B$. See also \cite{MeasuresIntro}.} rough bounds may be calculated with the aid of Gaussian states and their entanglement. Therefore, the set of states is restrained even stronger and only the Gaussian states are considered in the following. Once more, even for this class of states entanglement quantification is an extraordinarily difficult task. However, there do exist at least a few measures which can be calculated.

One such measure is the entropy of entanglement, which features a translation into the pure state Gaussian CV regime. For Alice and Bob holding $n_A$ and $n_B$ harmonic oscillator systems in a Gaussian state,\footnote[10]{In quantum information it is common to give two system $A$ and $B$ specific names - \textit{Alice} and \textit{Bob}. Especially in quantum communication it is very popular to call sender and receiver Alice and Bob.} the entropy of entanglement is
\begin{equation}
E_S(\hat{\rho}^{AB})=\sum_{i=1}^{n_A}\Bigl(\frac{\mu_i+1}{2}\log_2\frac{\mu_i+1}{2}-\frac{\mu_i-1}{2}\log_2\frac{\mu_i-1}{2}\Bigr).
\end{equation}
$\mu_i$ are the \textit{symplectic eigenvalues} of Alice's reduced covariance matrix $\gamma^A$, which is the submatrix referring to Alice's system, of the overall covariance matrix $\gamma^{AB}$. On the one hand, these symplectic eigenvalues are the absolute values of the "normal" eigenvalues of $iJ_N^{-1}\gamma^A$. On the other hand, the general definition of symplectic eigenvalues corresponds to the \textit{Williamson normal form}. When employing covariance matrices no Hermitian matrices and unitary transformations are used anymore as in the case of density matrices. Instead linear Gaussian transformations are described by so-called \textit{symplectic transformations}. Due to the preservation of the canonical commutation relations, symplectic transformations $S$ are defined by their action on the symplectic matrix $J_N$:
\begin{equation}
SJ_NS^\dagger=J_N.
\end{equation}
This is verified by the real $2N\times2N$ matrices $S$, which form the \textit{real, symplectic group} $\mathrm{Sp}(2n,\mathbb R)$. Covariance matrices are then transformed according to $S\gamma S^T$. Williamson proved that for any covariance matrix $\gamma$ on $N$ harmonic oscillators there exists a symplectic transformation $S$ such that
\begin{equation}
S\gamma S^T=\bigoplus\limits_{i=1}^N
\begin{pmatrix} 
\mu_i & 0 \\
0 & \mu_i 
\end{pmatrix}.
\end{equation}
Then the diagonal elements $\mu_i$ are the symplectic eigenvalues of $\gamma$ and the whole set $\{\mu_i\}$ is called the \textit{symplectic spectrum} of $\gamma$ \cite{Williamson1,SimonH}.

Another entanglement measure which can be also employed in the Gaussian CV case is the logarithmic negativity. In contrast to the entropy of entanglement it also works for mixed Gaussian states. Just as in the DV case, where it is the one exception of many measures which is relatively easy to calculate, in the CV case it is the only mixed state measure which in general can be actually calculated at all. Note that for pure states it differs from the entropy of entanglement and hence it rather ought to be called a monotone. If Alice and Bob are in the possession of $N=n_A+n_B$ quantum harmonic oscillators, the logarithmic negativity is
\begin{equation}
E_N(\hat{\rho})=-\sum_{i=1}^N\log_2[\min\{1,\tilde{\mu}_i\}].
\end{equation}
$\{\tilde{\mu}_i\}$ is the symplectic spectrum of the partially transposed state described by the covariance matrix $\gamma^{\Gamma_B}$. On the level of covariance matrices, the partial transposition corresponds to a time reversal. In a system descibed by the canonical quadratures $\hat{x}$ and $\hat{p}$ the time reversal operation can be expressed as $\hat{x}\rightarrow\hat{x}$ and $\hat{p}\rightarrow-\hat{p}$. 

Just as in the DV case partial transposition can be used for the derivation of a necessary condition for separability of CV states \cite{Simon}. Recalling the Heisenberg uncertainty relation on the level of covariance matrices \eqref{eq:Heisenberg}, for every separable state
\begin{equation}
\gamma^{\Gamma_B} + iJ_N\geq 0.
\end{equation}
Note that this is valid for \textit{all} CV states, not just Gaussian ones.

Finally, it has to be pointed out once more that entanglement quantification in the CV regime is much more difficult than in the DV regime. Only for Gaussian states, there exist some rare examples of measures which can be properly defined and possibly calculated. Entanglement quantification for general non-Gaussian CV states is yet an unsolved problem, even for energy bounded states. However, all physically relevant states are of course energy bounded.

\subsection{Entanglement Witnesses}\label{subsec:witnessing}
There are a lot of cases where entanglement quantification is too ambitious, for example, in the non-Gaussian CV regime or simply when matrices have to be evaluated which are too big for eigenvalue problems to be solved. Then it can be asked whether the entanglement, if not quantifiable, can at least be detected or \textit{witnessed}. Furthermore, there are tasks where the amount of entanglement is actually rather secondary and it is only important that entanglement is present at all. For these purposes, so-called \textit{entanglement witnesses} can be used \cite{MeasuresIntro}.
\begin{defini}\label{def:witnesses}
A Hermitian operator $\hat{W}$ is called an \textup{entanglement witness} if and only if:
\begin{align}
\forall\,\hat{\rho}_{sep}\in SEP\;&\;\mathrm{tr}[\hat{W}\hat{\rho}_{sep}]\geq0, \\
\exists\,\hat{\rho}_{ent}\,\;\text{s.t.}\;&\;\mathrm{tr}[\hat{W}\hat{\rho}_{ent}]<0.
\end{align}
\end{defini}
An entanglement witness therefore separates some entangled states from the separable states and the rest of the entangled states. Hence it can be interpreted as a hyperplane in state space as illustrated in figure \ref{fig:witness}.
\begin{figure}[ht]
\begin{center}
\includegraphics[width=10cm]{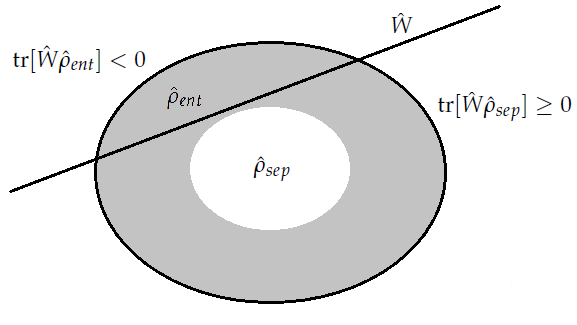}
\caption[Entanglement witnesses as hyperplanes in state space.]{An entanglement witness can be interpreted as a hyperplane in state space. All states to the right of the hyperplane, e.g. all separable states, have nonnegative expectation values of the witness operator and hence, it cannot be concluded whether they are entangled or not. However, the entangled states to the left of the hyperplane are detected by the witness \cite{HoroEntIntro}.}
\label{fig:witness}
\end{center}
\end{figure}
Arguing with this hyperplane interpretation, it becomes clear that a quantum state is separable if and only if it has nonnegative expectation values of all entanglement witnesses. It can be shown that for any entangled state an applicable witness detecting it can be constructed \cite{HoroEntIntro,Audretsch}.

An example of a witness operator for the ${\mathcal H}_d\otimes{\mathcal H}_d$ case is the swap operator
\begin{equation}
\hat{V}=\sum_{i,j=0}^{d-1} \ket{i}\bra{j}\otimes\ket{j}\bra{i}.
\end{equation}
It is straightforward to show that this operator yields positivity for all separable states. However, it can be also shown that it possesses negative eigenvalues, which qualifies it as an entanglement witness \cite{HoroEntIntro}.

A very efficient way of entanglement detection has already been set out and explained: It is not provided by actual witness operators but instead by PnCP maps such as the partial transposition (see subsection \ref{subsec:DVmeasures}). Due to the positivity of the actual map and the added $\mathbb{1}_n$'s, PnCP maps always leave separable states positive. However, their missing complete positivity enables them to detect some entangled states - the NPT states in the case of partial transposition.

In the preceding subsection it has been pointed out that entanglement quantification for general CV states is still a hard problem. So, in this regime entanglement witnesses are especially desirable. Shchukin and Vogel provided a very elegant way for entanglement witnessing in this class of states \cite{SV,SVcomment,SVgeneralize}. A slightly adapted version of their criteria gives rise to one of the central tools used in this thesis. Therefore, their result is discussed in more detail. However, their method also is not based on witness operators, but on so-called \textit{expectation value matrices} (\textit{EVMs}, also \textit{matrices of moments}). For a set of operators $\{\hat{M}_k\}$ the EVM $\chi$ is defined as
\begin{equation}
\chi_{ij}=\braket{\hat{M}_i^\dagger\hat{M}_j},
\end{equation}
where the observables are typically of the form $\{\hat{M}_k\}=\{\hat{A}_k\otimes\hat{B}_k\}$. Then it is straightforward to see that for separable states, $\chi$ also is separable. So when combining the EVM approach with a PnCP such as partial transposition, the separability of the state in question can be investigated based on the matrix of moments \cite{Guehne}. Inseparability, which may result in non-positivity of the mapped state, then also leads to non-positivity of the matrix of moments. Hence criteria based on subdeterminants of the EVM can be derived. However, this will become clear in the derivation of the Shchukin-Vogel inseparability criteria.\footnote[11]{For a nice introduction and a detailed overview of entanglement detection, see G\"uhne and T\'{o}th's review article \cite{Guehne}.} 
\begin{thm}[Shchukin-Vogel inseparability criteria]\label{thm:SV}
A bipartite CV state $\hat{\rho}$ is NPT if and only if there exists a negative principal minor of $M(\hat{\rho}^{\Gamma_B})$, i.e. $\det[M(\hat{\rho}^{\Gamma_B})_{\underline r}]<0$ for some $\underline r\equiv(r_1,\ldots,r_N)$ with $r_1<r_2<...<r_N$.
\end{thm}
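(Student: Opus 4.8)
The plan is to prove the logically equivalent statement that $\hat\rho$ is PPT (that is, $\hat\rho^{\Gamma_B}\geq0$) if and only if \emph{every} principal minor of $M(\hat\rho^{\Gamma_B})$ is nonnegative, and then pass to the contrapositive. The whole argument hinges on one lemma, which I would isolate first: a Hermitian trace-class operator $\hat\tau$ on the two-mode Fock space satisfies $\hat\tau\geq0$ if and only if $\mathrm{tr}[\hat\tau\,\hat f^\dagger\hat f]\geq0$ for every operator $\hat f$ that is a finite polynomial in $\hat a,\hat a^\dagger,\hat b,\hat b^\dagger$. One implication is trivial, since $\hat f^\dagger\hat f\geq0$ for any $\hat f$. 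For the converse I would argue by density: since $\hat\tau$ is bounded it suffices to check $\langle\chi|\hat\tau|\chi\rangle\geq0$ on the dense domain of finite linear combinations of two-mode Fock states; any such $\ket{\chi}$ equals $\hat g\ket{0,0}$ for a polynomial $\hat g$ in the creation operators, and writing the vacuum projector in normally ordered form, $\ket{0,0}\bra{0,0}=\sum_{p,q\geq0}\tfrac{(-1)^{p+q}}{p!\,q!}\,\hat a^{\dagger p}\hat a^p\hat b^{\dagger q}\hat b^q$, one sees that $\ket{\chi}\bra{\chi}$ lies in the closure of the span of the operators $\hat a^{\dagger n}\hat a^m\hat b^{\dagger k}\hat b^l$ and hence can be approximated by operators of the form $\hat f^\dagger\hat f$ with $\hat f$ polynomial, so that $\mathrm{tr}[\hat\tau\,\hat f^\dagger\hat f]\to\langle\chi|\hat\tau|\chi\rangle$.

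Granting the lemma, I would apply it to $\hat\tau=\hat\rho^{\Gamma_B}$, which is Hermitian (and, for physical $\hat\rho$, has finite moments of all orders, which is what the trace pairings below require). Hence $\hat\rho$ is PPT if and only if $\mathrm{tr}[\hat\rho^{\Gamma_B}\hat f^\dagger\hat f]\geq0$ for every polynomial $\hat f$. Expanding $\hat f=\sum_i c_i\hat f_i$ in the chosen enumeration $\{\hat f_i\}$ of the operators $\hat a^{\dagger n}\hat a^m\hat b^{\dagger k}\hat b^l$ gives $\mathrm{tr}[\hat\rho^{\Gamma_B}\hat f^\dagger\hat f]=\sum_{ij}\bar c_i c_j\,\mathrm{tr}[\hat\rho^{\Gamma_B}\hat f_i^\dagger\hat f_j]=\sum_{ij}\bar c_i c_j\,M_{ij}(\hat\rho^{\Gamma_B})$, so the nonnegativity of this quadratic form for all finitely supported $(c_i)$ is exactly the statement that every finite principal submatrix of $M(\hat\rho^{\Gamma_B})$ is positive semidefinite. (Note the equivalence does not depend on the particular Shchukin--Vogel ordering of the $\hat f_i$, since the collection of all finite principal submatrices is ordering independent; the ordering only matters for a hierarchical application of the criterion.)

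What remains is the purely linear-algebraic translation into principal minors. For a finite Hermitian matrix the coefficients of the characteristic polynomial are, up to sign, the sums of the principal minors of each order; if all principal minors are nonnegative the characteristic polynomial has no negative root, so the matrix is positive semidefinite, and the converse holds since a principal submatrix of a positive semidefinite matrix is again positive semidefinite. An infinite Hermitian matrix is positive semidefinite as a form on finitely supported sequences exactly when all its finite principal submatrices are, and such a submatrix fails to be positive semidefinite exactly when it contains a negative principal minor, i.e.\ $\det[M(\hat\rho^{\Gamma_B})_{\underline r}]<0$ for some $\underline r=(r_1<\dots<r_N)$. Chaining these equivalences yields that $\hat\rho$ is PPT iff every $\det[M(\hat\rho^{\Gamma_B})_{\underline r}]\geq0$, and negating both sides gives the theorem.

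The step I expect to be the real obstacle is the converse half of the lemma: making the density argument rigorous requires controlling the convergence of the normally ordered expansion of the finite-rank operator $\ket{\chi}\bra{\chi}$ when it is paired against the trace-class operator $\hat\rho^{\Gamma_B}$, so that nonnegativity of the moment quadratic form on honest polynomials genuinely transfers to $\langle\chi|\hat\rho^{\Gamma_B}|\chi\rangle\geq0$ — this is where trace-class-ness (and finiteness of the moments) of the partially transposed state is used in an essential way. By comparison, the partial-transpose identities $\hat b^{T}=\hat b^\dagger$, $(\hat b^\dagger)^{T}=\hat b$ in the Fock basis, the resulting reshuffling of the exponents in passing from $\hat f^\dagger\hat f$ to $(\hat f^\dagger\hat f)^{\Gamma_B}$, and the characteristic-polynomial fact about principal minors are all elementary and amount only to routine bookkeeping.
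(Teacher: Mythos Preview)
Your proposal is correct and follows essentially the same architecture as the paper's proof: positivity of $\hat\rho^{\Gamma_B}$ is equivalent to nonnegativity of $\mathrm{tr}[\hat\rho^{\Gamma_B}\hat f^\dagger\hat f]$ for suitable $\hat f$, this gives a quadratic form in the expansion coefficients of $\hat f$ with the moment matrix, and Sylvester-type reasoning converts this to the principal-minor statement. The normally ordered form of the vacuum projector is the key ingredient in both.

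There is one difference worth noting. You restrict to \emph{finite} polynomials $\hat f$ and bridge to arbitrary $\ket{\chi}\bra{\chi}$ by a density/approximation argument, which you correctly flag as the delicate step. The paper avoids this by observing that for \emph{any} $\ket{\psi}$ one has the exact identity $\ket{\psi}\bra{\psi}=\hat f^\dagger\hat f$ with $\hat f=\ket{00}\bra{\psi}=\ket{00}\bra{00}\,\hat g$, and since $\ket{00}\bra{00}={:}\mathrm{e}^{-\hat a^\dagger\hat a-\hat b^\dagger\hat b}{:}$ has a normally ordered expansion, so does $\hat f$. Thus no approximation is needed for the lemma itself; the passage to finite principal submatrices happens only at the Sylvester step (which the paper simply cites). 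Your route trades that directness for a more explicit accounting of the analytical issues, which is arguably more honest but also more work; the paper's identity $\hat f=\ket{00}\bra{\psi}$ is the slicker way to handle the converse half of your lemma.
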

The matrix of moments $M$ is here defined as
\begin{equation}
\begin{aligned}
M_{ij}(\hat{\rho}) &= M_{i_1i_2i_3i_4,j_1j_2j_3j_4}(\hat{\rho}) \\
&=\braket{\hat{a}^{\dagger^{i_2}}\hat{a}^{^{i_1}}\hat{a}^{\dagger^{j_1}}\hat{a}^{^{j_2}}\hat{b}^{\dagger^{j_4}}\hat{b}^{^{j_3}}\hat{b}^{\dagger^{i_3}}\hat{b}^{^{i_4}}}_{\hat{\rho}}.
\end{aligned}
\end{equation}
With a convenient arrangement of the indices, i.e. combining $i_1$, $i_2$, $i_3$, $i_4$ to $i$ and $j_1$, $j_2$, $j_3$, $j_4$ to $j$, $M$ can be explicitly written as
\begin{equation}\label{eq:SVmatrix}
M(\hat{\rho}^\Gamma)=\begin{pmatrix} 
1 & \braket{\hat{a}} & \braket{\hat{a}^\dagger} & \braket{\hat{b}^\dagger} & \cdots \\
\braket{\hat{a}^\dagger} & \braket{\hat{a}^\dagger\hat{a}} & \braket{\hat{a}^{\dagger^{2}}} & \braket{\hat{a}^\dagger\hat{b}^\dagger} & \cdots \\
\braket{\hat{a}} & \braket{\hat{a}^2} & \braket{\hat{a}\hat{a}^\dagger} & \braket{\hat{a}\hat{b}^\dagger} & \cdots \\
\braket{\hat{b}} & \braket{\hat{a}\hat{b}} & \braket{\hat{a}^\dagger\hat{b}} & \braket{\hat{b}^\dagger\hat{b}} & \cdots \\
\vdots & \vdots & \vdots & \vdots & \ddots
\end{pmatrix}.
\end{equation}
The process of the index arrangement is not crucial, however, it is explained in detail in the proof. Also note that the matrix of moments $M$ is a matrix of infinite size. What actually are principal minors?
\begin{defini}
Consider the submatrices $M_{\underline r}$ of a square matrix $M$ which are obtained by deleting all rows and columns apart from the the ones with the numbers $\underline r\equiv(r_1,\ldots,r_N)\,;\; r_1<r_2<...<r_N$. The determinants $\det[M_{\underline r}]$ of these submatrices are called \textup{principal minors} of $M$. 
\end{defini}

\begin{proof}[\textbf{Proof of Theorem \ref{thm:SV}.}] \cite{SV}
Recall that a Hermitian operator $\hat{A}$ is called nonnegative if
\begin{equation}
\braket{\psi|\hat{A}|\psi}=\mathrm{tr}[\hat{A}\ket{\psi}\bra{\psi}]\geq0\;\,\forall\,\ket{\psi}.
\end{equation}
The nonnegative operator $\ket{\psi}\bra{\psi}$ can be rewritten as $\hat{f}^\dagger\hat{f}$ with $\hat{f}=\ket{vac}\bra{\psi}$. $\ket{vac}=\ket{00}$ denotes the vacuum in the bipartite case. Any pure bipartite state $\ket{\psi}$ can be expressed as
\begin{equation}
\ket{\psi}=\hat{g}^\dagger\ket{00},
\end{equation}
for an appropriate operator function $\hat{g}=g(\hat{a},\hat{b})$. $\hat{a}$ and $\hat{b}$ denote the annihilation operators of the first and the second mode respectively. Hence
\begin{equation}\label{eq:SVderive1}
\hat{f}=\ket{00}\bra{00}\hat{g}.
\end{equation}
It is straightforward to show, see appendix \ref{VacuumDensityOperator}, that the two-mode vacuum density operator can be expressed in the normally ordered form
\begin{equation}\label{eq:SVderive2}
\ket{00}\bra{00}=\,:\mathrm{e}^{-\hat{a}^\dagger\hat{a}-\hat{b}^\dagger\hat{b}}:,
\end{equation}
where $:\cdots:$ denotes normal ordering. From equations \eqref{eq:SVderive1} and \eqref{eq:SVderive2}, it can be seen that the normally ordered form of $\hat{f}$ exists. Hence, a Hermitian operator $\hat{A}$ is nonnegative if and only if for any operator $\hat{f}$ whose normally ordered form exists,
\begin{equation}
\braket{\hat{f}^\dagger\hat{f}}_{\hat{A}}=\mathrm{tr}[\hat{A}\hat{f}^\dagger\hat{f}]\geq0.
\end{equation}
Now the Peres-Horodecki criterion can be applied. For any separable state $\hat{\rho}$,
\begin{equation}
\braket{\hat{f}^\dagger\hat{f}}_{\hat{\rho}^{\Gamma_B}}=\mathrm{tr}[{\hat{\rho}^{\Gamma_B}}\hat{f}^\dagger\hat{f}]\geq0,
\end{equation}
which is satisfied for any operator $\hat{f}$ whose normally ordered form exists. Due to this existence, $\hat{f}$ can be written as
\begin{equation}
\hat{f}=\sum_{i_1,i_2,i_3,i_4=0}^\infty c_{i_1i_2i_3i_4}\hat{a}^{\dagger^{i_1}}\hat{a}^{^{i_2}}\hat{b}^{\dagger^{i_3}}\hat{b}^{^{i_4}}.
\end{equation}
Then
\begin{equation}\label{eq:SVderive3}
\braket{\hat{f}^\dagger\hat{f}}_{\hat{\rho}^{\Gamma_B}}=\sum_{i_1,i_2,i_3,i_4,j_1,j_2,j_3,j_4=0}^\infty c_{j_1j_2j_3j_4}^\ast c_{i_1i_2i_3i_4}M_{j_1j_2j_3j_4,i_1i_2i_3i_4}\geq0,
\end{equation}
with the moments of the partial transpose
\begin{equation}
M_{j_1j_2j_3j_4,i_1i_2i_3i_4}=\braket{\hat{a}^{\dagger^{j_2}}\hat{a}^{^{j_1}}\hat{a}^{\dagger^{i_1}}\hat{a}^{^{i_2}}\hat{b}^{\dagger^{j_4}}\hat{b}^{^{j_3}}\hat{b}^{\dagger^{i_3}}\hat{b}^{^{i_4}}}_{\hat{\rho}^{\Gamma_B}}.
\end{equation}
The left hand side of inequality \eqref{eq:SVderive3} is a quadratic expression in the coefficients $c_{i_1i_2i_3i_4}$. Sylvester's criterion states that such an inequality holds true for all $c_{i_1i_2i_3i_4}$ if and only if all principal minors of the expression \eqref{eq:SVderive3} are nonnegative \cite{Sylvester}. However, first a relation between the moments of the partially transposed and the moments of the original state has to be derived. This is straightforward:
\begin{equation}
\begin{aligned}
M_{j_1j_2j_3j_4,i_1i_2i_3i_4} &=\braket{\hat{a}^{\dagger^{j_2}}\hat{a}^{^{j_1}}\hat{a}^{\dagger^{i_1}}\hat{a}^{^{i_2}}\hat{b}^{\dagger^{j_4}}\hat{b}^{^{j_3}}\hat{b}^{\dagger^{i_3}}\hat{b}^{^{i_4}}}_{\hat{\rho}^{\Gamma_B}} \\
&=\braket{\hat{a}^{\dagger^{j_2}}\hat{a}^{^{j_1}}\hat{a}^{\dagger^{i_1}}\hat{a}^{^{i_2}}(\hat{b}^{\dagger^{j_4}}\hat{b}^{^{j_3}}\hat{b}^{\dagger^{i_3}}\hat{b}^{^{i_4}})^{\dagger}}_{\hat{\rho}} \\
&=\braket{\hat{a}^{\dagger^{j_2}}\hat{a}^{^{j_1}}\hat{a}^{\dagger^{i_1}}\hat{a}^{^{i_2}}\hat{b}^{\dagger^{i_4}}\hat{b}^{^{i_3}}\hat{b}^{\dagger^{j_3}}\hat{b}^{^{j_4}}}_{\hat{\rho}}.
\end{aligned}
\end{equation}

Furthermore, it is convenient to establish an ordering of indices and number the ordered multi-indices $(i_1,i_2,i_3,i_4)$ by $i$ and $(j_1,j_2,j_3,j_4)$ by $j$. For the inseparability criteria the ordering is irrelevant, however, an example is the following: Order the set of multi-indices $\{(i_1,i_2,i_3,i_4)\}$ in such a way that for two multi-indices $\underline u=(i_1,i_2,i_3,i_4)$ and $\underline v=(j_1,j_2,j_3,j_4)$
\begin{equation}
\underline u < \underline v \Leftrightarrow
\begin{cases}
|\underline u|<|\underline v| & \text{or} \\
|\underline u|=|\underline v| & \text{and}\; \underline u <' \underline v.
\end{cases}
\end{equation}
$|\underline u|=i_1+i_2+i_3+i_4$, and $\underline u <' \underline v$ means that the first nonzero difference $j_3-i_3,j_4-i_4,j_1-i_1,j_2-i_2$ is positive. This is the ordering Shchukin and Vogel applied in their original paper and which results in a matrix of moments as in equation \eqref{eq:SVmatrix}.

With the relation between the moments of the partially transposed and the original state as well as with the multi-index ordering procedure the nonnegativity of all minors of the quadratic expression \eqref{eq:SVderive3} can be expressed as
\begin{equation}
\det[M(\hat{\rho}^{\Gamma_B})]_{\underline r}\geq0\;\;\,\forall\;\,\underline r\equiv(r_1,\ldots,r_N)\,,\qquad r_1<r_2<...<r_N,
\end{equation}
where the matrix of moments is defined by
\begin{equation}
M_{ij}(\hat{\rho}^{\Gamma_B})=\braket{\hat{a}^{\dagger^{i_2}}\hat{a}^{^{i_1}}\hat{a}^{\dagger^{j_1}}\hat{a}^{^{j_2}}\hat{b}^{\dagger^{j_4}}\hat{b}^{^{j_3}}\hat{b}^{\dagger^{i_3}}\hat{b}^{^{i_4}}}_{\hat{\rho}},
\end{equation} 
with the just introduced multi-index ordering. Note that $i$ and $j$ have been exchanged in order to follow the convention that ordinary matrices are denoted with $M_{ij}$ and not with $M_{ji}$. 

Summing up, for every separable state, all principal minors are nonnegative, explicitly $\det[M(\hat{\rho}_{sep}^{\Gamma_B})]_{\underline r} \geq0\;\;\forall\;\,\underline r$. Furthermore a necessary and sufficient condition for the positivity of the partial transpose can be formulated: The partial transpose of a bipartite CV quantum state is nonnegative if and only if all principal minors $\det[(\hat{\rho}^{\Gamma_B})]_{\underline r}$ are greater or equal zero. This also yields the necessary and sufficient condition for NPTness, which has been stated in the theorem. 
\end{proof}

The theorem provides a recipe for determining whether a CV state is NPT (and hence entangled) or not. However, it consists of an infinite set of inequalities to be examined. Therefore, the problem lies in the search for an appropriate principal minor. Nevertheless, for any NPT state such a minor exists. Since this infinite set of criteria is based on partial transposition it can only detect NPT entanglement. PPT entanglement (bound entanglement) remains hidden. It is also worth pointing out that all occuring moments can be measured experimentally \cite{SVmeasure}.

The SV criteria (in the following "Shchukin-Vogel inseparability criteria" will be abbreviated by "SV criteria"; furthermore "SV determinant" stands for a "principal minor of the matrix of moments by Shchukin and Vogel".) are especially outstanding as they provide not just \textit{sufficient} but also \textit{necessary} conditions for NPTness. There have been proposed several sufficient conditions for NPT entanglement before \cite{Simon,Duan,Raymer,Mancini}. However, all of them were given the same footing and they could be rederived with aid of the SV criteria. There are several states whose NPT entanglement can not be detected with aid of these earlier criteria. Mostly these are states whose entanglement only becomes visible when considering moments of sufficiently high order. However, the criteria in \cite{Simon,Duan,Raymer,Mancini} only employ moments up to second order. The SV criteria, on the contrary, make use of moments of any order.

\subsection{Entanglement Witnessing in Two-Mode Cat States}\label{catwitnessing}
To illustrate the power of the SV criteria, entanglement detection in two-mode Schr\"odinger cats is demonstrated. So consider the two-mode cat state
\begin{equation}\label{eq:phicat}
\ket{\psi_\phi}=\frac{1}{\sqrt{{\mathcal N}_\phi}}\Bigl(\ket{\alpha,\alpha}+\mathrm{e}^{i\phi}\ket{-\alpha,-\alpha}\Bigr)
\end{equation}
with
\begin{equation}
{\mathcal N}_\phi=2+2\mathrm{e}^{-4|\alpha|^2}\cos\phi,
\end{equation}
and $0\leq\phi<2\pi$. Ideally, the goal is to find a way to detect entanglement in $\ket{\psi_\phi}$ for \textit{any} $\phi$ with aid of some witness or SV determinant. It is unknown whether such a witness or single determinant exists. However, a set of "witnesses" can be derived from which a feasible one can always be chosen such that entanglement witnessing for any $\ket{\psi_\phi}$ can be performed.

\begin{figure}[ht]
\begin{center}
\includegraphics[width=12cm]{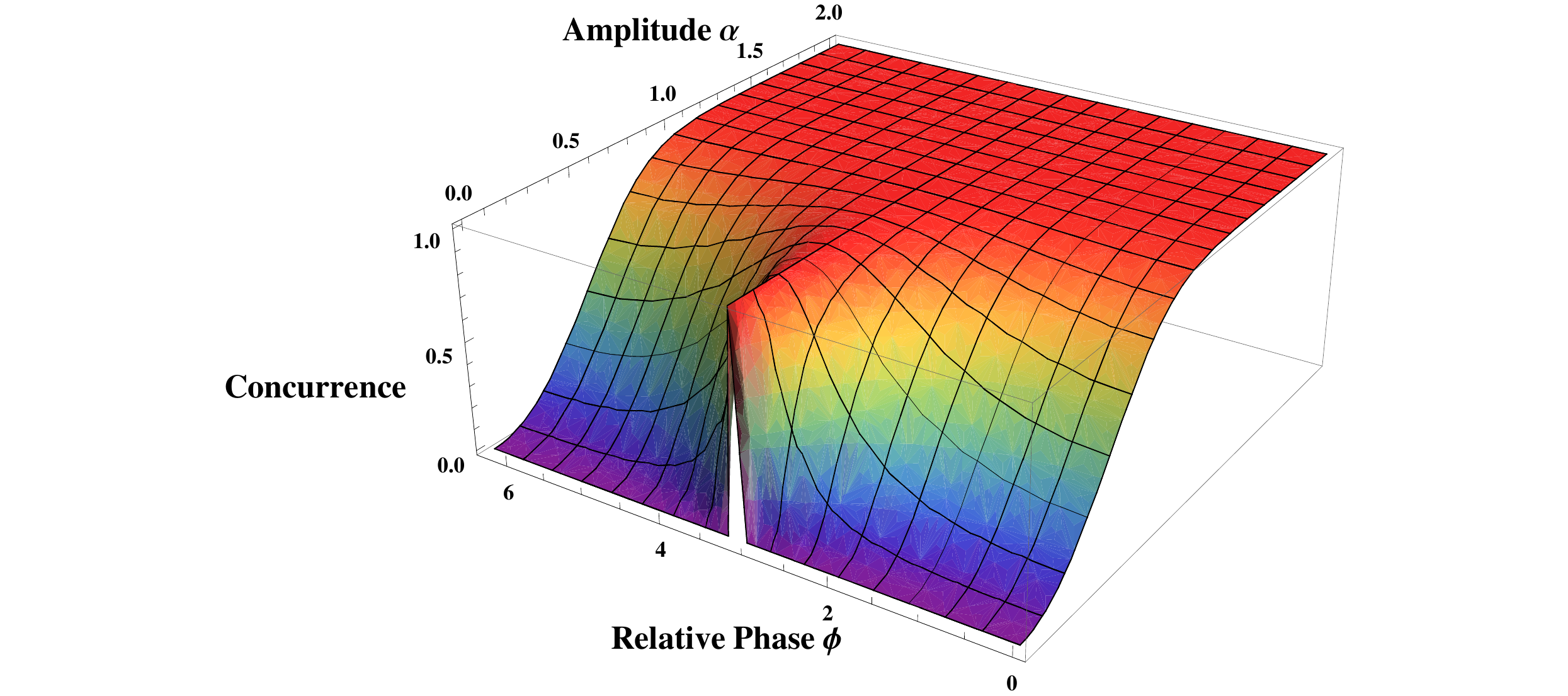}
\caption{Concurrence of the two-mode cat state $\ket{\psi_\phi}$, \eqref{eq:phicat}.}
\label{fig:CatConcurrences}
\end{center}
\end{figure}

To show that the considered state actually is entangled for $\alpha\neq0$ and arbitrary $\phi$, the concurrence can be calculated. It is given by
\begin{equation}
C(\ket{\psi_\phi})=\frac{1-\mathrm{e}^{-4|\alpha|^2}}{1+\mathrm{e}^{-4|\alpha|^2}\cos\phi},
\end{equation} 
and is shown in figure \ref{fig:CatConcurrences}. It is noticeable that the set of states defined by $\ket{\psi_\phi}$ has two-sided properties. The even two-mode cat $\ket{\psi_0}$ behaves other than the rest of the two-mode cats $\ket{\psi_\phi}$ with $\phi\neq\pi$ with regard to entanglement. $\ket{\psi_\pi}$ represents a maximally entangled two-qubit state $\frac{1}{\sqrt{2}}(\ket{01}+\ket{10})$ for \textit{any} amplitude $\alpha\neq0$ \cite{EntNonOrth1,EntNonOrth2}. Hence it reveals highly discontinuous behavior at $\alpha=0$. On the contrary, all other states display continuous behavior.

In \cite{SV} Shchukin and Vogel demonstrated entanglement detection in $\ket{\psi_\pi}$ with aid of the principal minor
\begin{equation}
s_1:=\begin{vmatrix} 
1 & \braket{\hat{b}^\dagger} & \braket{\hat{a}\hat{b}^\dagger}  \\
\braket{\hat{b}} & \braket{\hat{b}^\dagger\hat{b}} & \braket{\hat{a}\hat{b}^\dagger\hat{b}}  \\
\braket{\hat{a}^\dagger\hat{b}} & \braket{\hat{a}^\dagger\hat{b}^\dagger\hat{b}} & \braket{\hat{a}^\dagger\hat{a}\hat{b}^\dagger\hat{b}}
\end{vmatrix}.
\end{equation}
Furthermore, it is known that the inseparability criterion by Duan et al. (\cite{Duan}) is able to detect entanglement in $\ket{\psi_0}$. However, the Duan criterion is just a weaker form of the criterion corresponding to the SV determinant
\begin{equation}
s_2:=\begin{vmatrix} 
1 & \braket{\hat{a}} & \braket{\hat{b}^\dagger}  \\
\braket{\hat{a}^\dagger} & \braket{\hat{a}^\dagger\hat{a}} & \braket{\hat{a}^\dagger\hat{b}^\dagger}  \\
\braket{\hat{b}} & \braket{\hat{a}\hat{b}} & \braket{\hat{b}^\dagger\hat{b}}
\end{vmatrix},
\end{equation}
as Shchukin and Vogel showed. So, what are the witnessing capabilities of these determinants with regard to the whole set of states $\{\ket{\psi_\phi}:0\leq\phi<2\pi\}$? Evaluation of $s_1$ and $s_2$ for general $\phi$ is straightforward:
\begin{align}
s_1 &=-\frac{4|\alpha|^6\mathrm{e}^{4|\alpha|^2}}{(\mathrm{e}^{4|\alpha|^2}+\cos\phi)^3}(1-\mathrm{e}^{4|\alpha|^2}\cos\phi), \\
s_2 &=-\frac{4|\alpha|^4\mathrm{e}^{4|\alpha|^2}}{(\mathrm{e}^{4|\alpha|^2}+\cos\phi)^3}(1+\mathrm{e}^{4|\alpha|^2}\cos\phi).
\end{align}
Hence, determinant $s_1$ is capable of witnessing entanglement in $\{\ket{\psi_\phi}:\frac{\pi}{2}\leq\phi\leq\frac{3\pi}{2}\}$, while $s_2$ witnesses entanglement in $\{\ket{\psi_\phi}:0\leq\phi\leq\frac{\pi}{2}\,;\,\frac{3\pi}{2}\leq\phi<2\pi\}$. This is graphically demonstrated in figure \ref{fig:catdets}.
\begin{figure}[ht]
\subfloat{\includegraphics[width=0.45\textwidth]{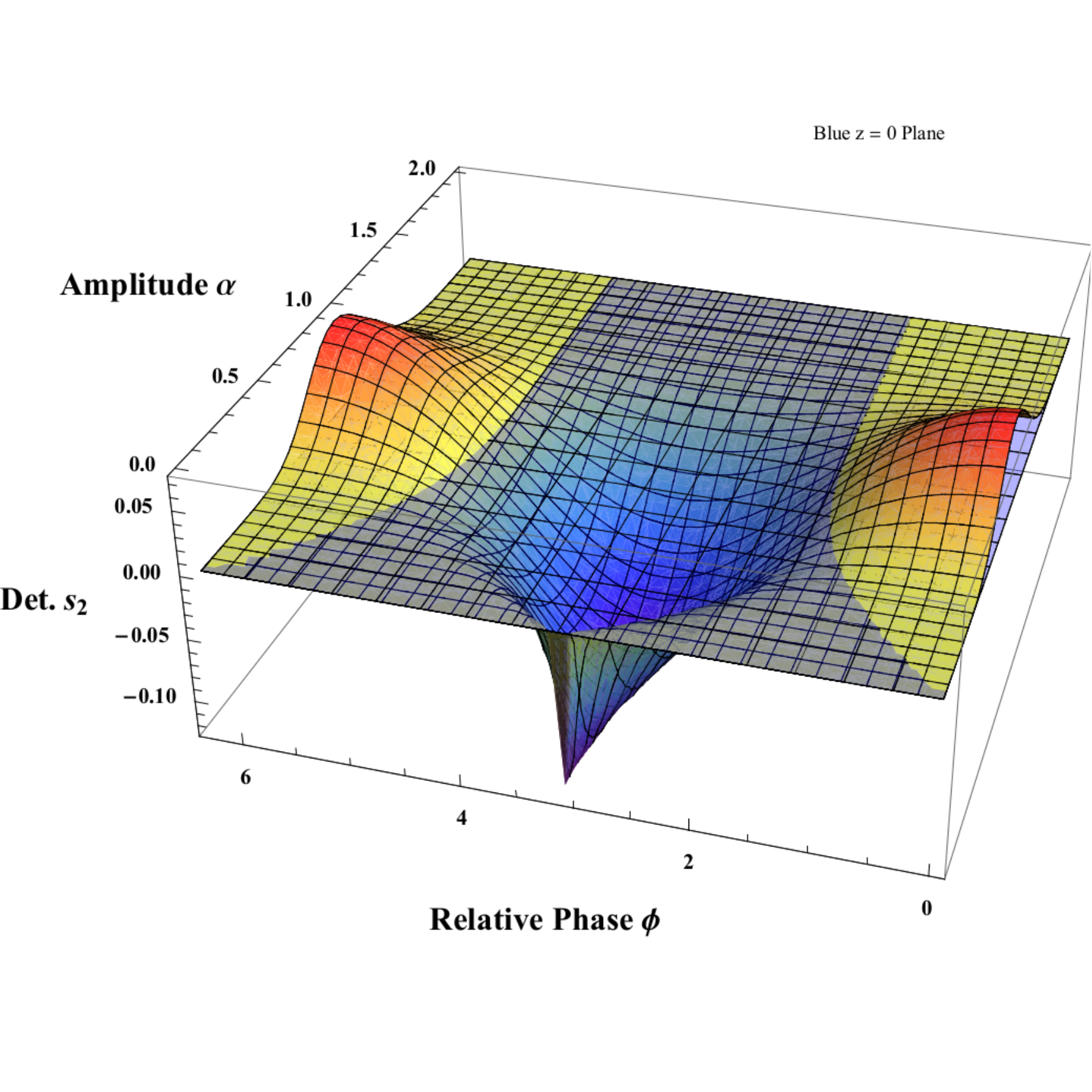}}\hfill
\subfloat{\includegraphics[width=0.45\textwidth]{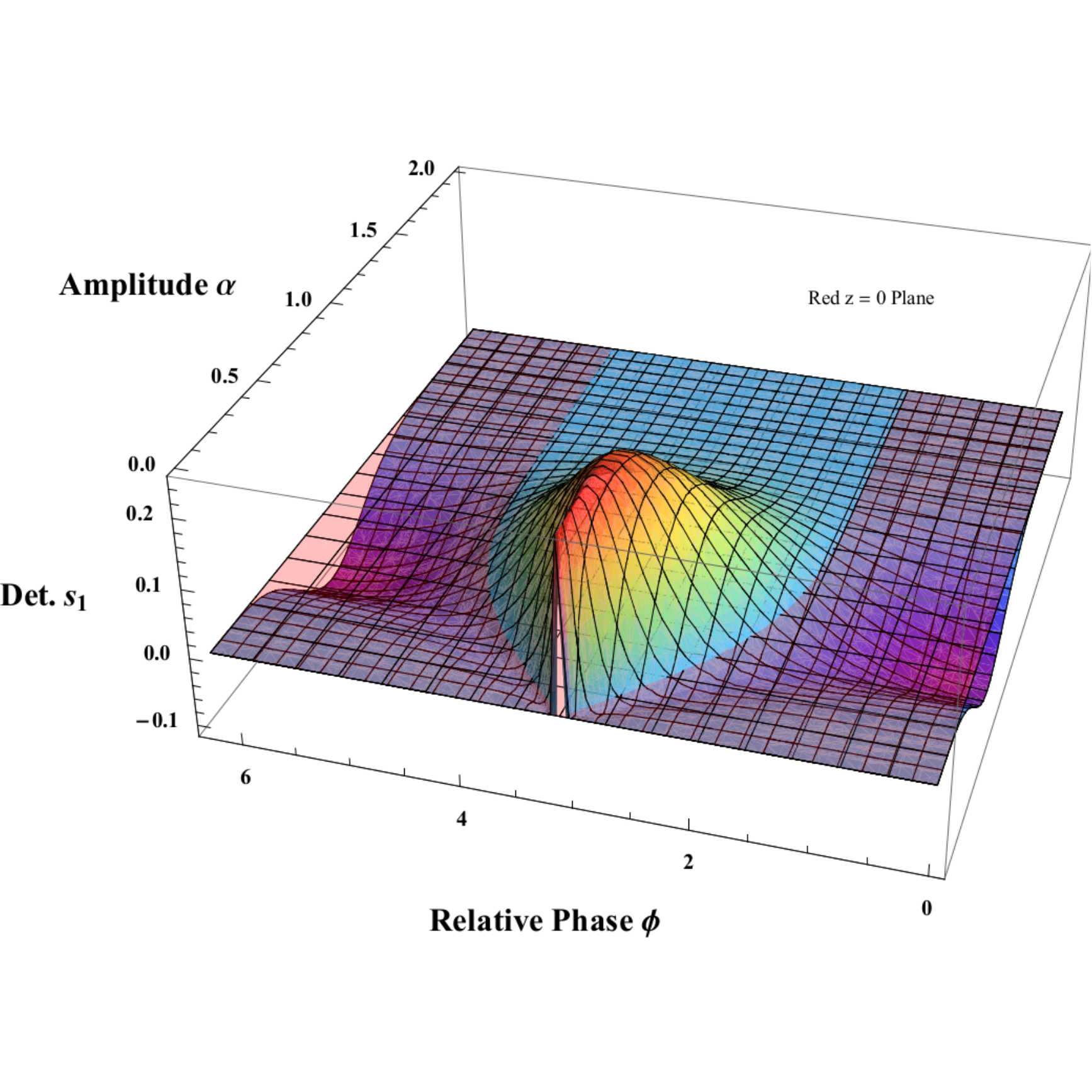}}\hfill
\caption[The SV determinants $s_1$ and $s_2$ for $\ket{\psi_\phi}$.]{The SV determinants $s_1$ and $s_2$ for $\ket{\psi_\phi}$. While $s_1$ performs witnessing for $\frac{\pi}{2}\leq\phi\leq\frac{3\pi}{2}$, $s_2$ does its job for $0\leq\phi\leq\frac{\pi}{2}$ and $\frac{3\pi}{2}\leq\phi<2\pi$. Hence, these two determinants are sufficient for witnessing in $\ket{\psi_\phi}$ for any $\phi$. Note that the blue and the red plane respectively denote the zero-plane.}
\label{fig:catdets}
\end{figure}
So, define a "generalized determinant"
\begin{equation}
\tilde{s}_n=ns_1+(1-n)s_2\,:\,n\in\{0,1\},
\end{equation}
which is capable of entanglement detection in the whole set $\{\ket{\psi_\phi}:0\leq\phi<2\pi\}$. This is meant in such a way that for any $\ket{\psi_\phi}$ there is $n=0\lor1$ such that $\tilde{s}_n<0$, which is equivalent to successful entanglement detection. $\tilde{s}_0$ applies for $0\leq\phi\leq\frac{\pi}{2}$ and $\frac{3\pi}{2}\leq\phi<2\pi$, while $\tilde{s}_1$ is the right choice for $\frac{\pi}{2}\leq\phi\leq\frac{3\pi}{2}$. For $\phi=\frac{\pi}{2}$ and $\phi=\frac{3\pi}{2}$, both $\tilde{s}_0$ and $\tilde{s}_1$ are fine. However, the generalized determinant can be also directly formulated as a function of the phase $\phi$ in $\ket{\psi_\phi}$: For any $\ket{\psi_\phi}$,
\begin{equation}
\tilde{s}'_\phi=\Theta_{\frac{1}{2}}[\cos(\phi+\pi)]\,s_1+\Theta_{\frac{1}{2}}[\cos(\phi)]\,s_2
\end{equation}
performs entanglement witnessing for any $\phi$. Here it has been made use of the \textit{Heaviside step function} $\Theta_{\frac{1}{2}}$, which is defined in appendix \ref{Heaviside}. It can be concluded that the SV criteria are a very strong tool for entanglement detection, when there is an approximate idea of which principal minors to look at. The derived "generalized determinants" yields for the considered states $\ket{\psi_\phi}$ of course no new results, since there is no witnessing necessary at all. The concurrence can be easily calculated for these states, which sets out all entanglement properties of the state. However, when the state is subject to a noisy channel, entanglement quantification is not so easy anymore. Then it is good to have a set of powerful witnesses or determinants in reserve to at least detect entanglement. 

Continuative questions to be tackled are whether the derived generalized determinants also work for more general two-mode CSS states, such as
\begin{equation}\label{eq:generalCats1}
\ket{\psi_{\phi,\theta}}=\frac{1}{\sqrt{{\mathcal N}_{\phi,\theta}}}\Bigl(\ket{\alpha,\alpha}+\mathrm{e}^{i\phi}\ket{\alpha\mathrm{e}^{i\theta},\alpha\mathrm{e}^{i\theta}}\Bigr),
\end{equation}
or even
\begin{equation}\label{eq:generalCats2}
\ket{\psi'_{\phi}}=\frac{1}{\sqrt{{\mathcal N'}_{\phi}}}\Bigl(\ket{\alpha,\beta}+\mathrm{e}^{i\phi}\ket{\gamma,\delta}\Bigr),
\end{equation}
for general coherent states $\ket{\alpha}$, $\ket{\beta}$, $\ket{\gamma}$ and $\ket{\delta}$.

Furthermore, as just mentioned, when states of this kind are subject to lossy or noisy channels, forcing them to become mixed, which are then the appropriate witnesses or determinants? Calculations suggest the conjecture that a witness (in this paragraph witness may refer not just to actual witness operators but also to determinants, etc.) which detects entanglement in a state $\ket{\psi}$ is also the right one for entanglement detection in $\Upsilon(\ket{\psi})$, where $\Upsilon$ denotes some channel (see calculations in subsections \ref{subsecA} and \ref{subsecB}). This would not be so surprising, since physical channels do not essentially reshape or transform entanglement. They can only destroy or preserve it. However, if not all entanglement has been detected by the witness in front of the channel, then it may of course not be able to detect anything afterwards, if the channel has destroyed exactly that part of the entanglement which has been initially used by the witness. But in this case there would exist better suited witnesses for the state at hand anyway. Namely these which would make use of \textit{all} entanglement of the initial state. Admittedly, do such witnesses always exist? If yes, then the state after the channel should be entangled \textit{if and only if} the witness, which is in this sense optimal, takes effect. The other way around, the question is the following: For any state $\hat{\rho}$, does there exist a channel $\Upsilon$, such that a witness which detects entanglement in $\hat{\rho}$ in an optimal way, such that it makes use of all its entanglement, is not able to detect entanglement in $\Upsilon(\hat{\rho})$, when there is actually still entanglement present? If not, the witness would become a necessary and sufficient tool for entanglement detection behind the channel. Or when the witness is in this latter sense not optimal, does there \textit{always} exist a channel which preserves some entanglement, but prevents the witness from detecting it behind the channel?

Concluding, there are a lot of interesting open questions regarding two-mode CSS states as well as general entanglement witnessing issues. However, these are not in the focus of this thesis.

\chapter{Optical Hybrid Approaches to Quantum Information}\label{ch:3}
The term \textit{hybrid} is quite in vogue in quantum information. Several authors use it to describe various experiments and protocols. First, there are proposals considering \textit{hybrid} quantum devices which combine elements from atomic and molecular physics as well as from quantum optics and also solid state physics \cite{Wallquist}. Second, there is the notion of \textit{hybrid entanglement}, when talking about entanglement between different degrees of freedom, for example, the entanglement between spatial and polarization modes \cite{Neves,Gabriel}. Finally, schemes which employ both CV and DV resources are also called \textit{hybrid} \cite{Nadja,Louis,XWang,SLloyd}. This is, how \textit{hybrid} is defined in this thesis. 
\begin{defini}
A quantum information protocol is called \textup{hybrid} if it utilizes both resources from discrete variable quantum information and continuous variable quantum information. These resources may include CV and DV states as well as CV and DV quantum operations and measurement techniques.
\end{defini}
For instance, a protocol is considered hybrid if it uses both qubits and qumodes or if it makes use of both DV photon detection and CV homodyne detection. In the preceding chapter it has been pointed out that the description of CV and DV states proceeds differently. Hence, combining CV and DV resources has its own characterizing and challenging subtleties. So, the question arises, why actually aim for hybrid approaches? Note that this chapter is mainly based on a recent review on \textit{optical hybrid approaches to quantum information} by P. van Loock \cite{PvL}.

\section{Why go Hybrid?}
In CV quantum computation, Gaussian states as well as Gaussian transformations, such as beam splitting and squeezing, are used.\footnote[1]{In this thesis no introduction into quantum computation is given. Hence for such introductions see for example the books by Nielsen and Chuang, as well as by Mermin \cite{Nielsen,Mermin}. A good overview over the concepts is also contained in \cite{PvL}.} Computational universality then is the ability to simulate any Hamiltonian which is expressed as an arbitrary polynomial in the mode operators with arbitrary precision. This kind of universality is called \textit{CV universality}. However, to reach such universality in CV quantum computation just linear Gaussian elements are not sufficient. At least one non-Gaussian component is necessary. Actually, any quantum computer utilizing only linear elements could be efficiently simulated by a classical computer \cite{Bartlett}. Unfortunately, this single non-Gaussian element forms the problem with CV quantum computation. It is very difficult to efficiently realize non-Gaussian transformations on Gaussian states. 

In DV quantum computation the encoding of information takes place in a finite-dimensional subspace of the infinite-dimensional Fock space. Then the weaker \textit{DV universality} refers to the ability to simulate any unitary acting on this working space with arbitrary precision. Just as in the CV case, for deterministic processing, a nonlinear interaction is required to realize DV universality. But when truncating the Fock space only single or few-photon states are left. The drawback on DV quantum computation is that nonlinear interactions on the few-photon level are hardly accomplishable. Note that there actually exists an efficient protocol for universal DV computation with only linear optics by Knill, Laflamme and Milburn (the \textit{KLM} scheme \cite{KLM}), which, however, is probabilistic (or near-deterministic at the expense of complicated states entangled between many photons). 

A way out of the problems of CV and DV quantum computation may provide hybrid approaches. The \textit{GKP} scheme by Gottesman, Kitaev and Preskill can be considered as one of the first hybrid protocols for quantum computation \cite{GKP}. It makes use of CV Gaussian states and transformations in combination with DV photon number measurements. So-called non-Gaussian phase states can be created from Gaussian two-mode squeezed states (TMSSs) with the aid of photon counting measurements. Additionally, the GKP scheme can be considered as hybrid, as it employs the concept of encoding logical DV qubits into CV qumodes.

Gottesman, Kitaev and Preskill exploit so-called \textit{measurement-induced nonlinearities} for the realization of the non-Gaussian element. When talking about hybrid schemes, it has to be distinguished between measurement-induced nonlinearities and \textit{weak nonlinearities}. The latter is an actual nonlinearity, which may be enhanced or even mediated through suffenciently intense light fields. A quantum bus (\textit{qubus}) can be used to mediate the nonlinear interaction between two possibly even distant qubits and act for example as an entangling gate \cite{qubus1,qubus2,qubus3,qubus4}. A measurement-induced nonlinearity is no actual nonlinearity. The state to be nonlinearly transformed is entangled with an ancilla state, which is subsequently measured in such a way that the effect on the residual state corresponds to a nonlinear interaction. This concept is exploited in the seminal works by GKP and especially KLM. 

Concluding, the goal of hybrid approaches is to circumvent the limitations of the practical schemes for quantum computation. Nevertheless, their efficiency and feasibility shall be maintained as well as possible. For this purpose
\begin{itemize}
\item hybrid states, operations and measurements,
\item qubus concepts,
\item weak nonlinearities,
\item as well as measurement-induced nonlinearities
\end{itemize} 
are used.

Besides the quantum computational aspect, there is one more point to mention. CV and DV schemes both have their characterizing advantages and disadvantages. The heralding mechanisms, necessary in DV schemes, make them highly probabilistic and hence rather inefficient. However, the fidelities in case of successful operations are very high, often near unity. In contrast, CV schemes are typically deterministic, as no heralding is required. Their drawback lies in the fact that CV encodings are very sensitive to losses and noise, which yields lower fidelities. Hence, there is typically some kind of trade-off between DV and CV schemes. Hybrid approaches may offer possibilities to benefit from the individual advantages while suppressing the disadvantages.

The next section will present a few exemplary applications which involve hybrid protocols.

\section{Applications involving Hybrid Approaches}\label{subsec:HybridApps}
A first example for a protocol making use of hybrid approaches has been already given in the previous section: The GKP scheme for quantum computation, which will be discussed in slightly more detail now and compared to the KLM scheme. It is a cluster-based quantum computation scheme and makes use of linear Gaussian CV resources and operations in combination with DV photon counting.\footnote[2]{Cluster state quantum computation is a kind of "one-way measurement based quantum computation" which makes use of so-called cluster states. It has been invented by Raussendorf and Briegel, see \cite{Raussendorf}. For an introduction and review see \cite{NielsenCluster}.} A TMSS is produced with a $C_Z$ entangling gate. Then a photon number measurement is performed which yields the so-called cubic phase state. With aid of this phase state, the cubic phase gate can be realized, the non-Gaussian element necessary for universal CV quantum computation. It is important to note that all states and gates can be prepared and performed "offline", prior to the actual computation, which then only consists of measurements. Actually, it is this offline cluster state preparation, which is the most problematic part in the protocol. Growing a sufficienty large cluster state with applicable properties is an extremely difficult task and could not be realized on large scales yet.  

As a comparison the KLM scheme is teleportation based. It is a fully DV model, exploiting linear optics and measurement induced nonlinearities by photon number measurements. The gates are teleported on the processed state, which, however, requires highly nonlinear entangled ancilla states. This is the drawback of the KLM scheme. While it is in principle efficient, it is highly impractical, as the demanded ancilla states are too complicated to be engineered with current technologies.

Besides quantum computation there are several more elementary tasks which can be performed with hybrid approaches more efficiently than with only CV or DV schemes. One example is the realization of POVMs for optimal unambiguous state discrimination \cite{Banaszek,Wittmann}.\footnote[3]{For an introduction to unambiguous state discrimination, where also \textit{positive operator-valued measures} (\textit{POVMs}) are explained, see \cite{Chefles,Helstrom}.} A $50/50$-beam splitter, a coherent-state ancilla $\ket{\alpha}$ and DV photon on/off detectors can be used for the optimal unambiguous discrimination between binary coherent states $\{\ket{\pm\alpha}\}$, see figure \ref{fig:USD}.

Another example for a hybrid scheme is the generation of Schr\"odinger-cat states $\ket{\psi_\pm}={\mathcal N}'_\pm(\ket{\alpha}\pm\ket{-\alpha})$ via DV photon subtraction of CV Gaussian squeezed vacuum. The subtraction \begin{wrapfigure}{r}{0.5\textwidth}
\begin{center}
\includegraphics[width=6cm]{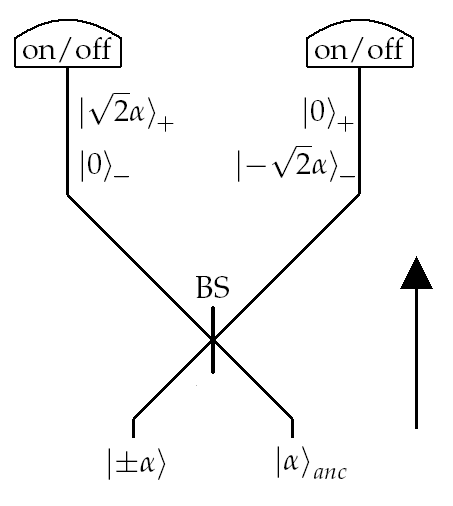}
\caption[Optimal unambiguous state discrimination of equally likely binary coherent states $\{\ket{\pm\alpha}\}$ \cite{PvL}. This figure is a modified version of Fig. 2.2.3 in \cite{PvL}. The figure was used with permission by Peter van Loock.]{Optimal unambiguous state discrimination of equally likely binary coherent states $\{\ket{\pm\alpha}\}$ using a beam splitter, a coherent-state ancilla $\ket{\alpha}$ and DV photon on/off detectors. The inconclusive event here corresponds to the detection of the two-mode vacuum $\mathrm{e}^{-|\alpha|^2}\ket{00}$ with probability $\mathrm{e}^{-2|\alpha|^2}$ \cite{PvL}. This figure is a modified version of Fig. 2.2.3 in \cite{PvL}. The figure was used with permission by Peter van Loock.}
\label{fig:USD}
\end{center}
\end{wrapfigure}
procedure is performed by mixing the squeezed vacuum with one-photon or two-photon Fock states in a beam splitter followed by homodyne detection and postselection depending on the measurement result. One-photon subtraction yields even cat states, while two-photon subtraction results in odd cats \cite{cats1,cats2}. Another way of generating Schr\"odinger-cats is based on hybrid entangled states \cite{cats4}. It is presented in chapter \ref{ch:5}. CSS states find a very important application in fault-tolerant, universal quantum computation \cite{cats3}.

More examples for hybrid protocols can be found in the areas of quantum state characterization and quantum communication. In their theory paper \cite{statechar1} Fiurasek and Cerf show "how to measure squeezing and entanglement of Gaussian states without homodyning". Instead of homodyne detection they apply DV photon counting to characterize the CV Gaussian states. Another theoretical proposal suggests, to measure the Bell nonlocality of CV Einstein-Podolsky-Rosen states (which actually are  TMSSs) via measurement of the DV photon number parity operator \cite{statechar2}. However, even the simple CV homodyne tomography of DV one- (\cite{statechar3}) or two-photon (\cite{statechar4}) Fock states can be already considered as a hybrid experiment. In quantum communication there are entanglement concentration and entanglement distillation subroutines which can be performed more efficiently with hybrid protocols. \cite{quantumcomm1} demonstrates, how to theoretically concentrate entanglement of pure CV TMSSs through DV photon subtraction. \cite{quantumcomm2,quantumcomm3} present the experimental execution. Entanglement distillation of noisy CV TMSSs using beam splitters and on/off DV photon detectors is theoretically set out in \cite{quantumcomm4} and experimentally performed in \cite{quantumcomm5,quantumcomm6}.

A highly important tool being utilized in several models for quantum computation and quantum communication is the hybrid quantum bus concept \cite{qubus1,qubus2,qubus3,qubus4}. The idea is, to use an intense qumode to mediate a nonlinear interaction (even when it is weak) between two qubits. It will be presented in chapter \ref{ch:5}.

One relevant ingredient of several of the mentioned protocols is a new intriguing form of entanglement, the \textit{hybrid entanglement} between CV and DV systems. What distinguishes this kind of entanglement from ordinary DV or CV entanglement? Can, just as in the CV case, where one distinguishes between Gaussian entangled states and non-Gaussian entangled states, different kinds of hybrid entanglement be identified? How is quantification and witnessing of hybrid entanglement achievable? Is there also multipartite hybrid entanglement? And how can hybrid entangled states actually be generated? The rest of this thesis addresses these questions and provides answers to most of them. Also some of the already mentioned applications of hybrid entanglement are presented.

\chapter{Hybrid Entanglement}\label{ch:4}
In this chapter \textit{hybrid entangled} (\textit{HE})\footnote[1]{Depending on the context, HE may denote either "hybrid entangled" or "hybrid entanglement".} states, i.e. entangled states whose constituents are both DV and CV systems are discussed and classified, while the emphasis lies on bipartite hybrid entanglement. The chapter starts with an introductory section \ref{sec:HEintro} focussing on some general issues regarding HE. Afterwards, in section \ref{sec:pure} pure bipartite HE states are investigated. Then we move on to mixed states. Section \ref{sec:mixed1} discusses mixed states which are effectively fully DV while in section \ref{sec:mixed2} so-called \textit{truly hybrid entangled} states are introduced and examined. While the first 4 sections only focus on bipartite HE, finally, in section \ref{sec:multihe} also multipartite HE is explored.

\section{Introduction}\label{sec:HEintro}
An example for a hybrid entangled state is a state like 
\begin{equation}
\ket{\psi}^{AB}=\frac{1}{\sqrt{2}}\Bigl(\ket{0}^A\ket{\psi_0}^B+\ket{1}^A\ket{\psi_1}^B\Bigr),
\end{equation}
where $\ket{0}^A$ and $\ket{1}^A$ are DV qubit states in ${\mathcal H}^A_2$ and $\ket{\psi_1}^B$ as well as $\ket{\psi_2}^B$ are CV qumode states living in the infinite-dimensional Hilbert space ${\mathcal H}^B_\infty$. Already this state shows a combination of DV and CV properties. On the one hand, it can be seen that it has the same normalization constant $\frac{1}{\sqrt{2}}$ as the comparable, fully DV qubit state $\frac{1}{\sqrt{2}}(\ket{00}+\ket{11})$. On the other hand, since $\ket{\psi_1}^B$ and $\ket{\psi_2}^B$ are general CV qumode states and hence not orthogonal, there does not directly exist a Schmidt decomposition of the state. This is also the case for pure fully CV states.

However, one can think of many HE states like $\ket{\psi}^{AB}$, but with a possibly higher dimensional DV system or even mixed states. So, a rigorous definition of hybrid entanglement is necessary.
\begin{defini}
Any entangled bipartite state of the form
\begin{equation}\label{eq:HEdef}
\begin{aligned}
\hat{\rho}^{AB} &=\sum_{n=1}^N p_n\,\ket{\psi_n}_{AB}\bra{\psi_n}\,,\qquad p_n>0\;\forall \;n\,;\;\sum_{n=1}^N \;p_n=1, \\
\ket{\psi_n} &=\sum_{m=0}^{d-1} c_{nm}\ket{m}^A\ket{\psi_{nm}}^B\,,\qquad c_{nm}\in\mathbb C\;;\;\sum_{m=0}^{d-1} \;|c_{nm}|^2=1,
\end{aligned} 
\end{equation}
with generally non-orthogonal qumode states $\ket{\psi_{nm}}^B$, which is living in a Hilbert space of the form ${\mathcal H}^{AB}={\mathcal H}_{d}^A\otimes{\mathcal H}_{\infty}^B$ with finite $d$, is called \textup{hybrid entangled}.
\end{defini}
How can HE states actually be described in a useable way? Can their entanglement be quantified? Facing an overall infinite-dimensional Hilbert space ${\mathcal H}_{d}^A\otimes{\mathcal H}_{\infty}^B$, no density matrices can be employed. So, CV methods have to be attempted. But in CV entanglement theory the only conveniently representable states are the Gaussian ones. In the general non-Gaussian CV regime, for instance, exact entanglement quantification is not achievable. Therefore, do HE Gaussian states exist?
\begin{lem}\label{lem:DVnonG}
Any singlepartite $d$-dimensional quantum state with finite $d$ and $d\geq2$ is non-Gaussian.
\end{lem}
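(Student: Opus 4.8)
The plan is to prove the structural fact that a Gaussian state can only have rank $1$ or infinite rank, never a finite rank $\ge 2$; since a $d$-dimensional state with $2\le d<\infty$, regarded as a state on the Fock space $\mathcal H_\infty$, has rank $d$ (its range is a $d$-dimensional subspace), it then cannot be Gaussian. The starting point is the structure of Gaussian states implied by the Williamson normal form discussed above: for a single mode there is a symplectic transformation bringing the covariance matrix to $\mathrm{diag}(\mu,\mu)$, this transformation is realised on states by a Gaussian unitary $\hat U$ assembled from the displacers, phase shifters and squeezers of Section~\ref{states}, and the state with covariance matrix $\mathrm{diag}(\mu,\mu)$ and zero displacement is just a thermal state $\hat\rho_{\mathrm{th}}(\bar n)$ with mean photon number $\bar n=\tfrac12(\mu-1)$, where $\mu\ge1$ by the uncertainty relation \eqref{eq:Heisenberg}. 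Hence every single-mode Gaussian state is of the form $\hat U\,\hat\rho_{\mathrm{th}}(\bar n)\,\hat U^{\dagger}$.

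From here I would observe that unitary conjugation preserves rank, so the rank of a Gaussian state equals that of $\hat\rho_{\mathrm{th}}(\bar n)=\sum_{k\ge0}\bar n^{k}(1+\bar n)^{-(k+1)}\ket{k}\bra{k}$: this is $1$ for $\bar n=0$ (the vacuum) and countably infinite for $\bar n>0$, since then every coefficient is strictly positive. Therefore no Gaussian state has finite rank $\ge2$, which gives the lemma. The multimode case adds nothing: the Williamson form produces a product of thermal states, whose rank is the product of the individual ranks and is again $1$ or infinite.

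An alternative, purely phase-space route is to compute the characteristic function directly. If the state lives in the span of the first $d$ Fock states, $\hat\rho=\sum_{n,m=0}^{d-1}\rho_{nm}\ket n\bra m$, then, using that each Weyl-operator matrix element $\braket{m|\hat W_{\underline\xi}|n}$ is a polynomial in $\underline\xi$ times the universal factor $\mathrm e^{-(\xi_1^{2}+\xi_2^{2})/4}$, one obtains $\chi(\underline\xi)=Q(\underline\xi)\,\mathrm e^{-(\xi_1^{2}+\xi_2^{2})/4}$ with $Q$ a polynomial of finite degree. For $\hat\rho$ to be Gaussian one would need $Q(\underline\xi)=\mathrm e^{-\frac14\underline\xi^{T}J_1^{T}(\gamma-\mathbb{1})J_1\underline\xi+i\underline\xi^{T}J_1\underline D}$, and the exponential of a quadratic-plus-linear form in $\underline\xi$ is a polynomial only if that form vanishes identically, i.e. $\gamma=\mathbb{1}$ and $\underline D=0$; but then $\hat\rho$ is the one-dimensional vacuum, contradicting $d\ge2$.

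The one genuine ingredient, and the step I expect to carry the weight of the argument, is the claim that a mixed (non-pure) Gaussian state always has infinite rank — equivalently, that a proper Gaussian state is symplectically equivalent to a genuine thermal state with $\bar n>0$. Everything else is routine: the explicit Laguerre-polynomial form of the displacement-operator matrix elements, and the fact that the Williamson symplectic transformation is implemented by the elementary Gaussian unitaries of Section~\ref{states}, are standard and I would not spell them out in detail.
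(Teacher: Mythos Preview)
Your second, characteristic-function route is essentially the paper's own argument. The paper computes the Wigner function of a general state supported on $\mathrm{span}\{\ket{0},\dots,\ket{d-1}\}$ via the Hermite-polynomial wavefunctions, obtains a factorisation of the form (finite-degree polynomial)$\times$(fixed Gaussian), and then argues by an induction on $d$ that the polynomial factor can be constant only when all coefficients $c_{nl}$ with $l\ge 1$ vanish, i.e.\ only for $\hat\rho=\ket{0}\bra{0}$. Your version with $\chi(\underline\xi)=Q(\underline\xi)\,\mathrm e^{-|\underline\xi|^{2}/4}$ is the same idea and in fact a bit tidier: comparing directly to the Gaussian normal form forces $\gamma=\mathbb{1}$, $\underline D=0$ in one step and spares you the induction.

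Your first, Williamson/rank route is genuinely different from the paper and more structural, but it contains a slip. A state living in $\mathcal H_d$ has rank \emph{at most} $d$, not rank $d$; its range is a subspace of $\mathcal H_d$, not necessarily all of it. So the dichotomy ``Gaussian $\Rightarrow$ rank $1$ or $\infty$'' rules out ranks $2,\dots,d$, but not pure qudits such as $\ket{1}\bra{1}$, which has rank $1$ yet is clearly non-Gaussian. To close this gap you need one more line: a rank-$1$ Gaussian state is $\hat U\ket{0}$ for a Gaussian unitary $\hat U$, i.e.\ a displaced squeezed vacuum $\hat D(\alpha)\hat S(\xi)\ket{0}$, and for $\alpha\neq 0$ or $\xi\neq 0$ such a state has nonzero overlap with infinitely many Fock levels, so the only pure Gaussian state supported in a finite Fock subspace is $\ket{0}$ itself. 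With that patch the rank argument is complete and, to its credit, avoids any explicit phase-space computation; the paper's Hermite-polynomial calculation is more hands-on but requires no appeal to the Williamson form or the structure of Gaussian unitaries.
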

\begin{proof}[\textbf{Proof.}]
Consider such a general $d$-dimensional state in a pure state decomposition and make use of the Fock basis of a $d$-dimensional subspace of the Fock space,
\begin{equation}\label{eq:Lemmaqudit}
\begin{aligned}
\hat{\rho} &=\sum_{n=1}^N p_n\ket{\psi_n}\bra{\psi_n}\,,\qquad p_n>0\;\forall \;n\,;\;\sum_n^N \;p_n=1, \\
\ket{\psi_n} &=\sum_{m=0}^{d-1} c_{nm}\ket{m}\,,\qquad c_{nm}\in\mathbb C\;;\;\sum_m^{d-1} \;|c_{nm}|^2=1.
\end{aligned} 
\end{equation}
Non-Gaussianity corresponds to non-Gaussianity of the characteristic function. This is equivalent to non-Gaussianity of the Wigner function, as (non-)Gaussians stay (non-)Gaussians under Fourier transformations \cite{Bronstein}. Hence, consider the Wigner function of $\hat{\rho}$,
\begin{equation}
\begin{aligned}
W(x,p) & :=\frac{1}{\pi}\int\limits_{-\infty}^\infty dy\,\mathrm{e}^{2ipy}\braket{x-y|\hat{\rho}|x+y} \\
& =\sum_{n=1}^N\frac{p_n}{\pi}\int\limits_{-\infty}^\infty dy\,\mathrm{e}^{2ipy} \sum_{kl=0}^{d-1}c_{nl}c_{nk}^\ast\,\psi_l(x-y)\psi_k^\ast(x+y).
\end{aligned} 
\end{equation}
The position wavefunction of a Fock state $\ket{n}$ is given by
\begin{equation}
\psi_n(x)=\braket{x|n}=\frac{H_n(x)}{\sqrt{2^nn!\sqrt{\pi}}}\mathrm{e}^{-\frac{x^2}{2}},
\end{equation}
where $H_n(x)$ are the \textit{Hermite polynomials} \cite{Leon,Bronstein}. For an overview over some relevant properties of the Hermite polynomials, see appendix \ref{Hermite}. Therefore,
\begin{equation}
\begin{aligned}
W(x,p) & =\sum_{n=1}^N\frac{p_n}{\pi}\int\limits_{-\infty}^\infty dy\,\mathrm{e}^{2ipy}\,\mathrm{e}^{-\frac{(x-y)^2}{2}}\mathrm{e}^{-\frac{(x+y)^2}{2}} \sum_{kl=0}^{d-1}c_{nl}c_{nk}^\ast\frac{H_k(x+y)H_l(x-y)}{\sqrt{2^{k+l}k!l!\pi}} \\
& =\frac{1}{\pi}\int\limits_{-\infty}^\infty dy\,\mathrm{e}^{2ipy}\;\mathrm{e}^{-x^2-y^2}\, \sum_{n=1}^N\,p_n \sum_{kl=0}^{d-1}c_{nl}c_{nk}^\ast \frac{H_k(x+y)H_l(x-y)}{\sqrt{2^{k+l}k!l!\pi}}.
\end{aligned} 
\end{equation}
Now $\frac{1}{\pi}\int\limits_{-\infty}^\infty dy\,\mathrm{e}^{2ipy}$ is just a Fourier transform operator with respect to $y$, and $\mathrm{e}^{-x^2-y^2}$ is a Gaussian. Hence, for $W(x,p)$ being a Gaussian, the sum $P:=\sum_{n=1}^N\,p_n \sum_{kl=0}^{d-1}c_{nl}c_{nk}^\ast \frac{H_k(x+y)H_l(x-y)}{\sqrt{2^{k+l}k!l!\pi}}$ also has to be Gaussian in $x$ as well as in $y$. However, $P$ is just a polynomial. Strictly speaking, it is a polynomial of \textit{finite} order. Hence, $P$ can never be an exponential and even more no Gaussian. Admittedly, there is one exception. $P$ may be a polynomial of order zero, i.e. a constant. 

So, assume that $\hat{\rho}$ is Gaussian and show that this is the case if and only if $c_{nl}=0\;\forall\;l\geq1$. Showing "$\Leftarrow$" is trivial, since for $c_{nl}=0\;\forall\;l\geq1$, $\hat{\rho}=\ket{0}\bra{0}$, which is obviously Gaussian. For "$\Rightarrow$" perform an induction with respect to the dimension $d$. 
\begin{itemize}
\item \textbf{Inductive Basis}: For $d=2$ there is only one term $x^{2(d-1)}=x^{2}$ in the sum $P$, which comes from $H_{d-1}(x+y)H_{d-1}(x-y)=H_{1}(x+y)H_{1}(x-y)$. Its coefficient is $\tilde{c}^2_{1}\sum_{n=1}^N p_n|c_{n,1}|$, where $\tilde{c}_{1}$ is a real constant due to the Hermite polynomials. If the state is Gaussian, $\sum_{n=1}^N p_n|c_{n,1}|=0$. Since $p_n>0\;\forall\;n$ and $|c_{n,1}|\geq0\;\forall\;n$, from $\sum_{n=1}^N p_n|c_{n,1}|=0$ follows $c_{n,1}=0\;\forall\;n$. Hence, $c_{nl}=0\;\forall\;l\geq1$ ($l$ can be only $0$ or $1$ for $d=2$) is proved as a necessary condition for Gaussianity for $d=2$.
\item \textbf{Inductive Step}: Assume validity of the statement for $d=m$ and consider $d=m+1$. Then, there is only one term $x^{2((m+1)-1)}=x^{2m}$ in the sum $P$, which comes from $H_{(m+1)-1}(x+y)H_{(m+1)-1}(x-y)=H_{m}(x+y)H_{m}(x-y)$. Again with $\tilde{c}_{m}$ being a real constant the coefficient is $\tilde{c}^2_{m}\sum_{n=1}^N p_n|c_{n,m}|$. If the state is Gaussian, $\sum_{n=1}^N p_n|c_{n,m}|=0$. Since $p_n>0\;\forall\;n$ and $|c_{n,m}|\geq0\;\forall\;n$, from $\sum_{n=1}^N p_n|c_{n,m}|=0$ follows $c_{n,m}=0\;\forall\;n$. However, from the inductive hypothesis it is known that $c_{nl}=0\;\forall\;1\leq l\leq d-2=m-1$. Hence, $c_{nl}=0\;\forall\;1\leq l\leq d-1=m$, which proves the validity of the statement for $d=m+1$ and hence for all finite $m$. 
\end{itemize}
Concluding, a $d$-dimensional quantum state with finite $d$ and $d\geq2$, written as \eqref{eq:Lemmaqudit}, is Gaussian if and only if $c_{nl}=0\;\forall\;l\geq1$, which corresponds to $\hat{\rho}=\ket{0}\bra{0}$. Therefore, it can be said that $\hat{\rho}$ is Gaussian if and only if $\hat{\rho}=\ket{0}\bra{0}$, which is only one-dimensional. Hence, any $d$-dimensional quantum state with finite $d$ and $d\geq2$ is non-Gaussian.
\end{proof}

\begin{thm}\label{thm:hybridnonG}
Any bipartite hybrid entangled or classically correlated state is non-Gaussian.
\end{thm}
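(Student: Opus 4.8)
The plan is to reduce the statement to Lemma~\ref{lem:DVnonG} by passing to a marginal. Suppose, for contradiction, that a bipartite hybrid entangled or classically correlated state $\hat{\rho}^{AB}$ on $\mathcal{H}_d^A\otimes\mathcal{H}_\infty^B$ is Gaussian. The first step is to observe that the reduced state $\hat{\rho}^A=\mathrm{tr}_B[\hat{\rho}^{AB}]$ is then Gaussian as well. This is cleanest at the level of characteristic functions: $\chi^A(\underline\xi_A)=\chi^{AB}(\underline\xi_A,\underline 0)$, and the restriction of a Gaussian function to a coordinate subspace is again Gaussian. Equivalently, $\hat{\rho}^A$ has displacement vector and covariance matrix obtained by keeping only the entries of $\underline D^{AB}$ and the rows and columns of $\gamma^{AB}$ belonging to Alice's modes.

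Next I would exploit the special structure of the hybrid setting. By the definition of hybrid entanglement, Alice's system occupies only the finite-dimensional subspace $\mathcal{H}_d^A$ of the Fock space, so $\hat{\rho}^A$ is supported on a finite Fock subspace, with $d\geq 2$ (the case $d=1$ would force $\hat{\rho}^{AB}=\ket{0}\bra{0}^A\otimes\hat{\rho}^B$, a product state, hence neither entangled nor classically correlated). Moreover, since $\hat{\rho}^{AB}$ is correlated it cannot be a product state, and a bipartite state with a pure one-party marginal is necessarily a product: writing $\hat{\rho}^{AB}=\sum_k q_k\ket{\varphi_k}\bra{\varphi_k}$, a pure $\hat{\rho}^A$ would force each $\mathrm{tr}_B\ket{\varphi_k}\bra{\varphi_k}$ to equal that pure state, it being extremal in the set of states, and hence each $\ket{\varphi_k}$ to factorize across $A|B$. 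Therefore $\hat{\rho}^A$ is a mixed state supported on $\mathcal{H}_d^A$, i.e.\ a genuine $d'$-dimensional state with $2\leq d'\leq d$.

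The contradiction is then immediate from Lemma~\ref{lem:DVnonG}: a $d'$-dimensional state with finite $d'\geq 2$ is non-Gaussian, whereas we concluded $\hat{\rho}^A$ is Gaussian; hence $\hat{\rho}^{AB}$ cannot have been Gaussian. One may also phrase the last steps slightly differently: Lemma~\ref{lem:DVnonG}, or rather the computation in its proof showing that a state of the form \eqref{eq:Lemmaqudit} is Gaussian if and only if $\hat{\rho}=\ket{0}\bra{0}$, says that the \emph{only} Gaussian state living on a finite Fock subspace is the vacuum; so a Gaussian $\hat{\rho}^A$ would be $\ket{0}\bra{0}$, which is pure, and a pure marginal again forces $\hat{\rho}^{AB}=\ket{0}\bra{0}^A\otimes\hat{\rho}^B$, which is excluded.

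I expect the only genuinely delicate point to be the first step, that partial tracing preserves Gaussianity; this is standard, but it is worth carrying it out via characteristic functions so that no convergence issue about infinite Fock expansions arises. The remaining ingredients are the cited Lemma and the elementary extremality argument that a pure one-party marginal forces a product state, together with the minor bookkeeping that ``classically correlated'' excludes the degenerate product-state case.
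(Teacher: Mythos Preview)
Your proposal is correct and follows essentially the same route as the paper: trace out the CV subsystem, note that marginals of Gaussian states are Gaussian, invoke Lemma~\ref{lem:DVnonG} to force $\hat{\rho}^A=\ket{0}\bra{0}$, and conclude that a pure marginal contradicts the assumed correlations. You are simply more explicit than the paper about the two standard facts it takes for granted (Gaussianity is preserved under partial trace, and a pure marginal implies a product state).
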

\begin{proof}[\textbf{Proof.}]
If a multipartite quantum state is Gaussian, all its subsystems have to be Gaussian. So, consider a state of the form \eqref{eq:HEdef} and trace out the CV subsystem. What is left over is a $d$-dimensional singlepartite system with finite $d$, which can be described in its Fock basis. It is denoted by $\hat{\rho}$. Due to lemma \ref{lem:DVnonG}, $\hat{\rho}$ is Gaussian if and only if $\hat{\rho}=\ket{0}\bra{0}$, which is a pure state. However, for any entangled or even classically correlated state, the reduced state cannot be pure \cite{Audretsch}. Hence, for Gaussian $\hat{\rho}$, the overall system cannot have been entangled or classically correlated. Therefore, every bipartite hybrid entangled or classically correlated state is non-Gaussian.
\end{proof}

The proof of theorem \ref{thm:hybridnonG} basically relies on lemma \ref{lem:DVnonG}, which states that any DV state with dimension $\geq2$ is non-Gaussian. It is straightforwardly generalized to multipartite systems. Arguing that for Gaussianity \textit{all} subsystems have to be Gaussian, it is sufficient for the non-Gaussianity of any multipartite system which also possesses DV constituent(s) that at least one DV subsystem is of dimension $\geq2$. However, if there shall be entanglement between the DV subsystem and the rest, the state is necessarily non-Gaussian, since for entanglement dimension $\geq2$ is required. So, any multipartite quantum state which involves DV entangled subsystems is non-Gaussian. Even more generally, only 1- or infinite-dimensional systems can be Gaussian. This is of course not surprising, however, it is worth pointing out that theorem \ref{thm:hybridnonG} is not trivial. Since the \textit{overall} Hilbert space of HE systems is very well infinite-dimensional, it is not a priori clear that there do not exist Gaussian hybrid entangled states.

As shown, HE states live in the non-Gaussian, infinite-dimensional Hilbert space regime, which is quite problematic as already known from conventional CV entanglement theory. The states can be neither described with proper density matrices, nor with covariance matrices. Phase space representations are also not very suitable, since one of the subsystems is DV. They are not qualified for investigations regarding entanglement properties anyway. The only known quasi-probability distribution which may in some cases make statements about the entanglement of the state is the Glauber-Sudarshan P-representation. However, it can be easily shown that this function does not even exist for these highly nonclassical states. Of course, it may still be possible to perform entanglement witnessing, but what about quantification? Is there a way out of the dilemma? Yes, there is! For some HE states the unique Hilbert space structure can be exploited in such a way that the states can nevertheless be described by density matrices. These states are therefore called DV-like hybrid entangled. This gives rise to a classification scheme of HE states.

\subsection{Classification}
Consider again a general HE state of the form \eqref{eq:HEdef}. Depending on the number of mix terms $N$ and the dimension of the DV subsystem $d$, there can be maximally $N\times d$ linearly independent CV qumode states $\ket{\psi_{nm}}^B$ in $\hat{\rho}^{AB}$. $d$ is always finite due to the definition of HE. However, $N$ may be either finite or infinite and, hence, the number of linearly independent qumode states is either finite or infinite. Furthermore, if $N=1$, the state is pure.

If the number $N\times d$ of linearly independent CV qumode states is finite, they only span a $N\times d$-dimensional subspace ${\mathcal H}_{N\times d}$ of the initially infinite-dimensional Hilbert space ${\mathcal H}_\infty$. Then, the \textit{Gram-Schmidt process} can be employed to express the qumode states in an orthonormal basis of this finite-dimensional subspace. In this case, the state is effectively DV and all the methods from DV entanglement theory can be applied. 

However, this shall be discussed in more detail. What actually is the Gram-Schmidt process? It is a method for orthonormalizing a finite, linearly independent set of vectors in an inner product space \cite{Schwetlick,Bau3}.\footnote[2]{Note that the Gram-Schmidt process and the Schmidt decomposition are both named after the German mathematician Erhard Schmidt. He developed the Gram-Schmidt process together with Jorgen Pedersen Gram.} For a linearly independent set of vectors $\{\ket{\psi_i}:i=1,\ldots,n\}$ (since the process is to be exploited in the framework of Hilbert spaces, the inner product space is a priorily assumed to be a Hilbert space) a set of pairwise orthonormal vectors $\{\ket{e_i}:i=1,\ldots,n\}$ spanning the same subspace as $\{\ket{\psi_i}:i=1,\ldots,n\}$ is given by
\begin{equation}\label{GSprocess}
\begin{aligned}
\ket{e_1'} &= \ket{\psi_1}\,, & \ket{e_1} &= \frac{\ket{e_1'}}{\sqrt{\braket{e_1'|e_1'}}}\,, \\
\ket{e_2'} &= \ket{\psi_2}-\braket{e_1|\psi_2}\ket{e_1}\,, & \ket{e_2} &= \frac{\ket{e_2'}}{\sqrt{\braket{e_2'|e_2'}}}\,, \\
& \vdots & & \vdots \\
\ket{e_n'} &=\ket{\psi_n}-\sum_{i=1}^{n-1}\braket{e_i|\psi_n}\ket{e_i}\,,\qquad & \ket{e_n} &=\frac{\ket{e_n'}}{\sqrt{\braket{e_n'|e_n'}}}\,.
\end{aligned}
\end{equation}
Making use of this, any finite set of linearly independent qumode states can be expressed in an orthonormal basis $\{\ket{e_i}:i=1,\ldots,n\}$. But strictly speaking, equation \eqref{GSprocess} only tells, how to express the new orthonormal basis in terms of the old non-orthonormal one. What is actually required, is the other way around. To express the qumode states in terms of $\{\ket{e_i}:i=1,\ldots,n\}$ try the following approach, which can be considered as some kind of \textit{inverse Gram-Schmidt process}.
\begin{thm}\label{thm:GSI}
$n$ normalized, in general non-orthogonal, linearly independent states $\{\ket{\psi_i}:i=1,\ldots,n \,;\,0\leq|\braket{\psi_i|\psi_j}|\leq1\,\forall\,i,j\}$ can always be expressed as $\ket{\psi_i}=\sum_{j=1}^ia_{ij}\ket{e_j}$, where $\{\ket{e_i}:i=1,\ldots,n\}$ form an orthonormal basis of the space spanned by $\{\ket{\psi_i}\}$ and $a_{ij}\in\mathbb C$.
\end{thm}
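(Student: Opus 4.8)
The plan is to build directly on the ordinary Gram-Schmidt recursion \eqref{GSprocess} and invert the triangular change of basis that it implicitly defines. The structural fact I would establish first is that, for every $k=1,\ldots,n$, the orthonormal vectors produced by \eqref{GSprocess} satisfy
\begin{equation}
\mathrm{span}\{\ket{e_1},\ldots,\ket{e_k}\}=\mathrm{span}\{\ket{\psi_1},\ldots,\ket{\psi_k}\}.
\end{equation}
Granting this, the theorem is immediate: since $\ket{\psi_i}$ lies in $\mathrm{span}\{\ket{\psi_1},\ldots,\ket{\psi_i}\}=\mathrm{span}\{\ket{e_1},\ldots,\ket{e_i}\}$ and $\{\ket{e_j}\}$ is orthonormal, one has $\ket{\psi_i}=\sum_{j=1}^{i}\braket{e_j|\psi_i}\ket{e_j}$, so it suffices to set $a_{ij}:=\braket{e_j|\psi_i}\in\mathbb C$ for $j\le i$, while $a_{ij}=0$ for $j>i$.

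To prove the span identity I would argue by induction on $k$. For $k=1$, \eqref{GSprocess} gives $\ket{e_1}=\ket{\psi_1}/\sqrt{\braket{\psi_1|\psi_1}}$, which is well defined and nonzero because the $\ket{\psi_i}$ are normalized, so the two one-dimensional spans coincide. For the inductive step, assume the claim for $k-1$. From the recursion, $\ket{e_k}$ is a linear combination of $\ket{\psi_k}$ and of $\ket{e_1},\ldots,\ket{e_{k-1}}$, which by the inductive hypothesis span $\mathrm{span}\{\ket{\psi_1},\ldots,\ket{\psi_{k-1}}\}$; hence $\ket{e_k}\in\mathrm{span}\{\ket{\psi_1},\ldots,\ket{\psi_k}\}$. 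Conversely, rearranging \eqref{GSprocess},
\begin{equation}
\ket{\psi_k}=\sqrt{\braket{e_k'|e_k'}}\,\ket{e_k}+\sum_{i=1}^{k-1}\braket{e_i|\psi_k}\ket{e_i},
\end{equation}
so $\ket{\psi_k}\in\mathrm{span}\{\ket{e_1},\ldots,\ket{e_k}\}$, and the two spans agree. This is exactly the point where linear independence of the $\ket{\psi_i}$ is needed: it guarantees $\braket{e_k'|e_k'}\neq0$ for each $k$, so the diagonal coefficient $\sqrt{\braket{e_k'|e_k'}}$ is nonzero, the triangular transformation relating the two bases is invertible, and the spans are genuinely $k$-dimensional.

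With the span identity in hand, the conclusion follows as sketched: $a_{ij}=\braket{e_j|\psi_i}$ for $j\le i$, and $a_{ij}=0$ for $j>i$ because $\ket{e_j}$ is orthogonal to $\ket{e_1},\ldots,\ket{e_{j-1}}$, hence to their span $\mathrm{span}\{\ket{\psi_1},\ldots,\ket{\psi_{j-1}}\}$, hence to $\ket{\psi_i}$ whenever $i<j$. Equivalently, one may collect \eqref{GSprocess} into a matrix relation $\ket{e_k}=\sum_{j=1}^{k}T_{kj}\ket{\psi_j}$ with $T$ lower triangular and nonvanishing diagonal; then $A=(a_{ij})=T^{-1}$ is again lower triangular, which is precisely the asserted form $\ket{\psi_i}=\sum_{j=1}^{i}a_{ij}\ket{e_j}$. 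I do not expect a genuine obstacle here; the only delicate point is bookkeeping of the triangular structure and invoking linear independence at exactly the place where the Gram-Schmidt norms must be strictly positive, so that the change of basis is invertible rather than merely triangular.
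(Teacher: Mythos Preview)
Your argument is correct, but it proceeds along a genuinely different route from the paper's. You take the orthonormal vectors $\ket{e_j}$ produced by the Gram--Schmidt recursion \eqref{GSprocess}, establish the nested span identity $\mathrm{span}\{\ket{e_1},\ldots,\ket{e_k}\}=\mathrm{span}\{\ket{\psi_1},\ldots,\ket{\psi_k}\}$ by induction, and then simply read off $a_{ij}=\braket{e_j|\psi_i}$; the lower-triangular structure follows from orthogonality of $\ket{e_j}$ to the earlier span. The paper instead treats the $\ket{e_j}$ abstractly and argues that a lower-triangular matrix $(a_{ij})$ with the right overlaps can be \emph{solved for}: by induction on $n$, the equations $\braket{\psi_i|\psi_n}=\sum_{j\le i}a_{ij}^\ast a_{nj}$ form a triangular linear system in the unknowns $a_{n1},\ldots,a_{n,n-1}$, and $a_{nn}$ is fixed by normalization. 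Your approach is cleaner and closer to textbook linear algebra; the paper's buys a direct computational recipe that takes only the pairwise overlaps $\braket{\psi_i|\psi_j}$ as input and never requires constructing the $\ket{e_j}$ themselves---which is precisely how the subsequent examples (e.g.\ the three-vector case with overlaps $c_1,c_2,c_3$) are worked out.
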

\begin{proof}[\textbf{Proof.}]
Writing out $\ket{\psi_i}=\sum_{j=1}^ia_{ij}\ket{e_j}$, theorem \ref{thm:GSI} states that the $\{\ket{\psi_i}\}$ can be always written as
\begin{equation}\label{GSprocessInverse}
\begin{aligned}
\ket{\psi_1} &= a_{11}\ket{e_1}\,, \\
\ket{\psi_2} &= a_{21}\ket{e_1}+a_{22}\ket{e_2}\,, \\
\ket{\psi_3} &= a_{31}\ket{e_1}+a_{32}\ket{e_2}+a_{33}\ket{e_3}\,, \\
& \vdots  \\
\ket{\psi_n} &= a_{n1}\ket{e_1}+\ldots+a_{nn}\ket{e_n}=\sum_{i=1}^{n}a_{ni}\ket{e_i}\,.
\end{aligned}
\end{equation}
To prove it, it has to be shown that 1.) $\ket{\psi_i}=\sum_{j=1}^ia_{ij}\ket{e_j}$ corresponds to a valid basis transformation and 2.) it actually performs the right mapping.

1.) Write the transformation as
\begin{equation}
\ket{\psi_i}=\sum_j U_{ij}\ket{e_j},
\end{equation}
with the transformation matrix 
\begin{equation}
U=\begin{pmatrix} a_{11} & 0 & \cdots & 0 \\ 
a_{21} & a_{22} & & \vdots \\ 
\vdots & & \ddots & \vdots \\ 
a_{n1} & a_{n2} & \cdots & a_{nn}
\end{pmatrix}.
\end{equation}
Due to normalization of the initial and the new vectors $\sum_{j=1}^i |a_{ij}|^2=1$ is known and due to the linear independence of the $\{\ket{\psi_i}\}$ also $a_{ii}\neq0\,\forall\,i$ is necessary.
\begin{itemize}
\item[$\Rightarrow$] $\det[U] =\prod_{i=1}^n a_{ii}\neq0$. 
\item[$\Rightarrow$] $U$ is invertible.
\item[$\Rightarrow$] $U$ is a valid basis transformation.
\end{itemize}

2.) To show that the lower triangular structure of the basis transformation $U$ in combination with the orthonormal basis $\{\ket{e_i}\}$ is sufficient to actually express the $\{\ket{\psi_i}\}$ accurately in terms of $\{\ket{e_i}\}$, it is demonstrated that the $\frac{N(N+1)}{2}$ parameters $a_{ij}$ can be chosen such that all overlaps $\braket{\psi_i|\psi_j}$ are preserved when the transformation is applied.

On the one hand, there are $n^2$ such overlaps in total and $\frac{n(n+1)}{2}$ ones with potentially differing absolute values. On the other hand, there are $\frac{n(n+1)}{2}$ complex parameters $a_{ij}$. From the structure of the basis transformation and the fact that $\{a_{ij}\}$ are complex, it is clear that if $a_{ij}$ can be chosen such that the $\frac{n(n+1)}{2}$-element set of $\{\braket{\psi_i|\psi_j}:i\leq j\}$ can be preserved, also the rest of the overlaps is preserved, since they are only complex conjugates of the former. Hence, it already becomes reasonable that the $a_{ij}$ can be chosen appropriately.

However, a proper proof is performed by induction in $n$:
\begin{itemize}
\item \textbf{Inductive Basis}:

$n=1$: There is only one overlap to be preserved. 
\begin{equation}
\braket{\psi_1|\psi_1}=1\stackrel{!}{=}\braket{e_1|a_{11}^\ast a_{11}|e_1}=|a_{11}|^2.
\end{equation}
Hence, choose $a_{11}=1$, which preserves the overlap $\braket{\psi_1|\psi_1}$.
\item \textbf{Inductive Step}: 

Assume $a_{ij}$ have been calculated for $i\leq n-1$ such that all overlaps $\{\braket{\psi_i|\psi_j}:i,j\leq n-1\}$ are preserved. We show that then also all $a_{nj}$ can be chosen such that the overlaps $\{\braket{\psi_i|\psi_n}:i=1,\ldots,n\}$ are preserved. The complex conjugated overlaps follow automatically as argued before.

Applying the basis transformation, the overlaps $\{\braket{\psi_i|\psi_n}:i=1,\ldots,n-1\}$ are
\begin{equation}
\begin{aligned}
\braket{\psi_1|\psi_n} &= a_{11}^\ast a_{n1}\,, \\
\braket{\psi_2|\psi_n} &= a_{21}^\ast a_{n1}+a_{22}^\ast a_{n2}\,, \\
& \vdots  \\
\braket{\psi_{n-1}|\psi_n} &= a_{n-1,1}^\ast a_{n1} + \ldots + a_{n-1,n-1}^\ast a_{n,n-1}  \,.
\end{aligned}
\end{equation}
As $a_{ij}$ have been calculated for $i\leq n-1$ due to the inductive hypothesis, this is just a system of linear equations, which can be written as an augmented matrix:
\begin{equation}\label{GSIaug}
\begin{pmatrix}[c|cccc]
\braket{\psi_1|\psi_n}  & 1 & 0 & \hdots & 0 \\
\braket{\psi_2|\psi_n}  & a_{21}^\ast & a_{22}^\ast & & \vdots \\
 & \vdots & & \ddots & \vdots  \\
\braket{\psi_{n-1}|\psi_n}  & a_{n-1,1}^\ast & a_{n-1,2}^\ast & \hdots & a_{n-1,n-1}^\ast \end{pmatrix}.
\end{equation}
Since $a_{ii}\neq0\,\forall\,i$, this system of equations is exactly solvable. Hence, $\{a_{nj}:j=1,\ldots,n-1\}$ can be chosen such that $\{\braket{\psi_i|\psi_n}:i=1,\ldots,n-1\}$ are preserved. Therefore, there are only one free parameter $a_{nn}$ and one overlap $\braket{\psi_n|\psi_n}$ to be preserved left:
\begin{equation}
\braket{\psi_n|\psi_n}=1\stackrel{!}{=}\sum_{j=1}^N |a_{Nj}|^2=\sum_{j=1}^{N-1} |a_{Nj}|^2 + |a_{NN}|^2.
\end{equation}
From $\sum_{j=1}^n |a_{nj}|^2=1$ and $a_{nn}\neq 0$, which is already known, $\sum_{j=1}^{N-1} |a_{Nj}|^2<1$ follows, and hence $a_{nn}$ can be chosen as  
\begin{equation}
a_{NN}=\sqrt{1-\sum_{j=1}^{N-1} |a_{Nj}|^2}.
\end{equation}
In the end, also $\braket{\psi_n|\psi_n}$ can be preserved. Therefore, the theorem is valid for $n$ under the assumption of validity for $n-1$.  
\end{itemize}
Concluding, with the inductive basis it is valid for all $n$.
\end{proof}

The proof has been performed in such a great detail, because it also sets out, how to actually compute $\{a_{ij}\}$ for a given set of qumode states. Recalling equations \eqref{GSprocessInverse} and \eqref{GSIaug}, the $\{a_{ij}\}$ can be calculated successively one after another by considering successive overlaps. A parameter $a_{i1}$ is directly obtained from the overlap $\braket{\psi_1|\psi_i}$. Then, $a_{i2}$ follows from $\braket{\psi_2|\psi_i}$ together with the known $a_{i1}$. Likewise, $a_{i3}$ is calculated from $\braket{\psi_3|\psi_i}$, $a_{i1}$ and $a_{i2}$. For the other parameters just go on like this. Hence, the inverse Gram-Schmidt process can be efficiently implemented and computed. 

As an example consider the normalized qumode states $\{\ket{\psi_i}:i=1,\ldots,3\}$ with overlaps
\begin{equation}
\begin{aligned}
\braket{\psi_1|\psi_2} &= c_1, \\
\braket{\psi_1|\psi_3} &= c_2, \\
\braket{\psi_2|\psi_3} &= c_3.
\end{aligned}
\end{equation} 
They can be expressed in the orthonormal basis $\{\ket{e_i}:i=1,\ldots,3\}$ as
\begin{equation}
\begin{aligned}
\ket{\psi_1} &= \ket{e_1}, \\
\ket{\psi_2} &= c_1\ket{e_1}+\sqrt{1-|c_1|^2}\ket{e_2}, \\
\ket{\psi_3} &= c_2\ket{e_1}+\frac{c_3-c_1^\ast c_2}{\sqrt{1-|c_1|^2}}\ket{e_2}+\sqrt{1-|c_2|^2-\frac{|c_3-c_1^\ast c_2|^2}{1-|c_1|^2}}\ket{e_3}.
\end{aligned}
\end{equation}
The Gram-Schmidt process and the inverse Gram-Schmidt process respectively are possibly the most important tools with respect to HE quantum states.

So, once again, consider a state of the form \eqref{eq:HEdef}. For $N=1$ the state is pure. Then it contains only $d$ linearly independent qumode states, which span a d-dimensional subspace ${\mathcal H}_d^B$. Therefore, with aid of the inverse Gram-Schmidt process these qumode states can be expressed in an orthonormal basis. Then density matrices can be employed for the description of the overall state, and also a Schmidt decomposition can be performed or pure state measures such as the entropy of entanglement can be calculated. The state is effectively DV. For $1<N<\infty$, the state is mixed. Nevertheless, it possesses only $N\times d$ qumode states, which can be again casted in an orthonormal basis. Therefore, also states of this kind are effectively DV and the density matrix formalism can be exploited. However, they are not pure anymore and neither pure state measures nor a Schmidt decomposition can be applied. Finally, there is $N=\infty$. Here, by $N=\infty$ states those states are meant which can be expressed with infinite $N$ only and not also with finite $N$. Remember that pure state decompositions are not unique. Anyway, these states hold an infinite number of qumode states, which has the effect that the Gram-Schmidt process cannot be applied anymore. Hence, they are not describable with density matrices and therefore not effectively DV. Nevertheless, one subsystem is DV. These are the states which are called \textit{truly hybrid entangled}.

Summing up, for bipartite HE states in pure state decomposition with $N$ denoting the number of mix terms in the convex combination of pure state projectors, there is the following classification scheme:
\begin{itemize}
\item $\boldsymbol{N=1:}$ Pure hybrid entangled states
\begin{itemize}
\item Supported by finite-dimensional subspace. 
\item $\,\Rightarrow\,$ \textbf{DV-like entanglement}.
\item Schmidt decomposition compatible.
\item DV Pure state measures (Entropy of Entanglement) applicable.
\end{itemize}
\item $\boldsymbol{1<N<\infty:}$ Mixed hybrid entangled states
\begin{itemize}
\item Supported by finite-dimensional subspace. 
\item $\,\Rightarrow\,$ \textbf{DV-like entanglement}.
\item DV mixed state measures applicable.
\end{itemize}
\item $\boldsymbol{N=\infty:}$ Mixed hybrid entangled states
\begin{itemize}
\item No support by finite-dimensional subspace.
\item $\,\Rightarrow\,$ \textbf{True hybrid entanglement}.
\item No measures applicable.
\item Witnesses adaptable from CV entanglement theory.
\end{itemize}
\end{itemize}
At first, it has to be pointed out that the possibility of applying DV methods on such a wide class of hybrid entangled states is quite remarkable. An initially non-Gaussian infinite-dimensional quantum state, which seems rather awkward at first sight, can finally be described with neat density matrices and in the pure state case even a Schmidt decomposition can be performed. Nevertheless, there is also the class of states which stay \textit{truly hybrid entangled} and cannot be transformed with the Gram-Schmidt process. As mentioned earlier, in this infinite-dimensional, non-Gaussian regime exact entanglement quantification is up to now impossible. Therefore, witnesses which detect hybrid entanglement are here especially demanded. In the next subsection an adapted version of the SV criteria is presented, which can be used for this purpose. The other sections of the chapter will investigate relevant states from each class of HE states.

\subsection{Adapting Shchukin-Vogel Inseparability Criteria}\label{subsec:AdSVcrit}
In subsection \ref{subsec:witnessing} the SV inseparability criteria have been introduced. They are well defined for bipartite CV states. However, HE states are not fully CV, but contain a DV subsystem. Nevertheless, the SV criteria can be applied when appropriately adapted to the new situation. Actually, there are two ways for the adaption.

The first way is to simply interpret the DV subsystem as living in a subspace of an infinite-dimensional Hilbert space. The DV qudit is interpreted as a CV system supported by ${\mathcal H}_\infty$ and being encoded in a Fock basis. However, it only makes use of a finite number of the basis vectors. Then the SV criteria can be applied just as usual.

The second way is not to adapt the state, but the criteria. The mode operators for CV states, which SV make use of, are defined by equations \eqref{eq:ladders3} and \eqref{eq:ladders4}. Now, however, assume the system which the operators $\hat{a}$ and $\hat{a}^\dagger$ belong to to be $d$-dimensional. Then the orthonormal Hilbert space basis vectors $\ket{m}$ can be written as column vectors with a "$1$" in row $m$ with $m=0,\ldots,d-1$:
\begin{equation}
\ket{m}_d=(0_{\,0}, \ldots , 0_{m-1} ,1_m ,0_{m+1},\ldots ,0_{d-1})^T.
\end{equation}
With this notation the new \textit{qudit mode operators} $\hat{a}_d$ can be defined as proper $d\times d$ matrices:
\begin{equation}
\hat{a}_d = \begin{pmatrix} 
0 & \sqrt{1} & \cdots & \cdots & 0 \\
\vdots & 0 & \sqrt{2} &   & \vdots \\
\vdots &   & 0 & \ddots  & \vdots \\
\vdots &  &  & \ddots & \sqrt{d-1} \\
0 & \cdots & \cdots & \cdots & 0
\end{pmatrix}\,, \qquad \hat{a}_d^\dagger = \begin{pmatrix} 
0 & \cdots & \cdots & \cdots & 0 \\
\sqrt{1} & 0 &  &   & \vdots \\
\vdots &  \sqrt{2} & 0 &   & \vdots \\
\vdots &  & \ddots & \ddots & \vdots \\
0 & \cdots & \cdots & \sqrt{d-1} & 0
\end{pmatrix}.
\end{equation}
They have the properties
\begin{align}
\hat{a}_d\ket{n} & = \sqrt{n}\ket{n-1}, \\
\hat{a}_d^\dagger\ket{n} & = (1-\delta_{dm})\sqrt{n+1}\ket{n+1}, \\ \label{eq:AdMod1}
(\hat{a}_d)^d & =0, \\ \label{eq:AdMod2}
(\hat{a}_d^\dagger)^d & =0.
\end{align}
Furthermore, the commutator changes:
\begin{equation}
[\hat{a}_d,\hat{a}_d^\dagger]=\begin{pmatrix} 
1 &  &  &  \\
 & \ddots &  &  \\
 &  & 1 &  \\
 &  &  & -(d-1)
\end{pmatrix}_{d\times d}=\begin{pmatrix} 
{\mathbb 1}_{d-1} &    \\
 &  -(d-1)
\end{pmatrix}_{d\times d}.
\end{equation}
Since any d-dimensional qudit $\ket{\psi}_d$ can still be written as $\ket{\psi}_d=g^\dagger(\hat{a}_d)\ket{0}_d$ and also the projection operator on $\ket{0}_d\bra{0}$ nevertheless is $:e^{-\hat{a}_d^\dagger\hat{a}_d}:$, the SV criteria can be also derived for hybrid systems with aid of these new operators. Structurally the same criteria are obtained, just that for certain $i_1,i_2,j_1,j_2$ the moments $M_{ij}(\hat{\rho})=\braket{\hat{a}^{\dagger^{i_2}}\hat{a}^{^{i_1}}\hat{a}^{\dagger^{j_1}}\hat{a}^{^{j_2}}\hat{b}^{\dagger^{j_4}}\hat{b}^{^{j_3}}\hat{b}^{\dagger^{i_3}}\hat{b}^{^{i_4}}}_{\hat{\rho}}$ are zero. Due to the properties \eqref{eq:AdMod1} and \eqref{eq:AdMod2} any combination of $i_1,i_2,j_1,j_2$ such that $(\hat{a}_d)^k$ or $(\hat{a}_d^\dagger)^k$ with $k\geq d$ occurs in the moments, nullifies them. In the end, in the matrix of moments the rows and columns corresponding to these combinations of $i_1,i_2,j_1,j_2$ are simply missing.

When actually applying the SV criteria it rarely makes a difference whether the first or the second approach to the adaption of the criteria is chosen. Only for moments involving terms like $(\hat{a}_d)^k(\hat{a}_d^\dagger)^l$ with $k,l\in\mathbb N$ the approaches may yield diverging results. However, for all determinants and moments used throughout this thesis, they lead to the same results.

As an example for the application of the SV criteria on HE, entanglement detection in the HE state
\begin{equation}\label{eq:HEstate1}
\ket{\psi}=\frac{1}{\sqrt{2}}\Bigl(\ket{0}\ket{\alpha}+\ket{1}\ket{-\alpha}\Bigr)
\end{equation}
is shortly demonstrated. In subsection \ref{catwitnessing} the determinants 
\begin{equation}
s_1=\begin{vmatrix} 
1 & \braket{\hat{b}^\dagger} & \braket{\hat{a}\hat{b}^\dagger}  \\
\braket{\hat{b}} & \braket{\hat{b}^\dagger\hat{b}} & \braket{\hat{a}\hat{b}^\dagger\hat{b}}  \\
\braket{\hat{a}^\dagger\hat{b}} & \braket{\hat{a}^\dagger\hat{b}^\dagger\hat{b}} & \braket{\hat{a}^\dagger\hat{a}\hat{b}^\dagger\hat{b}}
\end{vmatrix}\,,\quad \text{and} \quad
s_2=\begin{vmatrix} 
1 & \braket{\hat{a}} & \braket{\hat{b}^\dagger}  \\
\braket{\hat{a}^\dagger} & \braket{\hat{a}^\dagger\hat{a}} & \braket{\hat{a}^\dagger\hat{b}^\dagger}  \\
\braket{\hat{b}} & \braket{\hat{a}\hat{b}} & \braket{\hat{b}^\dagger\hat{b}}
\end{vmatrix},
\end{equation}
were employed to detect entanglement in two-mode CSSs, which are structurally comparable. Hence, it is reasonable to try the same criteria on the HE state \eqref{eq:HEstate1}. The mode operators $\hat{a}$ and $\hat{a}^\dagger$ shall correspond to the CV subsystem, while $\hat{b}$ and $\hat{b}^\dagger$ shall act on the qudit. This results in
\begin{align}
s_1 &=-\frac{|\alpha|^2\mathrm{e}^{-4|\alpha|^2}}{2}<0\qquad\quad\forall\,\alpha\neq0, \\
s_2 &=\frac{|\alpha|^2}{2}(1-\mathrm{e}^{-4|\alpha|^2})\geq0\quad\;\forall\,\alpha.
\end{align}
As $s_2$ does not perform any entanglement detection, it can be concluded that second moments are probably not sufficient for the detection of entanglement in \eqref{eq:HEstate1}. However, $s_1$, which goes up to fourth order moments, does the job very well, as it detects entanglement for all $\alpha\neq0$. For $\alpha=0$, the state becomes a product state and hence it is not entangled. The determinant $s_1$ is actually the main tool for entanglement detection in hybrid entangled states in this thesis. In the pure state case it is of course not so relevant, but especially for the detection of true HE $s_1$ proves to be very valuable, as seen in section \ref{sec:mixed2}.

\section{Pure States}\label{sec:pure}
In this section the first, pure state class of HE states is investigated. Some representative bipartite pure HE states are discussed and the mentioned tools are exploited. A general pure bipartite HE state in ${\mathcal H}^A_d\otimes{\mathcal H}^B_\infty$ looks like 
\begin{equation}
\ket{\psi}^{AB} =\sum_{i=0}^{d-1} c_{i}\ket{i}^A\ket{\psi_{i}}^B\,,\qquad c_{i}\in\mathbb C\;,\qquad\sum_{i=0}^{d-1} \;|c_{i}|^2=1.
\end{equation}
Due to the finite number of qumode states, a Gram-Schmidt process can be always applied. Afterwards, also a Schmidt decomposition can be carried out. Finally, the state's entanglement properties can be also analyzed with aid of a pure state measure such as the entropy of entanglement.

\subsection{Pure Qubit - Qumode Entanglement}\label{subsec:PQbQm}
Consider the case where the DV system is a qubit:
\begin{equation}
\ket{\psi}_{HE}^{AB} =c_{0}\ket{0}^A\ket{\psi_{0}}^B+c_{1}\ket{1}^A\ket{\psi_{1}}^B.
\end{equation}
Since only the relative phase in the superposition is important and due to normalization, this can be written as
\begin{equation}\label{eq:pQbQmS}
\ket{\psi}_{HE}^{AB} =\sqrt{c}\ket{0}^A\ket{\psi_{0}}^B+\mathrm{e}^{i\phi}\sqrt{1-c}\ket{1}^A\ket{\psi_{1}}^B,
\end{equation}
with real constant $c\in[0,1]$. Performing an inverse Gram-Schmidt process on system B yields
\begin{align}\label{eq:pQbQmSgs1}
\ket{\psi_{0}}^B &=\ket{0}^B, \\ \label{eq:pQbQmSgs2}
\ket{\psi_{1}}^B &=N\ket{0}^B+\sqrt{1-|N|^2}\ket{1}^B,
\end{align}
where the overlap $\braket{\psi_0|\psi_1}$ is denoted by $N$. Making use of equations \eqref{eq:pQbQmSgs1} and \eqref{eq:pQbQmSgs2} $\ket{\psi_{i}}^B$ can be expressed in the new orthogonal basis and DV methods can be applied to calculate the entropy of entanglement of $\ket{\psi}_{HE}^{AB}$, which is shown in figure \ref{fig:pQbQmS}.\footnote[3]{Of course the Gram-Schmidt procedure is not really necessary here, since the CV subsystem can be simply traced out, and the entropy of the residual DV state can be ordinarily computed. For the calculation of the entropy of entanglement it is sufficient that one of the subsystems is DV.}
\begin{equation}
E_S(\ket{\psi}_{HE}^{AB})=-\lambda_+\log_2\lambda_+-\lambda_-\log_2\lambda_-\,,\qquad \lambda_\pm=\frac{1}{2}\Bigl(1\pm\sqrt{1-4c(1-c)(1-|N|^2)}\Bigr).
\end{equation}

\begin{figure}[ht]
\begin{center}
\includegraphics[height=6cm]{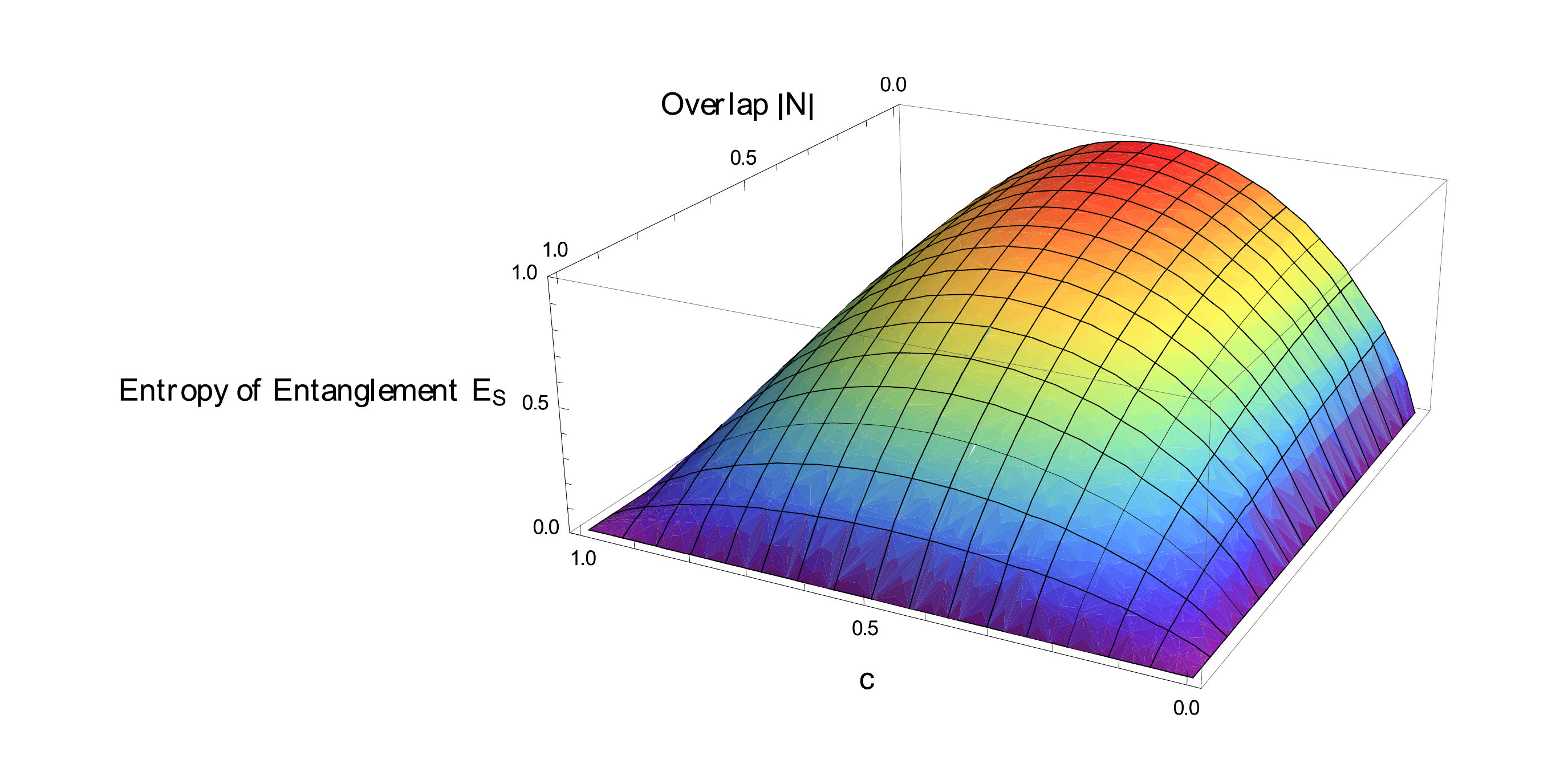}
\caption[Entropy of entanglement of the state $\ket{\psi}_{HE}^{AB}$, \eqref{eq:pQbQmS}.]{Entropy of entanglement of the state $\ket{\psi}_{HE}^{AB}$, \eqref{eq:pQbQmS}. $N$ denotes the overlap $\braket{\psi_0|\psi_1}$. Maximal entanglement $E_S=1$ is achieved for orthogonal qumode states ($N=0$) and $c=\frac{1}{2}$. Note that the entanglement only depends on the absolute value of the overlap $N$, not on its phase.}
\label{fig:pQbQmS}
\end{center}
\end{figure}
Based on this result, compare the HE state $\ket{\psi}^{AB}$ to the structurally comparable, fully DV and CV states 
\begin{align}
\ket{\psi}_{DV}^{AB} &=\sqrt{c}\ket{0}^A\ket{0}^B+\mathrm{e}^{i\phi}\sqrt{1-c}\ket{1}^A\ket{1}^B, \\
\ket{\psi}_{CV}^{AB} &=\frac{1}{\sqrt{{\mathcal N}_{\phi}}}\Bigl(\sqrt{c}\ket{\psi_{0}}^A\ket{\Phi_{0}}^B+\mathrm{e}^{i\phi}\sqrt{1-c}\ket{\psi_{1}}^A\ket{\Phi_{1}}^B\Bigr),
\end{align}
where ${\mathcal N}_{\phi}$ is a normalization constant. It is known that $\ket{\psi}_{CV}^{AB}$'s entanglement depends on the phase $\phi$, as also seen in subsection \ref{catwitnessing}. On the contrary $\ket{\psi}_{DV}^{AB}$ is equally entangled for any phase $\phi$. This the hybrid state has in common with the DV state. Its entanglement also shows no dependence on $\phi$. A common property of all the states is that the entanglement is highest for $c=\frac{1}{2}$. In this case, the DV state becomes a maximally entangled Bell state. Furthermore, note that actually $\ket{\psi}_{HE}^{AB}$ and $\ket{\psi}_{DV}^{AB}$ are only special cases of $\ket{\psi}_{CV}^{AB}$: For orthogonal $\ket{\psi_{i}}^A$'s the HE state is obtained, while both orthogonal $\ket{\psi_{i}}^A$'s and $\ket{\Phi_{i}}^B$'s result in $\ket{\psi}_{DV}^{AB}$.

\subsection{Squeezed Hybrid Entangled States}\label{subsec:sqeez}
In subsection \ref{subsec:AdSVcrit} the state
\begin{equation}
\ket{\psi}^{AB}=\frac{1}{\sqrt{2}}\Bigl(\ket{0}^A\ket{\alpha}^B+\ket{1}^A\ket{-\alpha}^B\Bigr)
\end{equation}
has been investigated and its entanglement has been detected with aid of the SV determinant $s_1$. This actually is an example for a typical HE state which also occurs in QKD experiments (see chapter \ref{ch:5}). Analyzing this state is pretty trivial, as it is just a special case of the HE state considered in the previous subsection. Regarding the SV criteria, so far only their advantages have been pointed out. Now it is demonstrated how easy the criteria can be forced to fail in the detection of entanglement. To do this, consider again $\ket{\psi}^{AB}$, but this time squeezed in mode B:
\begin{equation}
\hat{S}^B(\xi)\ket{\psi}^{AB}=\frac{1}{\sqrt{2}}\Bigl(\ket{0}^A\hat{S}^B(\xi)\ket{\alpha}^B+\ket{1}^A\hat{S}^B(\xi)\ket{-\alpha}^B\Bigr),
\end{equation}
with the squeezing parameter $\xi=r\mathrm{e}^{i\theta}$. This state is still a special case of the general qubit-qumode states $\ket{\psi}_{HE}^{AB}$ considered in the last subsection. Its entanglement quantification is straightforward.

The state is obtained by just applying a unitary transformation, i.e. the squeezing operation, on $\ket{\psi}^{AB}$. As unitary transformations are a special kind of LOCC and also invertible, entanglement has to be preserved under them. Therefore, $\ket{\psi}^{AB}$ necessarily has the same amount of entanglement as $\hat{S}^B(\xi)\ket{\psi}^{AB}$. Instead of the latter, squeezed state, regarding entanglement, it is sufficient to examine the former, unsqueezed state. As the SV determinant $s_1$ detects entanglement in $\ket{\psi}^{AB}$ for all $\alpha\neq0$, the same would be expected, when evaluating $s_1$ with respect to $\hat{S}^B(\xi)\ket{\psi}^{AB}$, for any squeezing $\xi$.

However,
\begin{equation}
s_1(\hat{S}^B(\xi)\ket{\psi}^{AB})=\frac{1}{4}\sinh^2r-\frac{\mathrm{e}^{-4|\alpha|^2}}{2}|\alpha|^2\cosh^2r-\frac{\mathrm{e}^{-4|\alpha|^2}}{8}\sinh^2r,
\end{equation}
which is obviously not below zero for all $r$ and $\alpha\neq0$. Also note the independence of the squeezing phase $\theta$. See figure \ref{fig:SqeezDet} for a visualization of $s_1(\hat{S}^B(\xi)\ket{\psi}^{AB})$. 
\begin{figure}[ht]
\subfloat{\includegraphics[width=0.5\textwidth]{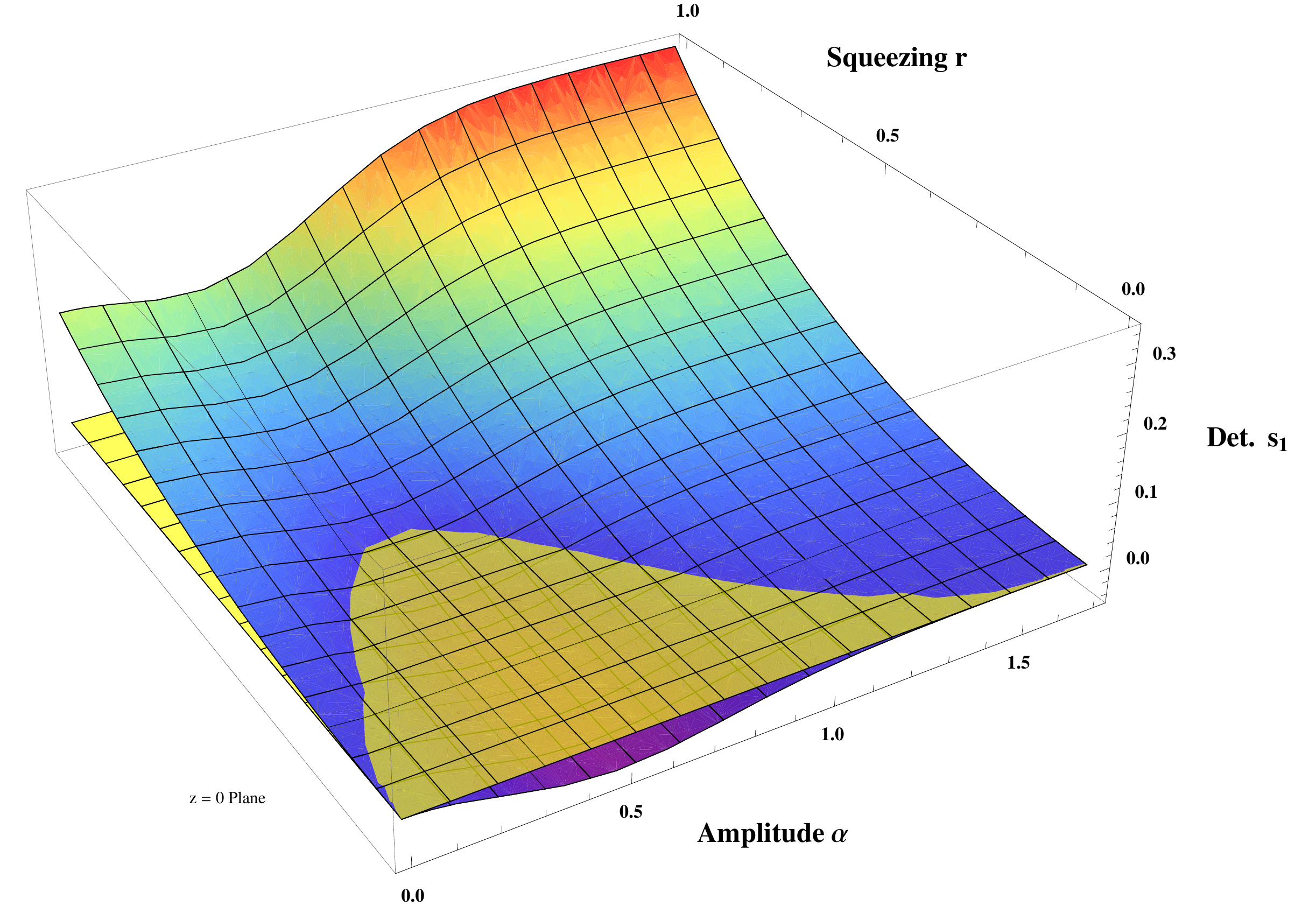}}\hfill
\subfloat{\includegraphics[width=0.5\textwidth]{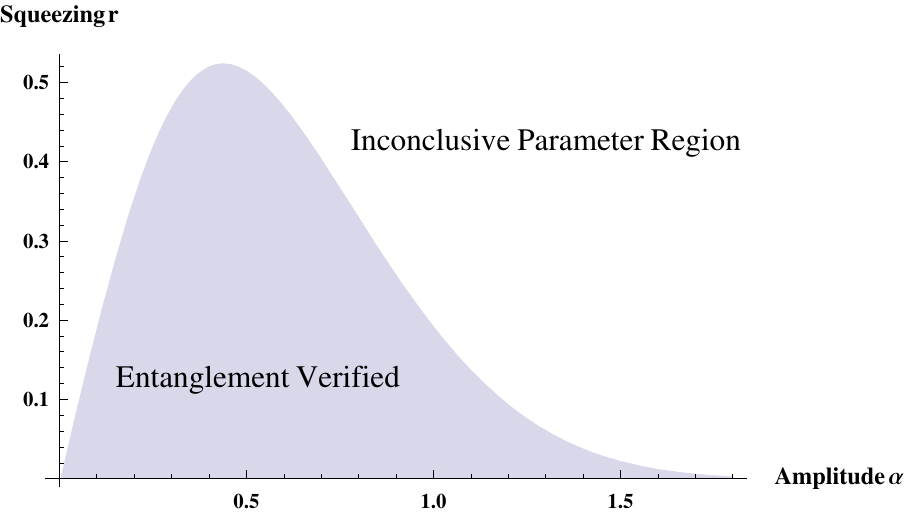}}\hfill
\caption[SV determinant $s_1$ of $\hat{S}^B(\xi)\ket{\psi}^{AB}$, and parameter regime of entanglement witnessing.]{The left-hand graphic shows the SV determinant $s_1$ of $\hat{S}^B(\xi)\ket{\psi}^{AB}$. The yellow plane denotes the zero-plane. For values $(r,\alpha)$ such that $s_1(r,\alpha)<0$ entanglement is verified. The two regimes are plotted in the right-hand diagram.}
\label{fig:SqeezDet}
\end{figure}
What is even more surprising is that, on the one side, the higher $\alpha$ is the more difficult it is to witness entanglement. But on the other side, the entanglement does actually \textit{increase} as $\alpha$ gets higher, since $\hat{S}(\xi)^B\ket{\alpha}^B$ and $\hat{S}(\xi)^B\ket{-\alpha}^B$ approach orthogonality in this case. This is of course no contradiction to earlier results. Failing to detect entanglement does not mean there is not any. Nevertheless, it is remarkable that a simple unitary transformation, which does not change the entanglement at all, is able to cause the SV criteria to fail. While for $r=0$ unambiguous witnessing for all $\alpha\neq0$ is obtained, for $r\neq0$ there are regions, where the SV determinant breaks down. It could be argued that the unitary transformation does on the one hand not destroy entanglement, but on the other hand possibly rearranges the existing entanglement in such a way that initially successful witnesses fail to detect the rearranged entanglement. However, this is just speculation.

In $\hat{S}^B(\xi)\ket{\psi}^{AB}$, the terms $\hat{S}^B(\xi)\ket{\pm\alpha}^B$ can be expressed as $\hat{S}^B(\xi)\hat{D}^B(\pm\alpha)\ket{0}^B$. But what about states which involve squeezed qumode states in the form $\hat{D}^B(\pm\alpha)\hat{S}^B(\xi)\ket{0}^B$ instead of $\hat{S}^B(\xi)\hat{D}^B(\pm\alpha)\ket{0}^B\,$? Since $\hat{S}^B(\xi)^\dagger\hat{D}^B(\alpha)\hat{S}^B(\xi)=\hat{D}^B(\beta)$ with $\beta=\alpha\cosh r+\alpha^\ast\mathrm{e}^{i\theta}\sinh r$, and therefore $\hat{D}^B(\alpha)\hat{S}^B(\xi)=\hat{S}^B(\xi)\hat{D}^B(\beta)$, these states can then be rewritten in the former way \cite{Aravind}. Therefore, they do not provide any new insights, as both scenarios are equivalent.

\subsection{Pure Qutrit - Qumode Entanglement}
As a final example of pure bipartite HE, an entangled state of a qutrit and a qumode subsystem is considered. Using this example all the mentioned tools for pure bipartite HE states are successively demonstrated. Consider
\begin{equation}\label{eq:pQtQmS}
\ket{\psi}_{HE}^{AB} =\frac{1}{\sqrt{3}}\Bigl(\ket{0}^A\ket{vac}^B+\ket{1}^A\ket{\alpha}^B+\ket{2}^A\ket{-\alpha}^B\Bigr).
\end{equation}
\begin{itemize}
\item An \textbf{Inverse Gram-Schmidt process} with respect to subsystem B yields
\begin{align}
\ket{vac}^B & = \;\;\,\ket{0}^B,  \\
\ket{\alpha}^B & = x\ket{0}^B + \quad\sqrt{1-x^2}\ket{1}^B, \\
\ket{-\alpha}^B & = x\ket{0}^B - x^2\sqrt{1-x^2}\ket{1}^B + \sqrt{1-x^2-x^4+x^6}\ket{2}^B,
\end{align}
with $x=e^{-\frac{1}{2}|\alpha|^2}$. Hence
\begin{equation}
\begin{aligned}
\ket{\phi}^{AB}& =\frac{1}{\sqrt{3}}\Bigl(\ket{0}^A\ket{0}^B+x\ket{1}^A\ket{0}^B+\sqrt{1-x^2}\ket{1}^A\ket{1}^B+x\ket{2}^A\ket{0}^B \\ & -x^2\sqrt{1-x^2}\ket{2}^A\ket{1}^B+ \sqrt{1-x^2-x^4+x^6}\ket{2}^A\ket{2}^B\Bigr),
\end{aligned}
\end{equation}
which is the effective DV form of the state.
\item The pure-state \textbf{entropy of entanglement} of the state can be calculated. Its analytic form is rather lengthy. Hence, only the graphical version is shown in figure \ref{fig:pQtQmS},
\begin{figure}[ht]
\begin{center}
\includegraphics[height=5cm]{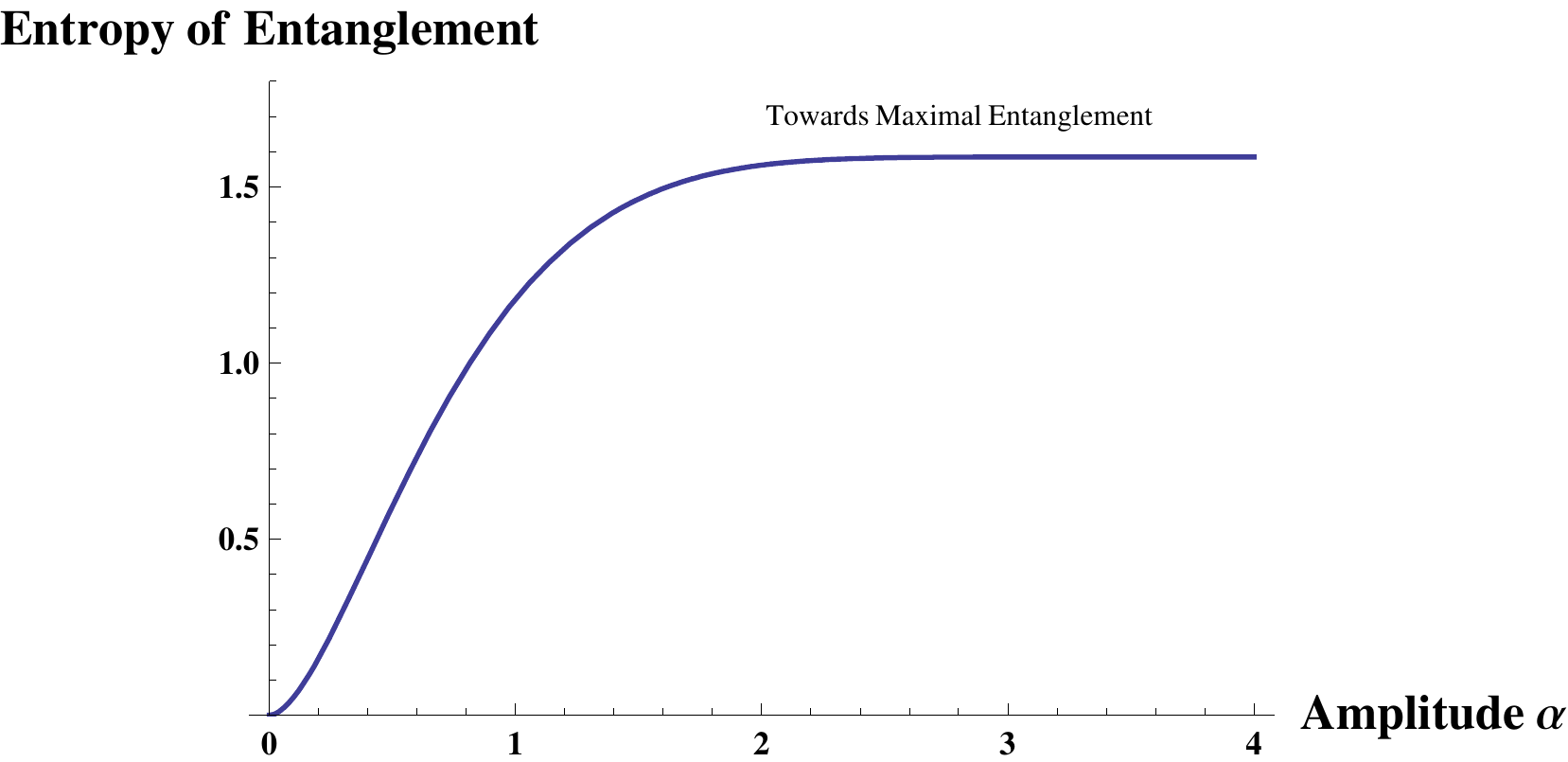}
\caption[Entropy of entanglement of the state $\ket{\psi}^{AB}$, \eqref{eq:pQtQmS}.]{Entropy of entanglement of the state $\ket{\psi}^{AB}$, \eqref{eq:pQtQmS}. For $\alpha\rightarrow\infty$ the state becomes maximally entangled.}
\label{fig:pQtQmS}
\end{center}
\end{figure}
from which it can be seen that for $\alpha\rightarrow\infty$ the state becomes maximally entangled. Note that, while the state deals with qutrit entanglement, the entropy of entanglement has been calculated in qubit entanglement units, so-called \text{e-bits}.\footnote[4]{A maximally entangled two-qubit Bell state $\ket{\Phi}_{Bell}:=\frac{1}{\sqrt{2}}(\ket{00}+\mathrm{e}^{i\phi}\ket{11})$ has exactly 1 \textit{e-bit} of entanglement. E-bits are therefore a unit for entanglement quantification, especially for two-qubit entanglement.} If the logarithm in the von-Neumann entropy had been calculated with respect to base $3$, the maximal qutrit entanglement would actually be 1 in these qutrit entanglement units and in the plot, the graph would converge against 1. 
\item Finally, set $x=\frac{1}{2}$, which corresponds to $\alpha=\sqrt{2\ln2}\approx1.18$, and calculate the \textbf{Schmidt decomposition}:
\begin{equation}
\ket{\phi}^{AB}=0.76\ket{0}^A\ket{0}^B+0.56\ket{1}^A\ket{1}^B+0.33\ket{2}^A\ket{2}^B.
\end{equation}
\end{itemize}
What can be concluded from not only considering qubits, but also qutrits as the DV subsystem of HE states, is that focussing on higher dimensional DV subsystems will probably not bring new insights. The methods applied are always the same whether the DV system is 2-, 3- or higher-dimensional. Hence, in the following mostly qubit subsystems are employed.

\section{Mixed States supported by a Finite-Dimensional Subspace}\label{sec:mixed1}
In this section mixed HE states which are effectively DV are investigated. That is to say those states with $1<N<\infty$ according to equation \eqref{eq:HEdef}. One the one hand, states are discussed which feature the characteristic properties of this kind of HE entanglement. Some typical DV tools are then applied on them. For example the very important entanglement measure logarithmic negativity is calculated. On the other hand, adequate mixed HE states which are of physical relevance are discussed. This leads to the question, how states actually become mixed. The answer are quantum channels which cause them to decohere. Therefore, along the way, also some interesting properties of quantum channels are presented, especially regarding entanglement evolution. 

What can be generally said about bipartite, effectively DV, mixed HE states is that there always an inverse Gram-Schmidt process can be performed, which yields a density matrix description. This enables the application of DV tools, such as typical DV entanglement measures.

\subsection{Amplitude Damped Qubit - Qumode Entanglement}\label{subsecA}
The state
\begin{equation}
\ket{\psi}^{AB}=\frac{1}{\sqrt{2}}\Bigl(\ket{0}^A\ket{\psi_0}^B+\ket{1}^A\ket{\psi_1}^B\Bigr)
\end{equation}
is not only a very simple and illustrating example of HE, but also physically relevant, as already mentioned. Therefore, simply send this state through some decoherence-causing channel and obtain a HE mixed state. Then it depends on the form of the channel whether the output is either effectively DV HE or true HE. As bipartite systems are employed, the question arises whether both subsystems or only one of them shall be subject to a channel. As the focus here lies not on complicated states and deep mathematics, but on the identification of characteristic effects and behavior, only one-sided channels $\hat{\rho}'{}^{AB}=({\mathbb 1}\otimes\Upsilon)\hat{\rho}^{AB}$ are considered. Two-sided channels will just make the outcome more intricate and the analysis more complicated. Furthermore, in a typical physical application, such as particular cases of QKD, it is only the CV subsystem which undergoes the action of the channel. Hence, consider channels of the form $\hat{\rho}'{}^{AB}=({\mathbb 1}^A\otimes\Upsilon^B)\hat{\rho}^{AB}$, where subsystem A is DV and subsystem B is CV.
\begin{equation}
\begin{aligned}
\hat{\rho}'{}^{AB}&=({\mathbb 1}^A\otimes\Upsilon^B)\frac{1}{2}\Bigl(\ket{0}^A\bra{0}\otimes\ket{\psi_0}^B\bra{\psi_0} \\ 
&+\ket{1}^A\bra{1}\otimes\ket{\psi_1}^B\bra{\psi_1} \\
&+\ket{0}^A\bra{1}\otimes\ket{\psi_0}^B\bra{\psi_1} \\
&+\ket{1}^A\bra{0}\otimes\ket{\psi_1}^B\bra{\psi_0}\Bigr).
\end{aligned}
\end{equation}
Now write the state as a proper density matrix with respect to subsystem A, which is not subject to the channel, i.e. $\hat{\rho}'{}^{AB}_{ij}=\braket{i^A|\hat{\rho}'{}^{AB}|j^A}$.
\begin{equation}
\hat{\rho}'{}^{AB}=\begin{pmatrix}
\Upsilon^B(\ket{\psi_0}^B\bra{\psi_0}) & \Upsilon^B(\ket{\psi_0}^B\bra{\psi_1}) \\ 
\Upsilon^B(\ket{\psi_1}^B\bra{\psi_0}) & \Upsilon^B(\ket{\psi_1}^B\bra{\psi_1})
\end{pmatrix}.
\end{equation}
What can be seen is that for these one-sided channels on qubit-qumode HE states, a nice block structure is obtained when applying the density matrix formalism on subsystem A. This can be generalized to universal pure, qudit-qumode entangled states. For a general state of the form 
\begin{equation}
\ket{\psi}^{AB} =\sum_{i=0}^{d-1} c_{i}\ket{i}^A\ket{\psi_{i}}^B\,,\qquad c_{i}\in\mathbb C\;,\qquad\sum_{i=0}^{d-1} \;|c_{i}|^2=1,
\end{equation}
the transformed state $\hat{\rho}'{}^{AB} =({\mathbb 1}^A\otimes\Upsilon^B)\ket{\psi}^{AB}\bra{\psi}^{AB}$, in the density matrix formalism for subsystem A, looks like
\begin{equation}
\hat{\rho}'{}^{AB}= \begin{pmatrix}
|c_0|^2\Upsilon^B(\ket{\psi_0}^B\bra{\psi_0}) & \cdots & c_0 c_{d-1}^\ast\Upsilon^B(\ket{\psi_0}^B\bra{\psi_{d-1}}) \\ 
\vdots & \ddots & \vdots \\
c_{d-1} c_0^\ast\Upsilon^B(\ket{\psi_{d-1}}^B\bra{\psi_0}) & \cdots & |c_{d-1}|^2\Upsilon^B(\ket{\psi_{d-1}}^B\bra{\psi_{d-1}})
\end{pmatrix},
\end{equation}
or in index notation
\begin{equation}
\hat{\rho}'{}^{AB}_{ij}= c_i c_j^\ast\Upsilon^B(\ket{\psi_i}^B\bra{\psi_j}).
\end{equation}
One step further, for completely general bipartite mixed HE states of the form \eqref{eq:HEdef}, the output of one-sided channels $({\mathbb 1^A}\otimes\Upsilon^B)\hat{\rho}^{AB}$ is
\begin{equation}
\hat{\rho}'{}^{AB}_{ij}= \sum_{n=1}^N p_n c_{ni} c_{nj}^\ast\Upsilon^B(\ket{\psi_{ni}}^B\bra{\psi_{nj}}).
\end{equation}
Consider a specific, physically relevant example. Choose as the state subject to the channel the already known
\begin{equation}\label{eq:lossyQbQm}
\ket{\psi}^{AB}=\frac{1}{\sqrt{2}}\Bigl(\ket{0}^A\ket{\alpha}^B+\ket{1}^A\ket{-\alpha}^B\Bigr),
\end{equation}
and as the one-sided channel itself the amplitude damping channel (also called photon loss channel) acting on subsystem B. Exploiting Stinespring's dilation theorem \ref{thm:stine}, write
\begin{equation}
\hat{\rho}^{AB}=tr_E[\hat{U}^{BE}\;\bigl( \ket{\phi}^{AB}\bra{\phi}\otimes\ket{0}^E\bra{0} \bigr) \;\hat{U}^{{BE}^\dagger}].
\end{equation}
The ancilla Hilbert space is the environment E, which is in a vacuum state, and into which photons may leak out of the qumode system. The coupling to the environment is decribed by a beam splitter $\hat{U}^{BE}=e^{\theta(\hat{a}_E^\dagger \hat{a}_B-\hat{a}_B^\dagger \hat{a}_E)}$, where the beam splitter transmissivity $\eta$ is given as $\cos^2\theta=\eta$. For a graphical illustration of the channel modeling, see figure \ref{fig:ADampCh}.
\begin{figure}[ht]
\begin{center}
\includegraphics[width=12cm]{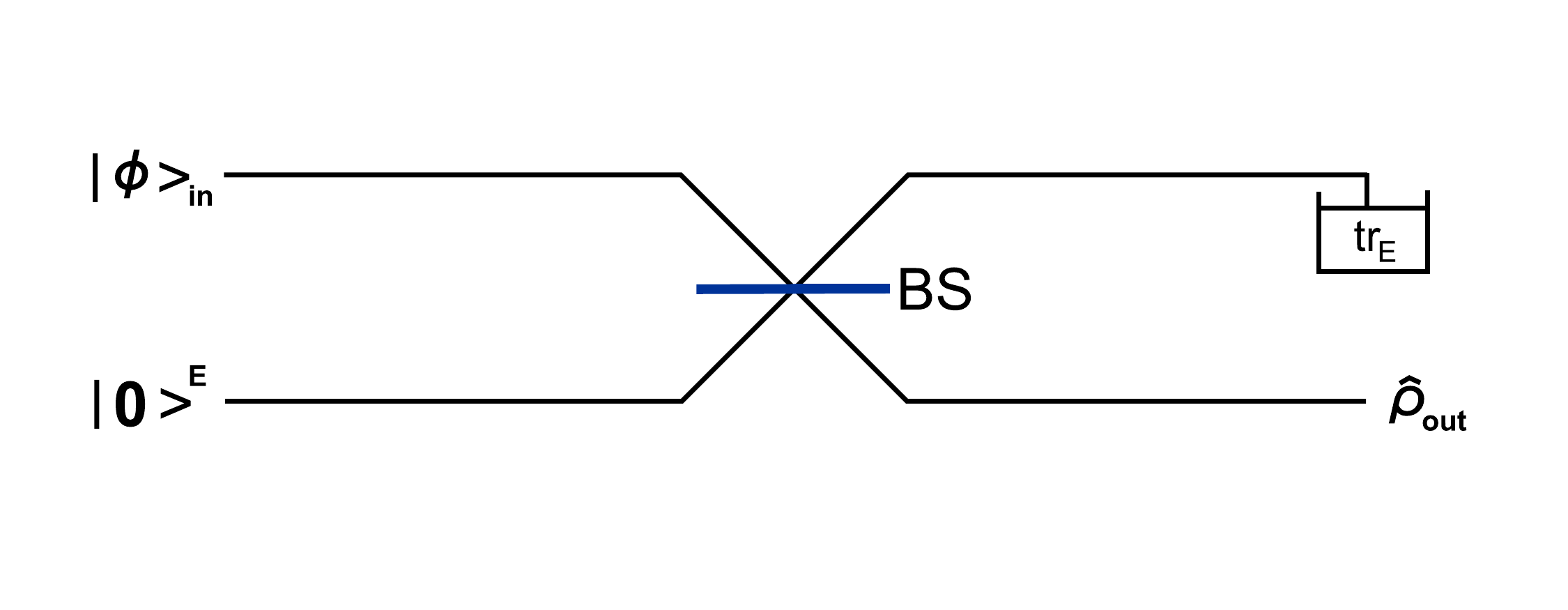}
\caption[Illustration of the amplitude damping channel.]{Illustration of the amplitude damping channel. The output $\hat{\rho}_{out}$ is obtained by using a beam splitter to mix the input state $\ket{\psi}_{in}$ with vacuum environment. Afterwards the environment is traced out.}
\label{fig:ADampCh}
\end{center}
\end{figure}
For the actual calculation make use of $\hat{U}^{{BE}^\dagger}\hat{U}^{BE}=\mathbb{1}$, $\hat{U}^{BE}D^B(\alpha)\hat{U}^{{BE}^\dagger}=D^B(\sqrt{\eta}\alpha)D^E(\sqrt{1-\eta}\alpha)$ as well as $\hat{U}^{BE}\ket{0}^B\ket{0}^E=\ket{0}^B\ket{0}^E$ and obtain
\begin{equation}
\begin{aligned}
\hat{\rho}^{AB} &=\frac{1}{2}\Bigl(\ket{0}^A\bra{0}\otimes\ket{\sqrt{\eta}\alpha}^B\bra{\sqrt{\eta}\alpha} \\
&+ \ket{1}^A\bra{1}\otimes\ket{-\sqrt{\eta}\alpha}^B\bra{-\sqrt{\eta}\alpha} \\
&+\mathrm{e}^{-2(1-\eta)|\alpha|^2}\, \ket{0}^A\bra{1}\otimes\ket{\sqrt{\eta}\alpha}^B\bra{-\sqrt{\eta}\alpha} \\
&+\mathrm{e}^{-2(1-\eta)|\alpha|^2}\, \ket{1}^A\bra{0}\otimes\ket{-\sqrt{\eta}\alpha}^B\bra{\sqrt{\eta}\alpha}\Bigr).
\end{aligned}
\end{equation}
This clearly is a mixed state. The channel causes the state to decohere, as can be inferred from the damping of its off-diagonal coherence terms. It can be also expressed in its pure state decomposition. Defining the damping $\tau:=\mathrm{e}^{-2(1-\eta)|\alpha|^2}$ and the overlap $\lambda:=\braket{-\sqrt{\eta}\alpha|\sqrt{\eta}\alpha}=\mathrm{e}^{-2\eta|\alpha|^2}$, 
\begin{equation}\label{eq:dampedQQ}
\begin{aligned}
\hat{\rho}^{AB} & = \frac{1+\tau}{2}\;\ket{\phi_+}^{AB}\bra{\phi_+}\;+\;\frac{1-\tau}{2}\;\ket{\phi_-}^{AB}\bra{\phi_-}, \\ 
\ket{\phi_\pm}^{AB} & = \frac{1}{\sqrt{2}}\;\biggl(\ket{0}^A\ket{\sqrt{\eta}\alpha}^B\;\pm\;\ket{1}^A\ket{-\sqrt{\eta}\alpha}^B\biggr).
\end{aligned}
\end{equation}
This state contains obviously only two qumode states $\ket{\pm\sqrt{\eta}\alpha}^B$, which can be put into an orthogonal basis.
\begin{align}
\ket{\sqrt{\eta}\alpha}^B & = \ket{0}^B, \\
\ket{-\sqrt{\eta}\alpha}^B & = \lambda\ket{0}^B+\sqrt{1-\lambda^2}\ket{1}^B.
\end{align} 
Therefore, $\hat{\rho}^{AB}$ is an effective two-qubit state with density matrix
\begin{equation}
\rho^{AB}= \frac{1}{2}\begin{pmatrix}
1 & 0 & \lambda\tau & \tau\sqrt{1-\lambda^2} \\
0 & 0 & 0 & 0 \\
\lambda\tau & 0 & \lambda^2 & \lambda\sqrt{1-\lambda^2} \\
\tau\sqrt{1-\lambda^2} & 0 & \lambda\sqrt{1-\lambda^2} & 1-\lambda^2
\end{pmatrix}.
\end{equation}
When dealing with mixed two-qubit states, the obvious choice for entanglement quantification is the concurrence $C$. Its calculation is straightforward:
\begin{equation}
C(\hat{\rho}^{AB})=\frac{1}{2}\sqrt{1-\mathrm{e}^{-4\eta|\alpha|^2}}\Bigl(\sqrt{1+3\mathrm{e}^{-4(1-\eta)|\alpha|^2}}-\sqrt{1-\mathrm{e}^{-4(1-\eta)|\alpha|^2}}\Bigr).
\end{equation} 
The plot is shown in figure \ref{fig:CADampCh}.
\begin{figure}[ht]
\begin{center}
\includegraphics[width=12cm]{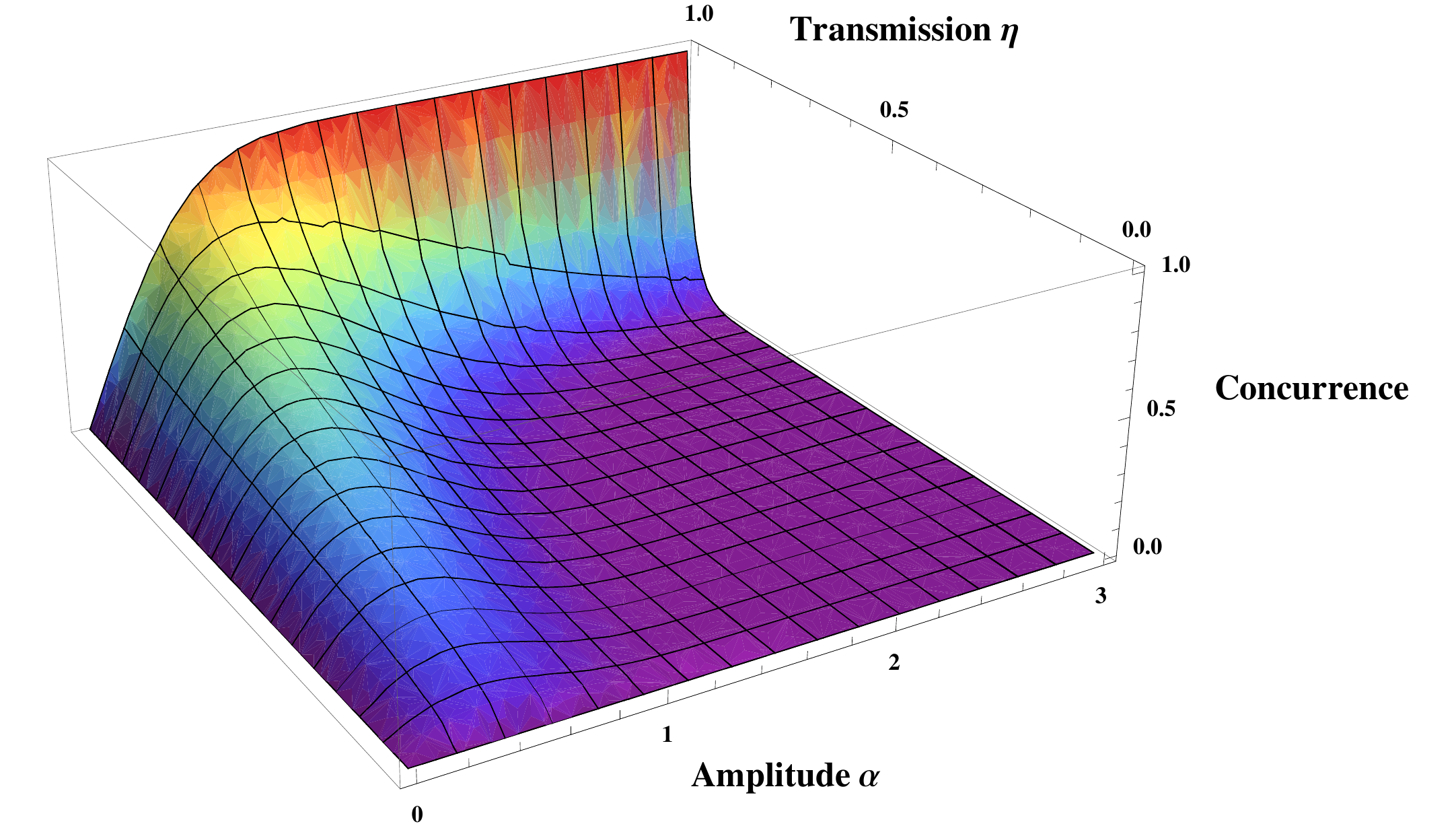}
\caption[Concurrence of the state $\hat{\rho}^{AB}$, \eqref{eq:dampedQQ}, as a function of transmissivity $\eta$ and amplitude $\alpha$.]{Concurrence of the state $\hat{\rho}^{AB}$, \eqref{eq:dampedQQ}, as a function of transmissivity $\eta$ and amplitude $\alpha$. On the one hand, the higher $\alpha$ the greater the initial entanglement, but also the more sensitive the state is against photon loss. On the other hand, for low $\alpha$ there is only little initial entanglement. However, the state is more robust against losses. Hence, there is some trade-off behavior and depending on $\eta$ an optimal $\alpha^{opt}_\eta$ exists for which the output entanglement is largest.}
\label{fig:CADampCh}
\end{center}
\end{figure}
It shows an interesting trade-off behavior. On the one side, a higher $\alpha$ results in a greater initial entanglement. But it also causes the state to be more sensitive against photon loss. On the other side, a low $\alpha$ yields only little initial entanglement, but the state is more robust against loss. This leads to an optimal $\eta$-dependent $\alpha^{opt}_\eta$ for which the largest output entanglement is obtained. It is worth mentioning that a similar trade-off is observed in two-mode cat states $\frac{1}{\sqrt{{\mathcal N}_\pm}}(\ket{\alpha}^A\ket{\alpha}^B\pm\ket{-\alpha}^A\ket{-\alpha}^B)$, when one of the modes suffers photon loss.

Finally, note that just as expected in the limit $\eta=1$ the concurrence becomes  $C(\hat{\rho}^{AB})=\sqrt{1-\mathrm{e}^{-4|\alpha|^2}}$, which is the concurrence of the initial undisturbed state 
\begin{equation}
\ket{\psi}^{AB}=\frac{1}{\sqrt{2}}\Bigl(\ket{0}^A\ket{\alpha}^B+\ket{1}^A\ket{-\alpha}^B\Bigr).
\end{equation}

\subsection{The Choi-Jamiolkowski Isomorphism and Entanglement Evolution}
This subsection can be regarded as some kind of bonus. As quantum channels play an important role in this thesis and have just been used for the first time, an elegant way to work with them and study them is presented. Consider a linear map between two finite dimensional Hilbert spaces:
\begin{equation}
\Upsilon :S({\mathcal H}_{d_1})\rightarrow S({\mathcal H}_{d_2}).
\end{equation}
Then there can be defined an isomorphism between this map and an operator living in ${\mathcal H}_{d_1}\otimes{\mathcal H}_{d_2}$ by
\begin{equation}
\hat{\rho}_{\Upsilon}=({\mathbb 1}^A\otimes\Upsilon^B)\ket{\Phi^+_{d_1}}^{AB}\bra{\Phi^+_{d_1}},
\end{equation}
where $\ket{\Phi^+_{d_1}}^{AB}$ denotes the maximally entangled state in ${\mathcal H}_{d_1}\otimes{\mathcal H}_{d_1}$,
\begin{equation}
\ket{\Phi^+_{d_1}}^{AB}=\frac{1}{\sqrt{d_1}}\sum_{i=0}^{d_1-1}\ket{i}^A\ket{i}^B.
\end{equation}
This \textit{Choi-Jamiolkowski isomorphism} holds for all linear maps and operators. If, additionally, the linear map is completely positive, it is a quantum channel. The isomorphism was first suggested by Jamiolkowski in 1972 \cite{Jamiol}. 3 years later Choi employed it to derive the operator-sum decomposition of CPTP maps (which had been found independently by Kraus already in 1971) \cite{Choi1,Kraus}. This isomorphism proves very useful for the study of quantum channels. If the physics of quantum channels are to be studied, instead simply the associated quantum states can be investigated. Then advantage can be taken of the fact that much is already known about quantum states. Their theory is well developed. 

Exploiting this isomorphism, several remarkable connections between maps and the associated linear operators can be derived. For example consider the set of Hermiticity-preserving linear maps. Then the associated set of linear operators is the set of Hermitian operators. If the set of CP maps is considered, i.e. the set of quantum channels, the associated operators are the positive-semidefinite operators, i.e. all quantum states. An interesting link occurs when considering entanglement breaking maps. Then the associated linear operators are the separable quantum states. Furthermore, with aid of the isomorphism, relations between theorems on states and theorems on channels and operations respectively can be derived. For example the operator-sum decomposition directly translates into the spectral decomposition of quantum states \cite{Isom1,Isom2}. 

Another very important application of the Choi-Jamiolkowski isomorphism can be found in entanglement evolution theory \cite{Konrad}. Consider a two-qubit state $\ket{\chi}^{AB}=\sqrt{w}\ket{0}^A\ket{0}^B+\sqrt{1-w}\ket{1}^A\ket{1}^B$ subject to a one-sided channel $\hat{\rho}'=({\mathbb 1}^A\otimes\Upsilon^B)\ket{\chi}^{AB}\bra{\chi}$. Due to the isomorphism this can be also expressed in the dual picture, where the roles of channel and state are interchanged:
\begin{equation}
\frac{1}{p'}({\mathbb 1}^A\otimes\Upsilon^B)\ket{\chi}^{AB}\bra{\chi}=\frac{1}{p''}(\Upsilon_{\chi}^A\otimes{\mathbb 1}^B)\hat{\rho}_{\Upsilon}^{AB}.
\end{equation}
$\Upsilon_{\chi}^A$ denotes the channel associated with the state $\ket{\chi}^{AB}$ and $\hat{\rho}_{\Upsilon}^{AB}$ is the state which corresponds to the channel $\Upsilon^B$. $p'=\mathrm{tr}[({\mathbb 1}^A\otimes\Upsilon^B)\ket{\chi}^{AB}\bra{\chi}]$ and $p''=\mathrm{tr}[(\Upsilon_{\chi}^A\otimes{\mathbb 1}^B)\hat{\rho}_{\Upsilon}^{AB}]$ are normalization constants due to a possible trace decrease. This dual picture can be given a very descriptive interpretation using quantum teleportation, which is explained in detail and in a very demonstrative way in \cite{Konrad}. With this teleportation interpretation, $\Upsilon_{\chi}^A$ and $\hat{\rho}_{\Upsilon}^{AB}$ can be expressed in terms of the known variables $\ket{\chi}^{AB}$ and $\Upsilon^B$. Finally, entanglement quantification is performed in the dual picture with aid of the concurrence and what comes out is a neat entanglement evolution equation for pure two-qubit states:
\begin{equation}
C[({\mathbb 1}^A\otimes\Upsilon^B)\ket{\chi}^{AB}\bra{\chi}]=C[({\mathbb 1}^A\otimes\Upsilon^B)\ket{\Phi^+_{2}}^{AB}\bra{\Phi^+_{2}}]\,C[\ket{\chi}^{AB}].
\end{equation}
As the concurrence is a convex measure, for mixed input states the evolution equation becomes an inequality, only providing upper bounds on the output entanglement.
\begin{equation}
C[({\mathbb 1}^A\otimes\Upsilon^B)\hat{\rho}_0^{AB}]\leq C[({\mathbb 1}^A\otimes\Upsilon^B)\ket{\Phi^+_{2}}^{AB}\bra{\Phi^+_{2}}]\,C[\hat{\rho}_0^{AB}].
\end{equation}
In the derivation of this formula, the explicit form of the concurrence has been exploited. As this monotone is only defined for two-qubit states, a generalization to higher dimensional systems is not trivial. Nevertheless it exists, utilizing the \textit{G-concurrence} $G_d$ by Gilad Gour \cite{Gour}. It is a generalized concurrence monotone for ${\mathcal H}_d\otimes{\mathcal H}_d$ finite-dimensional systems, which coincides with the "classical" concurrence for $d=2$. Therewith, a generalized entanglement evolution equation can be derived in the same way, again making use of the Choi-Jamiolkowski isomorphism \cite{Tiersch}.
\begin{align}
G_d[({\mathbb 1}^A\otimes\Upsilon^B)\ket{\chi_d}^{AB}\bra{\chi_d}]&=G_d[({\mathbb 1}^A\otimes\Upsilon^B)\ket{\Phi^+_d}^{AB}\bra{\Phi^+_d}]\,G_d[\ket{\chi_d}^{AB}], \\
G_d[({\mathbb 1}^A\otimes\Upsilon^B)\hat{\rho}_d^{AB}]&\leq G_d[({\mathbb 1}^A\otimes\Upsilon^B)\ket{\Phi^+_d}^{AB}\bra{\Phi^+_d}]\,G_d[\hat{\rho}_d^{AB}],
\end{align}
where $\ket{\chi_d}^{AB}=\frac{1}{\sqrt{d}}\sum_{i,j=0}^{d-1}A_{ij}\ket{i}^A\ket{j}^B$ denotes a bipartite $d\times d$-dimensional state with finite $d$ and constants $A_{ij}\in{\mathbb C}$. $\hat{\rho}_d^{AB}$ denotes some mixed state in the same Hilbert space.

These entanglement evolution equations provide an elegant way for entanglement analysis of finite-dimensional states which are subject to a quantum channel. Unfortunately, they cannot be employed for HE states, as these live in an overall infinite-dimensional Hilbert space. The question may arise whether it at least works for the class of effectively DV HE states. The problem is that the Gram-Schmidt process necessary for the DV description has to be performed after the channel. Admittedly, it could theoretically also be performed prior to the action of the channel. However, in this case also the operators describing the channel have to be transformed and expressed in the new basis. In practice, this is often a rather cumbersome undertaking. Hence, for HE states entanglement evolution equations do not work very well as long as they are formulated for finite-dimensional systems. General CV entanglement evolution equations could not be derived yet.

Two of the most important entanglement monotones for DV as well as for Gaussian CV quantum states are the negativity $\mathcal N$ and the related logarithmic negativity $E_N$. An apparent question is whether a comparable entanglement evolution equation can be also derived for one of these measures. By a simple counterexample it can be shown that this is not the case. Consider again the state 
\begin{equation}
\ket{\chi}^{AB}=\sqrt{w}\ket{0}^A\ket{0}^B+\sqrt{1-w}\ket{1}^A\ket{1}^B.
\end{equation}
Apply a photon number encoding, where $\ket{0}$ corresponds to zero photons and $\ket{1}$ to one photon. The channel applied is the one-sided photon loss channel, which has been presented in the previous subsection. For qubits, its Kraus operators are given by \cite{Nielsen}
\begin{align}
\hat{K}_0 &=\ket{0}\bra{0}+\sqrt{\eta}\ket{1}\bra{1}, \\
\hat{K}_1 &=\sqrt{1-\eta}\ket{0}\bra{1}.
\end{align}
The output state $\hat{\rho}^{AB}$ is given by
\begin{equation}
\begin{aligned}
\hat{\rho}^{AB} &=({\mathbb 1}^A\otimes\Upsilon_{loss}^B)\ket{\chi}^{AB}\bra{\chi} \\
&=\frac{1}{2}\Bigl(\ket{0}^A\bra{0}\otimes\ket{0}^B\bra{0}+\sqrt{\eta}\ket{1}^A\bra{0}\otimes\ket{1}^B\bra{0} \\
&+\sqrt{\eta}\ket{0}^A\bra{1}\otimes\ket{0}^B\bra{1}+(1-\eta)\ket{1}^A\bra{1}\otimes\ket{0}^B\bra{0}+\eta\ket{1}^A\bra{1}\otimes\ket{1}^B\bra{1}       \Bigr).
\end{aligned}
\end{equation}
Its partial transpose (PT) has one negative eigenvalue, i.e. 
\begin{equation}
\lambda_-=\frac{1-w}{2}\Bigl(1-\eta-\sqrt{(1-\eta)^2+4\eta\frac{w}{1-w}}\Bigr)\leq0\,\forall\,\eta,w\in[0,1]. 
\end{equation}
This yields the negativity
\begin{equation}
{\mathcal N}(\hat{\rho}^{AB})=\frac{1-w}{2}\Bigl(\sqrt{(1-\eta)^2+4\eta\frac{w}{1-w}}-(1-\eta)\Bigr).
\end{equation}
It is also known that
\begin{align}
{\mathcal N}&[\ket{\chi}^{AB}] =\sqrt{w(1-w)}, \\
{\mathcal N}&[({\mathbb 1}^A\otimes\Upsilon_{loss}^B)\ket{\Phi^+_{2}}^{AB}\bra{\Phi^+_{2}}] =\frac{\eta}{2}.
\end{align} 
However,
\begin{equation}
{\mathcal N}(\hat{\rho}^{AB})\neq\frac{\eta}{2}\sqrt{w(1-w)}={\mathcal N}[({\mathbb 1}^A\otimes\Upsilon_{loss}^B)\ket{\Phi^+_{2}}^{AB}\bra{\Phi^+_{2}}]\,{\mathcal N}[\ket{\chi}^{AB}].
\end{equation}
Hence, there does not exist an entanglement evolution equation in a comparable form to the one derived in \cite{Konrad}. Instead the output entanglement ${\mathcal N}(\hat{\rho}^{AB})$ is actually a complex function of ${\mathcal N}[\ket{\chi}^{AB}]$ and ${\mathcal N}[({\mathbb 1}^A\otimes\Upsilon_{loss}^B)\ket{\Phi^+_{2}}^{AB}\bra{\Phi^+_{2}}]$, and this for the rather approachable photon loss channel. It follows that $E_N$ also offers no such entanglement evolution equation, as it is directly related to the negativity $\mathcal N$ by $E_N=\log_2[1+2{\mathcal N}]$.

\subsection{Logarithmic Negativity of a Nontrivial Hybrid Entangled State}
After this short excursion into entanglement evolution theory, HE states shall be discussed again. The first example of the class of mixed, but effectively DV HE states in the last but one subsection was the state \eqref{eq:lossyQbQm}, subject to photon loss. The output state only contained two qumode states and hence, after a Gram-Schmidt process, the state's entanglement was quantified with aid of the concurrence. However, the concurrence only works for two-qubit states. Therefore, another more representative state is to be discussed, which has to be analyzed with a more universal measure. The example is more artificially constructed, though. 

Consider the state
\begin{equation}\label{eq:2x3Mstate1}
\begin{aligned}
\hat{\rho}^{AB} &= p\,\ket{\phi_+}^{AB}\bra{\phi_+}+(1-p)\,\ket{\phi_-}^{AB}\bra{\phi_-}, \\ 
\ket{\phi_\pm}^{AB} & = \frac{1}{\sqrt{2}}\Bigl(\ket{0}^A\ket{vac}^B+\ket{1}^A\ket{\pm\alpha}^B\Bigr),
\end{aligned}
\end{equation}
which contains three qumode states $\ket{vac}^B$ and $\ket{\pm\alpha}^B$ with $\alpha\in{\mathbb R}$ (see figure \ref{fig:3x2state} for a visualization).
\begin{figure}[ht]
\begin{center}
\includegraphics[width=10cm]{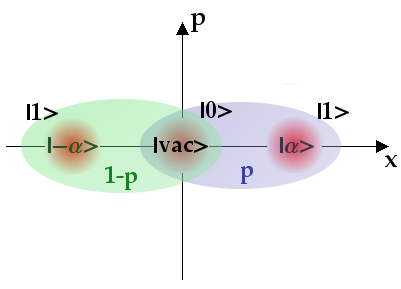}
\caption[Visualization of the HE state $\hat{\rho}^{AB}$, see \eqref{eq:2x3Mstate1}, in phase space.]{Visualization of the HE state $\hat{\rho}^{AB}$, see \eqref{eq:2x3Mstate1}, in phase space. The blue region corresponds to the first pure state in the convex combination with probability $p$, while the green region represents the pure state obtained with probability $1-p$. The additional $\ket{0}$ and $\ket{1}$ vectors denote the qubit states associated with the qumode states $\ket{vac}$ and $\ket{\pm\alpha}$.}
\label{fig:3x2state}
\end{center}
\end{figure}
Perform a Gram-Schmidt process and obtain a qubit-qutrit entangled state in ${\mathcal H}_2^A\otimes{\mathcal H}_3^B$. The concurrence does not work for states of this kind anymore. The pure states in the convex combination of \eqref{eq:2x3Mstate1}, after the inverse Gram-Schmidt process in an orthonormal basis, look like
\begin{align}
\ket{\phi_+}^{AB} &=  \frac{1}{\sqrt{2}}\Bigl(\ket{0}^A\ket{0}^B+x\ket{1}^A\ket{0}^B+\sqrt{1-x^2}\ket{1}^A\ket{1}^B\Bigr),\\
\ket{\phi_-}^{AB} & = \frac{1}{\sqrt{2}}\Bigl(\ket{0}^A\ket{0}^B+x\ket{1}^A\ket{0}^B-x^2\sqrt{1-x^2}\ket{1}^A\ket{1}^B+\sqrt{1-x^2-x^4+x^6}\ket{1}^A\ket{2}^B\Bigr),
\end{align}
with $x=\mathrm{e}^{-\frac{1}{2}|\alpha|^2}$. This time entanglement quantification is tried with the logarithmic negativity $E_N$. As, for general bipartite mixed states with higher dimension than $2\times 2$, it is basically the only actually calculable monotone, it is also the most important one for mixed, effectively DV HE states. The analytic expression for $E_N(\hat{\rho}^{AB})$ is very long. It is omitted here. The graphical version can be observed in figure \ref{fig:3x2LogNeg}.
\begin{figure}[ht]
\begin{center}
\includegraphics[width=8cm]{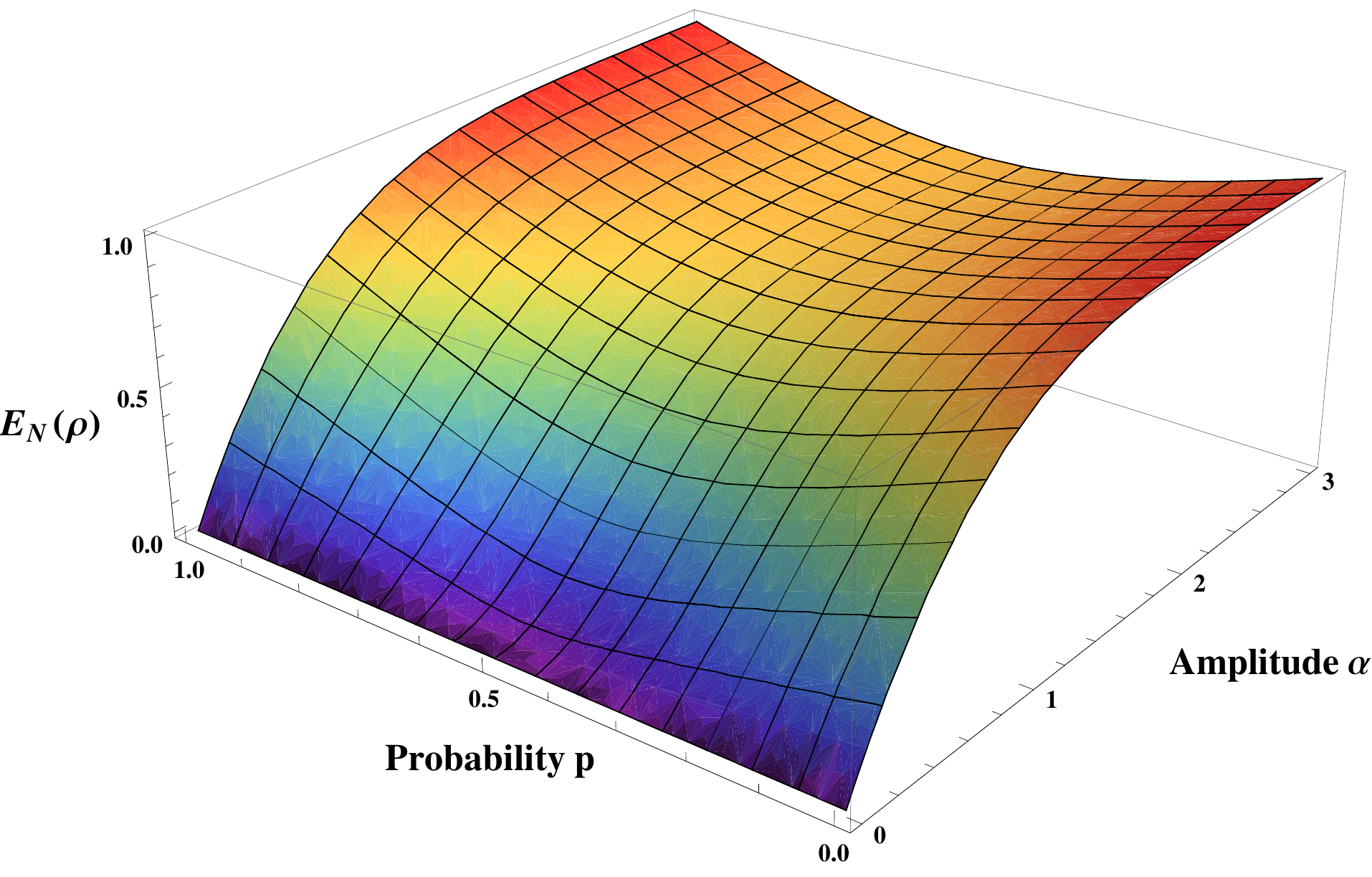}
\caption[Logarithmic negativity of the effective qubit-qutrit HE state $\hat{\rho}^{AB}$, \eqref{eq:2x3Mstate1}, as a function of probability $p$ and amplitude $\alpha$.]{Logarithmic negativity of the effective qubit-qutrit HE state $\hat{\rho}^{AB}$, \eqref{eq:2x3Mstate1}, as a function of probability $p$ and amplitude $\alpha$. For given $\alpha$ maximal mixing $p=\frac{1}{2}$ yields the lowest entanglement. Furthermore, higher $\alpha$ also results in greater entanglement. For $p=0\lor1$ and $\alpha\rightarrow\infty$ the state becomes maximally entangled.}
\label{fig:3x2LogNeg}
\end{center}
\end{figure}
It can be seen that for given $\alpha$ the lowest entanglement is obtained for maximal mixing $p=\frac{1}{2}$. Furthermore, the higher $\alpha$, the more entangled the state is. This is what has been expected. Finally, for $p=0\lor1$ and $\alpha\rightarrow\infty$ the state approaches maximal entanglement.

Of course, the discussion of this state does not bring much new insight. However, it is a good demonstration of the quantification of HE with aid of the important logarithmic negativity.

\subsection{A State containing \texorpdfstring{$N\times d$}{Nxd} Qumode States}
To round off this subsection of mixed, but effectively DV HE states, a state which actually contains $N\times d$ qumode states shall be briefly discussed, namely the state
\begin{equation}\label{eq:2x4Mstate1}
\begin{aligned}
\hat{\rho}^{AB} &= p\,\ket{\phi_1}^{AB}\bra{\phi_1}+(1-p)\,\ket{\phi_2}^{AB}\bra{\phi_2}, \\
\ket{\phi_1}^{AB} & = \frac{1}{\sqrt{2}}\Bigl(\ket{0}^A\ket{\alpha}^B+\ket{1}^A\ket{-\alpha}^B\Bigr), \\
\ket{\phi_2}^{AB} & = \frac{1}{\sqrt{2}}\Bigl(\ket{0}^A\ket{i\alpha}^B+\ket{1}^A\ket{-i\alpha}^B\Bigr),
\end{aligned}
\end{equation}
which contains $N\times d=2\times2=4$ qumode states $\ket{\pm\alpha}^B$ and $\ket{\pm i\alpha}^B$. Assume $\alpha\in{\mathbb R}$. Figure \ref{fig:4x2state} provides a graphical illustration of the state.
\begin{figure}[ht]
\begin{center}
\includegraphics[width=10cm]{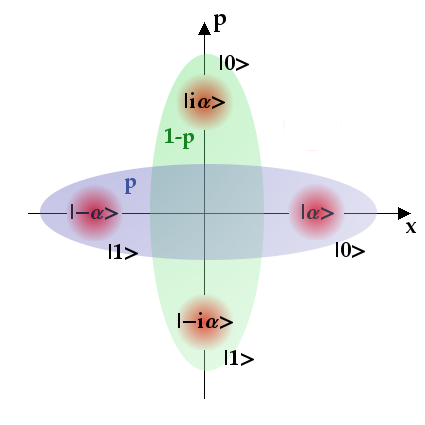}
\caption[Phase space illustration of the HE state $\hat{\rho}^{AB}$ corresponding to equation \eqref{eq:2x4Mstate1}.]{Phase space illustration of the HE state $\hat{\rho}^{AB}$ corresponding to equation \eqref{eq:2x4Mstate1}. The blue and green regions correspond to the first and second pure states $\ket{\phi_1}^{AB}$ and $\ket{\phi_2}^{AB}$ obtained with probability $p$ and $1-p$. The $\ket{0}$ and $\ket{1}$ state vectors denote the qubit states affiliated to the qumode states.}
\label{fig:4x2state}
\end{center}
\end{figure}
Of course again an inverse Gram-Schmidt process could be performed, which would yield a $8\times8$ density matrix in an effective ${\mathcal H}^A_2\otimes{\mathcal H}^B_4$ Hilbert space. The appropriate measure to be calculated would again be the negativity. Unfortunately, already for this $8\times8$ density matrix, the calculation of the eigenvalues of the PT state is not trivial. Since the negativity has already been calculated in the previous subsection and as the focus does not rest on heavy eigenvalue computations, this time witnessing with aid of the SV criteria is discussed instead.  

Try to witness entanglement with the well-known SV determinant $s_1$.
\begin{equation}
\begin{aligned}
s_1 & =\begin{vmatrix} 
1 & \braket{\hat{b}^\dagger} & \braket{\hat{a}\hat{b}^\dagger}  \\
\braket{\hat{b}} & \braket{\hat{b}^\dagger\hat{b}} & \braket{\hat{a}\hat{b}^\dagger\hat{b}}  \\
\braket{\hat{a}^\dagger\hat{b}} & \braket{\hat{a}^\dagger\hat{b}^\dagger\hat{b}} & \braket{\hat{a}^\dagger\hat{a}\hat{b}^\dagger\hat{b}}
\end{vmatrix} \\
& =\begin{vmatrix} 
1 & \frac{\mathrm{e}^{-2\alpha^2}}{2} & \frac{\alpha\mathrm{e}^{-2\alpha^2}}{2}(p+i(1-p))  \\
\frac{\mathrm{e}^{-2\alpha^2}}{2} & \frac{1}{2} & \frac{-\alpha}{2}(p+i(1-p))  \\
\frac{\alpha\mathrm{e}^{-2\alpha^2}}{2}(p-i(1-p)) & \frac{-\alpha}{2}(p-i(1-p)) & \frac{\alpha^2}{2}
\end{vmatrix} \\
& = \frac{\alpha^2}{2}\biggl[p(1-p)-\mathrm{e}^{-4\alpha^2}\Bigl(1-\frac{3p}{2}(1-p)\Bigr)\biggr].
\end{aligned}
\end{equation}
The graphical evaluation of this result is shown in figure \ref{fig:DetQubitQuquad}.
\begin{figure}[ht]
\subfloat{\includegraphics[width=0.5\textwidth]{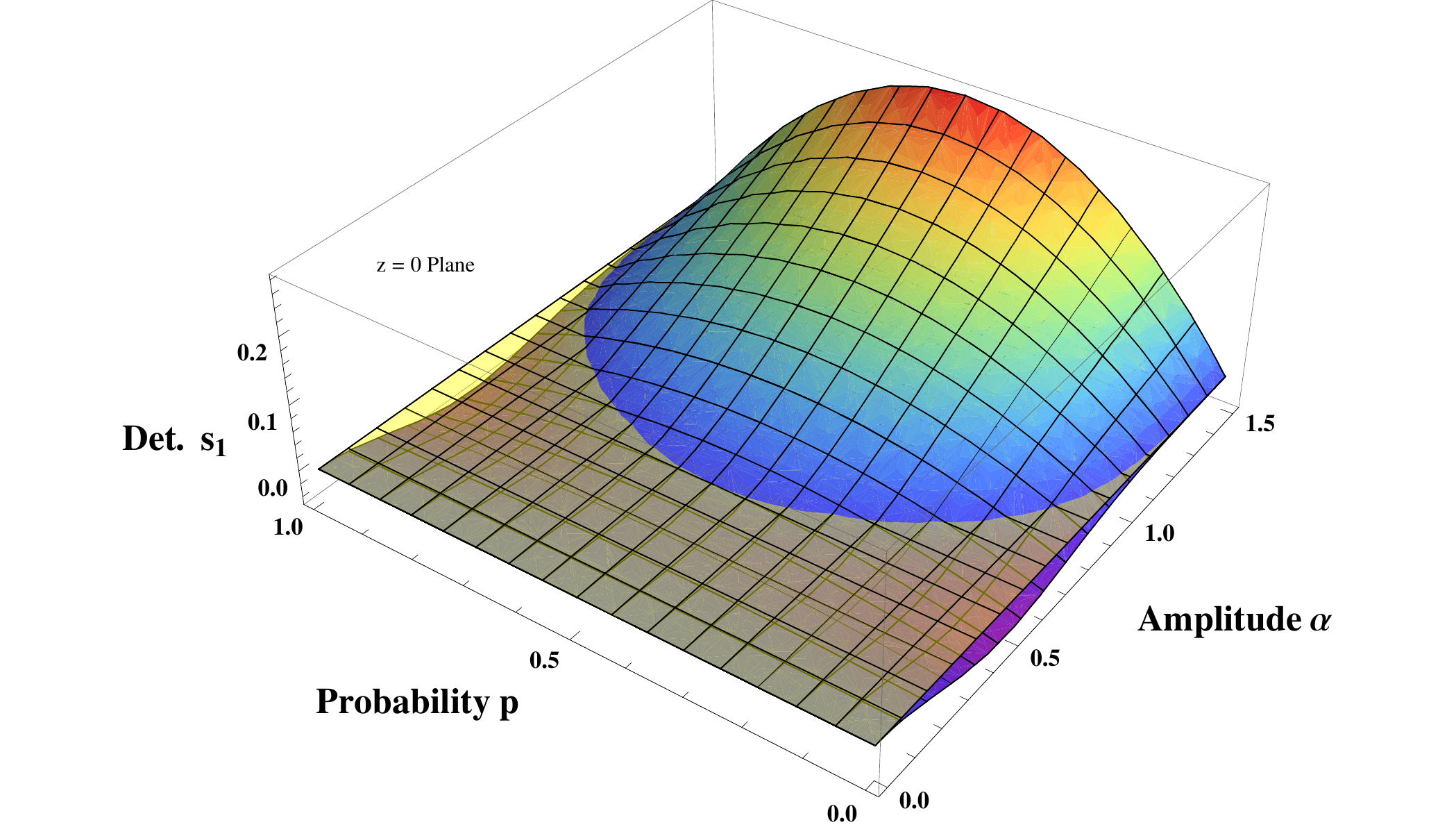}}\hfill
\subfloat{\includegraphics[width=0.5\textwidth]{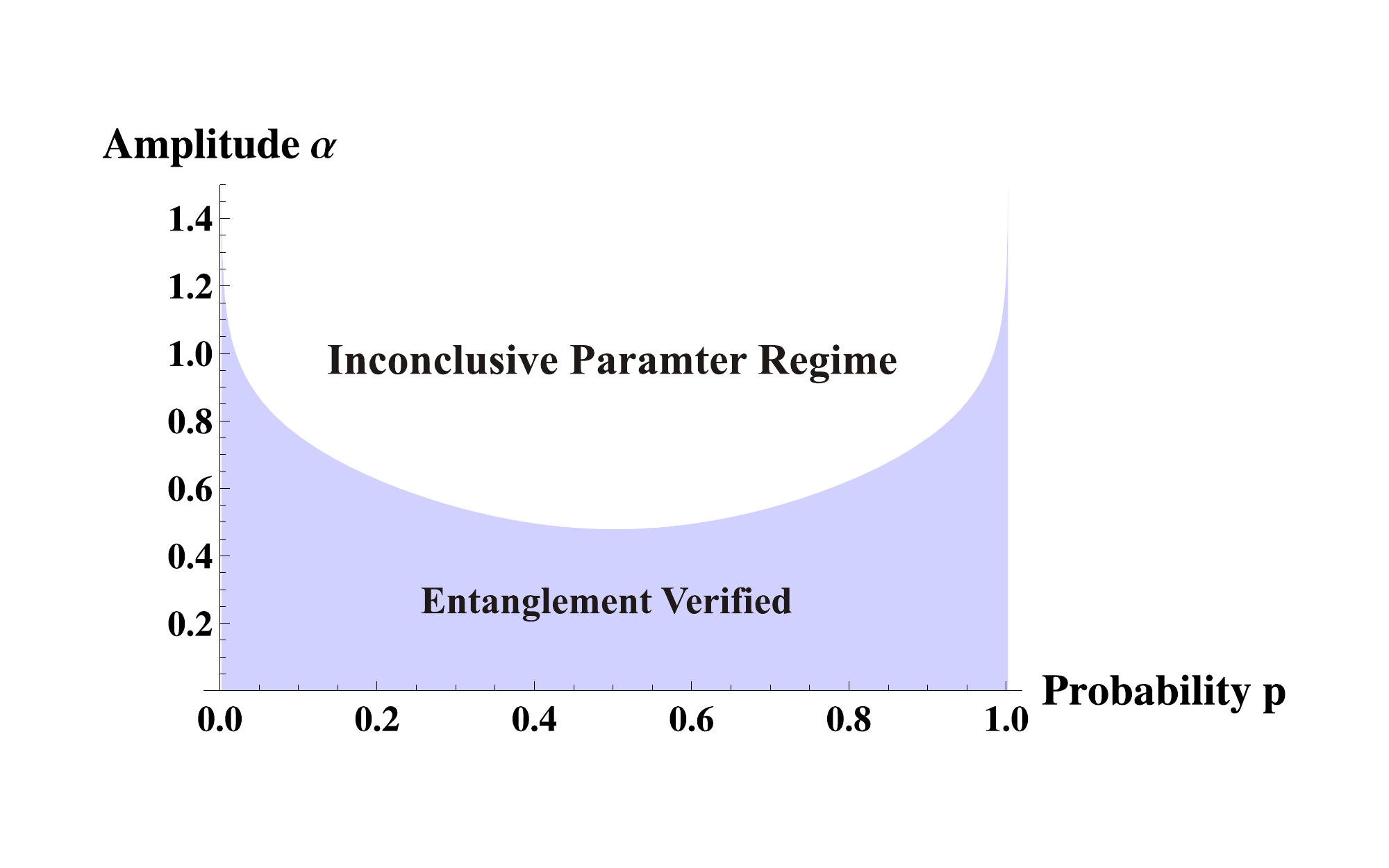}}\hfill
\caption[SV determinant $s_1$ evaluated for the state of equation \eqref{eq:2x4Mstate1}, and parameter regime of entanglement witnessing.]{The left diagram displays the SV determinant $s_1$ evaluated for the state of equation \eqref{eq:2x4Mstate1}. While for small $\alpha\neq0$ entanglement can be always detected, large $\alpha$ causes the determinant to fail. Note that the yellow plane denotes zero. On the right hand side the two regimes are plotted. For $p=0\lor1$ the state is pure and the determinant witnesses its entanglement for any $\alpha\neq0$.}
\label{fig:DetQubitQuquad}
\end{figure}
It shows that for small $\alpha\neq0$ entanglement is verified, whereas for large $\alpha$ the determinant fails. For $\alpha=0$ the state is separable and hence nothing can be detected. Furthermore, the more mixing occurs, i.e. $p\rightarrow\frac{1}{2}$, the harder to observe the entanglement. For $p=0\lor1$ the state is pure and the determinant witnesses its entanglement for any $\alpha\neq0$.

Concluding, it can be stated once again that the determinant $s_1$ is a great tool for the detection of HE, even though it has been effective DV HE in this case. With this statement the section is to be finished and finally the remarkable true hybrid entanglement is investigated.

\section{True Hybrid Entanglement}\label{sec:mixed2}
Finally, the third class of bipartite HE states is considered, the \textit{truly hybrid entangled} states. Recall their definition corresponding to equation \eqref{eq:HEdef} with $N=\infty$:
\begin{equation}\label{eq:THEdef}
\begin{aligned}
\hat{\rho}^{AB} &=\sum_{n=1}^\infty p_n\,\ket{\psi_n}_{AB}\bra{\psi_n}\,,\qquad p_n>0\;\forall \;n\,,\;\;\;\sum_{n=1}^\infty \;p_n=1, \\
\ket{\psi_n} &=\sum_{m=0}^{d-1} c_{nm}\ket{m}^A\ket{\psi_{nm}}^B\,,\qquad c_{nm}\in\mathbb C\,,\;\;\;\sum_{m=0}^{d-1} \;|c_{nm}|^2=1.
\end{aligned} 
\end{equation}
As can be seen from these equations, truly HE states possess an infinite number of qumode states $\ket{\psi_{nm}}^B$. Therefore, the Gram-Schmidt process does not work anymore and the states have to stay in a Hilbert space of the form ${\mathcal H}^A_d\otimes{\mathcal H}^B_\infty$. Hence, they are not effectively DV, but instead real combined $\text{DV}\otimes\text{CV}$ states and therefore \textit{truly} hybrid. This is where the name actually comes from. Unfortunately, this true "hybridness" has the effect that the states live in an overall infinite-dimensional Hilbert space in the non-Gaussian regime. This makes exact entanglement quantification impossible. Hence, the focus is only on the \textit{detection} of true hybrid entanglement.

In this section two examples for truly HE states are discussed, one more physically relevant and one artificially constructed state. Finally, general issues regarding entanglement quantification and witnessing in mixed states with an infinite number of pure state projectors are discussed.

\subsection{Noisy Qubit - Qumode Entanglement}\label{subsecB}
Once again, consider the state 
\begin{equation}\label{eq:noisyQbQm1}
\ket{\psi}^{AB}=\frac{1}{\sqrt{2}}\Bigl(\ket{0}^A\ket{\alpha}^B+\ket{1}^A\ket{-\alpha}^B\Bigr).
\end{equation}
In subsection \ref{subsecA}, this state was discussed when being transmitted through a one-sided amplitude damping channel of the form ${\mathbb 1}^A\otimes\Upsilon_{loss}^B$. Now the channel is taken as not only lossy but also afflicted with thermal noise. Writing this out, 
\begin{equation}\label{eq:noisyQbQm2}
\hat{\rho}'{}^{AB}=({\mathbb 1}^A\otimes\Upsilon_{thermal}^B)\ket{\psi}^{AB}\bra{\psi}
\end{equation}
is the state to be investigated. The photon noise channel can be modelled in a similar way as the amplitude damping channel. The coupled environment is taken not simply in the vacuum state but in a thermal state (see figure \ref{fig:NoiseCh} for a visualization).
\begin{figure}[ht]
\begin{center}
\includegraphics[width=12cm]{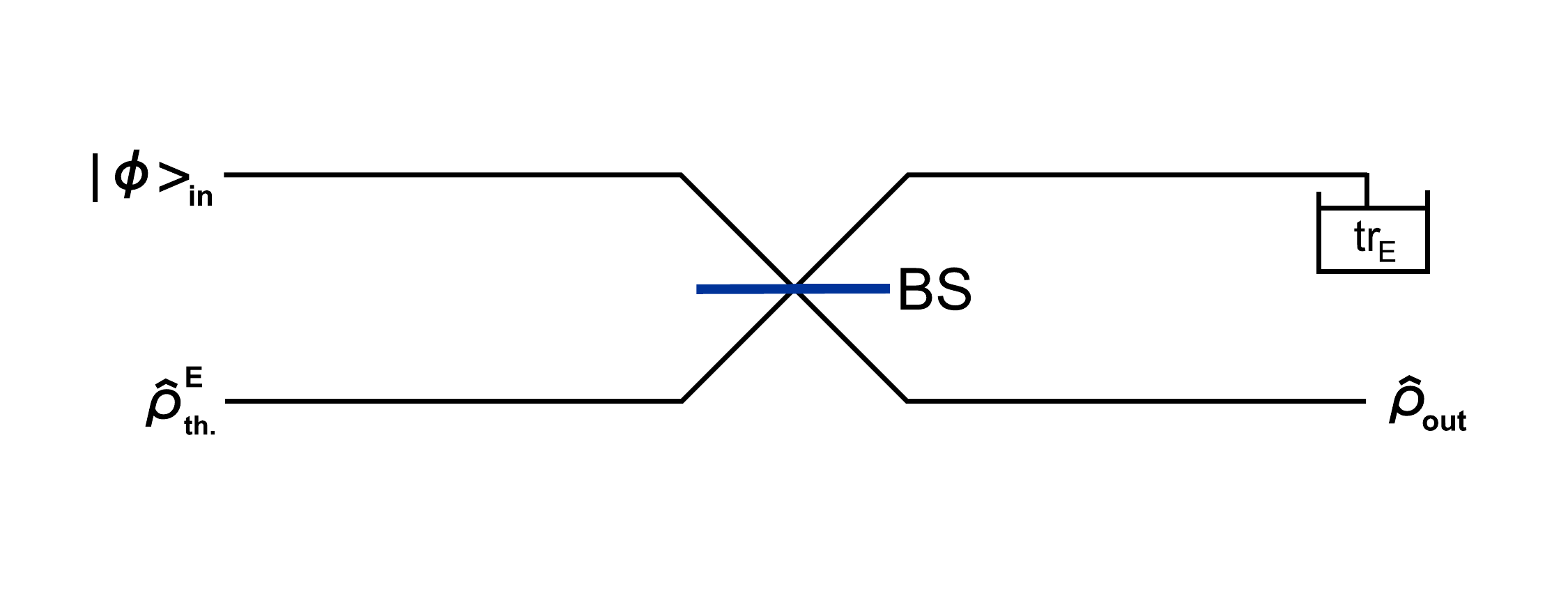}
\caption[Visualization of the photon noise channel.]{Visualization of the photon noise channel. The environment mode is in a thermal state and coupled via a beam splitter to the input state to be transmitted. Subsequently the environment mode is traced out and the decohered output state is obtained.}
\label{fig:NoiseCh}
\end{center}
\end{figure}
It is seen that the channel can again be written with aid of an ancilla Hilbert space (the environment E), and joint unitary evolution on the overall Hilbert space. Finally, a trace operation over the ancilla Hilbert space has to be performed. Hence,
\begin{equation}
\hat{\rho}'{}^{AB}=\mathrm{tr}_E[\hat{U}^{BE}\;\bigl( \ket{\phi}^{AB}\bra{\phi}\otimes\hat{\rho}^E_{thermal}\bigr) \;\hat{U}^{{BE}^\dagger}],
\end{equation}
with
\begin{itemize}
\item beam splitter unitary $\hat{U}^{BE}=e^{\theta(\hat{a}_E^\dagger \hat{a}_B-\hat{a}_B^\dagger \hat{a}_E)}$,
\item environment thermal state $\hat{\rho}^E_{thermal}=\sum_{n=0}^\infty \frac{\braket{n_{th.}}^n}{(1+\braket{n_{th.}})^{n+1}}\,\ket{n}_{E}\bra{n}$ \cite{Scully},
\item mean thermal photon number $\braket{n_{th.}}$, 
\item and beam splitter transmissivity $\eta=\cos^2\theta$.
\end{itemize}
The two main parameters characterizing the channel are the mean thermal photon number $\braket{n_{th.}}$ and the beam splitter transmissivity $\eta$. For $\braket{n_{th.}}=0$, the photon loss channel is obtained, which is therefore just a limiting case of this channel. 

For the output, 
\begin{equation}\label{eq:noisyQbQm3}
\begin{aligned}
\hat{\rho}'{}^{AB} &=({\mathbb 1}^A\otimes\Upsilon_{thermal}^B)\ket{\psi}^{AB}\bra{\psi} \\
&=\mathrm{tr}_E[\hat{U}^{BE}\;\bigl( \ket{\phi}^{AB}\bra{\phi}\otimes\hat{\rho}^E_{thermal}\bigr) \;\hat{U}^{{BE}^\dagger}] \\
&=\frac{1}{2}\sum_{n=0}^\infty \rho_n^{th.} \sum_{k,l=0}^n f_{nk}(\eta)f_{nl}(\eta)\biggl(A_{nkl}^{\alpha\alpha}(\eta)\;\ket{0}^A\bra{0}\,\otimes\,\hat{a}^{\dagger^k}\ket{\sqrt{\eta}\alpha}^B\bra{\sqrt{\eta}\alpha}\hat{a}^l  \\
&+ A_{nkl}^{-\alpha-\alpha}(\eta)\;\ket{1}^A\bra{1}\,\otimes\,\hat{a}^{\dagger^k}\ket{-\sqrt{\eta}\alpha}^B\bra{-\sqrt{\eta}\alpha}\hat{a}^l   \\
&+ A_{nkl}^{\alpha-\alpha}(\eta)\;\ket{0}^A\bra{1}\,\otimes\,\hat{a}^{\dagger^k}\ket{\sqrt{\eta}\alpha}^B\bra{-\sqrt{\eta}\alpha}\hat{a}^l \\
&+ A_{nkl}^{-\alpha\alpha}(\eta)\;\ket{1}^A\bra{0}\,\otimes\,\hat{a}^{\dagger^k}\ket{-\sqrt{\eta}\alpha}^B\bra{\sqrt{\eta}\alpha}\hat{a}^l\biggr)
\end{aligned}
\end{equation}
is obtained. $\hat{a}$ and $\hat{a}^\dagger$ are the mode operators of system B. $\rho_n^{th.}$ denotes the thermal photon distribution $\frac{\braket{n_{th.}}^n}{(1+\braket{n_{th.}})^{n+1}}$, and $f_{nk}(\eta)$ and $A_{nkl}^{\alpha\beta}(\eta)$ are defined as
\begin{align}
f_{nk}(\eta)&:=\frac{1}{\sqrt{n!}}{\binom n k}\sqrt{\eta}^{n-k}(-\sqrt{1-\eta})^k, \\
A_{nkl}^{\alpha\beta}(\eta)&:=\braket{\sqrt{1-\eta}\beta|\hat{a}^{{}^{n-k}}\hat{a}^{\dagger^{n-l}}|\sqrt{1-\eta}\alpha}.
\end{align}
For the calculation, the following relations have been exploited.
\begin{align} \label{equse1}
\hat{U}^{{BE}^\dagger}\hat{U}^{BE} & =\mathbb{1}, \\ \label{equse2}
\hat{U}^{BE}(\hat{a}^{E^\dagger})^n\hat{U}^{{BE}^\dagger} & =(\sqrt{\eta}\hat{a}^{E^\dagger}-\sqrt{1-\eta}\hat{a}^{B^\dagger})^n, \\ \label{equse3} \hat{U}^{BE}D^B(\alpha)\hat{U}^{{BE}^\dagger} & =D^B(\sqrt{\eta}\alpha)D^E(\sqrt{1-\eta}\alpha), \\ \label{equse4}
\hat{U}^{BE}\ket{0}^B\ket{0}^E & =\ket{0}^B\ket{0}^E.
\end{align}
It can be infered from equation \eqref{eq:noisyQbQm3} that the state $\hat{\rho}'{}^{AB}$ clearly is truly HE, as it contains an infinite number of qumode states $\{\hat{a}^{\dagger^k}\ket{\pm\sqrt{\eta}\alpha}_B:k=0,1,\ldots,\infty\}$. Hence, \textit{true hybrid qubit-qumode entanglement} is obtained. 

Furthermore, writing the state in this form, \eqref{eq:noisyQbQm3}, illustrates the effects of the thermal photon noise channel in a very concrete way. On the one hand, the damping effect due to the beam splitter is clearly visible in form of $\sqrt{\eta}$ in the states $\hat{a}^{\dagger^k}\ket{\pm\sqrt{\eta}\alpha}$. On the other hand, there is not only damping, but also thermal photon noise, which becomes manifest in the creation operators $\hat{a}^{\dagger^k}$ in the states $\hat{a}^{\dagger^k}\ket{\pm\sqrt{\eta}\alpha}$. Thermal photons "leak into the system" and are "created" in the damped coherent states. Finally, it is clearly visible, how each term of $\hat{\rho}^E_{thermal}=\sum_{n=0}^\infty \rho_n^{th.}\,\ket{n}_{E}\bra{n}$ results in the creation of maximally $n$ noise photons in the coherent states. However, as descriptive \eqref{eq:noisyQbQm3} is, as inapplicable it is for further calculations. To perform entanglement witnessing in this state, the moments required for the SV determinant $s_1$ have to be computed. Unfortunately, when this is tried, making use of the state as written in equation \eqref{eq:noisyQbQm3}, intractable infinite sums are obtained whose convergence behavior is impossible to be worked out exactly. Of course, truncation at some very large $n$ could be performed, which would probably result in very accurate outcomes. However, such a procedure is opposed to the actual intention of analyzing true HE, since a truncated state is not truly hybrid entangled anymore. This is a point which makes the investigation of true HE particularly challenging. Infinite sums or integrals emerge, which have to be calculated \textit{exactly}.

At least in this case, there is a way out of the dilemma. Instead of utilizing $\hat{\rho}^E_{thermal}=\sum_{n=0}^\infty \rho_n^{th.}\,\ket{n}_{E}\bra{n}$ in the Fock basis, make use of the coherent state basis. Sufficiently classical states $\hat{\rho}$ can be written as $\hat{\rho}=\int_{\mathbb C}d^2\alpha\, P_{\hat{\rho}}(\alpha,\alpha^\ast)\ket{\alpha}\bra{\alpha}$, where $P_{\hat{\rho}}(\alpha,\alpha^\ast)$ denotes the Glauber-Sudarshan P-representation of $\hat{\rho}$. The thermal state is of course sufficiently classical and hence its P-representation exists \cite{Scully}:
\begin{equation}
P_{\hat{\rho}_{thermal}}(\alpha,\alpha^\ast)=\frac{1}{\pi\braket{n_{th.}}}\mathrm{e}^{-\frac{|\alpha|^2}{\braket{n_{th.}}}}.
\end{equation}
Therefore,
\begin{equation}
\hat{\rho}_{thermal}=\frac{1}{\pi\braket{n_{th.}}}\int_{\mathbb C} d^2\alpha\,\mathrm{e}^{-\frac{|\alpha|^2}{\braket{n_{th.}}}}\ket{\alpha}\bra{\alpha}.
\end{equation}
Expressing the thermal state in this form, the action of the thermal channel on an element $\ket{\alpha}\bra{\beta}$ looks like
\begin{equation}
\begin{aligned}
\Upsilon_{thermal}(\ket{\alpha}^B\bra{\beta}) &=\frac{1}{\pi\braket{n_{th.}}}\mathrm{tr}_E[\hat{U}^{BE}\;\bigl( \ket{\alpha}^B\bra{\beta}\otimes\int_{\mathbb C} d^2\gamma\,\mathrm{e}^{-\frac{|\gamma|^2}{\braket{n_{th.}}}}\ket{\gamma}^E\bra{\gamma}\bigr) \;\hat{U}^{{BE}^\dagger}] \\
&= \frac{1}{\pi\braket{n_{th.}}}\int_{\mathbb C} d^2\gamma\,\mathrm{e}^{-\frac{|\gamma|^2}{\braket{n_{th.}}}}\mathrm{tr}_E[\hat{U}^{BE}\;\bigl( \ket{\alpha}^B\bra{\beta}\otimes\ket{\gamma}^E\bra{\gamma}\bigr) \;\hat{U}^{{BE}^\dagger}].
\end{aligned}
\end{equation}
Define $\hat{\chi}:=\hat{U}^{BE}\;\bigl( \ket{\alpha}^B\bra{\beta}\otimes\ket{\gamma}^E\bra{\gamma}\bigr) \;\hat{U}^{{BE}^\dagger}$ and calculate
\begin{equation}
\begin{aligned}
\hat{\chi} & =\hat{U}^{BE}\;\bigl( \ket{\alpha}^B\bra{\beta}\otimes\ket{\gamma}^E\bra{\gamma}\bigr) \;\hat{U}^{{BE}^\dagger} \\
& = \hat{U}^{BE}\hat{D}^B(\alpha)\hat{D}^E(\gamma)\biggl( \ket{0}^B\bra{0}\otimes\ket{0}^E\bra{0}\biggr)\hat{D}^{E^\dagger}(\gamma)\hat{D}^{B^\dagger}(\beta)\hat{U}^{{BE}^\dagger} \\
& = (\hat{U}^{BE}\hat{D}^B(\alpha)\hat{U}^{{BE}^\dagger})\,(\hat{U}^{BE}\hat{D}^E(\gamma)\hat{U}^{{BE}^\dagger})\,\hat{U}^{BE}\biggl( \ket{0}^B\bra{0} \\
& \quad\otimes\ket{0}^E\bra{0}\biggr)\hat{U}^{{BE}^\dagger}(\hat{U}^{BE}\hat{D}^{E^\dagger}(\gamma)\hat{U}^{{BE}^\dagger})(\hat{U}^{BE}\hat{D}^{B^\dagger}(\beta)\hat{U}^{{BE}^\dagger}) \\
& =\hat{D}^B(\sqrt{\eta}\alpha)\hat{D}^E(\sqrt{1-\eta}\alpha)\hat{D}^E(\sqrt{\eta}\gamma)\hat{D}^B(-\sqrt{1-\eta}\gamma)\biggl( \ket{0}^B\bra{0} \\ 
& \quad\otimes\ket{0}^E\bra{0}\biggr)\hat{D}^{B^\dagger}(-\sqrt{1-\eta}\gamma)\hat{D}^{E^\dagger}(\sqrt{\eta}\gamma)\hat{D}^{E^\dagger}(\sqrt{1 -\eta}\beta)\hat{D}^{B^\dagger}(\sqrt{\eta}\beta) \\
& =\ket{\sqrt{\eta}\alpha-\sqrt{1-\eta}\gamma}^B\bra{\sqrt{\eta}\beta-\sqrt{1-\eta}\gamma}\otimes\ket{\sqrt{1-\eta}\alpha+\sqrt{\eta}\gamma}^E\bra{\sqrt{1-\eta}\beta +\sqrt{\eta}\gamma}, 
\end{aligned}
\end{equation}
where once again the relations \eqref{equse1}, \eqref{equse3}, \eqref{equse4} have been exploited. Additionally use has been made of $\hat{U}^{BE}D^E(\gamma)\hat{U}^{{BE}^\dagger} =D^E(\sqrt{\eta}\gamma)D^B(-\sqrt{1-\eta}\gamma)$. The derived expression for $\hat{\chi}$ yields 
\begin{equation}
\begin{aligned}
\Upsilon_{thermal}(\ket{\alpha}^B\bra{\beta}) & = \frac{1}{\pi\braket{n_{th.}}}\int_{\mathbb C} d^2\gamma\,\mathrm{e}^{-\frac{|\gamma|^2}{\braket{n_{th.}}}}\braket{\sqrt{1-\eta}\beta +\sqrt{\eta}\gamma|\sqrt{1-\eta}\alpha+\sqrt{\eta}\gamma} \\
& \quad\cdot\; \ket{\sqrt{\eta}\alpha-\sqrt{1-\eta}\gamma}^B\bra{\sqrt{\eta}\beta-\sqrt{1-\eta}\gamma}.
\end{aligned}
\end{equation}
Making use of this auxiliary calculation, the overall state $\hat{\rho}'{}^{AB}=({\mathbb 1}^A\otimes\Upsilon_{thermal}^B)\ket{\psi}^{AB}\bra{\psi}$ with $\ket{\psi}^{AB}=\frac{1}{\sqrt{2}}\Bigl(\ket{0}^A\ket{\alpha}^B+\ket{1}^A\ket{-\alpha}^B\Bigr)$ is given by
\begin{equation}\label{eq:noisyQbQm4}
\begin{aligned}
\hat{\rho}'{}^{AB} & =({\mathbb 1}^A\otimes\Upsilon_{thermal}^B)\ket{\psi}^{AB}\bra{\psi} \\
& = \frac{1}{2\pi\braket{n_{th.}}}\int_{\mathbb C} d^2\gamma\,\mathrm{e}^{-\frac{|\gamma|^2}{\braket{n_{th.}}}}\biggl( \\
&\quad\qquad\quad\,\qquad\,  \ket{0}^A\bra{0}\otimes\ket{\sqrt{\eta}\alpha-\sqrt{1-\eta}\gamma}^B\bra{\sqrt{\eta}\alpha-\sqrt{1-\eta}\gamma}  \\
&\quad+\qquad\quad\,\,\quad \ket{1}^A\bra{1}\otimes\ket{-\sqrt{\eta}\alpha-\sqrt{1-\eta}\gamma}^B\bra{-\sqrt{\eta}\alpha-\sqrt{1-\eta}\gamma}  \\
&\quad+ \tilde{A}^{\alpha-\alpha\gamma}(\eta)\ket{0}^A\bra{1}\otimes\ket{\sqrt{\eta}\alpha-\sqrt{1-\eta}\gamma}^B\bra{-\sqrt{\eta}\alpha-\sqrt{1-\eta}\gamma}  \\ 
&\quad+ \tilde{A}^{-\alpha\alpha\gamma}(\eta)\ket{1}^A\bra{0}\otimes\ket{-\sqrt{\eta}\alpha-\sqrt{1-\eta}\gamma}^B\bra{\sqrt{\eta}\alpha-\sqrt{1-\eta}\gamma}  \biggr), 
\end{aligned}
\end{equation}
with
\begin{equation}
\begin{aligned}
\tilde{A}^{\alpha\beta\gamma}(\eta):&=\braket{\sqrt{1-\eta}\beta +\sqrt{\eta}\gamma|\sqrt{1-\eta}\alpha+\sqrt{\eta}\gamma} \\
&=\mathrm{exp}\Bigl[-\frac{1}{2}|\sqrt{1-\eta}\beta +\sqrt{\eta}\gamma|^2-\frac{1}{2}|\sqrt{1-\eta}\alpha+\sqrt{\eta}\gamma|^2 \\
&\quad\;\,\qquad+(\sqrt{1-\eta}\beta^{\ast} +\sqrt{\eta}\gamma^{\ast})(\sqrt{1-\eta}\alpha+\sqrt{\eta}\gamma)\Bigr].
\end{aligned}
\end{equation}
This form of the state $\hat{\rho}'{}^{AB}$, \eqref{eq:noisyQbQm4}, is not as insightful as \eqref{eq:noisyQbQm3}. However, it is mathematically much more controllable. Instead of infinite sums, in this form just integrals which can be easily calculated occur. As an example, consider the moment $\braket{\hat{a}^\dagger\hat{a}\hat{b}^\dagger\hat{b}}$ (actually $\hat{a}$ and $\hat{a}^\dagger$ are the mode operators corresponding to system B and $\hat{b}$ and $\hat{b}^\dagger$ those of system A.):
\begin{equation}
\begin{aligned}
\braket{\hat{a}^\dagger\hat{a}\hat{b}^\dagger\hat{b}}&=\frac{1}{2\pi\braket{n_{th.}}}\int_{\mathbb C} d^2\gamma\,\mathrm{e}^{-\frac{|\gamma|^2}{\braket{n_{th.}}}}\braket{-\sqrt{\eta}\alpha-\sqrt{1-\eta}\gamma|\hat{a}^\dagger\hat{a}|-\sqrt{\eta}\alpha-\sqrt{1-\eta}\gamma} \\
&=\frac{1}{2\pi\braket{n_{th.}}}\int_{\mathbb C} d^2\gamma\,\mathrm{e}^{-\frac{|\gamma|^2}{\braket{n_{th.}}}}\biggl(\eta|\alpha|^2+(1-\eta)|\gamma|^2+\sqrt{\eta(1-\eta)}(\alpha\gamma^\ast+\alpha^\ast\gamma)\biggr) \\
&=\frac{1}{2\pi\braket{n_{th.}}}\int_{\mathbb R}\int_{\mathbb R} dx\,dy\,\mathrm{e}^{-\frac{x^2+y^2}{\braket{n_{th.}}}}\biggl(\eta|\alpha|^2+(1-\eta)(x^2+y^2) \\
& \qquad +\sqrt{\eta(1-\eta)}\bigl(\alpha(x-iy)+\alpha^\ast(x+iy)\bigr)\biggr) \\
&=\frac{\eta|\alpha|^2}{2}+\frac{1-\eta}{\sqrt{\pi\braket{n_{th.}}}}\int_{\mathbb R} dx\,x^2\mathrm{e}^{-\frac{x^2}{\braket{n_{th.}}}} \\
&=\frac{\eta|\alpha|^2}{2}+\frac{1-\eta}{2}\braket{n_{th.}},
\end{aligned}
\end{equation}
where $\gamma=x+iy$ has been substituted. Additionally, some integral identities have been exploited, which are presented in appendix \ref{Apx:Ints}.

The calculation of the other moments in the SV determinant $s_1$ proceeds similarly and finally, 
\begin{equation}
s_{1_{\hat{\rho}'{}^{AB}}}(\alpha,\eta,\braket{n_{th.}})=\frac{1-\eta}{4}\braket{n_{th.}}\biggl(1-\frac{\mathrm{e}^{-4|\alpha|^2}}{2}\biggr)-\frac{\eta |\alpha|^2}{2}\mathrm{e}^{-4|\alpha|^2}
\end{equation}
is obtained. A graphical illustration for $\eta=\frac{2}{3}$ is shown in figure \ref{fig:ThQbQm1}.
\begin{figure}[ht]
\subfloat{\includegraphics[width=0.55\textwidth]{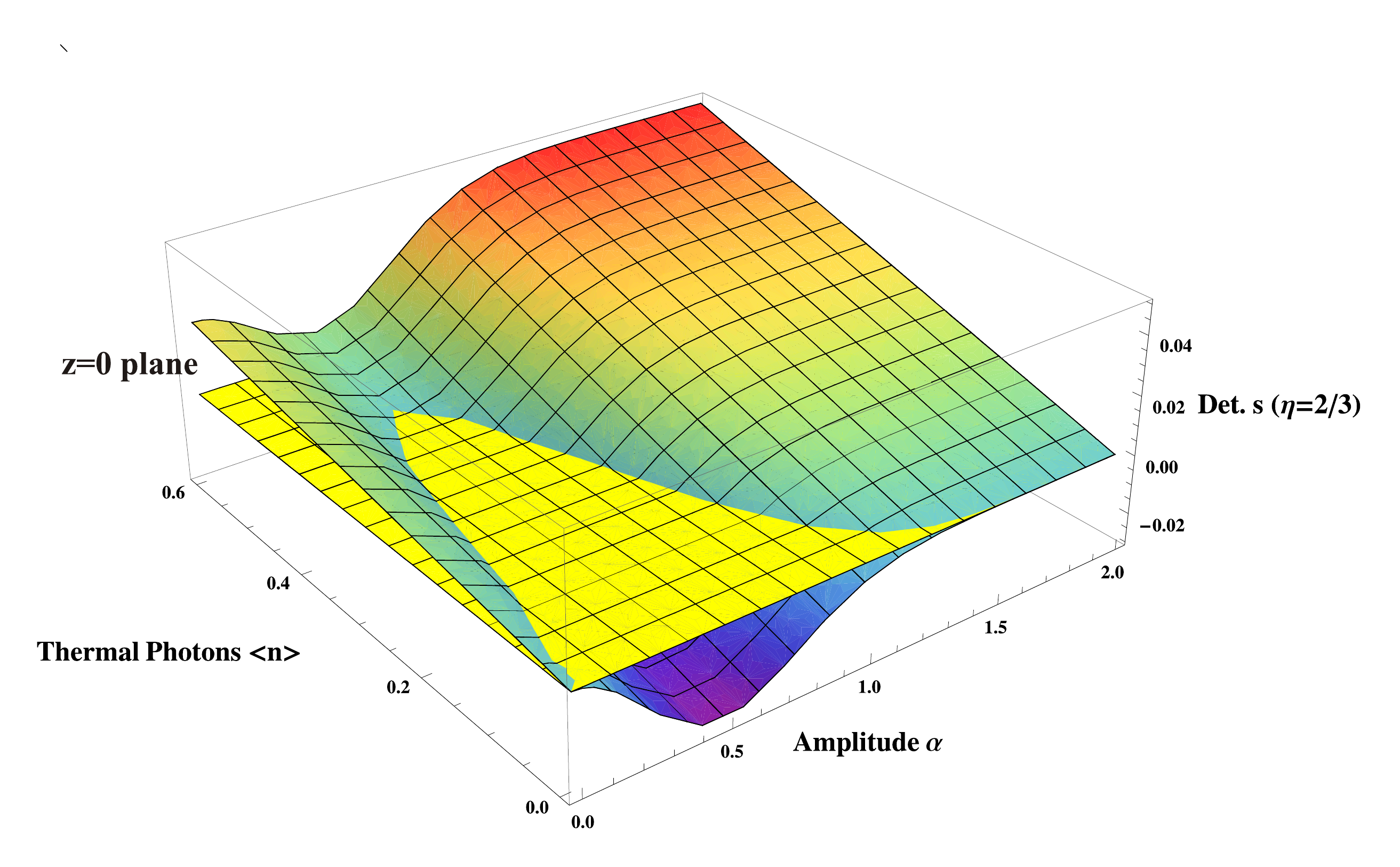}}\hfill
\subfloat{\includegraphics[width=0.45\textwidth]{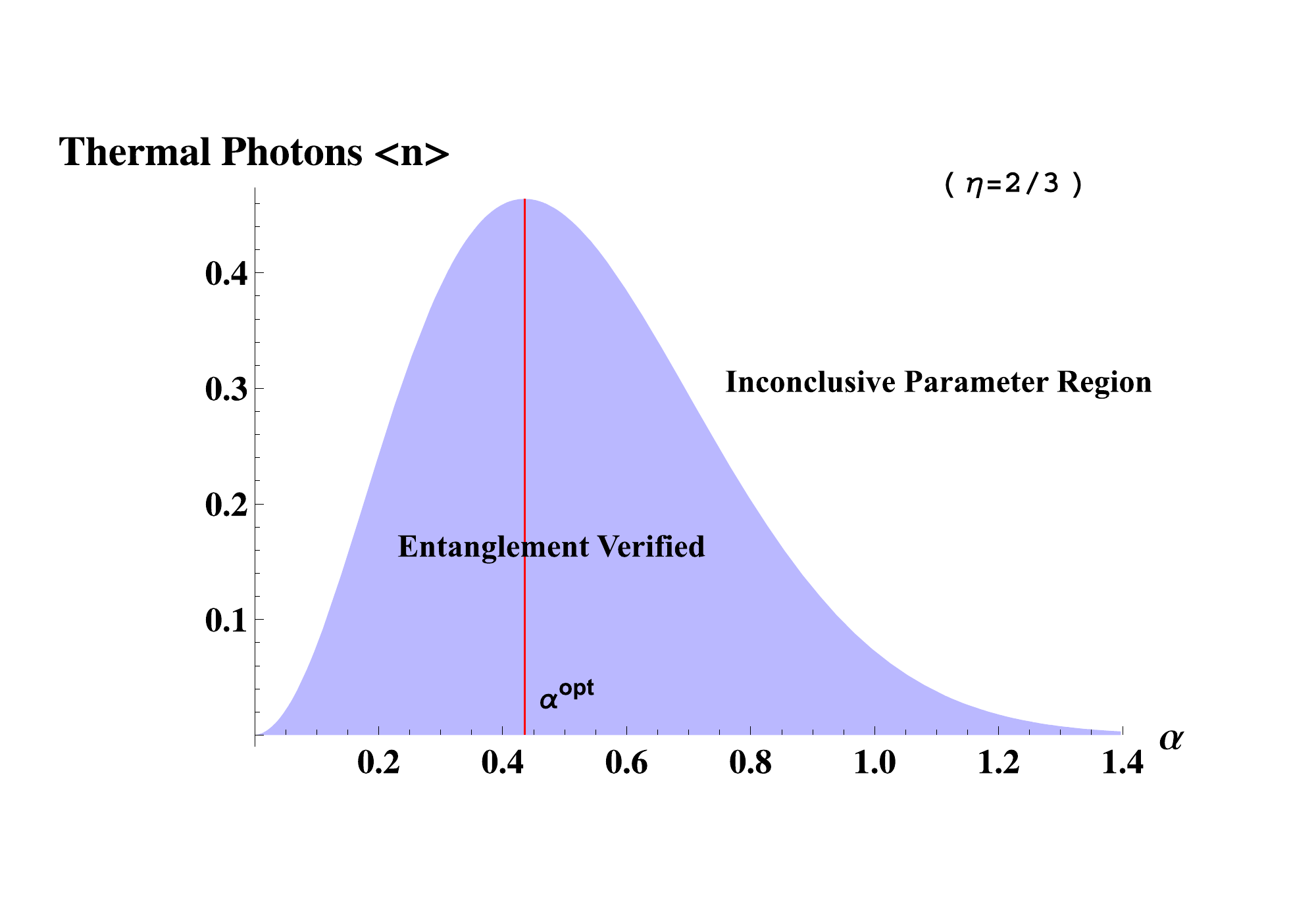}}\hfill
\caption[SV determinant $s_{1_{\hat{\rho}'{}^{AB}}}(\alpha,\eta=\frac{2}{3},{\braket{n_{th.}}})$ for the state $\hat{\rho}'{}^{AB}$ with $\eta=\frac{2}{3}$, and parameter regimes of entanglement witnessing.]{The left hand graph displays the SV determinant $s_{1_{\hat{\rho}'{}^{AB}}}(\alpha,\eta=\frac{2}{3},{\braket{n_{th.}}})$ for the state $\hat{\rho}'{}^{AB}$ with $\eta=\frac{2}{3}$. Without loss of generality $\alpha\in\mathbb R$ has been assumed. On the right hand side the the two regimes are plotted. There can be clearly identified a region, where the determinant is below zero. Hence, true HE can be witnessed. Furthermore, there is again a trade-off behavior, which results in the existence of an optimal $\alpha^{opt}$. This $\alpha^{opt}$ corresponds to the most robust state $\hat{\rho}'{}^{AB}_{\alpha^{opt}}$ regarding entanglement witnessing for fixed $\eta=\frac{2}{3}$ and varying mean thermal photon number $\braket{n_{th.}}$.}
\label{fig:ThQbQm1}
\end{figure}
It can be observed that there clearly is a parameter region in which entanglement can be detected. Also note the trade-off behavior and the optimal $\alpha^{opt}$ which corresponds to the most robust state $\hat{\rho}'{}^{AB}_{\alpha^{opt}}$ regarding entanglement witnessing for fixed $\eta=\frac{2}{3}$ and varying mean thermal photon number $\braket{n_{th.}}$. Furthermore, it is worth pointing out that the witnessed entanglement in this case is actually true HE, in contrast to the entanglement witnessed in previous sections. It can be concluded that the SV determinants provide a suitable tool for the detection of true HE.

When setting $s_{1_{\hat{\rho}'{}^{AB}}}(\alpha,\eta,\braket{n_{th.}})<0$ and solving this equation for $\braket{n_{th.}}$,
\begin{equation}\label{eq:ThQbQmCut}
\braket{n_{th.}}<\frac{4\eta|\alpha|^2}{(1-\eta)(2\mathrm{e}^{4|\alpha|^2}-1)}
\end{equation}
is obtained. For parameters $(\alpha,\eta,\braket{n_{th.}})$ satisfying this inequality, entanglement is detected. Furthermore, the inequality can be used to define a surface. The parameters $(\alpha,\eta,\braket{n_{th.}})_{ent}$ for which entanglement is verified lie below this surface (see figure \ref{fig:ThQbQm2}).
\begin{figure}[ht]
\begin{center}
\includegraphics[width=11cm]{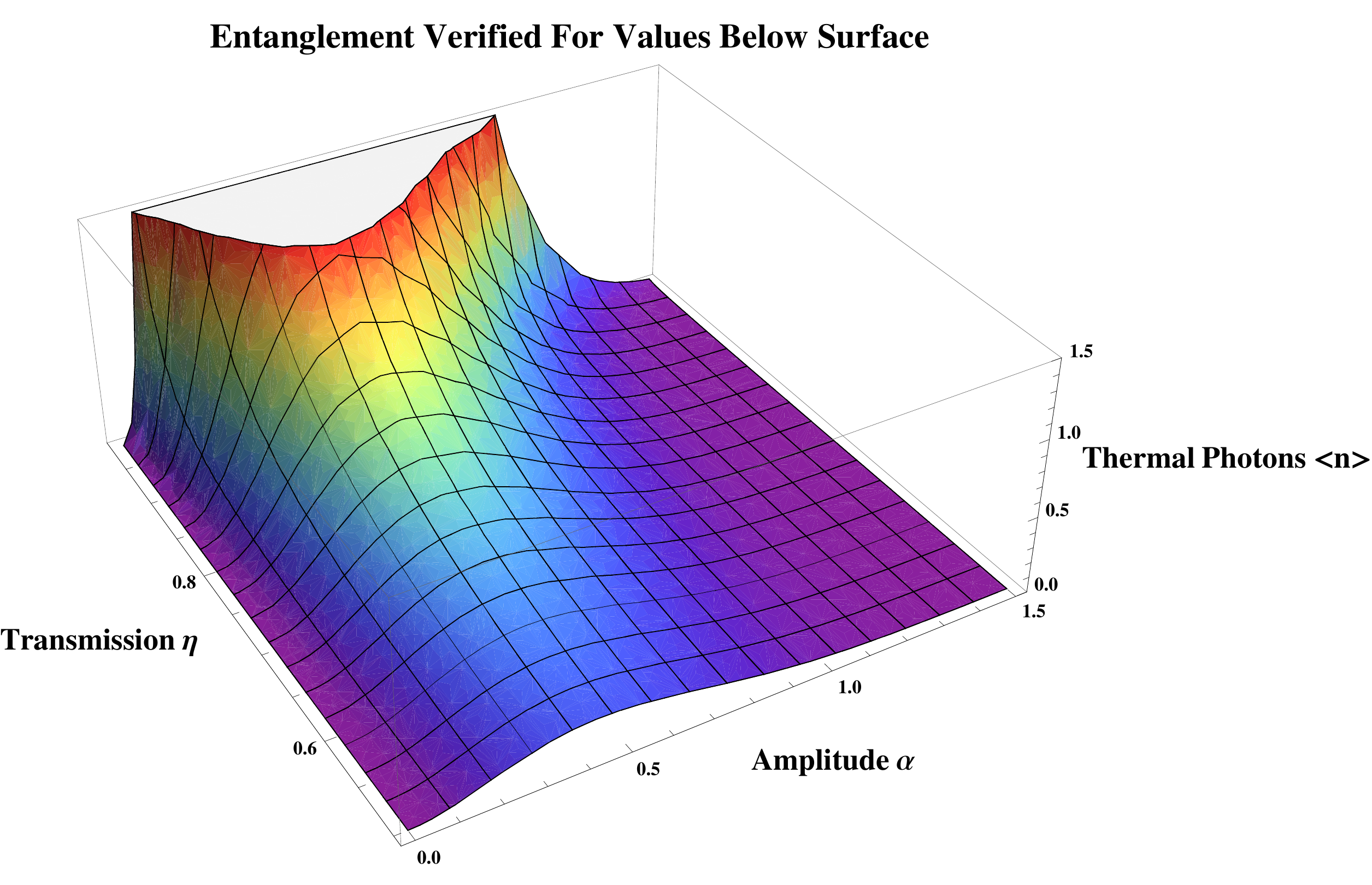}
\caption[Surface defined by equation \eqref{eq:ThQbQmCut} describing entanglement witnessing in the truly HE state $\hat{\rho}'{}^{AB}$ corresponding to equation \eqref{eq:noisyQbQm4}.]{Surface defined by equation \eqref{eq:ThQbQmCut} ($\alpha\in\mathbb R$). For parameter triples $(\alpha,\eta,\braket{n_{th.}})$ lying below it, entanglement is witnessed. Once again, the trade-off behavior and an optimal $\alpha^{opt}$ for which entanglement can be detected for the strongest possible noise can be recognized.}
\label{fig:ThQbQm2}
\end{center}
\end{figure}
However, it is rather cumbersome to read off exact parameters in such a 3D plot. Hence, regions of successful entanglement detection are plotted in figure \ref{fig:ThQbQm3} for different values of $\eta$. 
\begin{figure}[ht]
\begin{center}
\includegraphics[width=11cm]{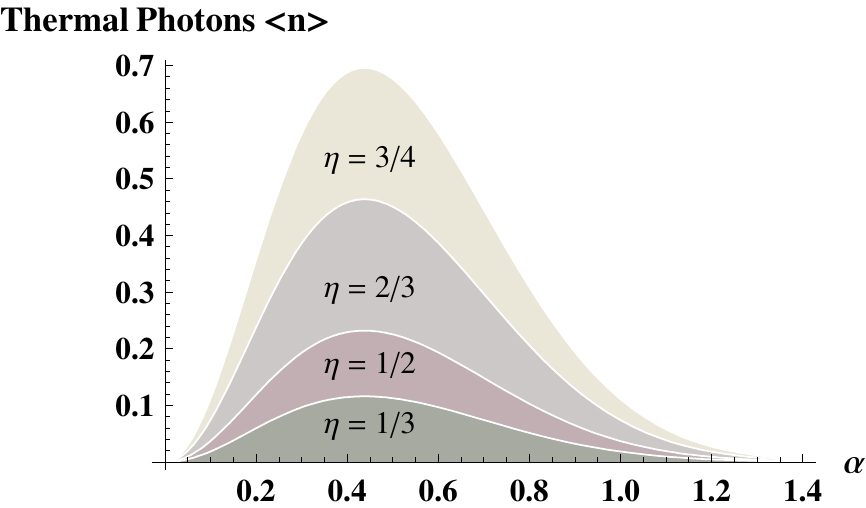}
\caption[Regions of entanglement detection in the truly HE state $\hat{\rho}'{}^{AB}$ corresponding to equation \eqref{eq:noisyQbQm4} for different values of transmissivity $\eta$.]{Regions of entanglement detection for a few different values of transmissivity $\eta$. It is interesting to note that the optimal $\alpha^{opt}$ seems to have a fixed value (gain $\alpha\in\mathbb R$).}
\label{fig:ThQbQm3}
\end{center}
\end{figure}
As expected, the higher the transmissivity $\eta$ the greater the parameter regions of entanglement detection. Furthermore, the plot shows that the optimal $\alpha^{opt}$ seems not to depend on $\eta$. Actually, it is straightforward to calculate $\alpha^{opt}$. Simply solve
\begin{equation}
\frac{\partial}{\partial\alpha}\,\frac{4\eta|\alpha|^2}{(1-\eta)(2\mathrm{e}^{4|\alpha|^2}-1)}=0
\end{equation} 
for $\alpha$. This yields $\alpha^{opt}\approx 0.44$, independent of $\eta$. Furthermore, also a similar equation to \eqref{eq:ThQbQmCut} can be written down when solving $s_{1_{\hat{\rho}'{}^{AB}}}(\alpha,\eta,\braket{n_{th.}})<0$ for $\eta$:
\begin{equation}
\eta>\frac{\braket{n_{th.}}}{\braket{n_{th.}}+\frac{4|\alpha|^2}{(2\mathrm{e}^{4|\alpha|^2}-1)}}.
\end{equation}
Calculating $\alpha^{opt}$ with aid of this equation yields of course, as expected, the same result $\alpha^{opt}\approx 0.44$, independent of $\braket{n_{th.}}$. It is actually quite interesting that the optimal amplitude $\alpha^{opt}$ regarding entanglement witnessing with the SV determinant $s_1$ does not depend on the channel at all. $\alpha^{opt}$ seems to be only determined by the choice of the SV determinant.

Compare this to figure \ref{fig:CADampCh}, which shows the concurrence of the state when setting $\braket{n_{th.}}=0$. There, the optimal $\tilde{\alpha}^{opt}$ does very well depend on the transmissivity $\eta$: The greater the transmissivity, the higher $\tilde{\alpha}^{opt}$. This is quite remarkable. It can be infered that the ability to detect entanglement in $\hat{\rho}'{}^{AB}$, depending only on the choice of the SV determinant, does not scale as the entanglement of $\hat{\rho}'{}^{AB}$ itself, which of course depends on the thermal channel's parameters $\eta$ and $\braket{n_{th.}}$.

Concluding, a first example of true HE has been presented and entanglement detection has been performed. Some interesting properties regarding the comparison between entanglement witnessing and entanglement quantification have been identified. Furthermore, the exemplary state is additionally also physically relevant, as it occurs in quantum information applications such as QKD. 

Finally, it is worth to mention that the calculations regarding the thermal photon noise channel have led to the derivation of a set of corresponding Kraus operators. They are presented in appendix \ref{Apx:KrausOpTh}.

\subsection{Another Truly Hybrid Entangled State}\label{subsec:OtherTrueHE}
Another example for a truly HE state which admits entanglement detection shall be presented. However, this state is rather artificially constructed and probably not of great physical relevance. Nevertheless, it is truly HE and enables nice HE witnessing:
\begin{equation}\label{eq:trueHE2}
\begin{aligned}
\hat{\rho}^{AB}&=\sum_{n=1}^{\infty}p_n\ket{\psi_n}^{AB}\bra{\psi_n}, \\
\ket{\psi_n}^{AB}&=\frac{1}{\sqrt{2}}\Bigl(\ket{0}^A\ket{\sqrt{n}\alpha}^B+\mathrm{e}^{i\phi}\ket{1}^A\ket{-\sqrt{n}\alpha}^B\Bigr), \\
p_n&=\frac{1-x}{x}x^n\,,\qquad 0<x<1\,,\qquad \alpha\in\mathbb R.
\end{aligned}
\end{equation}
As this state contains an infinite set of qumode states $\{\ket{\pm\sqrt{n}\alpha}^B:n=1,2,\ldots,\infty\}$, it is clearly truly HE, strictly speaking it shows true HE between a qumode and a qubit. 

Now, we try to witness entanglement in this state with aid of the SV determinant $s_1$. For the calculation of the moments, the identities
\begin{align}
\sum_{n=1}^\infty x^n &= \frac{x}{1-x}, \\ 
\sum_{n=1}^\infty n\,x^n &= \frac{x}{(1-x)^2},
\end{align}
which are derived in appendix \ref{Apx:Series}, are exploited. As an example, the calculation of the moment $\braket{\hat{a}^\dagger\hat{a}\hat{b}^\dagger\hat{b}}$ is presented. Note that the mode operators $\hat{a}$ and $\hat{a}^\dagger$ correspond to system B, while $\hat{b}$ and $\hat{b}^\dagger$ denote the mode operators of system A.
\begin{equation}
\braket{\hat{a}^\dagger\hat{a}\hat{b}^\dagger\hat{b}} =\frac{1}{2}\sum_{n=1}^\infty p_n n\alpha^2 =\frac{\alpha^2}{2}\frac{1-x}{x}\sum_{n=1}^\infty n\,x^n = \frac{\alpha^2}{2}\frac{1}{1-x}.
\end{equation}
For the whole determinant $s_1$,
\begin{equation}\label{eq:ArtiS}
\begin{aligned}
s_1(x,\alpha)&=\frac{1}{8}\biggl[\frac{2\alpha^2}{1-x}-2\Bigl(\frac{\mathrm{e}^{-2\alpha^2}(1-x)}{1-x\mathrm{e}^{-2\alpha^2}}\Bigr)\Bigl(\frac{\alpha(1-x)}{x}\sum_{n=1}^\infty\sqrt{n}(x\mathrm{e}^{-2\alpha^2})^n\Bigr)\Bigl(\frac{\alpha(1-x)}{x}\sum_{n=1}^\infty\sqrt{n}x^n\Bigr) \\
&\qquad -\Bigl(\frac{\alpha(1-x)}{x}\sum_{n=1}^\infty\sqrt{n}(x\mathrm{e}^{-2\alpha^2})^n\Bigr)^2-2\Bigl(\frac{\alpha(1-x)}{x}\sum_{n=1}^\infty\sqrt{n}x^n\Bigr)^2 \\
&\qquad-\frac{\alpha^2}{1-x}\Bigl(\frac{\mathrm{e}^{-2\alpha^2}(1-x)}{1-x\mathrm{e}^{-2\alpha^2}}\Bigr)^2\biggr]
\end{aligned}
\end{equation}
is obtained. Note that this expression does not depend on the phase $\phi$. This has been expected, since the amount of entanglement in states of this form does not depend on $\phi$, as pointed out earlier. 

It is clear that
\begin{equation}
\sum_{n=1}^\infty x^n<\sum_{n=1}^\infty \sqrt{n}\,x^n<\sum_{n=1}^\infty n\,x^n,
\end{equation}
for $0<x<1$. Hence,
\begin{align}
\frac{x}{1-x}< & \sum_{n=1}^\infty \sqrt{n}\,x^n<\frac{x}{(1-x)^2}, \\
\frac{x\,\mathrm{e}^{-2\alpha^2}}{1-x\,\mathrm{e}^{-2\alpha^2}}< & \sum_{n=1}^\infty \sqrt{n}\,(x\,\mathrm{e}^{-2\alpha^2})^n < \frac{x\,\mathrm{e}^{-2\alpha^2}}{(1-x\,\mathrm{e}^{-2\alpha^2})^2}.
\end{align}
Inserting the lower bounds into the sums of equation \eqref{eq:ArtiS} yields
\begin{equation}
s'_1(x,\alpha)=\frac{\alpha^2}{8}\biggl[\frac{2x}{1-x}-\Bigl(\frac{1-x}{1-x\mathrm{e}^{-2\alpha^2}}\Bigr)^2\mathrm{e}^{-4\alpha^2}\Bigl(3+\frac{1}{1-x}\Bigr)\biggr].
\end{equation}
However, since $s_1(x,\alpha)>s'_1(x,\alpha)\,\forall\,\alpha\in{\mathbb R},\,x\in]0,1[$, from $s_1(x,\alpha)<0$ follows $s'_1(x,\alpha)<0$. Therefore, $s'_1(x,\alpha)<0$ is a sufficient criterion for entanglement detection. It is plotted in figure \ref{fig:artiS}.
\begin{figure}[ht]
\subfloat{\includegraphics[width=0.53\textwidth]{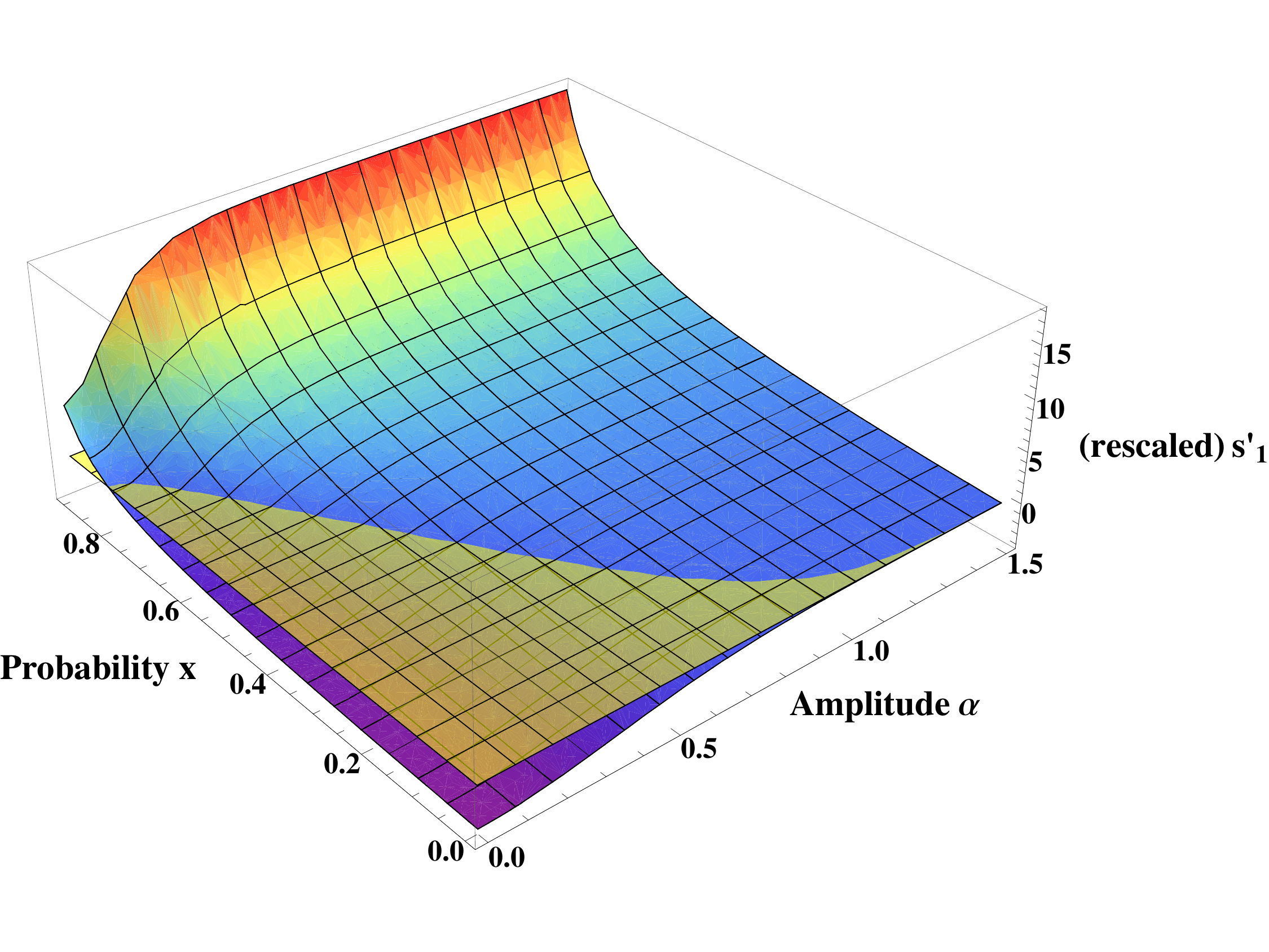}}\hfill
\subfloat{\includegraphics[width=0.47\textwidth]{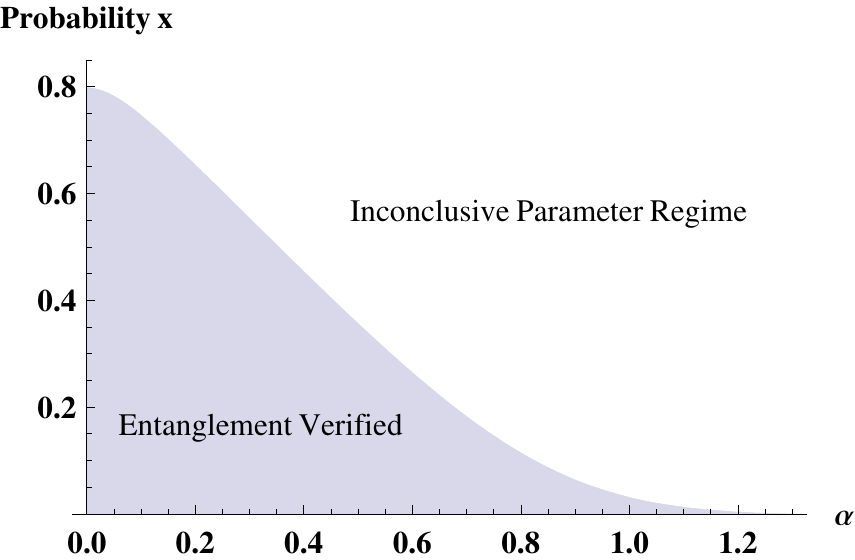}}\hfill
\caption[Entanglement witnessing in the state corresponding to equation \eqref{eq:trueHE2} via $s'_1(x,\alpha)$.]{On the left hand side $s'_1(x,\alpha)$ is shown, which is derived from the original SV determinant $s_1(x,\alpha)$. Note that for the plot the prefactor $\frac{\alpha^2}{8}$ has been omitted, which, of course, does not affect the witnessing region. The yellow plane denotes zero. The right hand side graphic presents the cutting line between $s'_1(x,\alpha)$ and the zero plane. It displays the witnessing region. It also has to be pointed out that for $\alpha=0$ the state is of course not entangled.}
\label{fig:artiS}
\end{figure}
The graphics show that entanglement can be verified for sufficiently small $x$ and $\alpha$. This can be understood in the following way. Small $x$ corresponds to low mixing between the pure states $\ket{\psi_n}^{AB}$, which themselves are rather highly entangled, depending on the amplitude $\sqrt{n}\alpha$. For pure states of the form $\ket{\psi_n}^{AB}$ entanglement witnessing can be performed perfectly with the SV determinant $s_1$. Therefore, efficient entanglement detection for low $x$ is no surprise. Furthermore, already previous calculations (compare subsections \ref{catwitnessing} and \ref{subsec:sqeez}) have exposed that entanglement detection via $s_1$ works best for low $\alpha$ and may fail for high $\alpha$, even if the entanglement itself may be higher.

Concluding, a second example for a truly HE state has been presented, whose true HE could again be verified with aid of the SV criteria. For completion, note that states of the form
\begin{equation}
\ket{\psi_n}^{AB}=\frac{1}{\sqrt{2}}\Bigl(\ket{0}^A\ket{\sqrt{n}\alpha}^B+\mathrm{e}^{i\phi}\ket{1}^A\ket{-\sqrt{n}\alpha}^B\Bigr)
\end{equation}
can be generated via the interaction $\hat{G}$,
\begin{equation}
\hat{G}:=\mathrm{e}^{\frac{1}{2}|\alpha|^2(1-n)}(\hat{\sigma}_z\sqrt{n})^{\hat{a}^\dagger\hat{a}},
\end{equation}
from separable product states of the form
\begin{equation}
\ket{\phi_n}^{AB}_{sep}=\frac{1}{\sqrt{2}}(\ket{0}^A+\mathrm{e}^{i\phi}\ket{1}^A)\ket{\alpha}^B.
\end{equation}
Once again, $\hat{a}$ and $\hat{a}^\dagger$ are the mode operators corresponding to subsystem B, while $\hat{\sigma}_z$ is the Pauli Z matrix acting on subsystem A. Therefore,
\begin{equation}
\begin{aligned}
\hat{G}\ket{\phi_n}^{AB}_{sep}&=\mathrm{e}^{\frac{1}{2}|\alpha|^2(1-n)}(\hat{\sigma}_z\sqrt{n})^{\hat{a}^\dagger\hat{a}}\frac{1}{\sqrt{2}}\Bigl(\ket{0}^A+\mathrm{e}^{i\phi}\ket{1}^A\Bigr)\ket{\alpha}^B \\
&=\frac{1}{\sqrt{2}}\Bigl(\ket{0}^A\ket{\sqrt{n}\alpha}^B+\mathrm{e}^{i\phi}\ket{1}^A\ket{-\sqrt{n}\alpha}^B\Bigr) \\
&=\ket{\psi_n}^{AB},
\end{aligned}
\end{equation}
which becomes obvious when expressing the coherent state $\ket{\alpha}^B$ in the Fock basis.

\subsection{Truncated Mixed States and Entanglement Witnessing}\label{subsec:Robust}
This subsection does not really deal with truly HE states, but with general entangled quantum states of the form
\begin{equation}\label{eq:trunc1state}
\hat{\rho}=\sum_{n=0}^\infty p_n \ket{\psi_n}\bra{\psi_n}\,,\qquad p_n\neq0\,\forall\,n.\footnote[5]{Also states with some $p_n=0$ could be considered as long as the total number of $p_n\neq0$ is still infinite. However, when omitting the $p_n=0$ terms in this case, one ends up exactly at the form \eqref{eq:trunc1state}.}
\end{equation}
The state may be bipartite or multipartite entangled as well as CV, DV or hybrid. Hence, truly HE states are included (note that effectively DV HE states are not included.). The focus here lies on the infinite number of terms. In the previous subsections, it emerged that one of the main problems with true HE is that the occurring infinite sums or infinite integrals have to be calculated without truncation. A truncation would correspond to analysis of a different state, which would not have the important infinite number of mix terms character anymore. 

Imagine for example the following case. Consider a separable state $\sum_{n=0}^\infty p_n \ket{\psi_n}\bra{\psi_n}$ for which, however, the truncated state $\hat{\rho}_N=\sum_{n=0}^N p_n \ket{\psi_n}\bra{\psi_n}$ is entangled for some $N$. Therefore, a truncation in occurring sums or integrals may lead to misinterpretations. Entanglement in $\hat{\rho}_N$ would potentially be detected and wrongly misinterpreted as entanglement of the state with $N=\infty$. 

The questions arising are the following: In which cases is, for an entangled state of the form
\begin{equation}
\hat{\rho}=\sum_{n=0}^\infty p_n \ket{\psi_n}\bra{\psi_n},
\end{equation}
also the state
\begin{equation}
\hat{\rho}_N=\frac{1}{p_N}\sum_{n=0}^N p_n \ket{\psi_n}\bra{\psi_n}\,,\qquad p_N:=\sum_{n=0}^N p_n.
\end{equation}
entangled? The prefactor $\frac{1}{p_N}$ is due to normalization of the truncated state. Or, coming to the important point: When successfully witnessing entanglement in 
\begin{equation}
\hat{\rho}_N=\frac{1}{p_N}\sum_{n=0}^N p_n \ket{\psi_n}\bra{\psi_n},
\end{equation}
in which cases also entanglement in
\begin{equation}
\hat{\rho}=\sum_{n=0}^\infty p_n \ket{\psi_n}\bra{\psi_n}
\end{equation}
can be deduced? An answer to this question is provided in the following.
\begin{thm}\label{thm:TruncMix}
For $\hat{\rho}=\sum_{n=0}^\infty p_n \ket{\psi_n}\bra{\psi_n}$ with $p_n\neq0\,\forall\,n$ define $\hat{\rho}_N=\frac{1}{p_N}\sum_{n=0}^N p_n \ket{\psi_n}\bra{\psi_n}$ for some $N\in{\mathbb N}_0$ with $p_N:=\sum_{n=0}^N p_n$. Then the following statement holds true: If $R_g(\hat{\rho}_N)>1-p_N$, $\hat{\rho}$ is entangled.
\end{thm}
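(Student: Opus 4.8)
The plan is to read off the result directly from the definition of the global robustness, once $\hat{\rho}$ is written as a convex mixture of the truncated state $\hat{\rho}_N$ and the ``tail''. First I would split the sum,
\begin{equation}
\hat{\rho}=\sum_{n=0}^{N} p_n\ket{\psi_n}\bra{\psi_n}+\sum_{n=N+1}^{\infty} p_n\ket{\psi_n}\bra{\psi_n}=p_N\,\hat{\rho}_N+(1-p_N)\,\hat{\tau}_N,
\end{equation}
where $\hat{\tau}_N:=\frac{1}{1-p_N}\sum_{n=N+1}^{\infty} p_n\ket{\psi_n}\bra{\psi_n}$. Since $p_n\neq0$ for every $n$ and the sum runs over infinitely many terms, $p_N=\sum_{n=0}^{N} p_n<\sum_{n=0}^{\infty} p_n=1$, so $1-p_N>0$ and $\hat{\tau}_N$ is a genuine quantum state (it is a convex combination of pure-state projectors, hence positive semidefinite with unit trace); likewise $p_N>0$, so $\hat{\rho}_N$ is well defined.

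Next I would argue by contradiction: assume $\hat{\rho}$ is separable, $\hat{\rho}\in SEP$. Then the decomposition above reads
\begin{equation}
(1-\lambda)\,\hat{\rho}_N+\lambda\,\hat{\tau}_N=\hat{\rho}\in SEP\,,\qquad \lambda:=1-p_N\in(0,1)\,,\ \hat{\tau}_N\in QS,
\end{equation}
which is exactly a feasible point in the infimum defining $R_g(\hat{\rho}_N)$ (with the mixing state $\hat{\sigma}=\hat{\tau}_N$). Consequently $R_g(\hat{\rho}_N)\leq\lambda=1-p_N$, contradicting the hypothesis $R_g(\hat{\rho}_N)>1-p_N$. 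Therefore $\hat{\rho}$ cannot be separable, i.e.\ $\hat{\rho}$ is entangled, which is the claim.

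There is no genuinely hard step in this argument: the statement is essentially a direct consequence of the definition of global robustness once the correct convex decomposition is spotted. The only points that require a little care are verifying that the tail $\hat{\tau}_N$ really is a legitimate quantum state --- this is exactly where the hypothesis $p_n\neq0$ with infinitely many terms enters, guaranteeing $p_N<1$ --- and making sure the mixing weight $1-p_N$ matches the parameter $\lambda$ in whichever normalization convention is used for $R_g$. I would also note in passing that the converse does not hold in general, and that in applications this theorem is meant to be combined with a \emph{lower} bound on $R_g(\hat{\rho}_N)$ extracted from an entanglement witness (or SV determinant) evaluated on the finite-dimensional truncation $\hat{\rho}_N$, so that true hybrid entanglement of $\hat{\rho}$ can be certified without ever truncating the infinite sums.
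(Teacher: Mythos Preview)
Your proof is correct and follows essentially the same approach as the paper: decompose $\hat{\rho}=p_N\hat{\rho}_N+(1-p_N)\hat{\tau}_N$ with the tail state $\hat{\tau}_N$, then use the definition of $R_g$. The paper argues directly (if $R_g(\hat{\rho}_N)>1-p_N$ then every mixture $p_N\hat{\rho}_N+(1-p_N)\hat{\sigma}$ is entangled, in particular for $\hat{\sigma}=\hat{\tau}_N$), whereas you phrase the same step contrapositively, but the content is identical; your explicit check that $p_N<1$ so that $\hat{\tau}_N$ is a bona fide state is a nice addition.
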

\begin{proof}[\textbf{Proof.}]
\begin{itemize}
\item[] {}
\item $R_g(\hat{\rho}_N)>1-p_N$.
\item $\Rightarrow\;\,p_N\hat{\rho}_N+(1-p_N)\hat{\sigma}$ entangled for any quantum state $\hat{\sigma}$.
\item Choose $\hat{\sigma}=\frac{1}{1-p_N}\sum_{n=N+1}^\infty p_n \ket{\psi_n}\bra{\psi_n}$.
\item $\Rightarrow\;\,p_N\hat{\rho}_N+(1-p_N)\hat{\sigma}=\hat{\rho}$ entangled. \qedhere
\end{itemize}
\end{proof}
The answer to the above question is therefore closely related to the global robustness $R_g$. Unfortunately, the calculation of $R_g$ is mostly quite difficult or even impossible. Nevertheless, it has to be calculated for a state with only finite $N$, which may be possible in certain circumstances. If entanglement detection has to be performed for a state with $N=\infty$ yielding infinite sums or integrals which cannot be calculated without truncation this theorem may be a useful tool if robustnesses can be calculated for the truncated states. However, it is not clear which $N$ has to be chosen. In the case that the state is actually not entangled, such an $N$ with $R_g(\hat{\rho}_N)>1-p_N$ does not exist. Proving this is trivial: Assume that there exists such an $N$ for a separable state. Then this would indicate that the separable state is actually entangled, which is a contradiction to the assumption.

Since there is an infinite amount of possible $N$s, no statement about the state's entanglement can be made, if no such $N$ is found. Working through "all" $N$s is impossible, since there is no finite "all". Nevertheless, for any entangled state of the form
\begin{equation}
\hat{\rho}=\sum_{n=0}^\infty p_n \ket{\psi_n}\bra{\psi_n}\,,\qquad p_n\neq0\,\forall\,n,
\end{equation}
such an $N$ does actually exist. 

To prove this a new entanglement monotone is required, which is introduced now. The \textit{entanglement of subtraction} $E_{Sub}(\hat{\rho})$ is defined as
\begin{equation}
E_{Sub}(\hat{\rho}):=\inf_{\Omega, \Delta}\{b:b\geq0,\,\Delta\in QS,\,\Omega\in SEP\;\;\text{such that}\;\;\hat{\rho}-b\Delta=(1-b)\Omega\},
\end{equation}
where $QS$ denotes the set of all quantum states and $SEP$ the set of the separable quantum states. The interpretation of this monotone is clear. It asks how much of a quantum state must be subtracted from the state in question that it becomes separable. However, for $E_{Sub}(\hat{\rho})$ to be a proper entanglement monotone, first its LOCC monotonicity has to be proved \cite{MeasuresIntro}.

Consider an entangled state $\hat{\rho}$ which is supposed to have $E_{Sub}(\hat{\rho})=R$. Then, its optimal decomposition according to the entanglement of subtraction is given by
\begin{equation}
\hat{\rho}=a\Omega+R\Delta,
\end{equation}
with $a=1-R$. We perform a general LOCC operation on $\hat{\rho}$, which results in the output $\hat{\rho}_i=\frac{\Lambda_i(\hat{\rho})}{q_i}$ with probability $q_i$. The output ensemble is given as
\begin{equation}\label{eq:EoSproof}
\{q_i\; ;\;\frac{a\Lambda_i(\Omega)+R\Lambda_i(\Delta)}{q_i}\}\equiv \{q_i\; ;\;\tilde{a}_i\frac{\Lambda_i(\Omega)}{\mathrm{tr}[\Lambda_i(\Omega)]}+\tilde{R}_i\frac{\Lambda_i(\Delta)}{\mathrm{tr}[\Lambda_i(\Delta)]}\},
\end{equation}
where
\begin{align}
\tilde{a}_i&=\frac{a\,\mathrm{tr}[\Lambda_i(\Omega)]}{q_i}, \\
\tilde{R}_i&=\frac{R\,\mathrm{tr}[\Lambda_i(\Delta)]}{q_i}.
\end{align}
Since $a,R\geq0$, and therefore also $\tilde{a}_i,\tilde{R}_i\geq0$, equation \eqref{eq:EoSproof} is a valid decomposition of $\{q_i\,;\,\Lambda_i(\hat{\rho})\}$ of the form $\{q_i\,;\,\tilde{a}_i\Omega+\tilde{R}_i\Delta\}$. However, it is not necessarily the optimal one. Denote the actual entanglement of subtraction of the output $\Lambda_i(\hat{\rho})$ with $R(\Lambda_i(\hat{\rho}))$. Then the average output entanglement is given by $\sum_i q_iR(\Lambda_i(\hat{\rho}))_i$. Finally,
\begin{equation}
\begin{aligned}
\sum_i q_iR(\Lambda_i(\hat{\rho}))_i \leq \sum_i q_i\tilde{R}_i=R \sum_i\mathrm{tr}[\Lambda_i(\Delta)]=R.
\end{aligned}
\end{equation}
Hence, the entanglement of subtraction does not increase on average under LOCC. Furthermore, it is clear from the construction that $E_{Sub}$ is zero if and only if the state is separable. Therefore, it is an entanglement monotone. Some properties of $E_{Sub}$ shall be listed:
\begin{itemize}
\item $0\leq E_{Sub}\leq1$.
\item $\hat{\rho}$ separable $\Leftrightarrow$ $E_{Sub}=0$.
\item For $b=1$, every state "becomes" separable: $\forall\,\hat{\rho}$, choose $\Delta=\hat{\rho}$ and hence, $\hat{\rho}-\Delta={\mathbb{O}}\in SEP$, where ${\mathbb{O}}$ denotes the zero operator, which is of course separable.
\end{itemize}

Two annotations have to be made regarding this new monotone: The subtraction of quantum states from other quantum states does not necessarily result in an output which is a valid quantum state \cite{Subtra}. However, in the definition of $E_{Sub}$ the variation is performed in such a way that only those subtractions are considered which yield a valid quantum state as output. Furthermore, it has to be pointed out that the entanglement of subtraction is a special monotone of a whole family of monotones, as presented in \cite{MeasuresIntro}: 

Consider two sets of operators $X$ and $Y$ which are convex cones and closed under LOCC operations. Furthermore, the members of the sets can be written in the form $\alpha_{X(Y)}\cdot$positive-semidefinite operator with trace $\alpha_{X(Y)}$. $\alpha_{X(Y)}$ are fixed real constants. Finally, the sets have to be chosen such that any Hermitian operator $\hat{h}$ can be written as $\hat{h}=a\tilde{\Omega}-b\tilde{\Delta}$, where $\tilde{\Omega}\in X$ and $\tilde{\Delta}\in Y$ and $a,b\geq 0$. Then a monotone may be defined as
\begin{equation}
R_{X,Y}(\hat{\rho}):=\inf_{\Omega, \Delta}\{b:a,b\geq0,\,\Delta\in Y,\,\Omega\in X\;\;\text{such that}\;\;\hat{\rho}=a\Omega-b\Delta\}.
\end{equation}
For $\alpha_X=1$ and $\alpha_Y=-1$, the entanglement of subtraction is obtained.

Finally, recall theorem \ref{thm:TruncMix}. With the aid of the entanglement of subtraction it can be proved that for every entangled state $\hat{\rho}$ of this form actually an $N$ exists such that $R_g(\hat{\rho}_N)>1-p_N$.
\begin{thm}\label{thm:TruncMix2}
For any entangled state of the form $\hat{\rho}=\sum_{n=0}^\infty p_n \ket{\psi_n}\bra{\psi_n}$ with $p_n\neq0\,\forall\,n$ there exists an $N\in{\mathbb N}_0$ such that $R_g(\hat{\rho}_N)>1-p_N$ for $\hat{\rho}_N=\frac{1}{p_N}\sum_{n=0}^N p_n \ket{\psi_n}\bra{\psi_n}$ with $p_N:=\sum_{n=0}^N p_n$.
\end{thm}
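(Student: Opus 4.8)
The plan is to argue by contradiction, exploiting the entanglement of subtraction $E_{Sub}$ just introduced, and the fact that $E_{Sub}(\hat\rho)>0$ for the entangled state $\hat\rho$. Suppose no such $N$ exists, i.e. $R_g(\hat\rho_N)\le 1-p_N$ for every $N\in{\mathbb N}_0$. The first step is to unpack what $R_g(\hat\rho_N)\le 1-p_N$ means: there exists a quantum state $\hat\sigma_N$ with $p_N\hat\rho_N+(1-p_N)\hat\sigma_N\in SEP$ (up to an infimum, so strictly one works with $\lambda_N$ slightly above $R_g(\hat\rho_N)$; since the separable set is closed this is harmless, but I would be careful to phrase it with an $\epsilon$ if needed). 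Recalling $p_N\hat\rho_N=\sum_{n=0}^N p_n\ket{\psi_n}\bra{\psi_n}$, this says the truncated (unnormalized) tail can be "completed" to a separable state by adding at most weight $1-p_N$ of noise.

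The second step is to turn this into a statement about $E_{Sub}$. Write
\begin{equation}
\hat\rho=\sum_{n=0}^N p_n\ket{\psi_n}\bra{\psi_n}+\sum_{n=N+1}^\infty p_n\ket{\psi_n}\bra{\psi_n}
= p_N\hat\rho_N+(1-p_N)\hat\tau_N,
\end{equation}
where $\hat\tau_N=\tfrac{1}{1-p_N}\sum_{n>N}p_n\ket{\psi_n}\bra{\psi_n}$ is a genuine quantum state. Now I would like to subtract a small quantum state from $\hat\rho$ and land in $SEP$. The natural candidate to subtract is something proportional to $\hat\tau_N$ together with a correction that turns $p_N\hat\rho_N$ into the separable state $p_N\hat\rho_N+(1-p_N)\hat\sigma_N$; concretely, set $b_N$ so that $\hat\rho-b_N\Delta_N=(1-b_N)\Omega_N$ with $\Omega_N=p_N\hat\rho_N+(1-p_N)\hat\sigma_N\in SEP$ after renormalization. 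Matching the two expressions forces $b_N\Delta_N=(1-p_N)(\hat\tau_N-\hat\sigma_N)\cdot(\text{const})$ — so the subtracted weight is $O(1-p_N)$. The key point is that $1-p_N=\sum_{n>N}p_n\to 0$ as $N\to\infty$, because $\sum_n p_n=1$ converges. Hence $E_{Sub}(\hat\rho)\le b_N\to 0$, forcing $E_{Sub}(\hat\rho)=0$, i.e. $\hat\rho$ separable — contradicting the hypothesis that $\hat\rho$ is entangled. Therefore some $N$ with $R_g(\hat\rho_N)>1-p_N$ must exist.

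The main obstacle I anticipate is the bookkeeping in the middle step: making sure that the operator one subtracts from $\hat\rho$ is actually a valid (positive, unit-trace) quantum state, and that the residual $(1-b_N)\Omega_N$ is genuinely separable and correctly normalized — $\hat\sigma_N$ need not be separable itself, only the combination $p_N\hat\rho_N+(1-p_N)\hat\sigma_N$ is, which is exactly the output of the $R_g$ infimum. One has to verify that $\Delta_N=\tfrac{1}{b_N}\bigl(\hat\rho-(1-b_N)\Omega_N\bigr)$ with the chosen $b_N=1-p_N$ (or $b_N$ slightly larger, to absorb the infimum) is positive semidefinite; this should follow because $\hat\rho-(1-p_N)\Omega_N^{unnorm}$ only removes the tail $\sum_{n>N}p_n\ket{\psi_n}\bra{\psi_n}$ and subtracts $(1-p_N)\hat\sigma_N$, and one can always enlarge $b_N$ a bit and pad with extra noise to stay inside $QS$ — precisely the "$b=1$ always works" escape clause noted for $E_{Sub}$. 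Once positivity is secured, the convergence $1-p_N\to0$ does all the real work, and the rest is routine. A secondary, minor subtlety is the infimum in $R_g$: if $R_g(\hat\rho_N)<1-p_N$ strictly one is fine, and the borderline case $R_g(\hat\rho_N)=1-p_N$ can be folded into the bound $b_N\le 1-p_N+\epsilon$ with $\epsilon\downarrow 0$, which still drives $E_{Sub}(\hat\rho)$ to $0$.
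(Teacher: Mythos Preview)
Your contradiction strategy via $E_{Sub}(\hat\rho)\to 0$ is a legitimate route, dual in spirit to the paper's argument, but the construction in your second step has a real gap that your proposed fix does not close. You want $\hat\rho=(1-b_N)\Omega_N+b_N\Delta_N$ with $b_N\to0$ and $\Delta_N\in QS$. Computing $b_N\Delta_N=\hat\rho-(1-b_N)\Omega_N$ with $\Omega_N=p_N\hat\rho_N+(1-p_N)\hat\sigma_N$ gives
\[
b_N\Delta_N=b_Np_N\hat\rho_N+(1-p_N)\hat\tau_N-(1-b_N)(1-p_N)\hat\sigma_N.
\]
The noise state $\hat\sigma_N$ supplied by the $R_g$ infimum is constrained only by the separability of $\Omega_N$; it can have support on vectors $\ket\phi$ orthogonal to the support of $\hat\rho$ (and hence of both $\hat\rho_N$ and $\hat\tau_N$). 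On such a $\ket\phi$ one gets $\braket{\phi|b_N\Delta_N|\phi}=-(1-b_N)(1-p_N)\braket{\phi|\hat\sigma_N|\phi}<0$ for every $b_N<1$, so no amount of ``enlarging $b_N$ a bit and padding with noise'' restores positivity while still letting $b_N\to0$. Trace-norm proximity of the separable $\Omega_N$ to $\hat\rho$ does \emph{not} yield a small-$b$ decomposition $\hat\rho=(1-b)\Omega_N+b\Delta$, because that would require the operator inequality $\hat\rho\ge(1-b)\Omega_N$.

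What does survive from your calculation is the weaker fact $\|\hat\rho-\Omega_N\|_1\le 2(1-p_N)\to0$, i.e.\ $\hat\rho\in\overline{SEP}$; invoking the closedness of $SEP$ (implicit in the paper's assertion that $E_{Sub}=0\Leftrightarrow$ separable) then gives the contradiction---but that is a different argument from the explicit decomposition you attempted. For comparison, the paper runs the logic in the opposite direction: it uses $E_{Sub}(\hat\rho)>0$ to show that subtracting the tail of weight $1-p_M<E_{Sub}(\hat\rho)$ leaves $\hat\rho_M$ entangled, and then appeals to $R_g(\hat\rho_M)\to R_g(\hat\rho)>0$ together with $1-p_M\to0$ to force $R_g(\hat\rho_M)>1-p_M$ for some $M$. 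Both routes ultimately rest on a closedness/continuity property, but your explicit $\Delta_N$ cannot carry the argument.
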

\begin{proof}[\textbf{Proof.}]
\begin{itemize}
\item[] {}
\item $\hat{\rho}$ entangled $\;\Rightarrow\;$ $E_{sub}(\hat{\rho})\geq x>0$.
\item $\Rightarrow\;$ $\hat{\rho}_1=\frac{1}{1-t}(\hat{\rho}-t\hat{\rho}_2)$ entangled for $0\leq t<E_{sub}(\hat{\rho})$ and arbitrary states $\hat{\rho}_2$.
\item Furthermore, $\,\sum_{n=0}^\infty p_n=1$ $\;\Rightarrow\;$ $\lim_{n\rightarrow\infty} p_n=0$ $\;\Rightarrow\;$ $\forall\,\epsilon>0\,\exists \,N\in{\mathbb N}_0\,$ such that $\;\sum_{n=N+1}^\infty p_n <\epsilon$.
\item Choose $N$ such that $\sum_{n=N+1}^\infty p_n =1-p_N<x\leq E_{sub}(\hat{\rho})$.
\item Set $t=1-p_M$.
\item $\Rightarrow\;$ $\hat{\rho}_1=\frac{1}{p_M}(\hat{\rho}-(1-p_M)\hat{\rho}_2)$ entangled for any $\hat{\rho}_2$ and $M\geq N$.
\item Choose $\hat{\rho}_2=\frac{1}{1-p_M}\sum_{n=M+1}^\infty p_n \ket{\psi_n}\bra{\psi_n}$.
\item $\Rightarrow\;$ $\hat{\rho}_1=\hat{\rho}_M=\frac{1}{p_M}\sum_{n=0}^M p_n \ket{\psi_n}\bra{\psi_n}$ entangled $\forall\,M\geq N$.
\item Consider the global robustness of $\hat{\rho}_M$. Define $R_g(\hat{\rho}_M):=\tilde{x}_M$. As $\hat{\rho}_M$ entangled $\forall\,M\geq N$, $\tilde{x}_M>0$ $\forall\,M\geq N$. The important point is that $\tilde{x}_M$ converges definitely \textit{not} against zero: Due to $\hat{\rho}_\infty=\hat{\rho}$ it is known that $\tilde{x}_\infty=R_g(\hat{\rho})>0$, as the initial state $\hat{\rho}$ is entangled. 

Concluding, on the one hand, $\lim_{M\rightarrow\infty}\tilde{x}_M=R_g(\hat{\rho})>0$, while, on the other hand, $\lim_{M\rightarrow\infty}(1-p_M)=0$. 
\item $\Rightarrow\;$ $\exists \,\tilde{M}>N$ such that $R_g(\hat{\rho}_{\tilde{M}})>1-p_{\tilde{M}}$.
\item $\Rightarrow\;$ $\hat{\rho}_{\tilde{M}}=\frac{1}{p_{\tilde{M}}}\sum_{n=0}^{\tilde{M}} p_n \ket{\psi_n}\bra{\psi_n}$ entangled with $R_g(\hat{\rho}_{\tilde{M}})>1-p_{\tilde{M}}$. \qedhere
\end{itemize}
\end{proof}

The consequence of the theorems \ref{thm:TruncMix} and \ref{thm:TruncMix2} is that entanglement witnessing problems regarding states of the form
\begin{equation}
\hat{\rho}=\sum_{n=0}^\infty p_n \ket{\psi_n}\bra{\psi_n}\,,\qquad p_n\neq0\,\forall\,n,
\end{equation}
can be reduced to entanglement quantification problems of states of the form
\begin{equation}
\hat{\rho}_N=\sum_{n=0}^N p_n \ket{\psi_n}\bra{\psi_n},
\end{equation}
with finite $N$, where the employed entanglement monotone has to be the global robustness. However, the applicability of this approach is quite limited.
\begin{itemize}
\item No recipe is given for finding the right $N$. Simply several choices of $N$ have to be tried. Obviously, there is an infinite number of possible choices. Therefore, it is more or less a matter of luck to find an $N$ with $R_g(\hat{\rho}_N)>1-p_N$.
\item Calculating the global robustness for general states $\hat{\rho}_N$ is an extraordinarily difficult task, since it corresponds to a complex variational problem. Additionally, another entanglement measure or necessary separability criterion is required to check whether a state $(1-\lambda)\hat{\rho}_N+\lambda\hat{\sigma}$ actually is separable or not. Powerful necessary separability criteria do not even exist. However, this problem may be avoided by loosening the constraints and requiring for example only PPTness instead of separability for the robustness calculation.
\item Entanglement witnessing is relevant in cases in which it is not known whether a state is entangled or not. But in the case that the state is not entangled at all, no $N$ with $R_g(\hat{\rho}_N)>1-p_N$ can be found. Hence, it is in the first place unknown whether such an $N$ does exist or not. An extensive search for such an $N$ is therefore pointless, as it may not exist at all.
\end{itemize}
Nevertheless, the approach is not fully senseless. In cases, where for the original $N=\infty$ state no calculations can be made without truncating the occurring infinite sums or integrals, it is the only chance to actually witness entanglement with certainty. Any such truncation during the witnessing process for the original $N=\infty$ state $\hat{\rho}$ would correspond to witnessing entanglement in an actually different state. Then, if robustness calculations are possible for the corresponding truncated finite $N$ states $\hat{\rho}_N$, being lucky and finding an $N$ with $R_g(\hat{\rho}_N)>1-p_N$ is the only chance to unerringly witness entanglement in $\hat{\rho}$.

\section{Multipartite Hybrid Entanglement}\label{sec:multihe}
In this section some investigations regarding multipartite hybrid entanglement are presented. In general a HE quantum system may of course not only be built of two subsystems of finite and infinite dimensionality, but also of several finite- and infinite-dimensional subsystems. In general, stepping from the bipartite into the multipartite regime yields a wealth of new remarkable effects as well as difficulties. These shall not be discussed here in detail.\footnote[6]{Brief introductions to multipartite entanglement are given in \cite{MeasuresIntro} as well as in \cite{HoroEntIntro}, which is one of the most thorough reference works regarding quantum entanglement.} However, some interesting phenomena regarding multipartite HE are examined. In the first subsection, some fundamental concepts are briefly introduced which are necessary for further studies on multipartite HE. Afterwards, general multipartite HE issues are discussed, and two specific states are investigated.

\subsection{Multipartite Entanglement Basics}
Several applications in quantum information require the consideration of entanglement between more than two parties. For example, when performing multi-user quantum communication between more participants than just one sender and one receiver, multipartite entangled states have to be established \cite{multicomm1,multicomm2,multicomm3}. Furthermore, the already mentioned cluster states required for certain quantum computation schemes are multipartite entangled states of a very large number of particles \cite{Raussendorf,NielsenCluster}. Additionally, the already mentioned qubus approaches for quantum communication and quantum computation actually involve multipartite HE states. The CV qubus entangles itself consecutively with two qubits, resulting in a tripartite HE state, before it is measured out such that the remaining two-qubit state is left entangled \cite{qubus4,Nadja}.

One of the most crucial differences between bipartite and multipartite entanglement becomes evident when trying to define maximally entangled states from which all other states can be obtained via LOCC with certainty (for the sake of clearness and simplicity consider two-dimensional qubit subsystems in this introduction). Motivated from the bipartite setting, an obvious possibility would be the \textit{Greenberger-Horne-Zeilinger} (\textit{GHZ}) state $\ket{GHZ}$, which looks in a tripartite setting like
\begin{equation}
\ket{GHZ}^{ABC}  =  \frac{1}{\sqrt{2}}\biggl(\ket{0}^A\ket{0}^B\ket{0}^C+\ket{1}^A\ket{1}^B\ket{1}^C\biggr).
\end{equation}
It has some very neat properties. The entanglement of any bipartite cut, for example the entanglement of party A vs. parties B and C together, takes one ebit. Furthermore, when a projection measurement on $\frac{1}{\sqrt{2}}(\ket{0}+\ket{1})$ is performed, always a maximally entangled two-qubit Bell state is left over. Therefore, any bipartite state can be obtained from it with the aid of LOCC transformations. However, there are tripartite states which cannot be generated via the GHZ state and LOCC, as for instance the W-state:
\begin{equation}
\ket{W}^{ABC}  =  \frac{1}{\sqrt{3}}\biggl(\ket{0}^A\ket{0}^B\ket{1}^C+\ket{0}^A\ket{1}^B\ket{0}^C+\ket{1}^A\ket{0}^B\ket{0}^C\biggr).
\end{equation}
This impossibility of interconversion of the GHZ- and the W-state holds also when switching to an asymptotic setting and considering arbitrarily many identical copies of the states \cite{MeasuresIntro}.\footnote[7]{In the bipartite setting, the consistent picture of entanglement quantification via LOCC transformability is actually not obtained until, on the one hand, transformations between arbitrarily many copies of the states are considered, and, on the other hand, asymptotic imperfections are allowed, which vanish when the number of copies tends to infinity \cite{MeasuresIntro}.} Therefore, none of the states can be considered as more entangled than the other. Actually, there are different kinds of multipartite entanglement. On the one side, a tripartite state may just be bipartite entangled between its subsystems, while, on the other side, it may also show "genuine tripartite entanglement". The generalization to higher dimensional systems proceeds accordingly.

This motivates to go beyond a simple distinguishment between separable and entangled states. For a N-partite state \textit{full N-separability} is still defined in such a way that the state can be written as
\begin{equation}
\hat{\rho}^{ABC...}=\sum_i p_i \hat{\rho}^A\otimes\hat{\rho}^B\otimes\hat{\rho}^C\otimes\ldots\;.
\end{equation}
However, there is also the notion of \textit{partial separability} when only some subsystems separate from the remaining state, which may be inseparable. Furthermore, an N-partite state which shows entanglement between maximally $k$ subsystems is said to be \textit{k-entangled}. A tripartite state is for example said to be 2-entangled if it can be written as
\begin{equation}
\hat{\rho}^{ABC}=\sum_i p_i \hat{\rho}^A_i\otimes\hat{\rho}^{BC}_i+\sum_i q_i \hat{\rho}^B_i\otimes\hat{\rho}^{AC}_i+\sum_i r_i \hat{\rho}^C_i\otimes\hat{\rho}^{AB}_i.
\end{equation}
However, when taking many copies of states with, for instance, k-entanglement, also l-en\-tang\-led states with $l>k$ may be created via LOCC \cite{MeasuresIntro}. Therefore, this notion of "k-entanglement" is either not closed regarding LOCC transformations or under taking many copies of the states. 
\begin{figure}[ht]
\begin{center}
\includegraphics[width=13cm]{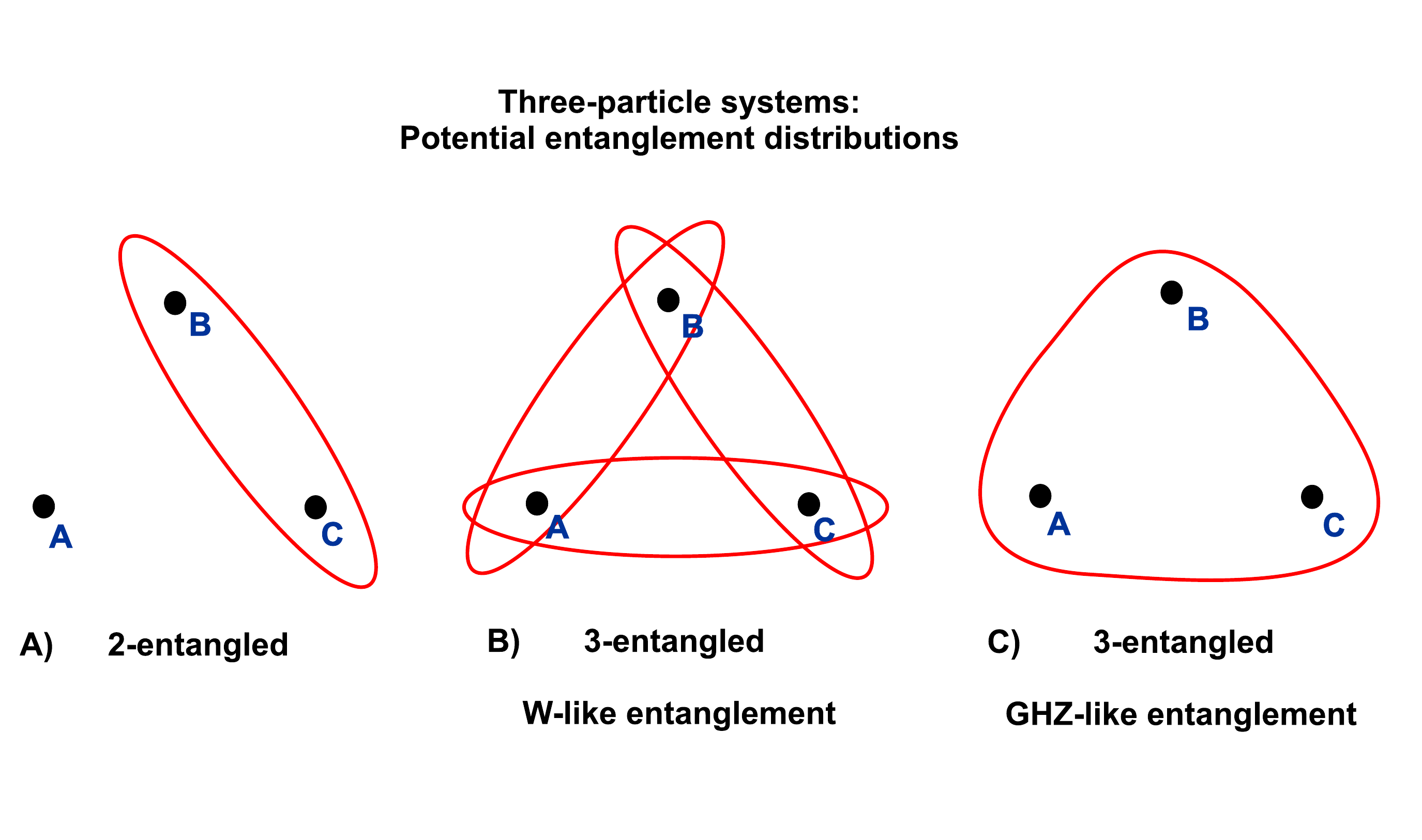}
\caption[Three possibilities for entanglement distribution in a pure three-particle system.]{Three possibilities for entanglement distribution in a pure three-particle system. A) corresponds to a state of the form $\ket{0}^A\otimes\ket{\Phi}^{BC}$, where $\ket{\Phi}$ denotes an entangled two-qubit state. B) is the visualization of the W-state $\ket{W}^{ABC}$, while picture C) presents a GHZ-state $\ket{GHZ}^{ABC}$. It becomes clear that for a system in a W-state losing one particle leaves the remaining particles entangled, while for systems prepared in GHZ-states particle loss destroys \textit{all} entanglement.}
\label{fig:multiEnt}
\end{center}
\end{figure}

This does actually not solve the problem with the tripartite GHZ- and the W- state, which are both 3-entangled. When tracing out one party of the W-state, always an entangled state is left over, while in case of the GHZ-state a trace over one subsystem yields a separable state. Therefore, the GHZ-state seems to be genuinely tripartite entangled, showing no bipartite entanglement, while the W-state seems to offer only bipartite entanglement between its subsystems. Nevertheless, both states are 3-entangled. Figure \ref{fig:multiEnt} visualizes some possible scenarios of entanglement in pure tripartite states. This discussion demonstrates quite well that the characterization of entanglement in multipartite settings is much more involved than in the bipartite regime. Actually, there are a lot of open questions, especially when considering not only tripartite states but systems of higher order containing much more parties.

Interestingly, there is a way to quantify how much GHZ-like, genuine tripartite entanglement a tripartite state contains. Coffman, Kundu and Wootters (CKW) have shown that
\begin{equation}\label{eq:monogamy}
C^2(A|B)+C^2(A|C)\leq C^2(A|BC)
\end{equation}
holds true for any tripartite state \cite{CKW}. $C^2(A|B)$ and $C^2(A|C)$ respectively represent the entanglement between parties A and B as well as A and C, measured by the squared concurrence. For the calculation of $C^2(A|B)$ subsystem C has to be traced out and the ordinary bipartite concurrence has to be applied, analogously for $C^2(A|C)$. Finally, $C^2(A|BC)$ denotes the squared concurrence of the bipartite cut between party A and parties B and C. CKW could show that 
\begin{equation}
C^2(A|BC)=4\det[\rho^A]=4\det[\mathrm{tr}_{BC}[\rho^{ABC}]].
\end{equation}
The relation \eqref{eq:monogamy} is also known as the \textit{mongamy of entanglement}, which says that if two systems are maximally entangled, they cannot be entangled with another third system. This becomes quite clear from the form of the inequality: If A and B are maximally entangled, $C^2(A|B)=1$. As all quantities in the formula are between one and zero, $C^2(A|B)=1$ necessarily yields $C^2(A|BC)=1$ and $C^2(A|C)=0$. Hence, system A cannot be entangled with system C. A similar equation holds also for system B and therefore also system B cannot be entangled with another system. Note that there is a generalization to more parties: 
\begin{equation}
C^2(A|B)+C^2(A|C)+C^2(A|D)+\ldots\,\leq C^2(A|BCD\ldots).
\end{equation}
However, stick to the tripartite case and consider \eqref{eq:monogamy}. For the "$=$" case the overall entanglement between party A and parties B and C $C^2(A|BC)$ is just bipartite entanglement, namely $C^2(A|B)+C^2(A|C)$. However, when the overall entanglement $C^2(A|BC)$ is greater than the amount of bipartite entanglement $C^2(A|B)+C^2(A|C)$, of what kind is the remaining entanglement? Hence, define the so-called \textit{residual entanglement} $\tau_{res}$,
\begin{equation}\label{eq:residualEnt}
\tau_{res}:=C^2(A|BC)-C^2(A|B)-C^2(A|C).
\end{equation}
In the tripartite case, it corresponds to the GHZ-like \textit{genuine tripartite entanglement}, which has already been mentioned regarding the GHZ-state. And actually, as expected,
\begin{align}
\tau_{res}^{GHZ}&=1, \\
\tau_{res}^{W}&=0,
\end{align}
is found for the GHZ- and the W-state. This shows very nicely and actually quantifies the characteristic difference between these two kinds of states. As $\tau_{res}\leq1$ in general, the GHZ state is in the following considered as maximally entangled with regard to its special kind of GHZ-like genuine tripartite entanglement. The W-state is of course also 3-entangled, however, its 3-entanglement is based on 2-entanglement between its subsystems, which can be inferred from the fact that it satisfies the CKW inequality \eqref{eq:monogamy} for the "$=$" case (see also figure \ref{fig:multiEnt}). Therefore, W-like 3-entanglement is not regarded as "genuine tripartite entanglement". Nevertheless, it is a special form of 3-entanglement.    

Note that $\tau_{res}$ does not change, when switching the roles of the systems A, B, and C. In the multipartite case with more than three parties, the analogously defined residual entanglement corresponds to the amount of entanglement in the state which is \textit{not} bipartite or based on bipartite entanglement.

In the following section, this important work by CKW is applied on tripartite HE states.

\subsection{General Multipartite Hybrid Entanglement}
In general, N-partite HE states live in Hilbert spaces of the form ${\mathcal H}_{d_1}\otimes\ldots\otimes{\mathcal H}_{d_N}$, where some $d_i$ are finite and some infinite. As this chapter is primarily supposed to be an outlook, and for the sake of simplicity, mainly tripartite HE states are considered. Then there are two cases: Either the Hilbert space looks like ${\mathcal H}_{d_1}\otimes{\mathcal H}_{d_2}\otimes{\mathcal H}_\infty$ or ${\mathcal H}_{d_1}\otimes{\mathcal H}_\infty\otimes{\mathcal H}_\infty$, with finite $d_1,d_2$. 

Consider the first case, where only one subsystem is CV. A general state in ${\mathcal H}_{d_1}^A\otimes{\mathcal H}_{d_2}^B\otimes{\mathcal H}^C_\infty$ can be written as
\begin{equation}
\ket{\psi}^{ABC}=\sum_{i,j=1}^{d_1,d_2}c_{ij}\ket{i}^A\ket{j}^B\ket{\psi_{ij}}^C\,,\qquad\sum_{i,j}^{d_1,d_2}|c_{ij}|^2=1,
\end{equation}
where $\ket{\psi_{ij}}^C$ represents some qumode state. As there are at most only $d_1\times d_2$ such qumode states, a Gram-Schmidt process can be executed to write the state as a multipartite, effectively DV HE state. Therefore, if all $\ket{\psi_{ij}}^C$ are linearly independent the state effectively lives in a Hilbert space of the form ${\mathcal H}_{d_1}^A\otimes{\mathcal H}_{d_2}^B\otimes{\mathcal H}^C_{d_1\times d_2}$. 

Now move on to the second case, where the initial Hilbert space looks like ${\mathcal H}_{d}^A\otimes{\mathcal H}_\infty^B\otimes{\mathcal H}_\infty^C$. Then the most general states are
\begin{equation}
\ket{\psi}^{ABC}=\sum_{i,j=1}^{d,\infty}c_{ij}\ket{i}^A\ket{\phi_{ij}}^B\ket{\psi_{ij}}^C,
\end{equation}
where both $\ket{\phi_{ij}}^B$ and $\ket{\psi_{ij}}^C$ denote qumode states, and the $c_{ij}$ are chosen such that $\braket{\psi|\psi}=1$. Obviously, in this case an infinite number of qumode states is present for an infinite number of $c_{ij}\neq0$. No Gram-Schmidt process can be performed. Hence, these states show \textit{multipartite true HE}.

Concluding, in the multipartite regime already for pure states, there are characteristic differences between the two different possible configurations. Furthermore, quite remarkably, in contrast to the bipartite setting, there is also pure true HE. When dealing with two parties, true HE only occurs for specific types of mixed states (compare section \ref{sec:mixed2}). Actually, this can be generalized. True hybrid entanglement can be obtained in two cases: Either the system is mixed with an infinite number of mix terms, then it is sufficient that only one subsystem is CV, or two or more subsystems are CV, then the state is not even required to be mixed. However, every mixed state with a finite number of mix terms, and which contains only one CV subsystem is effectively DV. A mixed state of this type in ${\mathcal H}_{d_1}\otimes\ldots\otimes{\mathcal H}_{d_{N-1}}\otimes{\mathcal H}_{\infty}$ can be always described in a Hilbert space of the form ${\mathcal H}_{d_1}\otimes\ldots\otimes{\mathcal H}_{d_{N-1}}\otimes{\mathcal H}_{\Xi}$, where
\begin{equation}
\Xi=M\prod_{i=1}^{N-1} d_i,
\end{equation}
and $M$ is the number of mix terms. In the following subsections two explicit tripartite HE states are investigated, one living in a Hilbert space of the first form ${\mathcal H}_{d_1}\otimes{\mathcal H}_{d_2}\otimes{\mathcal H}_\infty$ and the other one in a Hilbert space of the form ${\mathcal H}_{d_1}\otimes{\mathcal H}_\infty\otimes{\mathcal H}_\infty$, again with finite $d_1,d_2$.

\subsection{A Pure \texorpdfstring{${\mathcal H}_{2}\otimes{\mathcal H}_{2}\otimes{\mathcal H}_{\infty}$}{H2xH2xHinfty} State}
Consider the state
\begin{equation}
\ket{\psi}^{ABC}=\frac{1}{\sqrt{2}}\biggl(\ket{0}^{A}\ket{0}^{B}\ket{\psi_0}^{C}+\ket{1}^{A}\ket{1}^{B}\ket{\psi_1}^{C}\biggr),
\end{equation}
living in a Hilbert space of the form ${\mathcal H}_{2}^A\otimes{\mathcal H}_{2}^B\otimes{\mathcal H}^C_{\infty}$. It contains only two qumode states $\ket{\psi_0}^{C}$ and $\ket{\psi_1}^{C}$. Hence, perform a Gram-Schmidt process:
\begin{align}
\ket{\psi_0}^{C}&=\ket{0}^{C}, \\
\ket{\psi_1}^{C}&={\mathcal Q}\ket{0}^{C} + \sqrt{1-|{\mathcal Q}|^2}\ket{1}^{C},
\end{align}
where the overlap between the qumode states is denoted by ${\mathcal Q}:=\braket{\psi_0|\psi_1}$. This yields
\begin{equation}
\ket{\psi}^{ABC}=\frac{1}{\sqrt{2}}\biggl(\ket{0}^{A}\ket{0}^{B}\ket{0}^{C}+{\mathcal Q}\ket{1}^{A}\ket{1}^{B}\ket{0}^{C}+\sqrt{1-|{\mathcal Q}|^2}\ket{1}^{A}\ket{1}^{B}\ket{1}^{C}\biggr),
\end{equation}
which is a three-qubit state in ${\mathcal H}_{2}^A\otimes{\mathcal H}_{2}^B\otimes{\mathcal H}^C_2$. Hence, it can be easily analyzed with aid of the concurrence. This is the reason why such a HE state which is effectively a qubit triple has actually been chosen. Only in this case the state can be examined regarding its bipartite and residual, genuine tripartite entanglement. For higher dimensional states this would not be possible, as the residual entanglement is defined via concurrences, which only work for qubits.

Tracing out one subsystem and calculating the squared concurrences of the remaining two-qubit states yields the following results:
\begin{align}
C^2(A|B)&=|{\mathcal Q}|^2=|\braket{\psi_0|\psi_1}|^2, \\
C^2(A|C)&=0, \\
C^2(B|C)&=0. 
\end{align}
Furthermore,
\begin{equation}
C^2(C|AB)=1- |{\mathcal Q}|^2=1-|\braket{\psi_0|\psi_1}|^2,
\end{equation}
and hence,
\begin{equation}
\tau_{res}=1- |{\mathcal Q}|^2=1-|\braket{\psi_0|\psi_1}|^2.
\end{equation}
It can be seen that the state offers maximal residual entanglement for ${\mathcal Q}=0$. Since ${\mathcal Q}=0$ corresponds to orthogonal $\ket{\psi_0}^{C}$ and $\ket{\psi_1}^{C}$, in this case the state is simply the GHZ state itself. However, for the other extreme ${\mathcal Q}=1$, and therefore $\ket{\psi_0}^{C}=\ket{\psi_1}^{C}$, the state is only a 2-entangled product state of the form $\ket{\Phi^+_2}^{AB}\ket{\psi_0}^{C}$, which shows maximal bipartite entanglement between systems A and B. Therefore, by tuning the overlap ${\mathcal Q}$ between the qumode states, it can be gradually switched between genuine tripartite entanglement and bipartite 2-entanglement between systems A and B. This corresponds to varying between scenario A) and scenario C) in figure \ref{fig:multiEnt}. Actually, the total entanglement in the state stays always constant and assumes the value of one ebit:
\begin{equation}
C^2(A|B)+C^2(A|C)+C^2(B|C)+\tau_{res}=1.
\end{equation}    
The state is therefore a 1-ebit entangled state for \textit{any} ${\mathcal Q}$. The overlap ${\mathcal Q}$ only determines the distribution between normal bipartite 2-entanglement and genuine tripartite entanglement.\footnote[8]{The W-state actually yields $C^2_W(A|B)+C^2_W(A|C)+C^2_W(B|C)+\tau_{{res}_W}=\frac{4}{3}$, which is greater than one. However, when only permitting GHZ-like 3-entanglement and "normal" 2-entanglement but no W-like 3-entanglement, in general $C^2(A|B)+C^2(A|C)+C^2(B|C)+\tau_{res}\leq1$. Hence, in this sense the state $\ket{\psi}^{ABC}$ can be regarded as maximally entangled.}

Tuning the overlap between two such qumode states should not be very difficult experimentally. The qumode states can be simply realized by two opposed coherent states $\ket{\pm\alpha}^C$, which results in ${\mathcal Q}=\mathrm{e}^{-2|\alpha|^2}$. Then the overlap can be tuned by simply adjusting the amplitude $\alpha$. In this way, relatively small overlaps can be achieved already for low amplitudes. The problem is rather to establish the entanglement itself between the qumode and the qubits.

\subsection{A Pure \texorpdfstring{${\mathcal H}_{2}\otimes{\mathcal H}_{\infty}\otimes{\mathcal H}_{\infty}$}{H2xHinftyxHinfty} State}
Now the related state,
\begin{equation}
\ket{\tilde{\psi}}^{ABC}=\frac{1}{\sqrt{2}}\biggl(\ket{0}^{A}\ket{\phi_0}^{B}\ket{\psi_0}^{C}+\ket{1}^{A}\ket{\phi_1}^{B}\ket{\psi_1}^{C}\biggr),
\end{equation}
which contains two CV subsystems and is supported by the Hilbert space ${\mathcal H}_{2}^A\otimes{\mathcal H}_{\infty}^B\otimes{\mathcal H}^C_{\infty}$, shall be discussed. Defining ${\mathcal Q}_{\phi}:=\braket{\phi_0|\phi_1}$ and ${\mathcal Q}_{\psi}:=\braket{\psi_0|\psi_1}$, a Gram-Schmidt process yields
\begin{align}
\ket{\phi_0}^{C}&=\ket{0}^{B}, \\
\ket{\phi_1}^{C}&={\mathcal Q}_{\phi}\ket{0}^{B} + \sqrt{1-|{\mathcal Q}_{\phi}|^2}\ket{1}^{B}, \\
\ket{\psi_0}^{C}&=\ket{0}^{C}, \\
\ket{\psi_1}^{C}&={\mathcal Q}_{\psi}\ket{0}^{C} + \sqrt{1-|{\mathcal Q}_{\psi}|^2}\ket{1}^{C},
\end{align}
and therefore,
\begin{equation}
\begin{aligned}
\ket{\tilde{\psi}}^{ABC}&=\frac{1}{\sqrt{2}}\biggl(\ket{0}^{A}\ket{0}^{B}\ket{0}^{C}+{\mathcal Q}_{\phi}{\mathcal Q}_{\psi}\ket{1}^{A}\ket{0}^{B}\ket{0}^{C}+\sqrt{(1-|{\mathcal Q}_{\phi}|^2)(1-|{\mathcal Q}_{\psi}|^2)}\ket{1}^{A}\ket{1}^{B}\ket{1}^{C} \\
&\qquad\quad +{\mathcal Q}_{\psi}\sqrt{1-|{\mathcal Q}_{\phi}|^2)}\ket{1}^{A}\ket{1}^{B}\ket{0}^{C}+{\mathcal Q}_{\phi}\sqrt{1-|{\mathcal Q}_{\psi}|^2)}\ket{1}^{A}\ket{0}^{B}\ket{1}^{C}\biggr).
\end{aligned}
\end{equation}
Once again, the state is effectively a three-qubit state in ${\mathcal H}_{2}^A\otimes{\mathcal H}_{2}^B\otimes{\mathcal H}^C_2$. Therefore, calculate the relevant squared concurrences:
\begin{align}
C^2(A|B)&=|{\mathcal Q}_{\psi}|^2(1-|{\mathcal Q}_{\phi}|^2)=|\braket{\psi_0|\psi_1}|^2(1-|\braket{\phi_0|\phi_1}|^2) \\
C^2(A|C)&=|{\mathcal Q}_{\phi}|^2(1-|{\mathcal Q}_{\psi}|^2)=|\braket{\phi_0|\phi_1}|^2(1-|\braket{\psi_0|\psi_1}|^2),  \\
C^2(B|C)&=0, \\ 
C^2(A|BC)&=1-|{\mathcal Q}_{\phi}|^2|{\mathcal Q}_{\psi}|^2=1-|\braket{\phi_0|\phi_1}|^2|\braket{\psi_0|\psi_1}|^2,
\end{align}
Hence,
\begin{equation}
\tau_{res}=(1-|{\mathcal Q}_{\phi}|^2)(1-|{\mathcal Q}_{\psi}|^2)=(1-|\braket{\phi_0|\phi_1}|^2)(1-|\braket{\psi_0|\psi_1}|^2),
\end{equation}
which is plotted in figure \ref{fig:resEnt}. 
\begin{figure}[ht]
\begin{center}
\includegraphics[width=10cm]{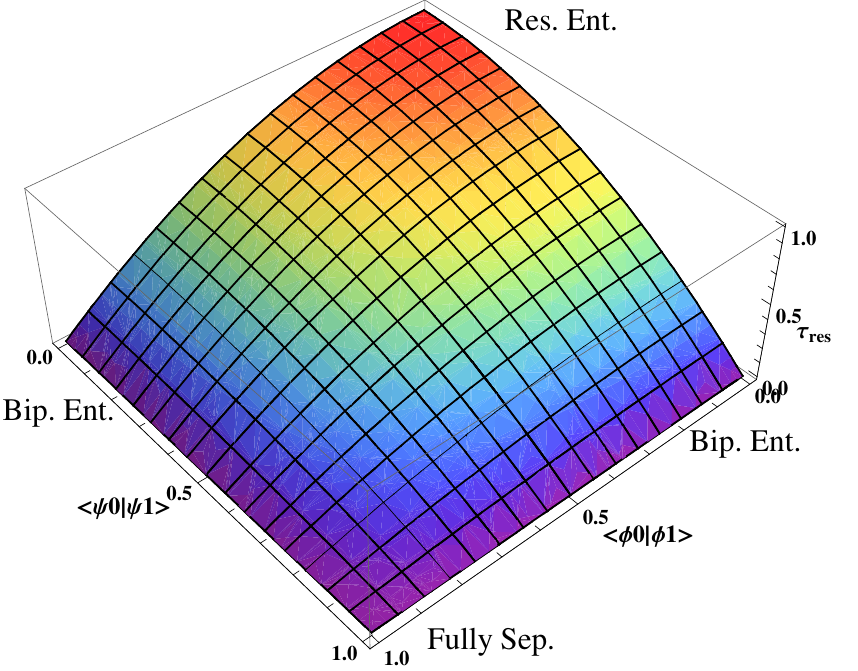}
\caption[The residual entanglement of the state $\ket{\psi}^{ABC}$ as a function of the overlaps $\braket{\phi_0|\phi_1}$ and $\braket{\psi_0|\psi_1}$.]{The residual entanglement of the state $\ket{\psi}^{ABC}$ as a function of the overlaps $\braket{\phi_0|\phi_1}$ and $\braket{\psi_0|\psi_1}$, which were assumed to be real without loss of generality. In the upper corner, for $\tau_{res}=1$, the state is maximally GHZ-like entangled. It shows no bipartite entanglement. In the left and the right corner, the residual entanglement is zero. Nevertheless, the state is maximally entangled, but just bipartite 2-entangled. Finally, in the lower corner, the state is fully separable. For either $\braket{\phi_0|\phi_1}$=0 or $\braket{\psi_0|\psi_1}$=0, the state is always 1-ebit entangled, while for both $\braket{\phi_0|\phi_1}\neq0$ and $\braket{\psi_0|\psi_1}\neq0$ the state has less than 1-ebit entanglement.}
\label{fig:resEnt}
\end{center}
\end{figure}
Furthermore, the total entanglement $C^2_{total}$ is given by
\begin{equation}
C^2_{total}=C^2(A|B)+C^2(A|C)+C^2(B|C)+\tau_{res}=1-|\braket{\phi_0|\phi_1}|^2|\braket{\psi_0|\psi_1}|^2.
\end{equation}
These results provide a basis for interesting discussions. Figure \ref{fig:resEnt} looks quite unspectacular. However, it contains actually interesting information in combination with the other results. First, it can be seen that for both overlaps $\braket{\phi_0|\phi_1}$ and $\braket{\psi_0|\psi_1}$ being zero, the state is once again exactly the GHZ-state, which shows only GHZ-like 3-entanglement but no bipartite entanglement. In the case that only one of the overlaps is zero, the state considered in the previous subsection is obtained. It is always 1-ebit entangled, while it can be tuned between GHZ-like entanglement and common bipartite 2-entanglement. If the overlap which is not zero becomes one, maximal Bell state-like 2-entanglement occurs. Finally, if none of the overlaps is zero the state is not 1-ebit entangled anymore and it stands in a mixture of bipartite 2-entanglement and GHZ-like 3-entanglement. If both overlaps are one, the state becomes simply fully separable, showing no entanglement at all. All this is basically reflected in figure \ref{fig:resEnt}. The upper corner of the graph corresponds to a GHZ-state, while the left and the right corners denote states which are maximally bipartite 2-entangled. Every point in the graph which is not at the upper left or upper right edge corresponds to a less than 1-ebit entangled state, while states lying on these edges are 1-ebit entangled. States on the lower edges of the graph are only 2-entangled, while all other states show also GHZ-like 3-entanglement. Finally, the lower corner corresponds to a fully separable state.

Coming to a conclusion, it is pointed out once again that it is quite remarkable that just by tuning simple overlaps between states, it is possible to gradually switch between all these different entanglement scenarios. Furthermore, tuning overlaps is a relatively easy task, when the qumode states are simply realized by ordinary coherent states. Then only the amplitudes have to be adjusted. However, the experimental preparation of the overall entangled tripartite state $\ket{\tilde{\psi}}^{ABC}$ may cause difficulties. Nevertheless, this idea of tuning between various entanglement configurations by modification of the overlaps of the participating qumode states may be a scheme which could possibly be experimentally realized in future.

\chapter{Hybrid Entanglement in Experiments}\label{ch:5}
In the previous chapter hybrid entanglement has been illuminated from a theoretical point of view. Different classes of HE were identified. Furthermore, approaches regarding the quantification as well as the detection of hybrid entanglement have been discussed in-depth on a theoretical level. Additionally, some peculiarities of multipartite HE have been examined. This is all very nice, but naturally two questions arise. On the one hand, how can HE actually be experimentally generated? And on the other hand, what is it useful for?

These questions are addressed in this chapter. First, an experimental scheme is introduced, which achieves the generation of pure entangled qubit-qumode states. When sending these systems through a channel, mixed HE or even true HE may be obtained. Second, three experimental applications are briefly described. On the one side, there are schemes regarding quantum key distribution as well as the generation of Schr\"odinger-cat states, which both involve pure entangled qubit-qumode states. On the other side, qubus appoaches for quantum communication and computation actually utilize multipartite HE.

\section{Generating Hybrid Entanglement and Cat State Preparation}\label{sec:Generation}
Consider a single 2-level system (qubit) interacting with a single mode of the quantized electromagnetic field. Such a configuration can be easily achieved by, for example, placing an appropriate atom inside an optical cavity which supports a single photonic qumode. The required 2-level "atom" may for instance be realized by a quantum dot \cite{QDot}. Then, in the \textit{Jaynes-Cummings-Paul model} (\textit{JCM}), applying the rotating-wave approximation, the system's Hamiltonian is given by
\begin{equation}
\hat{H}^{JCM}=\hbar\Omega\hat{a}^\dagger\hat{a}+\frac{1}{2}\hbar\omega\hat{\sigma}_z+\hbar g(\hat{\sigma}\hat{a}^\dagger+\hat{\sigma}^\dagger\hat{a}),
\end{equation}
where $\Omega$ is the electromagnetic field's frequency and $\omega$ is the atom's transition frequency. Furthermore, $\hat{a}$ as well as $\hat{a}^\dagger$ are the fields mode operators, while $\hat{\sigma}_z$ is the qubit's Pauli Z operator acting on the atom's ground and excited states, $\ket{0}$ and $\ket{1}$ respectively. Additionally, $\hat{\sigma}=\ket{0}\bra{1}$ as well as $\hat{\sigma}^\dagger=\ket{1}\bra{0}$ are the raising and lowering operators of the atom. Finally, the interaction strength is determined by the position dependent vacuum Rabi frequency
\begin{equation}
g(\vec{r})=\frac{|\vec{\mathfrak{p}}\,\vec{u}(\vec{r})|}{\hbar}{\mathcal E}_0,
\end{equation}
with the vacuum electric field inside the cavity ${\mathcal E}_0$, the atom's dipole moment $\vec{\mathfrak{p}}$, and the mode function $\vec{u}(\vec{r})$ of the resonator \cite{Schleich,JCM1,JCM2}. Note that the atom is assumed to be at rest. Otherwise, additionally a center-of-mass motion term $\frac{\hat{\vec{P}}^2}{2m}$ would occur in the Hamiltonian.

The JCM is one of the most fundamental models in quantum optics when considering atom-light interactions. It is in principle the most elementary model which is nevertheless able to describe most phenomena in cavity quantum electrodynamics. 

The most important term in the Hamiltonian is the interaction term $\hbar g(\hat{\sigma}\hat{a}^\dagger+\hat{\sigma}^\dagger\hat{a})$, which describes the dynamics in the JCM. In the interaction picture,
\begin{equation}
\hat{H}_{int}^{JCM}=\hbar g(\hat{\sigma}\hat{a}^\dagger\mathrm{e}^{i\Delta t}+\hat{\sigma}^\dagger\hat{a}\mathrm{e}^{-i\Delta t})
\end{equation}
is obtained, with the frequency detuning $\Delta:=\Omega-\omega$. Most interesting are the dynamics in the limits of this model: The resonant case $\Delta=0$ and the far off-resonant case with very large $\Delta$. As the far off-resonant case conserves photon numbers as well as atomic populations, and introduces only detuning dependent phase shifts, it is also called the \textit{dispersive limit}. This is also the regime which is of relevance for the preparation of HE states. Hence, we shall no longer consider the resonant case.

More precisely, the dispersive limit is reached if
\begin{equation}
\frac{2g\sqrt{n+1}}{|\Delta|}\ll 1,
\end{equation}
with photon number $n$. In this case the effective Hamiltonian,
\begin{equation}
\hat{H}^{eff}=-\frac{\hbar g^2}{\Delta}\Bigl(\hat{\sigma}_z\hat{a}^\dagger\hat{a}+\frac{1}{2}(\hat{\sigma}_z+{\mathbb 1})\Bigr),
\end{equation} 
can be used to describe the system \cite{Schleich}. Once again, the interesting term, necessary for the generation of HE, is only the interaction term. Hence, just consider
\begin{equation}
\hat{H}^{eff}_{int}=-\hbar\chi\hat{\sigma}_z\hat{a}^\dagger\hat{a}\,,\qquad\chi:=\frac{g^2}{\Delta}.
\end{equation} 
Then the corresponding unitary evolution operator is given by
\begin{equation}
\hat{U}=\mathrm{exp}[-\frac{i}{\hbar}\hat{H}^{eff}_{int}t]=\mathrm{exp}[i\chi\hat{\sigma}_z\hat{a}^\dagger\hat{a}\,t],
\end{equation}
which represents a simple rotation in the qumode's phase space \cite{PvL}. Hence, define $\varphi:=\chi t$ and rewrite this equation as
\begin{equation}
\hat{U}=\hat{R}(\varphi\hat{\sigma}_z)=\mathrm{exp}[i\varphi\hat{\sigma}_z\hat{a}^\dagger\hat{a}].
\end{equation}
$\hat{R}(\varphi\hat{\sigma}_z)$ corresponds to a conditional rotation, acting on the whole Hilbert space of the qubit and the qumode. 
\begin{figure}[ht]
\begin{center}
\includegraphics[width=7cm]{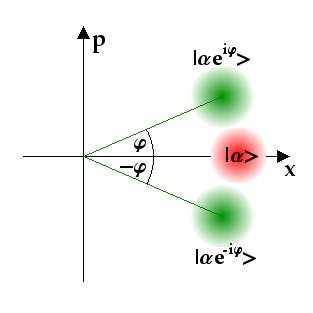}
\caption[A coherent state $\ket{\alpha}$ with $\alpha\in{\mathbb R}$ is conditionally rotated by $\pm\varphi$ in phase space.]{A coherent state $\ket{\alpha}$ with $\alpha\in{\mathbb R}$ is conditionally rotated by $\pm\varphi$ in phase space, depending on the state of the qubit. This procedure results in the qumode states $\ket{\alpha\,\mathrm{e}^{\pm i\varphi}}$.}
\label{fig:GeneratingHE}
\end{center}
\end{figure}
Applying it on a coherent state $\ket{\alpha}$ yields 
\begin{equation}
\begin{aligned}
\hat{R}(\varphi\hat{\sigma}_z)\ket{\alpha}&=\mathrm{exp}[i\varphi\hat{\sigma}_z\hat{n}]\,\mathrm{e}^{-\frac{|\alpha|^2}{2}}\sum_{n=0}^\infty\,\frac{\alpha^n}{\sqrt{n!}}\ket{n} \\
&=\mathrm{e}^{-\frac{|\alpha|^2}{2}}\sum_{n=0}^\infty\,\frac{(\alpha\,\mathrm{e}^{i\varphi\hat{\sigma}_z})^n}{\sqrt{n!}}\ket{n} \\
&=\ket{\alpha\,\mathrm{e}^{i\varphi\hat{\sigma}_z}}.
\end{aligned}
\end{equation}
The state is rotated in phase space by a value of $\pm\varphi$ depending on the state of the qubit (see figure \ref{fig:GeneratingHE}). At this point it becomes obvious that in the dispersive limit only phase shifts are acquired, but no photon numbers or atomic populations are touched. The most interesting case occurs, when the qubit is in a superposition state $\frac{1}{\sqrt{2}}(\ket{0}+\ket{1})$:
\begin{equation}\label{eq:GenerateHE}
\begin{aligned}
\hat{R}(\varphi\hat{\sigma}_z)\ket{\alpha}\otimes\frac{1}{\sqrt{2}}\Bigl(\ket{0}+\ket{1}\Bigr) &=\ket{\alpha\,\mathrm{e}^{i\varphi\hat{\sigma}_z}}\otimes\frac{1}{\sqrt{2}}\Bigl(\ket{0}+\ket{1}\Bigr) \\
&=\frac{1}{\sqrt{2}}\Bigl(\ket{\alpha\,\mathrm{e}^{i\varphi}}\ket{0}+\ket{\alpha\,\mathrm{e}^{-i\varphi}}\ket{1}\Bigr).
\end{aligned}
\end{equation}
Finally, this is a hybrid entangled state between a qubit and a qumode of the form 
\begin{equation}
\ket{\psi}_{HE}^{AB} =c_{0}\ket{0}^A\ket{\psi_{0}}^B+c_{1}\ket{1}^A\ket{\psi_{1}}^B,
\end{equation}
as considered in paragraph \ref{subsec:PQbQm}. Concluding, it can be summarized that the JCM provides an efficient way for the generation of HE between a qumode and a qubit. Of course the qumode may also consecutively interact with several qubits, or similarly, several modes may possibly interact with a single or several qubits. This would result in a multipartite HE state. Since $\varphi=\chi t$ with typically very short interaction times, only small phases can be acquired. Nevertheless, for sufficiently large amplitude $\alpha$, the overlap between the states $\{\ket{\alpha\,\mathrm{e}^{\pm i\varphi}}\}$ can become arbitrarily small.

Finally, from the state of equation \eqref{eq:GenerateHE} CSS states can be created by a suitable projection measurement on the qubit \cite{cats4,PvL}. Simply perform a projection on the conjugate $\hat{\sigma}_x$-basis $\ket{\pm}=\frac{1}{\sqrt{2}}(\ket{0}\pm\ket{1})$ and obtain cat states of the form
\begin{equation}
\frac{1}{\sqrt{N_\pm}}\Bigl(\ket{\alpha\,\mathrm{e}^{i\varphi}}\pm\ket{\alpha\,\mathrm{e}^{-i\varphi}}\Bigr).
\end{equation} 
Hence, besides the hybrid cat state generation technique with aid of CV squeezed vacuum and DV photon subtraction \cite{cats1, cats2}, which has been presented in section \ref{subsec:HybridApps}, this approach constitutes an alternative way that exploits hybrid entanglement.

Summarizing, in this section an efficient scheme for the generation of qubit-qumode HE has been introduced, which also sets out the basis for general multipartite HE creation. Furthermore, a first application of HE has been presented - the CSS state generation.

\section{Quantum Key Distribution}\label{sec:QKD}
Another interesting application of hybrid entangled states is found in \textit{quantum key distribution} (\textit{QKD}) \cite{QKDrev}. In QKD two parties, \textit{Alice} and \textit{Bob}, aim at distributing a shared random key, which only they know, and which can be used for encrypting and decrypting messages. They make use of the fundamental laws of quantum mechanics to guarantee security. An eavesdropper, \textit{Eve}, must somehow measure the quantum states, which Alice and Bob exchange, to gain information. However, in quantum mechanics measurements disturb the system. Therefore, Alice and Bob are able to detect Eve. Furthermore, the amount of intercepted information can be calculated and bounded. If the shared information between Alice and Bob is greater than the information Eve could get, nevertheless a secret key can be distilled via privacy amplification, effectively eliminating Eve's amount of information. In privacy amplification the shared key, which Eve has partial knowledge about, is used to distill a shorter key, which Eve has less knowledge about. This can be repeated continuously to, in principle, completely zero out Eve's information about the final key. Finally, the shared secret key can be used for the execution of a one-time pad ciphering to interchange unconditionally secure messages.\footnote[1]{For a detailed introduction to QKD see the review \cite{QKDrev}.}  

In principle, there are two different approaches towards QKD. On the one hand, there are so-called \textit{prepare and measure} schemes, where Alice prepares an entangled state first, measures her part, and finally sends the remaining part through an insecure domain to Bob. On the other hand, there are \textit{entanglement-based} approaches in which an entangled state is provided by an untrusted third party. Afterwards Alice and Bob perform measurements on this state and basically quantum teleport the information from Alice to Bob. An example for a prepare and measure approach is the well-known BB84 protocol by Bennett and Brassard, which is actually the first suggested QKD protocol at all \cite{BB84}. In 1991, Ekert presented the first entanglement-based scheme, the E91 protocol \cite{E91}.

At the end the \textit{prepare and measure} protocols as well as the \textit{entanglement-based} schemes are more or less similar. The only difference is, where the entangled state is prepared. Hence, for a given prepare and measure protocol always the corresponding entanglement-based setup can be written down, and vice versa. It is the same key point in both configurations which determines whether unconditionally secure QKD can be successfully performed or not: The correlated data between Alice and Bob have to violate some Bell inequality \cite{QKDbell}. Hence, in other words, the data have to correspond to an entangled state. Therefore, security in QKD protocols is in principle an inseparability problem. Actually, there are schemes which involve hybrid entangled states. Such a protocol shall be presented now \cite{QKDh1,QKDh2}.

Suppose Alice prepares randomly entangled states of the form 
\begin{equation}
\ket{\psi}^{AB}_n=\frac{1}{\sqrt{2}}\Bigl(\ket{0}^A\ket{\mathrm{e}^{in\frac{\pi}{2}}\alpha}^B+\ket{1}^A\ket{\mathrm{e}^{i(n+2)\frac{\pi}{2}}\alpha}^B\Bigr),
\end{equation}
with two bases $n=0\lor 1$ (compare to figure \ref{fig:4x2state}). Then she measures her part of the state and sends the remaining state to Bob through an insecure domain controlled by Eve. Bob has to make measurements such that he can distinguish between $\ket{\mathrm{e}^{in\frac{\pi}{2}}\alpha}^B$ and $\ket{\mathrm{e}^{i(n+2)\frac{\pi}{2}}\alpha}^B$ (see figure \ref{fig:QKD}).
\begin{figure}[ht]
\begin{center}
\includegraphics[width=13cm]{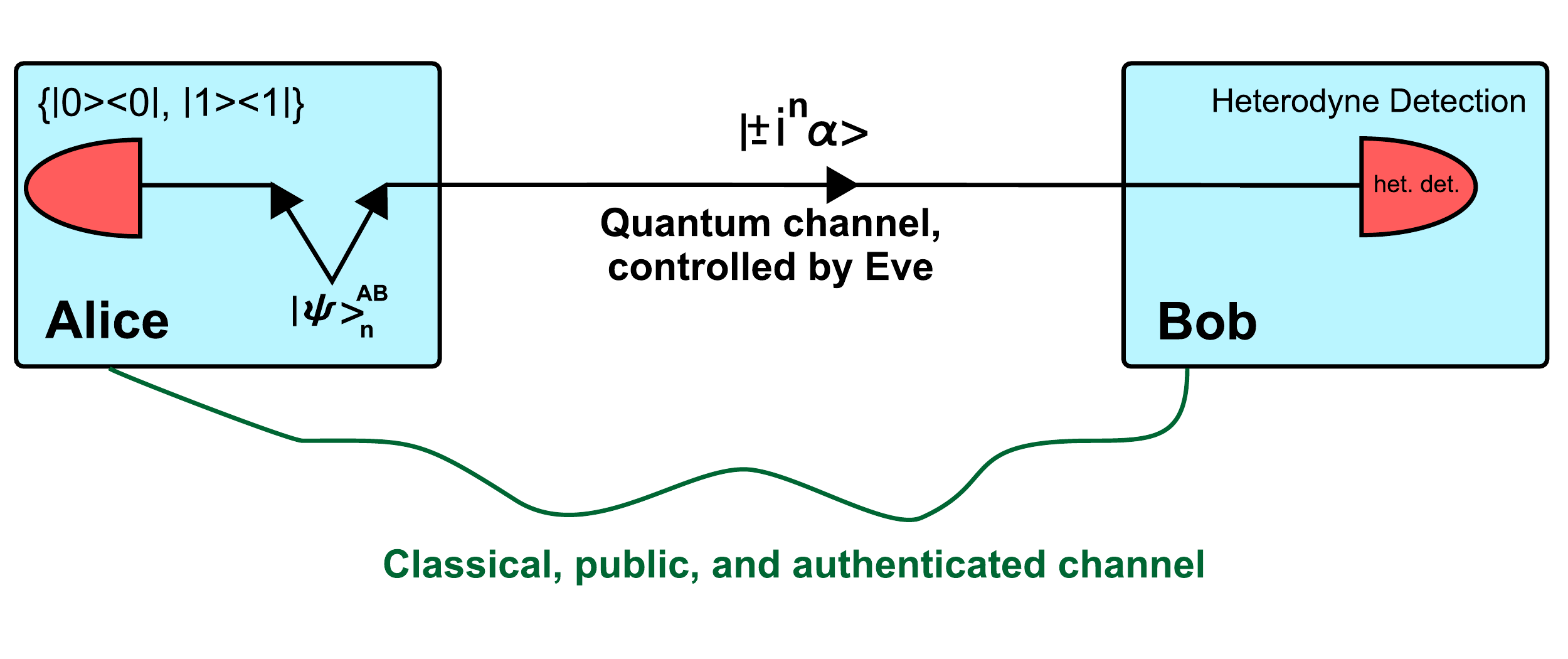}
\caption[The explained QKD scheme: Alice prepares entangled states $\ket{\psi}^{AB}_n$, and measures her part of the state. The remaining coherent states are sent through an unsecure domain to Bob, who performs for example heterodyne measurements on the state \cite{QKDh2}.]{The explained QKD scheme: Alice prepares entangled states $\ket{\psi}^{AB}_n$, and measures her part of the state. The remaining coherent states are sent through an unsecure domain to Bob, who performs for example heterodyne measurements on the state \cite{QKDh2}. Also an unambiguous state discrimination setup as introduced in section \ref{subsec:HybridApps} would be reasonable.}
\label{fig:QKD}
\end{center}
\end{figure}
So he chooses either the $n=0$ or $n=1$ basis, performs appropriate measurements, and distinguishes between $\ket{\mathrm{e}^{in\frac{\pi}{2}}\alpha}^B$ and $\ket{\mathrm{e}^{i(n+2)\frac{\pi}{2}}\alpha}^B$ for the chosen $n$. Since $\alpha$ is chosen such that the states have a significant overlap, the distinguishment between the states can only be performed with a certain error probability. This is a very important feature in this scheme and actually in most QKD protocols. After the transmission Alice and Bob talk to each other via a classical, public, and authenticated channel, and tell each other, which basis they have chosen.\footnote[2]{It is essential to employ an \textit{authenticated} channel for the classical communication after the measurements. If this is not the case, the protocol will be vulnerable to \textit{man-in-the-middle attacks}, where Eve impersonates Bob, when talking to Alice, and vice versa.} They keep only the bits which have been measured for an equal choice. Furthermore, all bits corresponding to uncertain measurement results are discarded and only bits are kept which Bob knows with high certainty. This process of cheosing only favorable events is called postselection and is one of the key ingredients necessary for successful QKD \cite{QKDx}.

Consider an eavesdropper Eve. She has mainly two possibilities for attacks. The \textit{intercept and resend attack} as well as the \textit{beam-splitting attack}.

An intercept and resend attack works in the way that she simply measures the state, prepares a new state corresponding to her measurement result and resends it. However, Eve also has to choose a basis for the measurement to obtain a reasonable result. Then there are two cases: On the one hand, Eve may choose the right basis, probably learns the bit's value and resends the right state (due to the overlap, she may also obtain a wrong measurement result and resend a wrong state. However, this may be noticed by Alice and Bob, as will become clear soon). Alice and Bob can not notice this. On the other hand, she may choose the wrong basis. Then she resends the state also in the other basis and Bob may measure the wrong bit, even though he chooses the initially right basis corresponding to Alice's basis choice. In this case Alice and Bob notice Eve: When they compare the bases they have been applied and find that they have used the same bases but measured too often different bits, they know that they have been eavesdropped. These measurement results which they compare are the control bits. The statistics of these measurement results will reveal any eavesdropper. Of course, these control bits cannot be used for the key anymore, as the classical authenticated channel is public. If Alice and Bob perform a sufficiently large number of control measurements and observe anomalous statistics in the measurements, they can conclude that they have been eavesdropped and can start the key transmission again. Hence, this protocol is secure against intercept and resend attacks. Actually, it is not even necessary to begin with a new key transmission as due to the postselection Alice and Bob share a greater amount of information than Alice and Eve do, anyway. This enables key distillation. It will become clear in the next paragraph when discussing beam-splitting attacks.

So, furthermore, to gain information about the key Eve can tap parts of the signal by a beam splitter with transmission $\eta$ (beam-splitting attack). Then Eve gets an amount of $1-\eta$ of the signal and therefore states of the form $\ket{\sqrt{1-\eta}\mathrm{e}^{in\frac{\pi}{2}}\alpha}^B$ and $\ket{\sqrt{1-\eta}\mathrm{e}^{i(n+2)\frac{\pi}{2}}\alpha}^B$. However, this also yields attenuated states for Bob. When performing a large amount of measurements Bob notices that the attenuation is higher than expected, and hence can conclude that the transmission has been eavesdropped. Nevertheless, especially for very large $\eta$ he may possibly not notice. Even so, Eve will not be able to take possession of the whole key. For her measurements she also randomly chooses the basis $n$, independent of Bob's later choice. It is a crucial point that her measurements are fully independent of Bob's measurements. Furthermore, she also listens to the classical conversation between Alice and Bob, in which they compare the chosen bases. With the control measurements Alice and Bob cannot find Eve. She remains hidden. Therefore, the key corresponding to the chosen bits is actually used. Due to the postselection, Bob has only kept bits which correspond to measurements with very certain results and the right choice of basis. However, it is highly improbable, actually nearly impossible for a large number of events, that in case of a "good" measurement of Bob, also Eve always performs a "good" measurement, for which she knows the bit with high certainty. The postselection of favorable "good" measurements is done by Bob and not by Eve. The postselected events do not, in general, all correspond to "good" measurements of Eve. Therefore, Eve has more erroneous relevant measurements than Bob. Hence, the amount of shared information between Alice and Bob, $I_{AB}$, is larger than the amount of information held between Alice and Eve, $I_{AE}$. This translates into only partial knowledge of the key. Finally, Alice and Bob can perform privacy amplification to eliminate Eve's possible knowledge about the distilled key completely. Concluding, the scheme is also secure against beam-splitting attacks. Note that, as mentioned, the possibility of key distillation due to postselection also occurs in the earlier discussed scenario for intercept and resend attacks. Even though Eve is actually detected in this case, a key can nevertheless be distilled. 
 
In reality, things are slightly more difficult, as the channel from Alice to Bob is not only under the control of Eve but also possibly introduces further losses and noise. These effects may distort the states in such a way that no data can be obtained which is suited for key distillation. However, as noted earlier, QKD can be performed if the data shows nonclassical quantum correlations. Hence, denoting the channel by $\Upsilon^B$, the security of the key depends on whether $\hat{\rho}'_n{}^{AB}=({\mathbb 1}^A\otimes\Upsilon^B)\ket{\psi}^{AB}_n\bra{\psi}$ is entangled or not. Actually, $\hat{\rho}'_n{}^{AB}$ is a hybrid entangled state, and depending on the channel it may also show true hybrid entanglement. The detection of entanglement in states of this form has been discussed in detail in subsection \ref{subsecB}, where the channel was considered to be lossy and noisy. Note that there are further approaches for witnessing entanglement in this class of states, which are based on special EVMs related to the original SV determinants \cite{QKDh2,QKDh3}.
 
Concluding, hybrid entanglement and especially its detection is highly relevant for certain QKD protocols.

\section{Quantum Bus Schemes}\label{sec:qubus}
As explained earlier, entanglement is a key ingredient for most applications in quantum information. However, how is entanglement between, for example, two qubits experimentally established? Nonlinear interactions between the qubits to be entangled are necessary for this task. On the one hand, there are measurement-induced nonlinearities, which are for example heavily exploited in the quantum computation scheme by KLM \cite{KLM}. These approaches are highly non-deterministic, though. On the other hand, weak nonlinearities are actual physical nonlinear effects. They are typically much too weak for the realization of appropriate entangling gates when dealing, for example, with single photons. An elegant way out of this dilemma is provided by the already mentioned \textit{quantum bus} (\textit{qubus}) approach, which utilizes an intense qumode to mediate and enhance the interaction between two qubits to be entangled \cite{qubus1,qubus2,qubus3,qubus4}. In principle, the qumode interacts subsequently with both qubits, resulting in a tripartite qubit-qubit-qumode HE state (see figure \ref{fig:qubus1}). Afterwards, an appropriate measurement on the qumode is performed forcing the qubits to remain in an entangled state.  
\begin{figure}[ht]
\begin{center}
\includegraphics[width=12cm]{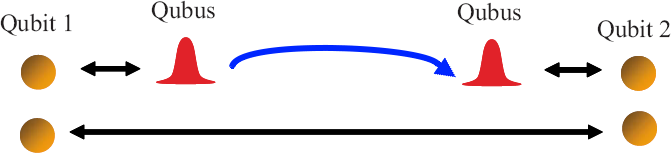}
\caption[The quantum bus as entanglement mediator: It entangles itself subsequently with two qubits. Finally, a measurement on the qubus establishes entanglement between the qubits \cite{qubus4}. This figure is a modified version of Fig. 1a in \cite{qubus4}. The figure was used with permission by Peter van Loock.]{The quantum bus entangles itself with the first qubit, then travels to the second qubit and also interacts with it. Afterwards an appropriate measurement on the qubus is performed to establish entanglement between the qubits \cite{qubus4}. This figure is a modified version of Fig. 1a in \cite{qubus4}. The figure was used with permission by Peter van Loock.}
\label{fig:qubus1}
\end{center}
\end{figure}

This shall be explained in slightly more detail now, according to the references \cite{qubus4,PvL}. Start with a coherent state $\ket{\alpha}$ as qubus and a first qubit which is in the state $\frac{1}{\sqrt{2}}(\ket{0}+\ket{1})$. If they interact dispersively as explained in subsection \ref{sec:Generation}, the resulting hybrid entangled state has the form
\begin{equation}
\frac{1}{\sqrt{2}}\biggl(\ket{0}+\ket{1}\biggr)\otimes\ket{\alpha}\quad\rightarrow\quad\frac{1}{\sqrt{2}}\biggl(\ket{0}\ket{\alpha}+\ket{1}\ket{\alpha\,e^{i\theta}}\biggr),
\end{equation}
up to an irrelevant phase space rotation. The qubus subsystem is now sent through a channel to the second qubit. However, this channel may introduce losses. These are again modeled by coupling to a vacuum environment as illustrated in figure \ref{fig:ADampCh}. Hence, after the channel
\begin{equation}
\begin{aligned}
\frac{1}{\sqrt{2}}\biggl(\ket{0}\ket{\alpha}+\ket{1}\ket{\alpha\,e^{i\theta}}\biggr) \rightarrow & \;\,\rho=\frac{1}{2}\biggl(\ket{0}\bra{0}\otimes\ket{\sqrt{\eta}\alpha}\bra{\sqrt{\eta}\alpha} \\ & + \ket{1}\bra{1}\otimes\ket{\sqrt{\eta}\alpha \,e^{i\theta}}\bra{\sqrt{\eta}\alpha\,e^{i\theta}} \\ & +  A \ket{0}\bra{1}\otimes\ket{\sqrt{\eta}\alpha}\bra{\sqrt{\eta}\alpha\,e^{i\theta}} \\ & + A^\ast \ket{1}\bra{0}\otimes\ket{\sqrt{\eta}\alpha\,e^{i\theta}}\bra{\sqrt{\eta}\alpha} \biggr),
\end{aligned}
\end{equation}
where $\eta$ denotes the transmissivity of the beam splitter, and $A:=\braket{\sqrt{1-\eta}\alpha \,e^{i\theta}|\sqrt{1-\eta}\alpha}$. Now the qumode interacts again dispersively with the second qubit, resulting in another controlled phase rotation of the qumode. This interaction then yields a tripartite HE state of the form 
\begin{align}
\rho & \rightarrow  F\ket{\Psi^+}\bra{\Psi^+}+(1-F)\ket{\Psi^-}\bra{\Psi^-}, \\
\ket{\Psi^\pm} & :=  \frac{1}{\sqrt{2}}\ket{\sqrt{\eta}\alpha}\ket{\Phi_2^\pm}\pm\frac{1}{2}e^{-i\varphi}\ket{\sqrt{\eta}\alpha e^{i\theta}}\ket{10}+\frac{1}{2}e^{i\varphi}\ket{\sqrt{\eta}\alpha e^{-i\theta}}\ket{01}, \\
\ket{\Phi^\pm_2} & =  \frac{1}{\sqrt{2}}(\ket{00}\pm\ket{11}), \\
\varphi & :=  \eta\alpha^2\sin\theta, \\
F & :=  \frac{1}{2}[1+e^{-(1-\eta)\alpha^2(1-\cos\theta)}].
\end{align}
Finally, during the last step of the scheme the qubus disentangles from the qubits, projecting them on two-qubit entangled states. This proceeds via an appropriate measurement on the qubus mode. Homodyne detection along the x-axis is experimentally most efficient, discriminating between $\ket{\sqrt{\eta}\alpha}$ and $\ket{\sqrt{\eta}\alpha e^{\pm i\theta}}$ \cite{hybridRep}. This projects the qubits on approximate versions of $\ket{\Phi^+_2}$ or $\frac{1}{\sqrt{2}}(\ket{10}+\ket{01})$. However, as the distance on the x-axis between the peaks of the coherent states scales as $\sim \alpha\theta^2$ for small $\theta$, the amplitude $\alpha$ has to be very large to compensate for the small $\theta$ and to suppress overlap errors.  Unfortunately, this increases the decoherence induced by imperfect channel transmission, too. A schematic of the whole experimental setup is shown in figure \ref{fig:qubus2}.
\begin{figure}[ht]
\begin{center}
\includegraphics[width=12cm]{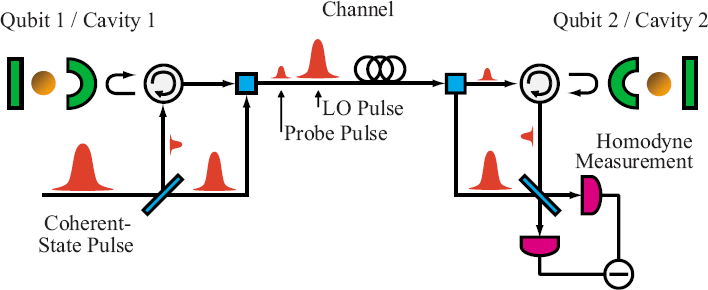}
\caption[Possible experimental realization of the quantum bus scheme, which distributes entanglement between two separated qubits \cite{qubus4,hybridRep}. This figure is taken from \cite{qubus4} with permission by Peter van Loock.]{Possible experimental realization of the quantum bus scheme, which distributes entanglement between two separated qubits. It is important to note that for the homodyne detection a strong local oscillator (LO) pulse has to be taken along \cite{qubus4,hybridRep}. This figure is taken from \cite{qubus4} with permission by Peter van Loock.}
\label{fig:qubus2}
\end{center}
\end{figure}
Note that there are also other ways for measurements. For example homodyne detection along the p-axis on the one hand results in lower overlap errors, yielding higher fidelities than in the x-axis homodyning scheme. On the other hand not all outcomes of the measurement correspond to a two-qubit state which remains entangled. Hence, several outcomes must be discarded, rendering the scheme less efficient. Another approach consists in utilizing POVMs and performing unambiguous state discrimination instead of homodyne detection \cite{qubus4}. In the end, for every measurement scheme there is some trade-off between fidelity and success probability.

Finally, it is worth mentioning that for a qubus implementation inside, for example, a hybrid quantum repeater setup, also hybrid entanglement distillation as well as hybrid entanglement swapping has to be performed \cite{qubus4}. These shall not be discussed here, however, for more information, see the reference.

It can be concluded that not just bipartite hybrid entanglement is experimentally relevant, but also multipartite HE. Quantum bus schemes are quite popular, since they are very important for the experimental realization of several tasks, especially regarding quantum communication. Hence, it is important to also investigate the involved multipartite hybrid entanglement, which has been done briefly in this thesis in the previous chapter.

\chapter{Summary and Conclusion}\label{ch:6}
The central theme of this thesis has been the investigation of hybrid entanglement with a focus on bipartite systems. So far, no rigorous and systematic analysis of hybrid entanglement has been performed. This has been the goal of this thesis, and actually, a lot of new insights were obtained. However, also several other related questions, which are to some extent associated with the subtleties of hybrid entanglement and have therefore arisen during the studies, have been discussed and in most cases successfully answered.

The thesis starts with a brief introduction, {\textbf{chapter \ref{ch:2}}}, explaining some fundamentals of quantum information theory. However, not really basic anymore, in subsection \ref{catwitnessing} general entanglement witnessing with aid of the SV criteria in two-mode cat states is discussed, and a generalized determinant is derived which achieves entanglement detection for states of the form $\frac{1}{\sqrt{{\mathcal N}_\phi}}(\ket{\alpha,\alpha}+\mathrm{e}^{i\phi}\ket{-\alpha,-\alpha})$ for any phase $\phi$. This is quite interesting, but, nonetheless, the witnessing is only demonstrated for pure states and would be much more relevant in the mixed state case. Further investigations should concern the performance of the generalized determinant for the corresponding mixed states obtained when sending the considered two-mode cats through decoherence-causing channels. Additionally, also even more general two-mode CSS states, such as the ones in equations \eqref{eq:generalCats1} and \eqref{eq:generalCats2}, may be analyzed with the SV criteria. Finally, general issues on entanglement witnessing have been discussed, which bring out some very interesting questions to be tackled in the future: Are witnesses which work well for pure states also the right ones for witnessing in the corresponding mixed states, which are obtained by sending the pure states through decoherence-causing channels? Does there always exist an \textit{optimal} witness in the sense that it makes use of \textit{all} entanglement of a given state? If yes, it could be expected that this witness also detects entanglement in the decohered states if still present. Concluding, besides the two-mode cat state witnessing, also several interesting questions regarding entanglement witnessing have been briefly presented and discussed. It is exciting to await future results concerning these open problems.

In {\textbf{chapter \ref{ch:3}}}, the notion of \textit{hybrid} as used in this thesis is defined. Furthermore, some motivations, why it is reasonable to "go hybrid", are discussed. Concerning this matter, the most important point probably comes from quantum computation and the question, how to realize universality. Hence, a very important protocol for hybrid quantum computation, the GKP scheme, is briefly presented and discussed \cite{GKP}. Especially important in this regard are the different notions of nonlinearity. On the one hand, there are measurement-induced nonlinearities, which are heavily utilized by KLM in their teleportation based scheme for DV quantum computation, while, on the other hand, weak nonlinearities are actual physical nonlinearities, which are for example exploited and enhanced in the subsequently discussed qubus schemes (see section \ref{sec:qubus}) \cite{KLM,qubus1,qubus2,qubus3,qubus4}. Additionally, chapter \ref{ch:3} summarizes some applications involving hybrid approaches and provides a lot of reference material. Concluding, this chapter is mainly a motivation for the analysis of the central theme of the thesis, the hybrid entanglement, and serves as an overview over suggested applications which make use of hybrid approaches.

The main topic of the thesis is discussed in {\textbf{chapter \ref{ch:4}}}. It starts with a proof that every correlated hybrid state is non-Gaussian. This fact is not a priori clear and therefore not trivial. Then, the Gram-Schmidt process is discussed and a modified version, which has been called inverse Gram-Schmidt process, is derived. The latter is actually the central tool, when dealing with hybrid entanglement, and gives rise to a new classification scheme of hybrid entangled states: On the one side, states which contain a finite number of qumode states can, with aid of the inverse Gram-Schmidt process, always be expressed in finite orthonormal bases. Therefore, they are effectively DV hybrid entangled states. On the other side, for states possessing an infinite number of qumode states, the inverse Gram-Schmidt process is not applicable anymore. Hence, these states "stay" in their hybrid Hilbert space and are called \textit{truly} hybrid entangled. This classification constitutes the central result of this thesis and it is quite relevant: Since hybrid entangled states are initially always non-Gaussian and live in an infinite-dimensional Hilbert space, exact entanglement quantification is impossible. However, the effectively DV hybrid entangled states can be described by proper density matrices after the application of the inverse Gram-Schmidt process. Hence, all the tools, such as several entanglement measures or, in the pure state case, the Schmidt decomposition, can be applied. It is quite remarkable that many of these complicated non-Gaussian and infinite-dimensional Hilbert space states can be handled so easily in the end. It is worth pointing out that the Gram-Schmidt process can be also applied on certain fully CV systems for which all subsystems are CV. The only requirement is that the superposition sum contains only a finite number of qumodes in each subsystem. Then, also these states can be handled with DV tools. Examples for this kind of CV states are, for example, the pure two-mode cat states, which have been considered in this thesis. 

In subsections \ref{sec:pure}-\ref{sec:mixed2}, a large variety of exemplary states is presented and analyzed. For several states which are either too complicated for DV measures, or which are truly hybrid entangled and therefore stay in the non-Gaussian infinite-dimensional regime, exact entanglement quantification is not possible. However, at least entanglement detection can be achieved with aid of the SV criteria, which therefore are one of the most important tools for studying hybrid entanglement. Most important during in investigations are states of the form $\frac{1}{\sqrt{2}}(\ket{0}\ket{\alpha}+\ket{1}\ket{-\alpha})$, as they are relevant in certain QKD schemes (compare section \ref{sec:QKD}) \cite{QKDh2}. Entanglement quantification of states like this, when sent through a lossy channel, has been demonstrated via the concurrence. Furthermore, true hybrid entanglement is obtained, when taking the channel not only as lossy but also as noisy. Using this example it was shown that the SV criteria also work for the new true hybrid entanglement. Also another example of a true hybrid entangled state has been considered, \eqref{eq:trueHE2}, for which entanglement witnessing was accomplished.

When dealing with true HE, infinite sums or integrals always occur. It is important not to truncate these and by this obtain approximate results. A truncation would actually correspond to the analysis of a different state, strictly speaking not even a truly HE state anymore, but an effectively DV HE state. It appeared that this complicates the analysis of truly hybrid entangled states considerably. This has led to the examination of general issues regarding entanglement witnessing in mixed states with an infinite number of mix terms. It has been shown that, when the global robustness of the truncated state (truncation in the convex combination of mix terms, i.e. pure state projectors) is higher than a certain amount $1-p_N$ depending on where truncation has been performed, the overall state with an infinite number of mix terms is entangled. Additionally, it was derived that, for every entangled state of this form, a truncated state exists such that its global robustness is greater than $1-p_N$. Due to the complicated calculation of robustness monotones these theorems are of limited practical use. Nevertheless, they provide insight into the connections between entanglement witnessing and entanglement quantification. Furthermore, one may actually think of special scenarios, where no witnessing can be performed for the given states, while for any truncated state, robustness calculations can be performed. Hence, these theorems can still be considered as a nice result, especially for the case that, in the future, efficient algorithms for the computation of the robustness are found.

In addition, not only bipartite HE has been investigated, but also multipartite HE (section \ref{sec:multihe}). While in the bipartite case true hybrid entanglement only occurs for certain kinds of mixed states (the ones with an infinite number of mixed terms) it has been shown that in the multipartite case true HE may be also obtained for pure states. Any hybrid system which contains at least two CV subsystems can be truly HE. When the overall system contains only one CV subsystem, true HE only appears for the mentioned mixed states. Additionally, investigations regarding the entanglement structure of tripartite HE states have been performed. It has been worked out that, depending on the overlaps between the involved qumode states, either no entanglement at all, bipartite entanglement between two subsystems, GHZ-like genuine tripartite entanglement, or a mix of the latter is obtained. This is a very interesting finding, as it is quite easy to tune overlaps: Just use coherent states and adjust their amplitudes. Possibly, this continuous switching between different entanglement scenarios via overlap tuning is something which can be achieved experimentally in the future. 

Finally, to round off the thesis and show that hybrid entanglement can be actually experimentally realized and is also physically relevant, {\textbf{chapter \ref{ch:5}}} discusses the generation of hybrid entanglement via the dispersive interaction between a qubit and a qumode in the Jaynes-Cummings-Paul model. Additionally, some applications, which employ hybrid entangled states, are briefly explained. First, via projection measurements on the DV part of qubit-qumode entangled states, Schr\"odinger cat states can be generated. Second, a certain QKD protocol is presented, which exploits hybrid entanglement between qubits and binary coherent states $\{\ket{\pm\alpha}\}$. The necessary requirement for unconditionally secure key distribution then corresponds to inseparability of the mentioned entangled states when subject to the quantum channel, which connects Alice with Bob. Fittingly, in this thesis entanglement witnessing in states of this kind is discussed quite in detail (for the thermal channel). These investigations are therefore of very high relevance. Third, the quantum bus principle is introduced. A qubus mediates a nonlinear interaction between two separated qubits and can therefore serve as an entangling gate, important for several applications in quantum computation and quantum communication. The approach makes use of tripartite hybrid entangled states, showing that also studying multipartite entanglement is very important.  

Altogether, it can be concluded that the thesis provides a lot of new insights into hybrid entanglement and associated topics. The motivation of this work has come from questions concerning the classification of hybrid entanglement, its detection and quantification as well as its generation and its applications. A classification scheme has been derived and the detection of hybrid entanglement via the SV criteria has been demonstrated. Exact entanglement quantification can be either performed like in the DV case, or it cannot be accomplished at all, since the state is non-Gaussian and infinite-dimensional. A generation scheme has been presented and three applications have been discussed.

\clearpage
\phantomsection
\addcontentsline{toc}{chapter}{Appendices}

\begin{appendix}

\chapter{Notations}
\section{Abbreviations}
Throughout this thesis several abbreviations are used for the sake of appropriate readability. Here is an overview.
\vspace{1cm}
\begin{table}[!ht]\renewcommand{\arraystretch}{1.3}\addtolength{\tabcolsep}{-1pt}
\begin{center}
\begin{tabular}{@{}*{4}{l}@{}}
\toprule
\textbf{Abbreviation} & \textbf{Meaning}  \\ 
\midrule
CKW & Coffman, Kundu and Wootters \\
CP & Completely Positive  \\
CPTP & Completely Positive, Trace-Preserving  \\
CSS & Coherent State Superposition, Schr\"odinger-Cat State  \\
CV & Continuous Variable(s)  \\
DV & Discrete Variable(s)  \\
EVM & Expectation Value Matrix  \\
GHZ & Greenberger-Horne-Zeilinger \\
GKP & Gottesman, Kitaev and Preskill \\
HE & Hybrid Entanglement / Hybrid Entangled  \\
JCM & Jaynes-Cummings-Paul Model \\
KLM & Knill, Laflamme and Milburn \\
LOCC & Local Operations and Classical Communication  \\
NPT & Negative Partial Transpose  \\
PnCP & Positive but not Completely Positive  \\
POVM & Positive Operator-Valued Measure  \\
PPT & Positive Partial Transpose  \\
PT & Partial(ly) Transpose(d)  \\
QIT & Quantum Information Theory  \\
QKD & Quantum Key Distribution  \\
SV & Shchukin-Vogel \\
TMSS & Two-Mode Squeezed State \\
\bottomrule
\end{tabular}
\end{center}
\end{table}
\newpage

\section{Mathematical
Notation}
In addition to the summary of abbreviations here is also an overview over the mathematical notations used throughout the thesis.
\vspace{1cm}
\begin{table}[!ht]\renewcommand{\arraystretch}{1.3}\addtolength{\tabcolsep}{-1pt}
\begin{center}
\begin{tabular}{@{}*{4}{l}@{}}
\toprule
\textbf{Notation} & \textbf{Meaning}  \\ 
\midrule
$\mathbb R,\;\mathbb C$ & Field of real numbers, field of complex numbers  \\
${\mathbb R}^n$ & Set of real $n$-dimensional vectors \\
$\mathbb N,\;{\mathbb N}_0,\;{\mathbb Z}$ & Natural numbers, natural numbers including zero, integers   \\
${\mathcal H}(\mathbb C)$ & Hilbert space over the field of the complex numbers  \\
${\mathcal H}_d^A$ & $d$-dimensional Hilbert space of system A, qudit space for finite $d$  \\
${\mathcal H}^A_\infty$ & Infinite-dimensional Hilbert space of system A, qumode space  \\
${\mathbb{1}}_n,\;{\mathbb{O}}_n$ & Identity operator in $n$ dimensions, zero operator in $n$ dimensions  \\
$M^\ast$ & Elementwise complex conjugate of matrix $M$: $(M^\ast)_{ij}=M^\ast_{ij}$  \\
$M^T$ & Transpose of matrix $M$: $(M^T)_{ij}=M_{ji}$ \\
$M^\dagger$ & Hermitian adjoint of matrix $M$, complex conjugation \\
 & and transposition: $M^\dagger=(M^T){}^\ast$  \\
$\hat{A}^{\Gamma_B}$ & Partial transposition of operator $\hat{A}$ with respect to system $B$ \\
$\hat{A}\otimes\hat{B}$ & Tensor product of operators $\hat{A}$ and $\hat{B}$  \\
$\hat{A}\oplus\hat{B}$ & Direct sum of operators $\hat{A}$ and $\hat{B}$ \\
$\hat{\sigma}_x$, $\hat{\sigma}_y$, $\hat{\sigma}_z$ & Pauli matrices \\
$\mathrm{tr}[\hat{A}]$ & Complete trace of operator $\hat{A}$: $\mathrm{tr}[\hat{A}]=\sum_n\braket{n|\hat{A}|n}$ \\
$\mathrm{tr}_B[\hat{A}]$, $\mathrm{tr}_{{\mathcal H}^B}[\hat{A}]$ & Partial trace of operator $\hat{A}$ with respect to Hilbert space of \\
 & system $B$: $\mathrm{tr}_B[\hat{A}]=\sum_n\braket{n_B|\hat{A}|n_B}$ \\
$\det[M]$ & Determinant of matrix $M$ \\
$\dim[\mathcal V]$ & Dimension of vector space $\mathcal V$ \\
$\mathrm{Sp}(2n,\mathbb R)$ & Real, symplectic group on ${\mathbb R}^{2n}$ \\
$\hat{S}_{sym}(\hat{x}^n,\hat{p}^m)$ & Symmetrization of operators $\hat{x}^n,\hat{p}^m$ \\
$SEP,\;QS$ & Set of separable quantum states $\hat{\rho}_{sep}$, set of all quantum states $\hat{\rho}$ \\
$:\hat{A}:$ & Normally ordered form of operator $\hat{A}$ \\
$\lfloor x\rfloor$ & Floor of the real number $x$: $\lfloor x\rfloor=\max\{m\in{\mathbb Z}:m\leq x\}$  \\
$\Upsilon\hat{\rho}$ & CPTP map $\Upsilon(\hat{\rho})$ on state $\hat{\rho}$  \\
$({\mathbb 1}^A\otimes\Upsilon^B)\hat{\rho}^{AB}$ & One-sided channel: CPTP map $\Upsilon_B$ on subsystem B of the state $\hat{\rho}^{AB}$  \\
$x\lor y$ & Logical disjunction, "or"-conjunction: $x$ or $y$  \\
$]a,b[$ & Open interval of real numbers between $a$ and $b$: $]a,b[:=\{x\in{\mathbb R}:a<x<b\}$  \\
$\{\alpha_i\}\prec\{\beta_i\}$ & Set $\{\alpha_i\}$ is majorized by set $\{\beta_i\}$  \\
\bottomrule
\end{tabular}
\end{center}
\end{table}

\chapter{Calculations and Formulas}
This part of the appendix provides auxiliary material. For the understanding of the thesis often only theorems themselves are important but not their proofs. These proofs are shifted here. Furthermore, ancillary calculations are presented and some important definitions and formulas are listed.

\section{The Hadamard Lemma}\label{Hamamard}
\begin{lem}[Hadamard's Lemma]
For bounded operators $\hat{X}$ and $\hat{Y}$,
\begin{equation}
\mathrm{e}^{\hat{X}} \hat{Y} \mathrm{e}^{-\hat{X}}=\sum_{m=0}^\infty\frac{[\hat{X},\hat{Y}]_m}{m!},
\end{equation}
with $[\hat{X},\hat{Y}]_m=[\hat{X},[\hat{X},\hat{Y}]_{m-1}]$ and $[\hat{X},\hat{Y}]_0=\hat{Y}$.
\end{lem}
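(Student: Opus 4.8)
The plan is to prove Hadamard's lemma by introducing an auxiliary one-parameter family of operators and showing that both sides of the identity satisfy the same first-order differential equation with the same initial condition. First I would define $\hat{F}(t) := \mathrm{e}^{t\hat{X}} \hat{Y} \mathrm{e}^{-t\hat{X}}$ for a real parameter $t$, so that the quantity of interest is recovered at $t=1$. Differentiating with respect to $t$ gives $\hat{F}'(t) = \hat{X}\mathrm{e}^{t\hat{X}}\hat{Y}\mathrm{e}^{-t\hat{X}} - \mathrm{e}^{t\hat{X}}\hat{Y}\mathrm{e}^{-t\hat{X}}\hat{X} = [\hat{X}, \hat{F}(t)]$. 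Iterating, one obtains $\hat{F}^{(m)}(t) = [\hat{X}, \cdot]^m \hat{F}(t)$, i.e. the $m$-fold nested commutator applied to $\hat{F}(t)$.

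Next I would Taylor-expand $\hat{F}(t)$ around $t=0$. Since $\hat{F}(0) = \hat{Y}$ and $\hat{F}^{(m)}(0) = [\hat{X}, \hat{Y}]_m$ in the paper's notation, the Taylor series reads
\begin{equation}
\hat{F}(t) = \sum_{m=0}^\infty \frac{t^m}{m!}[\hat{X},\hat{Y}]_m.
\end{equation}
Setting $t=1$ then yields exactly the claimed identity $\mathrm{e}^{\hat{X}}\hat{Y}\mathrm{e}^{-\hat{X}} = \sum_{m=0}^\infty \frac{[\hat{X},\hat{Y}]_m}{m!}$. An alternative, purely algebraic route I could take instead is to expand $\mathrm{e}^{\hat{X}}$ and $\mathrm{e}^{-\hat{X}}$ directly as power series, multiply out $\mathrm{e}^{\hat{X}}\hat{Y}\mathrm{e}^{-\hat{X}} = \sum_{j,k} \frac{(-1)^k}{j!k!}\hat{X}^j \hat{Y} \hat{X}^k$, collect terms of fixed total degree $m = j+k$, and verify by induction on $m$ that $\sum_{j+k=m}\frac{(-1)^k}{j!k!}\hat{X}^j\hat{Y}\hat{X}^k = \frac{1}{m!}[\hat{X},\hat{Y}]_m$; this combinatorial identity is itself provable by a short binomial-coefficient induction.

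The main obstacle is analytic rather than algebraic: one must justify that the operator exponential series converges, that term-by-term differentiation of $\hat{F}(t)$ is legitimate, and that the Taylor expansion of $\hat{F}(t)$ actually converges to $\hat{F}(t)$ rather than merely being an asymptotic expansion. Here the boundedness hypothesis on $\hat{X}$ and $\hat{Y}$ is exactly what is needed — in the Banach algebra of bounded operators, $\|\mathrm{e}^{t\hat{X}}\| \le \mathrm{e}^{|t|\,\|\hat{X}\|}$, the map $t \mapsto \mathrm{e}^{t\hat{X}}$ is norm-differentiable, and $\|[\hat{X},\hat{Y}]_m\| \le (2\|\hat{X}\|)^m \|\hat{Y}\|$ makes the series $\sum_m \frac{t^m}{m!}[\hat{X},\hat{Y}]_m$ absolutely convergent for all $t$, so the Taylor remainder vanishes. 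I would state these norm bounds explicitly and invoke standard Banach-algebra calculus to close the argument, noting that in the physics applications (where $\hat{X},\hat{Y}$ may be unbounded mode operators) the identity is used formally and the manipulations are understood to hold on a suitable dense domain.
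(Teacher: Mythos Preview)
Your proof is correct and follows essentially the same approach as the paper: both introduce the one-parameter family $\hat{F}(t)=\mathrm{e}^{t\hat{X}}\hat{Y}\mathrm{e}^{-t\hat{X}}$, derive the differential equation $\hat{F}'(t)=[\hat{X},\hat{F}(t)]$, identify the Taylor/series solution $\sum_m \frac{t^m}{m!}[\hat{X},\hat{Y}]_m$, and evaluate at $t=1$. Your version is in fact more careful than the paper's, since you explicitly address the convergence and differentiability issues via the boundedness hypothesis, whereas the paper simply verifies the series formally solves the ODE without discussing convergence.
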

\begin{proof}[\textbf{Proof.}]
Consider
\begin{equation}
F(t)=\mathrm{e}^{t\hat{X}} \hat{Y} \mathrm{e}^{-t\hat{X}}\,,\qquad F(0)=\hat{Y}.
\end{equation}
Then
\begin{equation}
\frac{dF(t)}{dt}=[\hat{X},F].
\end{equation}
This differential equation has the solution
\begin{equation}
F(t)=\sum_{n=0}^\infty \frac{t^n}{n!}[\hat{X},\hat{Y}]_n,
\end{equation}
which can be seen by evaluation of the derivative:
\begin{equation}
\begin{aligned}
\frac{dF(t)}{dt} &= \sum_{n=1}^\infty \frac{n\,t^{n-1}}{n!}[\hat{X},\hat{Y}]_n \\
 &= \sum_{n=0}^\infty \frac{\,t^{n}}{n!}[\hat{X},\hat{Y}]_{n+1} \\
 &= \hat{X}\sum_{n=0}^\infty \frac{\,t^{n}}{n!}[\hat{X},\hat{Y}]_{n}-\sum_{n=0}^\infty \frac{\,t^{n}}{n!}[\hat{X},\hat{Y}]_{n}\,\hat{X} \\
 &= [\hat{X},F].
\end{aligned}
\end{equation}
Evaluated at $t=1$, this yields
\begin{equation}
\mathrm{e}^{\hat{X}} \hat{Y} \mathrm{e}^{-\hat{X}}=\sum_{m=0}^\infty\frac{[\hat{X},\hat{Y}]_m}{m!},
\end{equation}
which completes the proof.
\end{proof}

\section{The Schmidt Decomposition}\label{Schmidtproof}
\begin{thm}[Schmidt decomposition]
Let $\ket{\psi}^{AB}$ be a normalized pure bipartite state in the product Hilbert space ${\mathcal H}^A_{n}\otimes{\mathcal H}^B_{m}$ with $\dim[{\mathcal H}^A_{n}]=n$ and $\dim[{\mathcal H}^B_{m}]=m$. Then there exist orthonormal bases $\{\ket{u_i}^A\}$ and $\{\ket{w_i}^B\}$ such that
\begin{equation}
\ket{\psi}^{AB}=\sum_{i=1}^{r(\psi)} \sqrt{\alpha_i}\ket{u_i}^A\ket{w_i}^B\,,\qquad \alpha_i>0\;\forall \;i,\;\sum_i \alpha_i=1,
\end{equation}
with $r(\psi)\leq min\{n,m\}$. Such a decomposition is called \textup{Schmidt decomposition} of $\ket{\psi}^{AB}$.
\end{thm}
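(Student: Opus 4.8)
The plan is to derive the Schmidt decomposition directly from the singular value decomposition, proving both statements in one stroke as the excerpt anticipates. First I would fix arbitrary orthonormal bases $\{\ket{j}^A\}_{j=1}^n$ of ${\mathcal H}^A_n$ and $\{\ket{k}^B\}_{k=1}^m$ of ${\mathcal H}^B_m$ and expand the state as $\ket{\psi}^{AB}=\sum_{j,k}c_{jk}\ket{j}^A\ket{k}^B$, collecting the coefficients into an $n\times m$ complex matrix $C=(c_{jk})$. Normalization of $\ket{\psi}^{AB}$ then reads $\sum_{j,k}|c_{jk}|^2=\mathrm{tr}[C^\dagger C]=1$.

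Next I would establish the SVD of $C$ from the spectral theorem, which is exactly Theorem \ref{thm:SVD}. The matrix $C^\dagger C$ is $m\times m$ Hermitian and positive semidefinite, so it has an orthonormal eigenbasis $\{v_i\}$ with real eigenvalues $\sigma_i^2\geq0$, taken in decreasing order; let $r$ be the number of strictly positive ones. For $i\leq r$ set $u_i:=Cv_i/\sigma_i$; a short computation using $C^\dagger Cv_i=\sigma_i^2v_i$ shows the $u_i$ are orthonormal in ${\mathbb C}^n$, and one extends $\{u_i\}_{i\leq r}$ to a full orthonormal basis of ${\mathbb C}^n$. Assembling the $u_i$ as columns of a unitary $U$, the $v_i$ as columns of a unitary $V$, and the $\sigma_i$ into the diagonal of a rectangular $n\times m$ matrix $A$ (zeros elsewhere), one verifies column by column that $C=UAV^\dagger$, with $r\leq\min\{n,m\}$ since $A$ has at most $\min\{n,m\}$ nonzero diagonal entries.

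Then I would substitute $c_{jk}=\sum_i U_{ji}\sigma_iV_{ki}^\ast$ back into the expansion, obtaining
\begin{equation}
\ket{\psi}^{AB}=\sum_{i=1}^{r}\sigma_i\Bigl(\sum_j U_{ji}\ket{j}^A\Bigr)\Bigl(\sum_k V_{ki}^\ast\ket{k}^B\Bigr)=\sum_{i=1}^{r}\sigma_i\ket{u_i}^A\ket{w_i}^B,
\end{equation}
where the vectors $\ket{u_i}^A$ and $\ket{w_i}^B$ are orthonormal because $U$ and $V^\ast$ are unitary. Setting $\alpha_i:=\sigma_i^2>0$ and $r(\psi):=r$, normalization gives $\sum_i\alpha_i=\sum_i\sigma_i^2=\mathrm{tr}[C^\dagger C]=1$, which is the claimed form. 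Finally I would record the companion facts used in the main text: tracing out $B$ yields $\hat{\rho}^A=\sum_i\alpha_i\ket{u_i}^A\bra{u_i}$ and likewise $\hat{\rho}^B=\sum_i\alpha_i\ket{w_i}^B\bra{w_i}$, so the reduced operators share the positive spectrum $\{\alpha_i\}$ and the $\ket{u_i}$, $\ket{w_i}$ are their eigenvectors, each determined up to a phase when the $\alpha_i$ are nondegenerate, with the relative phase within a pair $(\ket{u_i},\ket{w_i})$ pinned down by the requirement of reproducing $\ket{\psi}^{AB}$.

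The main obstacle I anticipate is organizational rather than conceptual: carrying the rectangular, possibly rank-deficient case through cleanly so that the summation range is $r$ and not $\min\{n,m\}$, and keeping the conjugation conventions on $V$ consistent between the matrix identity $C=UAV^\dagger$ and the ket-level statement. A secondary point is that, to keep the appendix self-contained, one must either prove the SVD from the spectral theorem as sketched or cite it; I would include the short argument above since it is precisely what Theorem \ref{thm:SVD} requires.
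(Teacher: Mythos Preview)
Your argument is correct and complete. It differs in packaging from the paper's proof, though the underlying idea is the same. The appendix (following Audretsch) does not write down a coefficient matrix or invoke the SVD explicitly; instead it chooses the $A$-basis to be the eigenbasis of the reduced operator $\hat{\rho}^A$, forms the relative states $\ket{\tilde{w}_k}^B=\sum_i a_{ki}\ket{i}^B$, and reads off their orthogonality by computing $\hat{\rho}^A=\mathrm{tr}_B[\ket{\psi}\bra{\psi}]=\sum_{k,l}\braket{\tilde{w}_l|\tilde{w}_k}\ket{u_k}^A\bra{u_l}$ and comparing with the spectral decomposition $\hat{\rho}^A=\sum_k p_k\ket{u_k}^A\bra{u_k}$. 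Your route via $C^\dagger C$ is the matrix-level translation of exactly this step (eigenvectors of $C^\dagger C$ play the role of the $\ket{w_i}^B$, and $CC^\dagger$ is the matrix of $\hat{\rho}^A$). What your version buys is that it simultaneously spells out a proof of Theorem~\ref{thm:SVD}, which the paper asserts ``corresponds'' to the Schmidt proof but does not write out; the paper's version buys a slightly more basis-free, operator-theoretic presentation that makes the identification of the Schmidt vectors with the eigenvectors of the reduced states immediate.
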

\begin{proof}[\textbf{Proof.}] \cite{Audretsch} 
Expand $\ket{\psi}^{AB}$ in the orthonormal bases $\{\ket{k}^A:1\leq k\leq n\}$ of ${\mathcal H}^A_{n}$ and $\{\ket{i}^B:1\leq i\leq m\}$ of ${\mathcal H}^B_{m}$. Assume $n\leq m$ without loss of generality.
\begin{equation}
\ket{\psi}^{AB}=\sum_{k,i=1}^{n,m}a_{ki}\ket{k}^{A}\ket{i}^B.
\end{equation}
Introduce the relative states
\begin{equation}
\ket{\tilde{w}_k}^{B}:=\sum_{i=1}^{m}a_{ki}\ket{i}^B,
\end{equation}
leading to
\begin{equation}\label{eq:schmidtproofx1}
\ket{\psi}^{AB}=\sum_{k=1}^{n} \ket{k}^{A}\ket{\tilde{w}_k}^{B}.
\end{equation}
The relative states $\{\ket{\tilde{w}_k}^{B}\}$ are in general neither orthogonal nor normalized. Orthogonality can be obtained in the following way:

1. For $\{\ket{k}^A\}$ choose the orthonormal eigenvectors $\{\ket{u_k}^A\}$ of $\hat{\rho}^A$:
\begin{equation}\label{eq:schmidtproof1}
\hat{\rho}^A=\sum_{k=1}^{n} p_k\ket{u_k}^A\bra{u_k}\,,\qquad p_k\geq 0\;\forall \;k,\;\sum_{k=1}^{n} p_k=1.
\end{equation}
Let $p_k>0$ for $1\leq k\leq r$ and $p_k=0$ for $r+1\leq k\leq n$. The eigenvalues $p_k$ can be degenerate. The $\{\ket{u_k}^A\}$ are determined only up to a phase for non-degenerate eigenvalues.

2. The relative states $\{\ket{\tilde{w}_k}^{B}\}$ are now associated with the $\{\ket{u_k}^A\}$. Consider subsystem $\hat{\rho}^A$:
\begin{equation}\label{eq:schmidtproof2}
\begin{aligned}
\hat{\rho}^A &= \mathrm{tr}_B[\,\ket{\psi}^{AB}\bra{\psi}\,] \\
 &=\mathrm{tr}_B[\,\sum_{k,l}^n \ket{u_k}^A\bra{u_l}\otimes\ket{\tilde{w}_k}^{B}\bra{\tilde{w}_l}\,] \\
 &=\sum_{k,l}^n \braket{\tilde{w}_l|\tilde{w}_k} \ket{u_k}^A\bra{u_l}.
\end{aligned}
\end{equation}

Comparison of equation \eqref{eq:schmidtproof1} with equation \eqref{eq:schmidtproof2} leads to the desired orthogonality of the relative states:
\begin{equation}\label{eq:schmidtproofx2}
\braket{\tilde{w}_l|\tilde{w}_k}^B=p_k\delta_{kl}.
\end{equation}
For $k\geq r+1$, the $\{\ket{\tilde{w}_k}^{B}\}$ are null vectors. Combining equations \eqref{eq:schmidtproofx1} and \eqref{eq:schmidtproofx2} as well as identifying $\{p_k\}$ with $\{\alpha_k\}$ completes the proof.
\end{proof}

\section{The Vacuum Density Operator}\label{VacuumDensityOperator}
\begin{lem}
Consider a qumode system with mode operators $\hat{a}$ and $\hat{a}^\dagger$. The vacuum density operator $\ket{0}\bra{0}$ can then be written as
\begin{equation}
\ket{0}\bra{0}=\,:\mathrm{e}^{-\hat{a}^\dagger\hat{a}}:\;,
\end{equation}
where $:\cdots:$ denotes normal ordering.
\end{lem}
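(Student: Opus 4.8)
The plan is to use the normal-ordering identity together with the known action of annihilation and creation operators on Fock states. The statement to prove is that the vacuum projector $\ket{0}\bra{0}$ equals the normally ordered expression $:\mathrm{e}^{-\hat{a}^\dagger\hat{a}}:$. First I would expand the normal-ordered exponential as a power series: since normal ordering moves all creation operators to the left of all annihilation operators \emph{without} using the commutation relations during the reordering, one has
\begin{equation}
:\mathrm{e}^{-\hat{a}^\dagger\hat{a}}:\;=\;\sum_{n=0}^\infty \frac{(-1)^n}{n!}\,\hat{a}^{\dagger^n}\hat{a}^{n}.
\end{equation}
So the whole problem reduces to showing $\sum_{n=0}^\infty \frac{(-1)^n}{n!}\hat{a}^{\dagger^n}\hat{a}^{n}=\ket{0}\bra{0}$.

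Next I would evaluate the right-hand side on an arbitrary Fock basis element $\ket{m}$, using $\hat{a}^{n}\ket{m}=\sqrt{m!/(m-n)!}\,\ket{m-n}$ for $n\le m$ (and $0$ for $n>m$) and $\hat{a}^{\dagger^n}\ket{m-n}=\sqrt{m!/(m-n)!}\,\ket{m}$. This gives
\begin{equation}
\left(\sum_{n=0}^\infty \frac{(-1)^n}{n!}\hat{a}^{\dagger^n}\hat{a}^{n}\right)\ket{m}=\sum_{n=0}^{m}\frac{(-1)^n}{n!}\,\frac{m!}{(m-n)!}\,\ket{m}=\left(\sum_{n=0}^{m}(-1)^n\binom{m}{n}\right)\ket{m}.
\end{equation}
By the binomial theorem, $\sum_{n=0}^{m}(-1)^n\binom{m}{n}=(1-1)^m$, which is $1$ for $m=0$ and $0$ for $m\ge 1$. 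Hence the operator annihilates every $\ket{m}$ with $m\ge1$ and fixes $\ket{0}$, so it acts exactly as $\ket{0}\bra{0}$; since the Fock states form a basis, the operators are equal.

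The main obstacle — really the only subtle point — is justifying the first step, namely that $:\mathrm{e}^{-\hat{a}^\dagger\hat{a}}:$ is to be interpreted term-by-term as $\sum_n \frac{(-1)^n}{n!}\hat{a}^{\dagger^n}\hat{a}^{n}$ rather than as the ordinary (non-reordered) exponential; this is precisely the definition of normal ordering applied to each monomial $(\hat{a}^\dagger\hat{a})^n$, so I would state that convention explicitly at the start. A secondary technical remark is convergence: the series converges strongly when applied to any finite superposition of Fock states (each $\ket{m}$ picks up only finitely many terms), which suffices to identify the two operators on a dense domain and hence everywhere by boundedness of $\ket{0}\bra{0}$. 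I would close by noting this identity generalizes immediately to the two-mode case quoted in equation \eqref{eq:SVderive2}, since $:\mathrm{e}^{-\hat{a}^\dagger\hat{a}-\hat{b}^\dagger\hat{b}}:\,=\,:\mathrm{e}^{-\hat{a}^\dagger\hat{a}}:\,\otimes\,:\mathrm{e}^{-\hat{b}^\dagger\hat{b}}:\,=\ket{0}\bra{0}\otimes\ket{0}\bra{0}$.
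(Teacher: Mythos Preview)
Your proposal is correct and follows essentially the same approach as the paper: expand the normally ordered exponential as $\sum_{n}\frac{(-1)^n}{n!}\hat{a}^{\dagger n}\hat{a}^{n}$, apply it to an arbitrary Fock state $\ket{m}$, and reduce the result to the binomial sum $\sum_{n=0}^{m}(-1)^n\binom{m}{n}=\delta_{m0}$. Your additional remarks on the normal-ordering convention, convergence, and the two-mode extension go slightly beyond what the paper spells out, but the core argument is identical.
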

\begin{proof}[\textbf{Proof.}]
1. It is known that
\begin{equation}
(x+y)^n=\sum_{k=0}^n \binom n k x^{n-k}y^k.
\end{equation}
With $x=1$ and $y=-1$,
\begin{equation}\label{eq:binom1}
\sum_{k=0}^n (-1)^k \binom n k =\delta_{0n}.
\end{equation}
2. The vacuum density operator $\ket{0}\bra{0}$ is defined by its action on a Fock state $\ket{n}$,
\begin{equation}
\ket{0}\braket{0|n}=\delta_{0n}\ket{0}.
\end{equation}
Consider $:\mathrm{e}^{-\hat{a}^\dagger\hat{a}}:\ket{n}$:
\begin{equation}
\begin{aligned}
:\mathrm{e}^{-\hat{a}^\dagger\hat{a}}:\ket{n} &= \sum_{m=0}^{\infty} \frac{(-1)^m}{m!}\hat{a}^{\dagger^m}\hat{a}^m\,\ket{n} \\
&= \sum_{m=0}^{n} \frac{(-1)^m}{m!}\frac{n!}{(n-m)!}\,\ket{n} \\
&= \sum_{m=0}^{n} (-1)^m \binom n m \,\ket{n} \\
&= \delta_{0n}\ket{0},
\end{aligned}
\end{equation}
where equation \eqref{eq:binom1} has been exploited.
\end{proof}

\section{The Heaviside Step Function}\label{Heaviside}
In subsection \ref{catwitnessing}, the so-called \textit{Heaviside step function} $\Theta_{\frac{1}{2}}$ has been used. However, varying definitions can be found in literature. Therefore, here is the definition used in this thesis:
\begin{equation}
\Theta_{\frac{1}{2}}(x):=\begin{cases}
0 :& x<0, \\
\frac{1}{2} :& x=0, \\
1 :& x>0.
\end{cases}
\end{equation}
Further properties of this function are of no relevance for this thesis. Just note that it is related to the Dirac delta function via differentiation. However, in contrast to the delta function it can be also used as an actual function and is not only distributively defined. Furthermore, smooth approximations of $\Theta_{\frac{1}{2}}$ are given by so-called logistic functions, for example 
\begin{equation}
\Theta_{\frac{1}{2}}(x)=\lim_{k\rightarrow\infty}\Bigl(\frac{1}{2}+\frac{1}{\pi}\arctan(kx)\Bigr).
\end{equation}

\section{Hermite Polynomials}\label{Hermite}
The \textit{Hermite polynomials} $H_n(x)$ (\cite{Bronstein, Temme}) are defined by
\begin{equation}
H_n(x):=(-1)^n\mathrm{e}^{x^2}\frac{d^n}{dx^n}\mathrm{e}^{-x^2}.
\end{equation}
They are polynomials of order $n$. The first five Hermite polynomials are 
\begin{align}
H_0(x) & =1, \\
H_1(x) & =2x, \\ 
H_2(x) & =4x^2-2, \\
H_3(x) & =8x^3-12, \\
H_4(x) & =16x^4-48x^2+12.
\end{align}
However, there is also an explicit expression,
\begin{equation}
H_n(x)=n!\sum_{i=0}^{\lfloor\frac{n}{2}\rfloor}\frac{(-1)^i}{i!(n-2i)!}(2x)^{n-2i},
\end{equation}
where the floor function has been used. It is defined as
\begin{equation}
\lfloor x\rfloor=\max\{m\in{\mathbb Z}:m\leq x\}.
\end{equation}
Additionally, the polynomials can be calculated recursively with the relation
\begin{equation}
H_{n+1}(x)=2xH_n(x)-2nH_{n-1}(x),
\end{equation}
which yields together with $H_0(x)$ and $H_1(x)$ an efficient way for the actual calculation of the polynomials. 

Furthermore, Hermite's polynomials are orthogonal with respect to the weightfunction $\mathrm{e}^{-x^2}$,
\begin{equation}
\int\limits_{-\infty}^\infty H_n(x)H_m(x)\mathrm{e}^{-x^2}dx=2^nn!\sqrt{\pi}\delta_{nm},
\end{equation}
and they form an orthogonal basis of the Hilbert space of functions $f(x)$ for which
\begin{equation}
\int\limits_{-\infty}^\infty |f(x)|^2\mathrm{e}^{-x^2}dx<\infty.
\end{equation}

Beside their stong importance in quantum mechanics, where they are relevant for the construction of the the energy eigenstates of the quantum harmonic oscillator, the Hermite polynomials find applications in the finite element method for the numerical solution of partial differential equations.

\section{Kraus Operators of the Thermal Channel}\label{Apx:KrausOpTh}
In paragraph \ref{subsecB}, the thermal photon noise channel has been investigated (recall figure \ref{fig:NoiseCh}, which explains the modeling of the channel.). Furthermore, a set of Kraus operators, describing the channel, has been derived, which is presented now. Consider the thermal channel
\begin{equation}
\Upsilon^A_{thermal}(\hat{\rho}^A)=\mathrm{tr}_E[\hat{U}^{AE}\;\bigl(\hat{\rho}^A\otimes\hat{\rho}^E_{thermal}\bigr) \;\hat{U}^{{AE}^\dagger}],
\end{equation}
with 
\begin{equation}
\hat{U}^{AE}=e^{\theta(\hat{a}_E^\dagger \hat{a}_A-\hat{a}_A^\dagger \hat{a}_E)},
\end{equation}
and
\begin{equation}
\hat{\rho}^E_{thermal}=\sum_{n=0}^\infty \rho_n^{th.}\ket{n}^{E}\bra{n}\,,\qquad\rho_n^{th.}=\frac{\braket{n_{th.}}^n}{(1+\braket{n_{th.}})^{n+1}}.
\end{equation}
This can be rewritten in the form
\begin{equation}
\Upsilon^A_{thermal}(\hat{\rho}^A)=\sum_{n=0}^\infty\rho_n^{th.}\sum_{m=0}^\infty \hat{K}_{mn}\hat{\rho}^A\hat{K}_{mn}^\dagger,
\end{equation}
with 
\begin{equation}
\hat{K}_{mn}=\braket{m|\hat{U}^{AE}|n}^E.
\end{equation}
Defining $\hat{\tilde{K}}_{mn}:=\sqrt{\rho_n^{th.}}\hat{K}_{mn}$ yields
\begin{equation}
\Upsilon^A_{thermal}(\hat{\rho}^A)=\sum_{n,m=0}^\infty \hat{\tilde{K}}_{mn}\hat{\rho}^A\hat{\tilde{K}}_{mn}^\dagger.
\end{equation}
Now, merge the indices $m$ and $n$ into an index $k={\underline k}=(m,n)$ via an appropriate index ordering procedure. For example consider ${\underline k}_1=(m,n)$, ${\underline k}_2=(\tilde{m},\tilde{n})$ and order such that
\begin{equation}
{\underline k}_1 < {\underline k}_2 \Leftrightarrow
\begin{cases}
|{\underline k}_1|<|{\underline k}_2| & \text{or} \\
|{\underline k}_1|=|{\underline k}_2| & \text{and}\; {\underline k}_1 <' {\underline k}_2,
\end{cases}
\end{equation}
with $|{\underline k_1}|=m+n$, and ${\underline k}_1 <' {\underline k}_2$ denotes that the first nonzero difference $\tilde{m}-m,\tilde{n}-n$ is positive. This is by the way the ordering, which has also been applied in the derivation of the SV criteria in section \ref{subsec:witnessing}. Making use of this ordering,
\begin{equation}
\Upsilon^A_{thermal}(\hat{\rho}^A)=\sum_{k=0}^\infty \hat{\tilde{K}}_{k}\hat{\rho}^A\hat{\tilde{K}}_{k}^\dagger
\end{equation}
is obtained, which corresponds to an operator-sum decomposition of the thermal channel.

The Kraus operators $\hat{K}_{mn}$ are derived in the Fock basis as well as in the coherent state basis. The transformation from $\hat{K}_{mn}$ to the desired operators $\hat{\tilde{K}}_{mn}=\hat{\tilde{K}}_{k}$ is then trivial.

\textbf{1. Fock basis.}

Consider $\bra{m}\hat{U}^{AE}\ket{n}^E\ket{k}^A$:
\begin{equation}
\begin{aligned}
\bra{m}\hat{U}^{AE}\ket{n}^E\ket{k}^A & = \frac{1}{\sqrt{k!n!}}\bra{m}\hat{U}^{AE}(\hat{a}^{E^\dagger})^n(\hat{a}^{A^\dagger})^k\ket{0}^E\ket{0}^A \\
&=\frac{1}{\sqrt{k!n!}}\bra{m}(\sqrt{\eta}\hat{a}^{E^\dagger}-\sqrt{1-\eta}\hat{a}^{A^\dagger})^n(\sqrt{\eta}\hat{a}^{A^\dagger}+\sqrt{1-\eta}\hat{a}^{E^\dagger})^k\ket{0}^E\ket{0}^A \\
&=\frac{1}{\sqrt{k!n!}}\sum_{i=0}^n\sum_{j=0}^k {\binom n i}{\binom k j}\bra{m}\sqrt{\eta}^i(\hat{a}^{E^\dagger})^i(-\sqrt{1-\eta})^{n-i}(\hat{a}^{A^\dagger})^{n-i} \\
&\qquad\cdot\sqrt{\eta}^{k-j}(\hat{a}^{A^\dagger})^{k-j}\sqrt{1-\eta}^j(\hat{a}^{E^\dagger})^j\ket{0}^E\ket{0}^A \\
&=\frac{1}{\sqrt{k!n!}}\sum_{i=0}^n\sum_{j=0}^k {\binom n i}{\binom k j}\sqrt{\eta}^{k-j+i}\sqrt{1-\eta}^{n-i+j}(-1)^{n-i} \\
&\qquad\cdot\braket{m|i+j}^E\sqrt{(i+j)!(n+k-i-j)!}\ket{n+k-i-j}^A \\
&=\frac{1}{\sqrt{k!n!}}\sum_{i=0}^{\min\{n,m\}} {\binom n i}{\binom k {m-i}}\sqrt{\eta}^{k-m+2i}\sqrt{1-\eta}^{n+m-2i}(-1)^{n-i} \\ &\qquad\cdot\sqrt{m!(n+k-m)!}\ket{n+k-m}^A \\
&=\sqrt{\frac{m!(n+k-m)!}{k!n!}}\kappa_{mkn}(\eta)\ket{n+k-m}^A, 
\end{aligned}
\end{equation}
with 
\begin{equation}
\kappa_{mkn}(\eta):=\sum_{i=0}^{\min\{n,m\}} {\binom n i}{\binom k {m-i}}\sqrt{\eta}^{k-m+2i}\sqrt{1-\eta}^{n+m-2i}(-1)^{n-i}.
\end{equation}
Hence,
\begin{equation}
\hat{K}_{mn}^{Fock}=\sum_{k=\max\{0,m-n\}}^\infty \sqrt{\frac{m!(n+k-m)!}{k!n!}}\kappa_{mkn}(\eta)\ket{n+k-m}^A\bra{k}.
\end{equation}

\textbf{2. Coherent state basis.}

Consider $\bra{m}\hat{U}^{AE}\ket{n}^E\ket{\alpha}^A$:
\begin{equation}
\begin{aligned}
\bra{m}\hat{U}^{AE}\ket{n}^E\ket{\alpha}^A & = \frac{1}{\sqrt{n!}}\bra{m}\hat{U}^{AE}(\hat{a}^{E^\dagger})^n\hat{D}^A(\alpha)\ket{0}^E\ket{0}^A \\
&=\frac{1}{\sqrt{n!}}\bra{m}(\sqrt{\eta}\hat{a}^{E^\dagger}-\sqrt{1-\eta}\hat{a}^{A^\dagger})^n\hat{D}^E(\sqrt{1-\eta}\alpha)\hat{D}^A(\sqrt{\eta}\alpha)\ket{0}^E\ket{0}^A \\
&=\frac{1}{\sqrt{n!}}\sum_{i=0}^n{\binom n i}\sqrt{\eta}^i(-\sqrt{1-\eta}\hat{a}^{A^\dagger})^{n-i}\sqrt{\frac{m!}{(m-i)!}}\braket{m-i|\sqrt{1-\eta}\alpha}^E\ket{\sqrt{\eta}\alpha}^A \\
&=\frac{1}{\sqrt{n!}}\sum_{i=0}^{\min\{n,m\}}{\binom n i}\sqrt{\frac{m!}{(m-i)!}}\sqrt{\eta}^i(-\sqrt{1-\eta}\hat{a}^{A^\dagger})^{n-i} \\
&\qquad\cdot\mathrm{e}^{-\frac{1}{2}(1-\eta)|\alpha|^2}\frac{(\sqrt{1-\eta}\alpha)^{m-i}}{\sqrt{(m-i)!}}\ket{\sqrt{\eta}\alpha}^A.
\end{aligned}
\end{equation}
Therefore,
\begin{equation}
\begin{aligned}
\hat{K}_{mn}^{Coh.} & =\frac{1}{\sqrt{n!}}\sum_{i=0}^{\min\{n,m\}}{\binom n i}\sqrt{\frac{m!}{(m-i)!}}\sqrt{\eta}^i(-\sqrt{1-\eta}\hat{a}^{A^\dagger})^{n-i} \\
&\qquad\cdot\int_{\mathbb C}d^2\alpha\,\mathrm{e}^{-\frac{1}{2}(1-\eta)|\alpha|^2}\frac{(\sqrt{1-\eta}\alpha)^{m-i}}{\sqrt{(m-i)!}}\ket{\sqrt{\eta}\alpha}^A\bra{\alpha}\delta(\hat{a}^A-\alpha).
\end{aligned}
\end{equation}
Shorter,
\begin{equation}
\hat{K}_{mn}^{Coh.}=\sum_{i=0}^{\min\{n,m\}}\varkappa_{nmi}(\eta)(\hat{a}^{A^\dagger})^{n-i}\int_{\mathbb C}d^2\alpha\,\varpi_{mi}(\eta,\alpha)\ket{\sqrt{\eta}\alpha}^A\bra{\alpha}\delta(\hat{a}^A-\alpha), 
\end{equation}
with
\begin{equation}
\varkappa_{nmi}(\eta):={\binom n i}\sqrt{\frac{m!\eta^i}{n!(m-i)!}}(-\sqrt{1-\eta})^{n-i},
\end{equation}
and
\begin{equation}
\varpi_{mi}(\eta,\alpha):=\mathrm{e}^{-\frac{1}{2}(1-\eta)|\alpha|^2}\frac{(\sqrt{1-\eta}\alpha)^{m-i}}{\sqrt{(m-i)!}}.
\end{equation}
The delta distribution $\delta(\zeta)$, which has been made use of, is defined by
\begin{equation}
\int_{\mathbb C}d^2\zeta\,\delta(\zeta-\zeta_0)f(\zeta)=f(\zeta_0),
\end{equation}
with $\zeta,\zeta_0\in\mathbb C$. Furthermore, make use of
\begin{equation}
\delta\bigl(f[\hat{a}]\bigr)\ket{\alpha}=\delta\bigl(f[\alpha]\bigr)\ket{\alpha},
\end{equation}
for coherent states $\ket{\alpha}$.

Finally, note that the Kraus operators $\hat{K}_{m0}^{Fock/Coh.}$ for $n=0$ are the Kraus operators of the amplitude damping channel, which can be also found in \cite{Nielsen}.

\section{Integral Identities}\label{Apx:Ints}
In subsection \ref{subsecB}, some important integrals have been used. These shall be calculated here.
\begin{itemize}
\item The integral $\int_{\mathbb R}dx\,\mathrm{e}^{-\frac{x^2}{a}}$.
\begin{equation}\label{Gint1}
\int_{\mathbb R}dx\,\mathrm{e}^{-\frac{x^2}{a}}=\sqrt{a}\int_{\mathbb R}dy\,\mathrm{e}^{-y^2},
\end{equation}
with substitution $x=\sqrt{a}y$. Hence, focus on the so-called \textit{Gaussian integral}
\begin{equation}
\int_{\mathbb R}dx\,\mathrm{e}^{-x^2}.
\end{equation}
On the one hand,
\begin{equation}\label{Gint2}
\int_{\mathbb R}dx\,\mathrm{e}^{-x^2}=\sqrt{\Biggl(\int_{\mathbb R}dx\,\mathrm{e}^{-x^2}\Biggr)^2}.
\end{equation}
On the other hand,
\begin{equation}
\begin{aligned}
\Biggl(\int_{\mathbb R}dx\,\mathrm{e}^{-x^2}\Biggr)^2 &=\int_{\mathbb R}\int_{\mathbb R}dx\,dy\,\mathrm{e}^{-(x^2+y^2)}=\int_{\mathbb R^2}dA\,\mathrm{e}^{-(x^2+y^2)} \\
&=\int\limits_{0}^\infty\int\limits_{0}^{2\pi}dr\,d\phi\,r\mathrm{e}^{-r^2}=2\pi\int\limits_{0}^\infty \frac{ds}{2\sqrt{s}}\,\sqrt{s}\mathrm{e}^{-s} \\
&=\pi\int\limits_{0}^\infty ds\,\mathrm{e}^{-s}=-\pi\Bigl[\mathrm{e}^{-s}\Bigr]^{\infty}_0 \\
&=\pi,
\end{aligned}
\end{equation}
where $dA$ denotes surface integration, which has been performed in polar coordinates. Furthermore, $r=\sqrt{s}$ has been substituted. With this result and equations \eqref{Gint1} and \eqref{Gint2},
\begin{equation}
\int_{\mathbb R}dx\,\mathrm{e}^{-\frac{x^2}{a}}=\sqrt{a\pi}.
\end{equation}

\item The integral $\int_{\mathbb R}dx\,x\mathrm{e}^{-\frac{x^2}{a}}$.

This is easy. While $x$ is an odd function, $\mathrm{e}^{-\frac{x^2}{a}}$ is an even function. Hence, $x\mathrm{e}^{-\frac{x^2}{a}}$ is odd and therefore,
\begin{equation}
\int_{\mathbb R}dx\,x\mathrm{e}^{-\frac{x^2}{a}}=0.
\end{equation}

\item The integral $\int_{\mathbb R}dx\,x^2\mathrm{e}^{-\frac{x^2}{a}}$.
\begin{equation}
\begin{aligned}
\int_{\mathbb R}dx\,x^2\mathrm{e}^{-\frac{x^2}{a}} &=\int_{\mathbb R}dx\; a^2\frac{\partial}{\partial a}\mathrm{e}^{-\frac{x^2}{a}}=a^2\frac{\partial}{\partial a}\int_{\mathbb R}dx\;\mathrm{e}^{-\frac{x^2}{a}} \\
&=a^2\frac{\partial}{\partial a}\sqrt{a\pi}=\frac{\sqrt{a^3\pi}}{2}.
\end{aligned}
\end{equation}
\end{itemize}

\section{Geometric Series}\label{Apx:Series}
In section \ref{subsec:OtherTrueHE} some formulas regarding infinite geometric series have been exploited. These shall be proved here.
\begin{itemize}
\item The series $\sum_{i=1}^\infty x^i$ with $0<x<1$.
\begin{equation}
(1-x)\sum_{i=0}^n x^i=\sum_{i=0}^n x^i-x\sum_{i=0}^n x^i=1-x^{n+1}.
\end{equation}
Hence,
\begin{equation}
\sum_{i=0}^n x^i=\frac{1-x^{n+1}}{1-x}.
\end{equation}
For $n=\infty$,
\begin{equation}
\sum_{i=0}^\infty x^i=\frac{1}{1-x}.
\end{equation}
Finally,
\begin{equation}
\sum_{i=1}^\infty x^i=\sum_{i=0}^\infty x^i-1=\frac{1}{1-x}-1=\frac{x}{1-x}.
\end{equation}
\item The series $\sum_{i=1}^\infty i\,x^i$ with $0<x<1$.
\begin{equation}
\begin{aligned}
\sum_{i=1}^\infty i\,x^i&=\sum_{i=1}^\infty x\frac{\partial}{\partial x}x^i=x\frac{\partial}{\partial x}\sum_{i=1}^\infty x^i \\
&=x\frac{\partial}{\partial x}\frac{x}{1-x}=\frac{x}{(1-x)^2}.
\end{aligned}
\end{equation}
\end{itemize}

\end{appendix}

\cleardoublepage
\phantomsection
\addcontentsline{toc}{chapter}{List of Figures}
\listoffigures

\nocite{*}
\bibliographystyle{alphaurl}
\bibliography{DAbib}


\chapter*{Acknowledgments}
\addcontentsline{toc}{chapter}{Acknowledgments}
\phantom{oben}
\vspace{2cm}

First, I want to thank Peter van Loock, the supervisor of this work. He took a lot of time for me and was always available for questions and discussions. Without his help and his inexhaustible wealth of great ideas this thesis would not exist in this form. Thank you, \textit{Peter}!

\vspace{1cm}

Furthermore, I want to thank all the other people from the "optical quantum information theory" group. Discussions with them were interesting and fruitful, midday lunch breaks and vespertine drinks always fun. The comfortable atmosphere in the group contributed significantly to the successful realization of this thesis. Special thanks go to Ricardo, who helped me with the proofreading. Thank you, \textit{Ricardo}, \textit{Nadja}, \textit{Seckin}, \textit{Denis}, \textit{Chriszandro}, \textit{ShengLi} and \textit{Mathias}!

\vspace{1cm}

Finally, I want to thank my \textit{family}, which has always supported me and actually enabled my studies. Thank you!

\end{document}